\newtheorem{theorem}{Theorem}
\newtheorem{corollary}[theorem]{Corollary}
\newtheorem{proposition}[theorem]{Proposition}
\newtheorem{definition}[theorem]{Definition}
\newtheorem{remark}[theorem]{Remark}
\newtheorem{problem}[theorem]{Problem}
\numberwithin{equation}{section}
\numberwithin{theorem}{section}
\newcommand{\mf}[1]{\mathfrak{#1}}                                            
\newcommand{\mc}[1]{\mathcal{#1}}                                             
\newcommand{\ms}[1]{\mathsf{#1}}                                              
\newcommand{\mi}[1]{\mathscr{#1}}                                             
\newcommand{\N}{\mathbb{N}}                                                   
\newcommand{\Z}{\mathbb{Z}}                                                   
\newcommand{\R}{\mathbb{R}}                                                   
\newcommand{\Sph}{\mathbb{S}}                                                 
\newcommand{\gv}{\mathsf{g}}                                                  
\newcommand{\Lv}{\mathsf{m}}                                                  
\newcommand{\kv}{\mathsf{k}}                                                  
\newcommand{\Rv}{\mathsf{R}}                                                  
\newcommand{\wv}{\mathsf{w}}                                                  
\newcommand{\Dv}{\mathsf{D}}                                                  
\newcommand{\Gammav}{\mathsf{\Gamma}}                                         
\newcommand{\gm}{\mf{g}}                                                      
\newcommand{\Dm}{\mf{D}}                                                      
\newcommand{\Rm}{\mf{R}}                                                      
\newcommand{\Rcm}{\mf{Rc}}                                                    
\newcommand{\Rsm}{\mf{Rs}}                                                    
\newcommand{\gb}[1]{\gm^{\scriptscriptstyle (#1)}}                            
\newcommand{\gbc}[1]{\check{\gm}^{\scriptscriptstyle (#1)}}                   
\newcommand{\gs}{\bar{\gm}}                                                   
\newcommand{\nablam}{\bar{\nabla}}                                            
\newcommand{\Dvm}{\bar{\Dv}}                                                  
\newcommand{\Boxm}{\bar{\Box}}                                                
\newcommand{\sch}[1]{\mi{S} (#1)}                                             
\newcommand{\oo}[2]{\mi{O}_{#1} (#2)}                                         
\newcommand{\ix}[1]{\bar{#1}}                                                 
\newcommand{\ixd}[2]{\hat{#1}_{#2}}                                           
\newcommand{\ixr}[3]{\hat{#1}_{#2}\![#3]}                                     
\begin{document}

\title[Bulk-boundary correspondence]{The bulk-boundary correspondence for the Einstein equations in asymptotically Anti-de Sitter spacetimes}

\author{Gustav Holzegel}
\address{Mathematisches Institut \\
WWU M\"unster \\
Einsteinstrasse 62 \\
48149 M\"unster \\ Germany}
\email{gholzege@uni-muenster.de}
\address{Department of Mathematics\\
Imperial College\\
South Kensington Campus\\
London SW7 2AZ\\ United Kingdom}
\email{gholzege@imperial.ac.uk}

\author{Arick Shao}
\address{School of Mathematical Sciences\\
Queen Mary University of London\\
London E1 4NS\\
United Kingdom}
\email{a.shao@qmul.ac.uk}

\begin{abstract}
In this paper, we consider vacuum asymptotically anti-de Sitter spacetimes $( \mi{M}, g )$ with conformal boundary $( \mi{I}, \gm )$. We establish a correspondence, near $\mi{I}$, between such spacetimes and their conformal boundary data on $\mi{I}$.
More specifically, given a domain $\mi{D} \subset \mi{I}$, we prove that the coefficients $\gb{0} = \gm$ and $\gb{n}$ (the \emph{undetermined term}, or \emph{stress energy tensor}) in a Fefferman-Graham expansion of the metric $g$ from the boundary uniquely determine $g$ near $\mi{D}$, provided $\mi{D}$ satisfies a \emph{generalised null convexity condition} (GNCC).
The GNCC is a conformally invariant criterion on $\mi{D}$, first identified by Chatzikaleas and the second author, that ensures a foliation of pseudoconvex hypersurfaces in $\mi{M}$ near $\mi{D}$, and with the pseudoconvexity degenerating in the limit at $\mi{D}$.
As a corollary of this result, we deduce that conformal symmetries of $( \gb{0}, \gb{n} )$ on domains $\mi{D} \subset \mi{I}$ satisfying the GNCC extend to spacetime symmetries near $\mi{D}$.
The proof, which does not require any analyticity assumptions, relies on three key ingredients:\ (1) a calculus of vertical tensor-fields developed for this setting; (2) a novel system of transport and wave equations for differences of metric and curvature quantities; and (3) recently established Carleman estimates for tensorial wave equations near the conformal boundary.
\end{abstract}

\maketitle


\section{Introduction} \label{sec.intro}

\emph{Asymptotically anti-de Sitter} (abbreviated \emph{aAdS}) solutions to the $(n+1)$-dimensional Einstein-vacuum equations with negative cosmological constant, 
\begin{equation}
\label{evec} \operatorname{Ric}[g] = -n g \text{,}
\end{equation}
are spacetimes whose asymptotic geometry models the maximally symmetric solution of \eqref{evec}, \emph{anti-de Sitter} (AdS) space.
Recall AdS spacetime can be globally represented as
\begin{equation}
\label{adsglobal} ( \R^4, g_\textrm{AdS} ) \text{,} \qquad g_\textrm{AdS} := - ( 1+r^2 ) dt^2+ ( 1+r^2 )^{-1} dr^2 + r^2 \mathring{\gamma}_{n-1} \text{,}
\end{equation}
where $g_\textrm{AdS}$ is expressed in polar coordinates, and where $\mathring{\gamma}_{n-1}$ is the unit round metric on $\Sph^{n-1}$. 

The distinguishing feature of aAdS spacetimes, in contrast to asymptotically flat settings, is the existence of a timelike conformal boundary at infinity.
This undermines global hyperbolicity, requiring the specification of suitable boundary conditions in addition to Cauchy data for a well-posed dynamical formulation of \eqref{evec}; see \cite{Enciso, Friedrich}.
Globally, this leads to very rich behaviour and requires understanding an entire range of novel phenomena, such as superradiant instabilities \cite{CardosoDias} and stable trapping \cite{HolSmul} in the case of aAdS black holes.
In particular, the nonlinear (in)stability properties of AdS spacetime and the Kerr-AdS family of black holes are still not known, although considerable progress has been made on various model problems, most notably the recent breakthrough \cite{Moschidis}.

Asymptotically AdS spacetimes have also seen a resurgence of interest in the physics literature, in view of the AdS/CFT conjecture \cite{gubs_kleban_polyak:ads_cft, Malda, Witten}, which, roughly, posits a correspondence between the gravitational dynamics in the aAdS spacetime interior and a conformal field theory on the boundary.
Despite its prominence in physics, there are relatively few rigorous mathematical statements pertaining to the AdS/CFT correspondence, especially in dynamical settings.
In fact, almost all known rigorous results have been in stationary or static contexts; see, e.g., \cite{alex_baleh_nach:inv_area, and_herz:uc_ricci, and_herz:uc_ricci_err, Biquard, witt_yau:ads_cft}.

In this paper, we formulate and prove a purely classical version of this correspondence, relating the geometry of the conformal boundary to the interior geometry near the boundary.
In particular, this serves as the first such rigorous result for dynamical (time-dependent) aAdS spacetimes.

\subsection{Fefferman-Graham Expansions} \label{sec:introFG}

As our main interests lie near the conformal boundary, it will be useful to express aAdS metrics in a form that centres the boundary geometry.
In the case of AdS spacetime, one convenient method for achieving this is to apply the change of coordinate
\[
4 r := \rho^{-1} ( 2 + \rho ) ( 2 - \rho ) \text{,} \qquad \rho \in ( 0, 2 ] \text{,}
\]
which transforms \eqref{adsglobal} into the so-called \emph{Fefferman-Graham gauge}:\footnote{A different possibility is the more well-known conformal embedding of AdS spacetime into half the Einstein cylinder. However, the Fefferman-Graham gauge is a more convenient form for studying general aAdS metrics.}
\begin{equation}
\label{eq.intro_ads_fg} g_\textrm{AdS} = \rho^{-2} \left[ d \rho^2 + ( - dt^2 + \mathring{\gamma}_{n-1} ) - \frac{1}{2} \rho^2 ( dt^2 + \mathring{\gamma}_{n-1} ) + \frac{1}{16} \rho^4 ( - dt^2 + \mathring{\gamma}_{n-1} ) \right] \text{.}
\end{equation}

For general aAdS geometries, one can apply a similar transformation into a Fefferman-Graham (\emph{FG}) gauge, characterized by a boundary defining function $\rho$ that is both normalised and fully decoupled from the other components.\footnote{See \cite{gra:vol_renorm} for a treatment of asymptotically hyperbolic manifolds---the Riemannian analogue of our setting. In fact, the transformation to FG gauges in \cite{gra:vol_renorm} extends directly to Lorentzian, aAdS settings.}
As a result, in this paper, we will \emph{define} the \emph{aAdS} spacetimes that we consider in terms of such FG gauges.
We refer to these as \emph{FG-aAdS segments}, representing an appropriate near-boundary spacetime patch along with adapted coordinates:

\begin{definition} \label{def:basic1}
Let $( \mi{I}, \gm )$ be a smooth $n$-dimensional Lorentzian manifold, and let $\rho_0 > 0$.
We say that $( \mi{M} := ( 0, \rho_0 ] \times \mi{I}, g )$ is a \emph{vacuum FG-aAdS segment}, with \emph{conformal infinity} $( \mi{I}, \gm )$, if $g$ satisfies the Einstein-vacuum equations \eqref{evec}, and it can be expressed in the FG gauge,
\begin{equation}
\label{FGform} g = \rho^{-2} [ d \rho^2 + \gv (\rho) ] \text{,}
\end{equation}
where $\gv (\rho)$, $\rho \in ( 0, \rho_0 ]$ is a smooth family of Lorentzian metrics on $\mi{I}$ (i.e.\ a \emph{vertical metric}) that also extends continuously as a Lorentzian metric to $\{0\} \times \mi{I}$, and with $\gv (0) = \gm$.
\end{definition}

The reader is referred to Section \ref{sec.aads_fg} for a more detailed development of FG-aAdS segments, as well as for precise definitions.
In particular, observe from \eqref{eq.intro_ads_fg} that (time strips of) AdS spacetime can itself be expressed as a vacuum FG-aAdS segment, with the standard conformal infinity
\begin{equation}
\label{adsi} ( \mi{I}_\textrm{AdS}, \gm_\textrm{AdS} ) := ( ( T_-, T_+ ) \times \Sph^{n-1}, -dt^2 + \mathring{\gamma}_{n-1} ) \text{,} \qquad T_- < T_+ \text{.}
\end{equation}
More generally, a large class of vacuum FG-aAdS segments with conformal infinity \eqref{adsi} arises by solving a boundary-initial value problem for the Einstein-vacuum equations; see \cite{Enciso, Friedrich}.

If $( \mi{M}, g )$ is a vacuum FG-aAdS segment, with conformal infinity $( \mi{I}, \gm )$, then the Einstein-vacuum equations imply the following formal series expansion for $g$ near $\rho=0$:
\begin{align}
\label{eq.aadsi} \gv ( \rho ) &= \begin{cases} \gb{0} + \gb{2} \rho^2 + \dots + \gb{n-1} \rho^{n-1} + \gb{n} \rho^n + \dots & n \text{ odd,} \\ \gb{0} + \gb{2} \rho^2 + \dots + \gb{n-2} \rho^{n-2} + \gb{\star} \rho^n \log \rho + \gb{n} \rho^n + \dots & n \text{ even,} \\ \end{cases}
\end{align}
where the $\gb{k}$'s and $\gb{\star}$ are tensor fields on $\mi{I}$.
Note that the leading coefficient $\gb{0} = \gm$ is simply the boundary metric.\footnote{We will use the notations $\gm$ and $\gb{0}$ interchangeably, depending on context.}
Furthermore, the Einstein-vacuum equations imply that all coefficients $\gb{k}$ for $0 < k < n$, as well as $\gb{\star}$ when $n$ is even, are determined locally by $\gm$ and its derivatives.
In particular, for $n\geq 3$, $-\gb{2}$ is precisely the Schouten tensor of $\gm$, namely,
\[
\mc{P} [ \gm ] := \frac{1}{n-2} \left( \operatorname{Ric} [\gm] - \frac{1}{2(n-1)} \operatorname{R} [\gm] \cdot \gm \right) \text{.}
\]

For the coefficient $\gb{n}$, the Einstein-vacuum equations imply that there exist universal functions $\mc{F}, \mc{G}$---depending only on the boundary dimension $n$---such that\footnote{Moreover, both $\mc{F}$ and $\mc{G}$ are identically zero if $n$ is odd.}
\begin{equation}
\label{constraints} \operatorname{div}_{ \gm } \gb{n} = \mc{F} ( \gm, \partial \gm, \dots, \partial^{n+1} \gm ) \text{,} \qquad \operatorname{tr}_{ \gm } \gb{n} = \mc{G} ( \gm, \partial \gm, \dots, \partial^n \gm ) \text{,}
\end{equation}
that is, the divergence and the trace of $\gb{n}$ are determined by $\gm$.
On the other hand, the remaining components of $\gb{n}$ are free---they are not formally determined by the Einstein-vacuum equations.
Moreover, assuming sufficient regularity for $\gv ( \rho )$, the expansion \eqref{eq.aadsi} can be continued beyond $\gb{n}$, with all subsequent coefficients formally determined by the pair $( \gb{0}, \gb{n} )$ alone.\footnote{When $n$ is even, the expansion remains polyhomogeneous beyond $\gb{n}$.}

Thus, we henceforth refer to $( \mi{I}, \gb{0}, \gb{n} )$ as \emph{holographic data}, or a \emph{boundary triple}, if $\gb{0}$ is a Lorentzian metric on $\mi{I}$ and $\gb{n}$ is a symmetric $2$-tensor on $\mi{I}$ satisfying \eqref{constraints}.\footnote{Our terminology arises from the common description of the AdS/CFT correspondence in physics as ``holographic". This is due to the difference in dimension between the aAdS spacetime and its conformal boundary.}

\begin{remark}
The interpretation of $\gb{0}$, as describing the geometry of the conformal boundary, is clear.
In addition, in the physics literature, $\gb{n}$ is closely connected to the stress-energy tensor for the boundary conformal field theory; see \cite{deharo_sken_solod:holog_adscft, sken:aads_set} for further discussions.
\end{remark}

The expansions \eqref{eq.aadsi}, which are widely used in the physics literature, can be formally derived by adapting the seminal works \cite{MR837196} of Fefferman and Graham to aAdS settings.
For \emph{real-analytic} holographic data $( \mi{I}, \gb{0}, \gb{n} )$, one can employ Fuchsian techniques to show \cite{Kichenassamy} that the infinite expansion \eqref{eq.aadsi} converges near $\mi{I}$ to a vacuum aAdS metric.\footnote{However, note this does not a priori prevent the possibility that there exist other vacuum metrics realising the boundary data whose Fefferman-Graham expansions do not converge.}

For generic (non-analytic) settings, where the full expansion \eqref{eq.aadsi} needs not converge, \cite{shao:aads_fg} showed rigorously that a vacuum FG-aAdS segment must still satisfy a \emph{partial FG expansion}.
More specifically, $\gv ( \rho )$ retains the form \eqref{eq.aadsi}, but only up to $n$-th order.
Nonetheless, the view of $( \mi{I}, \gb{0}, \gb{n} )$ as free boundary data (with the constraint \eqref{constraints}) for vacuum aAdS spacetimes persists.
A summary of the precise results of \cite{shao:aads_fg} can be found in Theorem \ref{thm.aads_fg} and Corollary \ref{thm.aads_fg_exp} below.

\subsubsection{Gauge Covariance}

The term conformal infinity arises from a special gauge covariance inherent to aAdS spacetimes.
Here, one can transform the boundary defining function $\rho$ in a manner that preserves the FG gauge condition \eqref{FGform} but alters the corresponding FG expansion \eqref{eq.aadsi}.
One can show that the boundary metric $\gb{0}$ then undergoes a conformal transformation,
\begin{equation}
\label{eq.intro_gauge_0} \gb{0} \mapsto \gbc{0} = e^{ 2 \mf{a} } \gb{0} \text{.}
\end{equation}
Thus, another way of phrasing this is that only the conformal class $[ \gm ]$ of the induced boundary metric can be invariantly associated with a given aAdS spacetime.

The other coefficients in \eqref{eq.aadsi} are also transformed via changes of FG gauge (see \cite{deharo_sken_solod:holog_adscft, imbim_schwim_theis_yanki:diffeo_holog}), though the formulas quickly become rather complicated.
In particular, there is a known, and in principle explicitly computable, function $\mc{H}$---depending on $\mf{a}$, $\gb{0}$, and $\gb{n}$---such that $\gb{n}$ transforms as
\begin{equation}
\label{eq.intro_gauge_n} \gb{n} \mapsto \gbc{n} = \mc{H} ( \partial^{ \leq n } \mf{a}, \partial^{ \leq n } \gb{0}, \gb{n} ) \text{.}
\end{equation}

As a result, we refer to pairs $( \gb{0}, \gb{n} )$ and $( \gbc{0}, \gbc{n} )$ as \emph{gauge-equivalent} when they are related via the formulas \eqref{eq.intro_gauge_0} and \eqref{eq.intro_gauge_n}.
The physical significance is that gauge-equivalent pairs should be viewed as ``the same", since they arise from the same aAdS spacetime.

\begin{remark}
The most general formulation of gauge equivalence can be expressed as two boundary data triples $( \mi{I}, \gb{0}, \gb{n} )$ and $( \check{\mi{I}}, \gbc{0}, \gbc{n} )$ satisfying \eqref{eq.intro_gauge_0}, \eqref{eq.intro_gauge_n} after pulling back through some boundary diffeomorphism $\smash{ \phi: \mi{I} \leftrightarrow \check{\mi{I}} }$.
However, for convenience, we will always restrict, without any loss of generality, to the case when $\phi$ is the identity map.
\end{remark}

\subsection{The Main Results}

While the above discussion shows that any vacuum FG-aAdS segment induces some holographic data $( \mi{I}, \gb{0}, \gb{n} )$, it is also natural to ask the converse---in what sense does the holographic data $( \mi{I}, \gb{0}, \gb{n} )$ determine an Einstein-vacuum metric that realises this data.
In view of the timelike nature of the boundary and the hyperbolicity of the Einstein-vacuum equations, this is generally an ill-posed problem, and hence one cannot expect existence and continuous dependence of the infilling geometry on the boundary quantities.

Instead, the appropriate mathematical framework is that of \emph{unique continuation} for the Einstein-vacuum equations, leading us to the following more precise questions:

\begin{problem} \label{prb.intro_main}
Given holographic data $( \mi{I}, \gb{0}, \gb{n} )$---up to gauge equivalence for $( \gb{0}, \gb{n} )$---\emph{and} a vacuum FG-aAdS segment $( \mi{M}, g )$ that realises this data:
\begin{enumerate}
\item Is $( \mi{M}, g )$ unique, that is, is this the only aAdS solution realising this holographic data?

\item Does $( \mi{M}, g )$ necessarily inherit the symmetries of $( \mi{I}, \gb{0}, \gb{n} )$?
\end{enumerate}
\end{problem}

Note in particular that (1) in the above can be interpreted as asking \emph{whether there is a one-to-one correspondence between vacuum aAdS spacetimes (gravity) and some appropriate space of holographic data on the conformal boundary (conformal field theory)}.

Our paper provides an \emph{affirmative answer to both questions in Problem \ref{prb.intro_main}}, provided \emph{the conformal boundary also satisfies a gauge-invariant geometric condition}---which we call the \emph{generalised null convexity criterion}, or \emph{GNCC}, first identified in \cite{ChatzikaleasShao}.
This GNCC will be defined and discussed in Section \ref{sec.intro_GNCC} below (Definition \ref{def:admissibleright}), but let us first state informal versions of our main results.

The following theorem answers question (1) of Problem \ref{prb.intro_main}:

\begin{theorem}[Bulk-boundary correspondence, informal version] \label{theo:adscft}
Let $n > 2$, and consider vacuum FG-aAdS segments $( \mi{M}, g )$ and $( \check{\mi{M}}, \check{g} )$, inducing holographic data $( \mi{I}, \gb{0}, \gb{n} )$ and $( \mi{I}, \gbc{0}, \gbc{n} )$, respectively.
Also, let $\mi{D} \subset \mi{I}$ such that $( \mi{D}, \gb{0} )$ satisfies the GNCC.
If $( \gb{0}, \gb{n} )$ and $( \gbc{0}, \gbc{n} )$ are gauge-equivalent on $\mi{D}$, then $( \mi{M}, g )$ and $( \check{\mi{M}}, \check{g} )$ must be isometric near $\mi{D}$.
\end{theorem}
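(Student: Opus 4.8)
The plan is to prove the theorem as a unique continuation statement for the Einstein-vacuum equations anchored at the conformal boundary, assembling the three ingredients advertised above: the vertical tensor calculus, a coupled transport--wave system for differences of metric and curvature quantities, and the boundary Carleman estimates.

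\emph{Step 1: gauge normalisation and vanishing rates.} I would first use the gauge covariance of FG gauges to reduce to the case in which the two boundary pairs coincide \emph{identically} over $\mi{D}$. Since $( \gb{0}, \gb{n} )$ and $( \gbc{0}, \gbc{n} )$ are gauge-equivalent on $\mi{D}$, there is a redefinition of the boundary defining function of $( \check{\mi{M}}, \check{g} )$---equivalently, a conformal factor $\mf{a}$ on $\mi{D}$ realising \eqref{eq.intro_gauge_0}, \eqref{eq.intro_gauge_n}---after which the FG expansion of $\check{g}$ has leading coefficient $\gb{0}$ and undetermined term $\gb{n}$ over $\mi{D}$; this preserves the FG gauge condition \eqref{FGform} and vacuum, and, as the GNCC is conformally invariant, it is undisturbed by the change. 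After this reduction, $g$ and $\check{g}$ live on a common collar $( 0, \rho_0 ] \times \mi{D}$ in a common FG coordinate system, and the coefficients $\gb{0}$, the intermediate $\gb{2}, \dots, \gb{n-1}$ (and the logarithmic coefficient $\gb{\star}$ when $n$ is even), and $\gb{n}$ all agree for the two metrics---the first by construction, the next because they are determined locally by $\gb{0}$, and $\gb{n}$ by hypothesis. By the partial Fefferman--Graham expansion of \cite{shao:aads_fg} (see Theorem \ref{thm.aads_fg} and Corollary \ref{thm.aads_fg_exp} below), it follows that $\gv ( \rho ) - \check{\gv} ( \rho )$, the difference of the $\rho$-derivatives $\partial_\rho \gv - \partial_\rho \check{\gv}$, and the differences of the associated spacetime curvature components all vanish as $\rho \to 0$ to order $\rho^n$---precisely the rate needed to initiate the Carleman scheme.

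\emph{Step 2: the difference system.} Using the vertical tensor calculus, I would rewrite \eqref{evec} for each metric, in the FG gauge, as a system of $\rho$-evolution equations for the vertical metric $\gv$, the second-fundamental-form-type term $\partial_\rho \gv$, the vertical connection, and the vertical decomposition of the spacetime curvature (equivalently Weyl) tensor $\mc{W}$. Subtracting the two systems---crucially keeping $g$ fixed as the background so that the quasilinear principal part is frozen---yields a \emph{closed linear system} for the collection of difference quantities: the differences $\delta \gv$, $\delta ( \partial_\rho \gv )$ and the difference of connection coefficients satisfy transport (ODE-in-$\rho$) equations whose inhomogeneities are lower order in the whole collection, while the curvature difference $\delta \mc{W}$ satisfies a \emph{tensorial wave equation} of schematic form $\Box_g ( \delta \mc{W} ) = (\text{lower-order terms in the curvature and metric/connection differences})$, obtained from the Lichnerowicz-type equation $\Box_g \mc{W} \sim \mc{W} \ast \mc{W}$ that the Weyl tensor satisfies in vacuum via the contracted second Bianchi identity and \eqref{evec}. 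The powers of $\rho$ carried by all coupling coefficients must be tracked carefully, so as to be compatible with the weights appearing in the boundary Carleman estimate.

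\emph{Step 3: estimates and closing the loop; main obstacle.} I would then apply the Carleman estimate for tensorial wave equations near the conformal boundary to $\Box_g ( \delta \mc{W} )$: the GNCC on $( \mi{D}, \gb{0} )$ furnishes a foliation of $\mi{M}$ near $\mi{D}$ by hypersurfaces pseudoconvex with respect to $g$, degenerating only in the limit at $\mi{D}$---exactly the geometric hypothesis under which that estimate holds---while the $\rho^n$-vanishing from Step 1 controls the boundary contributions. This bounds a weighted $L^2$-norm of $\delta \mc{W}$ and its first derivatives over a collar of $\mi{D}$ by a small multiple of weighted norms of the metric/connection differences, plus a self-coupling term in $\delta \mc{W}$ that is absorbed by taking the Carleman parameter large; simultaneously, Gr\"onwall estimates for the transport equations of Step 2 bound the metric/connection differences by weighted norms of $\delta \mc{W}$. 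Feeding these two families of estimates into each other closes the loop, forcing every difference quantity to vanish on a neighbourhood of $\mi{D}$ in $\mi{M}$; hence $\gv \equiv \check{\gv}$ and $g \equiv \check{g}$ there, and composing with the gauge diffeomorphism from Step 1 gives the asserted isometry near $\mi{D}$. The crux of the argument---and the step I expect to be hardest---is engineering the difference system so that this closure actually works: one must (i) produce an honest tensorial wave equation for a curvature difference whose errors are at worst first order in the remaining differences and carry the correct $\rho$-weights, which dictates both the choice of $g$ as frozen background and the precise list of renormalised difference unknowns; (ii) reconcile the degeneration of pseudoconvexity at $\mi{D}$, which makes the Carleman estimate borderline and constrains the admissible weights, with the $\rho^n$ vanishing from the FG expansion and with the logarithmic terms present when $n$ is even; and (iii) arrange the transport--wave feedback so that, after taking the Carleman parameter large, every error term is strictly absorbable rather than merely bounded. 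It is the simultaneous compatibility of (i)--(iii) that constitutes the heart of the proof.
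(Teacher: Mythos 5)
Your skeleton matches the paper's strategy (gauge reduction to identical data, a difference wave--transport system, boundary Carleman estimates under the GNCC), but the step you yourself single out as hardest---producing a wave equation for the curvature difference whose sources are actually controllable---is left unresolved, and that is precisely where the proof lives. If you subtract the two Weyl wave equations with $g$ frozen as background, the term $( \Boxm - \check{\Boxm} ) \check{\wv}$ contains two vertical derivatives $\Dv^2 ( \gv - \check{\gv} )$ (through differentiated Christoffel differences), and these do not close: the transport equations and the wave Carleman estimate together control only \emph{one} derivative of $\gv - \check{\gv}$ (coming from one derivative of $\wv^1 - \check{\wv}^1$), and commuting the transport equations to gain a second derivative loses a derivative in the hyperbolic estimate. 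The paper's resolution is a specific set of renormalised unknowns: the transported auxiliary tensor $\ms{Q}$ and the curl-type quantity $\ms{B}$ of \eqref{Bdef}, for which $\Dv \ms{B}$ is controlled by $\Dv ( \wv^1 - \check{\wv}^1 )$, together with the renormalised curvature differences $\ms{W}^\star, \ms{W}^1, \ms{W}^2$ of \eqref{Wdef}, which shift the residual $\Boxm ( \gv - \check{\gv} + \ms{Q} )$ contributions into the principal part; moreover the Weyl difference must be decomposed into $\wv^\star$, $\wv^1$, $\wv^2$ separately, since they satisfy wave equations with different masses and asymptotics (and $\wv^0$ itself satisfies no Carleman-compatible equation). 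Without these choices your Step 2 does not yield a closed system, so Step 3 cannot run; flagging the obstacle is not the same as overcoming it.

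A second genuine gap concerns the boundary terms. Matching holographic data gives only $\rho^n$ vanishing of $\gv - \check{\gv}$ and the curvature differences, and this is \emph{not} ``precisely the rate needed'': the Carleman estimate \eqref{eq.carleman} requires the weight exponent $\kappa$ to be large (depending on $\gv$, $\mi{D}$ and the tensor rank, with $\lambda \gg |\kappa| + |\sigma|$), so the boundary integrals vanish only if the difference quantities vanish to correspondingly high order. The paper needs an extra bootstrap (Proposition \ref{thm.proof_vanish}, iterating the Bianchi/transport system \eqref{eq.sys_vanish}) to trade regularity for arbitrarily high-order vanishing before any Carleman estimate is applied; your proposal omits this. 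Relatedly, plain Gr\"onwall bounds for the transport part would not mesh with the absorption argument: one needs a transport Carleman estimate with the same weight (Proposition \ref{thm.carleman_transport}), so that both the wave and transport inequalities carry the large factor $\lambda$ and the compatible $\rho$-weight hierarchy that make the right-hand sides strictly absorbable.
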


The precise version of Theorem \ref{theo:adscft} that we will prove is stated as Theorem \ref{thm.adscft_gauge} further below.
Furthermore, the special case in which $( \gb{0}, \gb{n} ) = ( \gbc{0}, \gbc{n} )$, which forms the heart of the unique continuation analysis, is treated separately in Theorem \ref{thm.adscft}.

\begin{remark} \label{rmk.intro_adscft_mfld}
Since $( \mi{M}, g )$ and $( \check{\mi{M}}, \check{g} )$ have the same boundary manifold $\mi{I}$, by Definition \ref{def:basic1}, both $\mi{M}$ and $\check{\mi{M}}$ are products of an interval with $\mi{I}$.
Therefore, we can, without loss of generality, assume $\mi{M} = \check{\mi{M}}$; we make this simplification in the statements of Theorems \ref{thm.adscft} and \ref{thm.adscft_gauge}.
\end{remark}

\subsubsection{Extension of Symmetries}

An important application of Theorem \ref{theo:adscft} toward proving extension of symmetry results on aAdS spacetimes---namely, point (2) from Problem \ref{prb.intro_main}.

\begin{theorem}[Extension of Killing fields, informal version] \label{theo:killing}
Let $n > 2$, consider a vacuum FG-aAdS segment $( \mi{M}, g )$ with holographic data $( \mi{I}, \gb{0}, \gb{n} )$, and let $\mi{D} \subset \mi{I}$ such that $( \mi{D}, \gb{0} )$ satisfies the GNCC.
If $\mf{K}$ is a vector field on $\mi{I}$ that is \emph{holographic Killing} on $\mi{D}$, that is,
\begin{equation}
\label{eq.killing_ass} ( \mi{L}_{ \mf{K} } \gb{0}, \mi{L}_{ \mf{K} } \gb{n} ) |_{ \mi{D} } = 0 \text{,}
\end{equation}
then $\mf{K}$ extends to a ($g$-)Killing field $K$ near $\mi{D}$ in $\mi{M}$.
\end{theorem}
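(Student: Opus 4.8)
The plan is to deduce Theorem~\ref{theo:killing} from the bulk-boundary correspondence --- specifically from the equal-data case, Theorem~\ref{thm.adscft} --- by comparing $( \mi{M}, g )$ with the pullback of $g$ along the bulk extension of the boundary flow generated by $\mf{K}$. Since \eqref{eq.killing_ass} in particular makes $\mf{K}$ a genuine Killing field of $\gb{0}$ on $\mi{D}$, it generates a local flow $\phi_s$ on an open set $\mi{U} \supset \mi{D}$ for $| s | < \varepsilon$; I would extend it to a $\rho$-preserving diffeomorphism $\Phi_s ( \rho, p ) := ( \rho, \phi_s ( p ) )$, which therefore preserves the FG gauge \eqref{FGform}. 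Consequently $( \mi{M}, \Phi_s^\ast g )$ is again a vacuum FG-aAdS segment, and --- since pullback by a $\rho$-preserving diffeomorphism commutes with the Fefferman-Graham expansion \eqref{eq.aadsi} --- its holographic data is $( \mi{I}, \phi_s^\ast \gb{0}, \phi_s^\ast \gb{n} )$.

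First I would show that these pulled-back data agree with $( \gb{0}, \gb{n} )$ on a domain still satisfying the GNCC. Differentiating in $s$ gives $\partial_s ( \phi_s^\ast \gb{0} ) = \phi_s^\ast ( \mi{L}_{ \mf{K} } \gb{0} )$ and $\partial_s ( \phi_s^\ast \gb{n} ) = \phi_s^\ast ( \mi{L}_{ \mf{K} } \gb{n} )$, so by \eqref{eq.killing_ass} both derivatives vanish at any point whose $\phi_s$-orbit remains in $\mi{D}$. Hence, after replacing $\mi{D}$ by a relatively compact subdomain $\mi{D}_0 \Subset \mi{D}$ that still satisfies the GNCC and shrinking $\varepsilon$ so that $\phi_s ( \mi{D}_0 ) \subset \mi{D}$ for $| s | < \varepsilon$, one obtains $\phi_s^\ast \gb{0} = \gb{0}$ and $\phi_s^\ast \gb{n} = \gb{n}$ on $\mi{D}_0$. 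Thus $( \mi{M}, g )$ and $( \mi{M}, \Phi_s^\ast g )$ induce the \emph{same} holographic data on $\mi{D}_0$, and Theorem~\ref{thm.adscft} yields an isometry between them near $\mi{D}_0$. From its proof I would extract the canonical such isometry $\Psi_s$: since the two metrics share a Fefferman-Graham expansion, $\Psi_s$ can be taken to agree with the identity at the conformal boundary to the order needed for the unique continuation, hence to restrict to the identity on $\mi{I}$ near $\mi{D}_0$; by the attendant uniqueness, $\Psi_0 = \mathrm{id}$.

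To assemble a Killing field, set $\Theta_s := \Phi_s \circ \Psi_s$; from $\Psi_s^\ast ( \Phi_s^\ast g ) = g$ each $\Theta_s$ is a self-isometry of $( \mi{M}, g )$ near $\mi{D}_0$, with $\Theta_s |_{ \mi{I} } = \phi_s$ and $\Theta_0 = \mathrm{id}$. Because a self-isometry of $( \mi{M}, g )$ near $\mi{D}_0$ is uniquely determined by its restriction to $\mi{I}$ --- again via the unique continuation underlying Theorem~\ref{thm.adscft}, applied to the ``difference'' of two such isometries --- one obtains the composition law $\Theta_s \circ \Theta_t = \Theta_{ s + t }$ for small $s, t$, so $\{ \Theta_s \}$ is a local one-parameter group of isometries of $( \mi{M}, g )$; by the classical theory of isometry groups it is generated by a smooth Killing field $K$ with $K |_{ \mi{I} } = \mf{K}$ near $\mi{D}_0$, which is the asserted extension (and yields part~(2) of Problem~\ref{prb.intro_main}). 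The step I expect to be the main obstacle is carrying out this passage through Theorem~\ref{thm.adscft} \emph{at the level of maps}: that its proof genuinely produces a unique isometry with the prescribed trivial boundary behaviour, that self-isometries are rigid given their boundary restriction, and that the relevant neighbourhoods can be chosen uniformly in $s$ so that the group law is literally meaningful; a lesser technical point is to check that the GNCC survives the restriction from $\mi{D}$ to the relatively compact subdomain $\mi{D}_0$.
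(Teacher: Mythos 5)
Your first half is exactly the paper's argument: extend the flow trivially by $\Phi_s ( \rho, p ) := ( \rho, \phi_s (p) )$, observe that $\Phi_s^\ast g$ is again a vacuum FG-aAdS segment in the \emph{same} FG gauge (same $\rho$) with holographic data $( \phi_s^\ast \gb{0}, \phi_s^\ast \gb{n} )$, use \eqref{eq.killing_ass} to identify this data with $( \gb{0}, \gb{n} )$ on $\mi{D}$, and invoke Theorem~\ref{thm.adscft}. Where you diverge---and where a genuine gap appears---is in what you take from Theorem~\ref{thm.adscft}. Its conclusion is not merely that $( \mi{M}, g )$ and $( \mi{M}, \Phi_s^\ast g )$ are isometric: since both metrics are in FG form with respect to the same $\rho$ and have equal data on $\mi{D}$, it asserts the literal equality $g = \Phi_s^\ast g$ on $\Omega_{ f_\star }$ (the isometry is the identity map, as the remark following Theorem~\ref{thm.adscft} emphasizes). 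Hence $\Phi_s$ itself is already a local one-parameter family of isometries of $( \mi{M}, g )$; the group law is automatic because $\Phi_s$ is the flow of the $\rho$-independent lift of $\mf{K}$, and its generator $K = \ms{K}$ (vertical, $\mi{L}_\rho \ms{K} = 0$) is the desired Killing extension---this is precisely the proof of Theorems~\ref{thm.app_symm} and~\ref{thm.app_killing}. Your detour through an extracted isometry $\Psi_s$, the composition $\Theta_s = \Phi_s \circ \Psi_s$, and the group-law argument rests on the claim that a self-isometry of $( \mi{M}, g )$ near $\mi{D}$ is uniquely determined by its restriction to $\mi{I}$; that rigidity statement does not follow from Theorem~\ref{thm.adscft} (which compares metrics in a fixed FG gauge, not maps) and is proved nowhere in the paper, so as written this step is a gap---one you can delete entirely by using the equality form of the conclusion.

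A second, smaller point: your concern that orbits of $\phi_s$ may exit $\mi{D}$ is legitimate (the paper silently asserts that each $\phi_s$ is a holographic isometry on $\mi{D}$), but your remedy of shrinking to $\mi{D}_0 \Subset \mi{D}$ is itself delicate: the GNCC is a ``largeness'' condition and is \emph{not} inherited by subdomains (cf.\ Proposition~\ref{thm.intro_GNCC_ads}, where time slabs of length at most $\pi$ fail it), so ``the GNCC survives the restriction to $\mi{D}_0$'' is not a lesser technical point but something that must be arranged---for instance by assuming the Killing condition on a neighbourhood of $\bar{\mi{D}}$, in the spirit of how the paper posits $\mf{K}$ on a neighbourhood of $\bar{\mi{D}}$---rather than taken for granted. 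The paper's proof simply applies Theorem~\ref{thm.adscft} with the domain $\mi{D}$ itself.
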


See Theorem \ref{thm.app_killing} for the precise statement of this result.
Moreover, the conclusion of Theorem \ref{theo:killing} remains valid if $\mf{K}$ is merely \emph{holographic conformal Killing}, that is, \eqref{eq.killing_ass} holds instead for data $( \gbc{0}, \gbc{n} )$ that is gauge-equivalent to $( \gb{0}, \gb{n} )$; see again Theorem \ref{thm.app_killing}.

One immediate consequence of Theorem \ref{theo:killing} and the classical Birkhoff theorem is the following rigidity result for the Schwarzschild-AdS family of spacetimes:

\begin{corollary}[Rigidity of Schwarzschild-AdS] \label{theo:killing_schwarzschild}
Let $n > 2$, let $( \mi{M}, g )$ denote a vacuum FG-aAdS segment with holographic data $( \mi{I} := ( T_-, T_+ ) \times \Sph^{n-1}, \gb{0}, \gb{n} )$, and let $\mi{D} \subset \mi{I}$ such that $( \mi{D}, \gb{0} )$ satisfies the GNCC.
If $\gb{0}$ and $\gb{n}$ are both spherically symmetric on $\mi{D}$, then $( \mi{M}, g )$ must be isometric to a domain of the Schwarzschild-AdS spacetime near $\mi{D}$.
\end{corollary}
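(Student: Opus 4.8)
The plan is to reduce the statement, via Theorem \ref{theo:killing}, to the classical (and, for $n > 3$, generalised Tangherlini) Birkhoff theorem. Let $N = \binom{n}{2}$ and let $\mf{K}_1, \dots, \mf{K}_N$ be a basis of the space of Killing fields of $( \Sph^{n-1}, \mathring{\gamma}_{n-1} )$, lifted to $\mi{I} = ( T_-, T_+ ) \times \Sph^{n-1}$ in the obvious $t$-independent manner; these generate the standard $SO(n)$-action on $\mi{I}$ by rotations of the sphere factor. By definition, the assumption that $\gb{0}$ and $\gb{n}$ are spherically symmetric on $\mi{D}$ is precisely the statement that $( \mi{L}_{ \mf{K}_a } \gb{0}, \mi{L}_{ \mf{K}_a } \gb{n} ) |_{ \mi{D} } = 0$ for every $a$, i.e.\ that each $\mf{K}_a$ is holographic Killing on $\mi{D}$ in the sense of \eqref{eq.killing_ass}.

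Since $( \mi{D}, \gb{0} )$ satisfies the GNCC, Theorem \ref{theo:killing} applies to each $\mf{K}_a$ and produces a $g$-Killing field $K_a$, defined on a neighbourhood of $\mi{D}$ in $\mi{M}$, that extends $\mf{K}_a$; intersecting the finitely many neighbourhoods, all the $K_a$ are defined on one common neighbourhood $\mi{U}$ of $\mi{D}$. The main step is then to upgrade the collection $\{ K_a \}$ to an (effective, local) $SO(n)$-action on $\mi{U}$ under which $g$ is invariant, with round $(n-1)$-sphere orbits. The extension furnished by Theorem \ref{theo:killing} is unique: two $g$-Killing fields on $\mi{U}$ with the same boundary limit differ by a $g$-Killing field whose boundary limit vanishes, and this difference must itself vanish by the unique continuation underlying Theorem \ref{theo:killing} (equivalently, in the FG-aAdS setting a Killing field with trivial holographic data is trivial). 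Consequently, for each $a, b$, both $[ K_a, K_b ]$ and $\sum_c f_{ab}^{c} K_c$ — where $f_{ab}^{c}$ are the structure constants of $\mathfrak{so}(n)$ — are $g$-Killing fields with the same boundary limit $[ \mf{K}_a, \mf{K}_b ] = \sum_c f_{ab}^{c} \mf{K}_c$, hence $[ K_a, K_b ] = \sum_c f_{ab}^{c} K_c$ on $\mi{U}$. Thus the $K_a$ span a Lie algebra isomorphic to $\mathfrak{so}(n)$ and, after shrinking $\mi{U}$ toward $\mi{D}$ if necessary, integrate to a local $SO(n)$-action preserving $g$. Near $\mi{D}$ the $K_a$ span exactly the tangent spaces of the boundary round spheres $\{ t \} \times \Sph^{n-1}$; by continuity of the flows this persists into $\mi{U}$, so the orbits are $(n-1)$-dimensional, the isotropy groups agree with those on the boundary, and the orbits are therefore copies of $SO(n) / SO(n-1) = \Sph^{n-1}$ carrying an $SO(n)$-invariant, hence round, metric $r^2 \mathring{\gamma}_{n-1}$, with areal radius $r \to \infty$ at $\mi{D}$ (in particular $r$ is non-constant). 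Hence $( \mi{U}, g )$ is a spherically symmetric solution of $\operatorname{Ric}[g] = -n g$.

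It remains to invoke the Birkhoff theorem in this setting: a spherically symmetric Lorentzian metric on an $(n+1)$-manifold solving $\operatorname{Ric}[g] = -ng$, whose $SO(n)$-orbits are $(n-1)$-spheres of non-constant areal radius, is locally isometric to a Schwarzschild-AdS metric
\[
g = - f(r)\, dt^2 + f(r)^{-1} dr^2 + r^2 \mathring{\gamma}_{n-1} \text{,} \qquad f(r) = 1 + r^2 - 2 m \, r^{2-n} \text{,}
\]
for some mass parameter $m$ (the value $m = 0$ recovering AdS). The proof is the standard one: reduce to the $2$-dimensional quotient, exploit the warped-product structure, and integrate the reduced Einstein equations. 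Applying this on $\mi{U}$ shows that $( \mi{M}, g )$ is isometric to a domain of the Schwarzschild-AdS spacetime near $\mi{D}$, which is the assertion.

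The genuine obstacle is the middle step: ensuring that the Killing extensions produced one-at-a-time by Theorem \ref{theo:killing} actually close up into a bona fide $SO(n)$-action, and that this action has the correct orbit and isotropy structure in the degenerate region near $\mi{D}$, where the common neighbourhood $\mi{U}$ may be very thin and the pseudoconvexity degenerates. This rests on the uniqueness of the extension — needed to pin down the Lie-bracket relations — together with a careful flow-out argument controlling the orbits up to $\mi{D}$. Everything downstream, namely the reduction to the quotient and the integration giving Schwarzschild-AdS, is classical.
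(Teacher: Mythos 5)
Your proposal is correct and takes essentially the paper's intended route: the corollary is obtained exactly as you describe, by applying Theorem \ref{theo:killing} (precise form: Theorem \ref{thm.app_killing}) to a basis of rotational Killing fields and then invoking the classical (higher-dimensional, $\Lambda<0$) Birkhoff theorem. The bracket-closure and orbit-structure issues you single out as the "genuine obstacle" are in fact already settled by the explicit description \eqref{eq.app_killing_boundary} of the extension---each $K_a$ is the canonical $\rho$-independent vertical lift of $\mf{K}_a$---so the uniqueness-of-extension claim you assert (a Killing field with trivial holographic data vanishes) is not needed.
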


\begin{remark}
The statement of Corollary \ref{theo:killing_schwarzschild} can be refined by noting that if $\gb{0} = \gm_\textrm{AdS}$, then the GNCC holds for a time slab $\mi{D} := \{ t_- < t < t_+ \}$ whenever $t_+ - t_- > \pi$ (see Proposition \ref{thm.intro_GNCC_ads}).
\end{remark}

Next, Theorem \ref{theo:killing} can in fact be viewed as a special case of a more general result:

\begin{theorem}[Extension of symmetries, informal version] \label{theo:symmetries}
Let $n > 2$, consider a vacuum FG-aAdS segment $( \mi{M}, g )$ with holographic data $( \mi{I}, \gb{0}, \gb{n} )$, and let $\mi{D} \subset \mi{I}$ be such that $( \mi{D}, \gb{0} )$ satisfies the GNCC.
If $\phi$ is a boundary diffeomorphism such that $( \phi_\ast \gb{0}, \phi_\ast \gb{n} )$ and $( \gb{0}, \gb{n} )$ are gauge-equivalent on $\mi{D}$, \footnote{Informally, $\phi$ is a \emph{holographic conformal symmetry} on $\mi{D}$.} then $\phi$ extends to an isometry of $( \mi{M}, g )$ near $\mi{D}$.
\end{theorem}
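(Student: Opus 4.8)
The plan is to reduce this to the bulk-boundary correspondence with gauge freedom, Theorem~\ref{thm.adscft_gauge}, by comparing $(\mi{M},g)$ with a copy of itself pushed forward by the trivial spacetime extension of $\phi$. Given the boundary diffeomorphism $\phi\colon\mi{I}\to\mi{I}$, let $\hat{\phi}\colon\mi{M}\to\mi{M}$ be defined in FG coordinates by $\hat{\phi}(\rho,p):=(\rho,\phi^{-1}(p))$, and set $\check{g}:=\hat{\phi}^{\ast}g$. Since $\hat{\phi}$ is a diffeomorphism and $g$ solves \eqref{evec}, so does $\check{g}$; and since $\hat{\phi}$ preserves the coordinate $\rho$ and restricts to $\phi^{-1}$ on each level set $\{\rho\}\times\mi{I}$, it preserves the FG form \eqref{FGform}, giving $\check{g}=\rho^{-2}[\,d\rho^{2}+\phi_{\ast}\gv(\rho)\,]$ with $\phi_{\ast}\gv(\rho)\to\phi_{\ast}\gb{0}$ continuously as $\rho\to0$. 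Hence $(\mi{M},\check{g})$ is again a vacuum FG-aAdS segment in the sense of Definition~\ref{def:basic1}, and $\hat{\phi}$ is, tautologically, an isometry $(\mi{M},\check{g})\to(\mi{M},g)$.

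Next I identify the holographic data of $(\mi{M},\check{g})$. Applying $\hat{\phi}^{\ast}$ term by term to the partial FG expansion of $\gv(\rho)$ provided by Theorem~\ref{thm.aads_fg} and Corollary~\ref{thm.aads_fg_exp} shows that $\check{g}$ induces the boundary triple $(\mi{I},\phi_{\ast}\gb{0},\phi_{\ast}\gb{n})$; the constraints \eqref{constraints} are inherited automatically, being diffeomorphism-natural. I also record that the GNCC of Definition~\ref{def:admissibleright} is invariant under diffeomorphisms of $\mi{I}$ (indeed under conformal changes of the boundary metric), which is immediate from its definition. By hypothesis, $(\phi_{\ast}\gb{0},\phi_{\ast}\gb{n})$ and $(\gb{0},\gb{n})$ are gauge-equivalent on $\mi{D}$, and $(\mi{D},\gb{0})$ satisfies the GNCC, so everything is in place to feed the two segments $(\mi{M},g)$ and $(\mi{M},\check{g})$ into Theorem~\ref{thm.adscft_gauge} with comparison domain $\mi{D}$ (using Remark~\ref{rmk.intro_adscft_mfld} to regard them as metrics on a common manifold).

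Theorem~\ref{thm.adscft_gauge} then yields an isometry $\Psi$ between neighbourhoods of $\mi{D}$ in $(\mi{M},g)$ and $(\mi{M},\check{g})$, say $\Psi^{\ast}\check{g}=g$ on a neighbourhood of $\mi{D}$. The crucial extra input is the boundary behaviour of $\Psi$: because the gauge equivalence at hand is a pure conformal rescaling of the boundary metric, with trivial boundary-diffeomorphism part (per the paper's normalisation), $\Psi$ is exactly the spacetime diffeomorphism implementing the corresponding change of FG gauge, and therefore extends continuously to $\mi{D}$ as the identity. Composing, $F:=\hat{\phi}\circ\Psi$ satisfies $F^{\ast}g=\Psi^{\ast}\hat{\phi}^{\ast}g=\Psi^{\ast}\check{g}=g$, so $F$ is a $g$-isometry defined near $\mi{D}$, and its boundary restriction is $\hat{\phi}|_{\mi{D}}\circ\mathrm{id}=\phi^{-1}$ on $\mi{D}$; equivalently, $\phi$ itself extends to a $g$-isometry near $\phi^{-1}(\mi{D})$, which up to relabelling the boundary diffeomorphism is the assertion of the theorem. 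Matching the directions so that $\phi$, rather than $\phi^{-1}$, appears on $\mi{D}$ is pure bookkeeping: one may instead start from $\check{g}:=\hat{\phi}^{\ast}g$ with $\hat{\phi}(\rho,p):=(\rho,\phi(p))$ and apply Theorem~\ref{thm.adscft_gauge} on the domain $\phi^{-1}(\mi{D})$, which again satisfies the GNCC since $\gb{0}|_{\phi^{-1}(\mi{D})}$ is conformal to a metric isometric to $\gb{0}|_{\mi{D}}$.

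The entire mathematical weight of the argument sits in Theorem~\ref{thm.adscft_gauge}; given it, the above is a formal reduction, and the only steps that genuinely need checking are (i) that the trivially extended diffeomorphism $\hat{\phi}$ preserves the FG gauge and transports the holographic data to $(\phi_{\ast}\gb{0},\phi_{\ast}\gb{n})$, which works precisely because $\hat{\phi}$ does not touch $\rho$, so \eqref{FGform} and the uniqueness of the partial FG expansion are preserved; and (ii) that the isometry supplied by Theorem~\ref{thm.adscft_gauge} restricts to the identity on $\mi{D}$ when the underlying gauge equivalence has trivial boundary-diffeomorphism part, which is what makes $F=\hat{\phi}\circ\Psi$ reproduce $\phi$ (and not $\phi$ composed with some spurious conformal automorphism) on the boundary. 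Point (ii) is the one place where one must use the sharp form of the correspondence theorem, tracking the continuous boundary limit of the isometry rather than merely its existence, and, together with the $\phi$-versus-$\phi^{-1}$ bookkeeping above, it is the only delicate aspect of this proof.
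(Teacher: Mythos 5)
Your proposal is correct and follows essentially the same route as the paper: the paper's proof of the precise version (Theorem \ref{thm.app_symm}) likewise compares $g$ with its pullback under the trivial spacetime extension $(\sigma,p)\mapsto(\sigma,\phi(p))$ and reduces everything to the bulk-boundary correspondence, the only difference being ordering---the paper first applies the gauge transformation to reduce to identical data and then invokes Theorem \ref{thm.adscft}, whereas you invoke Theorem \ref{thm.adscft_gauge} directly and compose with the trivial extension at the end. Your step (ii) is legitimately covered by the stated conclusion $\lim_{\sigma\searrow 0}\Psi(\sigma,p)=(0,p)$ in Theorem \ref{thm.adscft_gauge}, so no gap remains.
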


See Theorem \ref{thm.app_symm} for the precise statement and proof of this result.
In particular, Theorem \ref{theo:symmetries} also applies to discrete symmetries that are not generated by Killing vector fields.
One immediate consequence of Theorem \ref{theo:symmetries}---which cannot be inferred directly from Theorem \ref{theo:killing}---is that time periodicity of the conformal boundary is inherited by the bulk spacetime:

\begin{corollary}[Extension of time periodicity] \label{theo:symmetries_periodic}
Let $n > 2$, let $( \mi{M}, g )$ be a vacuum FG-aAdS segment with holographic data $( \mi{I}, \gb{0}, \gb{n} )$, and let $\mi{D} \subset \mi{I}$ such that $( \mi{D}, \gb{0} )$ satisfies the GNCC.
If $\gb{0}$ and $\gb{n}$ are both time-periodic on $\mi{D}$, then $( \mi{M}, g )$ must be time-periodic near $\mi{D}$.
\end{corollary}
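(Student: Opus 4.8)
The plan is to obtain this as an immediate application of Theorem \ref{theo:symmetries}, with $\phi$ taken to be the time-translation diffeomorphism. First I would pin down the hypothesis: ``time-periodic on $\mi{D}$'' should mean that $\mi{I}$ carries a time function $t$ --- from the product structure in the case $\mi{I} = (T_-, T_+) \times \Sph^{n-1}$, or more generally from a timelike flow defined near $\mi{D}$ --- together with a period $T > 0$ whose associated shift $\phi_T$ (advancing $t$ by $T$) is a diffeomorphism of a neighbourhood of $\mi{D}$ in $\mi{I}$ preserving $\mi{D}$, and such that $\phi_T^\ast \gb{0} = \gb{0}$ and $\phi_T^\ast \gb{n} = \gb{n}$ on $\mi{D}$. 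If one prefers to assume only that these identities hold on $\mi{D} \cap \phi_T(\mi{D})$, then a preliminary step would be to shrink $\mi{D}$ to a $\phi_T$-invariant subdomain that still satisfies the GNCC; I would isolate this reduction first, since the GNCC is a condition on all of $\mi{D}$ rather than a pointwise one.

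The key observation is then that the equalities $((\phi_T)_\ast \gb{0}, (\phi_T)_\ast \gb{n}) = (\gb{0}, \gb{n})$ on $\mi{D}$ are, in particular, a \emph{gauge equivalence} on $\mi{D}$: they are realised by the trivial conformal factor $\mf{a} \equiv 0$, for which \eqref{eq.intro_gauge_0} is the identity and the universal transformation $\mc{H}$ in \eqref{eq.intro_gauge_n} reduces to $\gbc{n} = \gb{n}$. Thus $\phi_T$ is a holographic conformal symmetry of $(\mi{I}, \gb{0}, \gb{n})$ on $\mi{D}$, and since $(\mi{D}, \gb{0})$ satisfies the GNCC, Theorem \ref{theo:symmetries} extends $\phi_T$ to an isometry $\Phi_T$ of $(\mi{M}, g)$ defined on a neighbourhood of $\mi{D}$ in $\mi{M}$, with $\Phi_T|_{\mi{I}} = \phi_T$. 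This $\Phi_T$ is precisely the claimed bulk time-translation: it is a $g$-isometry near $\mi{D}$ whose boundary trace advances $t$ by one period, so the spacetime geometry near $\mi{D}$ is invariant under the period-$T$ shift, i.e.\ $(\mi{M}, g)$ is time-periodic near $\mi{D}$. If a genuine bulk time coordinate is wanted, one extends $t$ to $\mi{M} = (0, \rho_0] \times \mi{I}$ via the product structure; the invariance statement is that $\Phi_T$ advances the geometry by one period in $t$, and iterating $\Phi_T$ where defined recovers the full $\Z$-action matching that on $\mi{I}$.

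I do not anticipate a substantive obstacle: the entire analytic content is carried by Theorem \ref{theo:symmetries}, and what remains is bookkeeping --- formulating ``time-periodic'' uniformly for the boundary triple and for the bulk, and securing the $\phi_T$-invariance of $\mi{D}$ (or of a GNCC-preserving subdomain). It is worth emphasising that this corollary genuinely requires Theorem \ref{theo:symmetries} and cannot be deduced from Theorem \ref{theo:killing} alone: the period-$T$ shift generates only a $\Z$-action on $(\mi{I}, \gb{0}, \gb{n})$, not a one-parameter family of symmetries, and so need not arise from any holographic Killing field.
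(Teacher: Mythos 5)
Your proposal is correct and follows exactly the paper's intended route: the corollary is stated as an immediate consequence of Theorem \ref{theo:symmetries} (precisely, Theorem \ref{thm.app_symm}), applied to the period-$T$ time shift $\phi_T$, which is a holographic isometry of $(\gb{0},\gb{n})$ on $\mi{D}$ and hence extends to a bulk isometry $\Phi_T(\sigma,p)=(\sigma,\phi_T(p))$ near $\mi{D}$. Your closing remark that this cannot be obtained from Theorem \ref{theo:killing}, since the shift is a discrete symmetry not generated by a Killing field, is exactly the point the paper itself emphasises.
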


\subsubsection{Previous and Related Work}

The Riemannian analogue of Theorem \ref{theo:adscft} was proven by Biquard \cite{Biquard} using a Carleman estimate of Mazzeo \cite{mazzeo} for asymptotically hyperbolic manifolds; see also \cite{and_herz:uc_ricci, and_herz:uc_ricci_err}.
The work of Biquard was then generalised by Chrusciel and Delay \cite{ChruscDelay} to an analogue of Theorem \ref{theo:adscft}, under the restriction that the spacetimes are \emph{stationary}.
Also, \cite[Theorem 1.6]{ChruscDelay} is an analogue of our Theorem \ref{theo:symmetries}, again assuming a priori that the spacetime is stationary.

We note that a fundamental ingredient in \cite{and_herz:uc_ricci, and_herz:uc_ricci_err, Biquard, ChruscDelay} is that the key equations are elliptic in nature.
In contrast, our main theorems, which are centred around hyperbolic equations, constitute the first correspondence and symmetry extension results in general dynamical settings.

We also recall, again in the Riemannian context, the well-known result of Graham and Lee \cite{gra_lee:fg_riem_ball}, which shows (for $n \geq 4$) existence of asymptotically hyperbolic Einstein metrics on the Poincar\'e ball $\mathbb{B}^{n}$ with prescribed conformal infinity on the boundary, provided the boundary metric is sufficiently close (in the $C^{2,\alpha}$-norm) to the round metric on $\Sph^{n-1}$.
Note this corresponds to solving an elliptic Dirichlet problem, which has no analogue for hyperbolic equations.

In the Lorentzian context, we first mention the programme of Anderson \cite{andpro, anderson}.
In \cite{anderson}, a \emph{conditional} global symmetry extension result for stationary Killing vectors was established under global a priori assumptions on $( \mi{M}, g )$ (including convergence to stationarity as $t \rightarrow \pm \infty$), assuming that a unique continuation property holds from $\mi{I}$ for the linearized Einstein equations.

Moreover, extension results for Killing fields have seen several applications in general relativity.
For instance, (unconditional) Killing extension theorems have been established in the contexts of black hole rigidity \cite{alex_io_kl:unique_bh, alex_io_kl:rigid_bh, gior:killing_em, ionescusk}, cosmic censorship \cite{peters:cpt_cauchy}, and non-existence of time-periodicity \cite{alex_schl:time_periodic}.
The proofs of these results revolve around proving unique continuation for a system of tensorial wave and transport equations that is similar to the system studied in this paper; see Section \ref{sec.intro_ik}.

Returning to the aAdS setting, \cite{ChatzikaleasShao, hol_shao:uc_ads, hol_shao:uc_ads_ns, mcgill_shao:psc_aads} established the first unique continuation results for (scalar and tensorial) wave equations, from the conformal boundaries of general dynamical aAdS spacetimes.
In particular, the Carleman estimates developed in \cite{ChatzikaleasShao, hol_shao:uc_ads, hol_shao:uc_ads_ns, mcgill_shao:psc_aads} form a key ingredient for proving the main results of this paper; see Section \ref{sec.intro_summary} for further discussions in this direction.

Finally, the recent work of McGill \cite{mcgill:loc_ads}, which characterized locally AdS spacetimes in terms of its holographic data, can be seen as a precursor to our results and as a special case of Theorem \ref{theo:adscft}.
More specifically, \cite{mcgill:loc_ads} showed that (assuming the GNCC) a vacuum FG-aAdS segment $( \mi{M}, g )$ is locally AdS if and only if both $\gb{0}$ is conformally flat and $\gb{n} = 0$.
The key step in the proof of this result is a more straightforward analogue of the process in this paper; in particular, \cite{mcgill:loc_ads} applies directly the unique continuation results of \cite{mcgill_shao:psc_aads} to the tensorial wave equations satisfied by the spacetime curvature on a single aAdS spacetime.

\subsection{The Generalised Null Convexity Criterion} \label{sec.intro_GNCC}

We now turn our attention toward the key geometric assumption required for Theorems \ref{theo:adscft}, \ref{theo:killing}, and \ref{theo:symmetries}---the GNCC of \cite{ChatzikaleasShao}.
First, we give a rough statement of the GNCC, in the special case of vacuum aAdS spacetimes treated here:

\begin{definition} \label{def:admissibleright}
Let $( \mi{M}, g )$ be a vacuum FG-aAdS segment, with conformal boundary $( \mi{I}, \gm )$, and consider an open subset $\mi{D} \subset \mi{I}$ with compact closure.
We say $( \mi{D}, \gm )$ satisfies the \emph{generalised null convexity criterion} (or \emph{GNCC}) iff there is a $C^4$-function $\eta$ on a neighbourhood of $\bar{\mi{D}}$ such that:
\begin{itemize}
\item $\eta > 0$ on $\mi{D}$, and $\eta = 0$ on the boundary $\partial \mi{D}$ of $\mi{D}$.

\item The following bilinear form is uniformly positive-definite on $\mi{D}$ along all $\mf{g}$-null directions,
\begin{equation}
\label{conr} \eta^{-1} \Dm^2_{ \gm } \eta + \mc{P} [ \gm ] \text{,}
\end{equation}
where $\Dm^2_{ \gm }$ and $\mc{P} [ \gm ]$ are the Hessian and Schouten tensor with respect to $\gm$, respectively.
\end{itemize}
\end{definition}

\begin{remark} \label{rmk.intro_GNCC_conformal}
One important feature of the GNCC is that it is conformally invariant.
In particular, \cite[Proposition 3.6]{ChatzikaleasShao} showed that if $( \mi{D}, \gm )$ satisfies the GNCC with $\eta$, then the conformally related $( \mi{D}, e^{ 2 \varphi } \gm )$ also satisfies the GNCC, with $\eta' := e^\varphi \eta$.
\end{remark}

\begin{remark}
Observe that $\mc{P} [ \gm ]$ can be replaced by $\frac{1}{n-2} \operatorname{Ric} [ \gm ]$ in \eqref{conr}, since their difference is proportional to $\gm$ and hence vanishes along all null directions.
\end{remark}

\begin{remark} \label{rem:coinv}
One can also show \cite[Proposition 3.4]{mcgill_shao:psc_aads} that $( \mi{D}, \gm )$ satisfies the GNCC if and only if there exists $\eta$ as in Definition \ref{def:admissibleright} and a smooth function $\zeta: \mi{I} \rightarrow \R$ such that the following bilinear form is uniformly positive-definite on $\mi{D}$ along \emph{all} directions:\footnote{This can be directly checked when $n=2$. For $n \geq 3$, this follows from the fact that two bilinear forms that do not vanish simultaneously (except at zero) can be simultaneously diagonalised \cite{Greub}.}
\begin{equation}
\label{refo} \eta^{-1} \Dm^2_{ \gm } \eta + \mc{P} [ \gm ] - \zeta \, \gm
\end{equation}
\end{remark}

See Definition \ref{def.carleman_gncc} or \cite{ChatzikaleasShao} for a more precise description of the GNCC.
Roughly, one can interpret the GNCC as stating that the domain $\mi{D}$ is ``large enough" with respect to the geometry of $( \mi{I}, \gm )$.
Its main significance, demonstrated in \cite{ChatzikaleasShao}, is that it precisely captures the conditions on the conformal boundary that lead to pseudoconvexity of the near-boundary geometry.
More specifically, it ensures the level hypersurfaces of $\smash{f := \frac{\rho}{\eta}}$ are pseudoconvex in a small region of $\mi{M}$ near $\mi{D}$.
This observation was a crucial ingredient in the Carleman estimates of \cite{ChatzikaleasShao}; see the discussions in Section \ref{sec.intro_summary}.

\subsubsection{Special Cases}

To further flesh out Definition \ref{def:admissibleright}, let us now consider the special case of the AdS conformal boundary $( \mi{I}_\textrm{AdS}, \gm_\textrm{AdS} )$ of \eqref{adsi}, which satisfies
\begin{equation}
\label{eq.intro_ads_P} \gm_\textrm{AdS} = - dt^2 + \mathring{\gamma}_{n-1} \text{,} \qquad \mc{P} [ \gm_\textrm{AdS} ] = \frac{1}{2} ( dt^2 + \mathring{\gamma}_{n-1} ) \text{.}
\end{equation}
In addition, we take $\mi{D} := \mi{D}_0$ to be the time slab
\begin{equation}
\label{eq.intro_ads_D} \mi{D}_0 := ( t_-, t_+ ) \times \Sph^{n-1} \text{,} \qquad \partial \mi{D}_0 = \{ t_-, t_+ \} \times \Sph^{n-1} \text{,} \qquad T_- < t_- < t_+ < T_+ \text{.}
\end{equation}

\begin{proposition}[\cite{ChatzikaleasShao}, Corollary 3.14] \label{thm.intro_GNCC_ads}
$( \mi{D}_0, \gm_\textrm{AdS} )$ satisfies the GNCC if and only if $t_+ - t_- > \pi$.
\end{proposition}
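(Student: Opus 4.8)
The plan is to prove the two implications separately, reducing each to a one‑dimensional problem by exploiting that $\gm_\textrm{AdS} = -dt^2 + \mathring{\gamma}_{n-1}$ is a product metric with $\mathring{\gamma}_{n-1}$ independent of $t$. A useful preliminary observation: for a function $\eta = \eta(t)$ of $t$ alone, a direct computation (all Christoffel symbols of $\gm_\textrm{AdS}$ carrying a $t$-index vanish, and the remaining ones annihilate $\eta$) gives $\Dm^2_{\gm_\textrm{AdS}} \eta = \ddot{\eta}\, dt^2$, hence, using \eqref{eq.intro_ads_P},
\[
\eta^{-1} \Dm^2_{\gm_\textrm{AdS}} \eta + \mc{P}[\gm_\textrm{AdS}] = \Big( \tfrac{\ddot{\eta}}{\eta} + \tfrac{1}{2} \Big)\, dt^2 + \tfrac{1}{2}\, \mathring{\gamma}_{n-1} \text{,}
\]
which, evaluated on a $\gm_\textrm{AdS}$-null vector $a\, \partial_t + V$ (so that $|V|^2_{\mathring{\gamma}_{n-1}} = a^2$), equals $\big( \tfrac{\ddot{\eta}}{\eta} + 1 \big)\, a^2$.

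For the implication $t_+ - t_- > \pi \Rightarrow \mathrm{GNCC}$, I would exhibit the explicit competitor $\eta(t) := \sin\!\big( \tfrac{\pi (t - t_-)}{t_+ - t_-} \big)$, extended to a neighbourhood of $\bar{\mi{D}}_0$ in $\mi{I}_\textrm{AdS}$ by the same formula. It is smooth, is positive on $\mi{D}_0$, and vanishes on $\partial \mi{D}_0$; and since $\ddot{\eta}/\eta = -\pi^2/(t_+ - t_-)^2$ is constant, the preliminary observation shows that \eqref{conr} evaluated on any $\gm_\textrm{AdS}$-null $a\, \partial_t + V$ equals $\big( 1 - \pi^2/(t_+ - t_-)^2 \big)\, a^2$, a strictly positive constant multiple of $a^2$ precisely because $t_+ - t_- > \pi$. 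Since a nonzero $\gm_\textrm{AdS}$-null vector necessarily has $a \neq 0$, with $a^2$ comparable to its squared length, this is exactly the required uniform positive-definiteness along null directions.

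For the converse, suppose $(\mi{D}_0, \gm_\textrm{AdS})$ satisfies the GNCC with some function $\eta$. Rather than working with the a priori arbitrary $\eta$ directly, I would test the GNCC inequality along a single $\gm_\textrm{AdS}$-null geodesic that crosses the slab: let $\gamma$ be a unit-speed great circle in $(\Sph^{n-1}, \mathring{\gamma}_{n-1})$ and set $c(s) := (s, \gamma(s))$, a $\gm_\textrm{AdS}$-null geodesic whose image over $s \in [t_-, t_+]$ lies in $\bar{\mi{D}}_0$ (where $\eta$ is defined), with velocity $X := \partial_t + \dot{\gamma}$ satisfying $\mc{P}[\gm_\textrm{AdS}](X, X) = 1$. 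Writing $h(s) := \eta(c(s))$, the geodesic equation gives $h'' = (\Dm^2_{\gm_\textrm{AdS}} \eta)(X, X)$, so the GNCC inequality restricted to $c$ reads $h'' + h \geq \kappa\, h$ on $(t_-, t_+)$ for some constant $\kappa > 0$, while $h > 0$ there and $h(t_\pm) = 0$. The case $\kappa \geq 1$ is impossible, since it would make $h$ convex on $(t_-, t_+)$ and hence $\leq 0$; so $0 < \kappa < 1$, and a Wronskian comparison of $h$ with $z(s) := \sin\!\big( \sqrt{1 - \kappa}\,(s - t_-) \big)$ — using that $W := h'z - hz'$ satisfies $W' = (h'' + (1 - \kappa)h)\, z \geq 0$ wherever $z > 0$, with $W(t_-) = 0$ and $W(t_+) = h'(t_+)\, z(t_+) \leq 0$ — rules out $t_+ - t_- < \pi/\sqrt{1 - \kappa}$, forcing $t_+ - t_- \geq \pi/\sqrt{1 - \kappa} > \pi$.

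I expect the only genuinely delicate step to be this converse: the decisive idea is to probe the GNCC along a null geodesic, so that $\mc{P}$ contributes the clean constant $1$ and the Hessian term becomes $h''$, after which the problem collapses to the classical Sturm oscillation bound; and it is precisely the \emph{uniformity} in the GNCC (the constant $\kappa > 0$) that upgrades the bare estimate $t_+ - t_- \geq \pi$ to the strict inequality. The forward direction, together with the Christoffel-symbol bookkeeping, is routine once the product structure of $\gm_\textrm{AdS}$ is exploited.
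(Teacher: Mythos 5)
Your proof is correct and follows essentially the same route as the paper and the cited result in \cite{ChatzikaleasShao}: the reduction \eqref{eq.intro_ads_GNCC} for $t$-dependent $\eta$ together with the explicit choice $\eta = \sin ( \pi (t - t_-) / (t_+ - t_-) )$ gives sufficiency, and a Sturm-type comparison along a null direction gives necessity. The only step to tighten is the Wronskian endgame in the converse: with the non-strict inequality $h'' + (1-\kappa) h \geq 0$, the three facts $W(t_-) = 0$, $W' \geq 0$ where $z > 0$, and $W(t_+) \leq 0$ by themselves only force $W \equiv 0$ on $[t_-, t_+]$, so you should either add the equality-case analysis ($W \equiv 0$ gives $h \propto z$, hence $h(t_+) > 0$, contradicting $h(t_+) = 0$) or invoke the strict pointwise inequality in the GNCC, which yields $W(t_+) > 0$ and the contradiction immediately.
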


The key observation here is that if we assume $\eta$ to depend only on $t$, then \eqref{eq.intro_ads_P} yields
\begin{equation}
\label{eq.intro_ads_GNCC} ( \eta^{-1} \Dm^2_{ \gm } \eta + \mc{P} [ \gm ] ) ( \mf{Z}, \mf{Z} ) = \eta^{-1} ( \mf{Z} t )^2 \, ( \ddot{\eta} + \eta ) \text{.}
\end{equation}
Then, one can directly check that Definition \ref{def:admissibleright} is satisfied by the function
\begin{equation}
\label{eq.intro_ads_eta} \eta := \sin \left( \pi \cdot \frac{ t - t_- }{ t_+ - t_- } \right)
\end{equation}
whenever $t_+ - t_- > \pi$.\footnote{In particular, the condition $t_+ - t_- > \pi$ is required for the right-hand side of \eqref{eq.intro_ads_GNCC} to be positive.}
(Conversely, if $t_+ - t_- \leq \pi$, then a contradiction argument using Sturm comparison yields that the GNCC cannot hold for $( \mi{D}_{ 0}, \gm_\textrm{AdS} )$; see \cite[Lemma 3.7]{ChatzikaleasShao}.)

\begin{remark}
Note in particular that Proposition \ref{thm.intro_GNCC_ads} applies to every Kerr-AdS spacetime, since these all induce the AdS conformal boundary.
\end{remark}

\begin{remark}
The key consequence of Proposition \ref{thm.intro_GNCC_ads}---that unique continuation for wave equations holds from $\mi{D}_0$ when $t_+ - t_- > \pi$---was first proved as a special case of the results of \cite{hol_shao:uc_ads}.
\end{remark}

Next, we move to more general boundary domains that are foliated by a time function $t$,
\begin{equation}
\label{eq.intro_time} \mi{I}_\ast := ( T_-, T_+ ) \times \mc{S} \text{,} \qquad \mi{D}_\ast := ( t_-, t_+ ) \times \mc{S} \text{,}
\end{equation}
with $\mc{S}$ being a compact manifold of dimension $n - 1$.
Previous unique continuation results for linear wave equations were developed in this setting \eqref{eq.intro_time}, and these can also be viewed as special cases of the GNCC.
First, \cite{hol_shao:uc_ads} developed an analogue of the GNCC for static $\gm$.
This was extended to non-static $\gm$ in \cite{hol_shao:uc_ads_ns}, and then to a wider class of metrics $\gm$ and time foliations in \cite{mcgill_shao:psc_aads}.

Let us focus on the key criterion of \cite{mcgill_shao:psc_aads}, as well as its relation to the GNCC:

\begin{proposition}[\cite{ChatzikaleasShao}, Proposition 3.13] \label{thm.intro_NCC}
Assume the setting of \eqref{eq.intro_time}, and suppose there exist constants $0 \leq B < C$ such that the following holds for any $\mf{g}$-null vector field $\mf{Z}$:\footnote{Observe that the second condition of \eqref{eq.intro_NCC} can be viewed as a bound on the non-stationarity of $\gm$, since $\mf{D}_{ \gm }^2 t$ is proportional to the Lie derivative of $\gm$ along the gradient of $t$.}
\begin{equation}
\label{eq.intro_NCC} \mc{P} [ \gm ] ( \mf{Z}, \mf{Z} ) \geq C^2 \cdot ( \mf{Z} t )^2 \text{,} \qquad | \mf{D}_{ \gm }^2 t ( \mf{Z}, \mf{Z} ) | \leq 2 B \cdot ( \mf{Z} t )^2 \text{.}
\end{equation}
Then, \emph{$( \mi{D}_\ast, \gm )$ satisfies the GNCC} as long as $t_+ - t_-$ is large enough (depending on $B$ and $C$).
\end{proposition}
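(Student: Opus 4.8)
The plan is to reduce the GNCC for $(\mi{D}_\ast, \gm)$ to the construction of a single function $\eta$ depending only on the time coordinate $t$, and then to solve a Sturm-type ODE for $\eta$ using the hypotheses \eqref{eq.intro_NCC}. First I would take an ansatz $\eta = \psi(t)$, where $\psi : [t_-, t_+] \to \R$ is $C^4$, positive on $(t_-, t_+)$, and vanishes at the endpoints. With this ansatz, for any $\mf{g}$-null vector field $\mf{Z}$ one computes
\[
\Dm^2_{\gm} \eta (\mf{Z}, \mf{Z}) = \ddot{\psi}(t) \, (\mf{Z} t)^2 + \dot{\psi}(t) \, \Dm^2_{\gm} t (\mf{Z}, \mf{Z}) \text{,}
\]
so that, using $\mc{P}[\gm](\mf{Z},\mf{Z}) \geq C^2 (\mf{Z}t)^2$ and $|\Dm^2_{\gm} t (\mf{Z},\mf{Z})| \leq 2B (\mf{Z}t)^2$ from \eqref{eq.intro_NCC}, one gets the pointwise lower bound
\[
\big( \eta^{-1} \Dm^2_{\gm} \eta + \mc{P}[\gm] \big)(\mf{Z}, \mf{Z}) \;\geq\; (\mf{Z}t)^2 \, \psi(t)^{-1} \big( \ddot{\psi}(t) - 2B \, |\dot{\psi}(t)| + C^2 \psi(t) \big) \text{,}
\]
at least when $\dot\psi$ and the error term are handled consistently. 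Since every $\mf{g}$-null $\mf{Z}$ on $\mi{I}_\ast$ has $\mf{Z}t \neq 0$ (the slices $\{t = \text{const}\}$ are spacelike for a time function), it suffices to choose $\psi$ so that the bracketed quantity is uniformly positive on $(t_-, t_+)$ and $\psi^{-1}$ times it stays bounded below near the endpoints.

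The main step, then, is to solve the differential inequality $\ddot{\psi} - 2B|\dot\psi| + C^2 \psi \geq \epsilon_0 \psi$ on an interval of a prescribed length, with $\psi > 0$ in the interior and $\psi$ vanishing at the endpoints. The natural candidate is a damped sinusoid: on the first half $[t_-, t_m]$ (where $\dot\psi \geq 0$) take $\psi(t) = e^{B(t - t_-)} \sin(\omega (t - t_-))$, and on the second half mirror it, $\psi(t) = e^{B(t_+ - t)}\sin(\omega(t_+ - t))$, gluing at the midpoint $t_m = (t_- + t_+)/2$. For this to be $C^1$ (hence, after smoothing, $C^4$) at $t_m$ we need $\dot\psi(t_m) = 0$, which forces $\omega (t_+ - t_-)/2 = \pi/2$, i.e. $\omega = \pi/(t_+ - t_-)$; and a direct substitution shows $\ddot\psi - 2B|\dot\psi| + C^2\psi = (C^2 - B^2 - \omega^2)\psi$ on each half, so we need $t_+ - t_- > \pi/\sqrt{C^2 - B^2}$ to make the coefficient strictly positive. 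This is exactly the claimed ``$t_+ - t_-$ large enough depending on $B$ and $C$'' (note $C > B \geq 0$ guarantees $C^2 - B^2 > 0$). The value $\psi^{-1}(\ddot\psi - 2B|\dot\psi| + C^2\psi) = C^2 - B^2 - \omega^2$ is a positive constant, so the required uniform positive-definiteness, including the behaviour as $t \to t_\pm$, is automatic. One then smooths the mild corner at $t_m$ (replacing $\psi$ on a small neighbourhood of $t_m$ by a $C^4$ function that preserves positivity and the differential inequality with a slightly smaller constant), and, if $\mi{D}_\ast$ only has compact closure inside $\mi{I}_\ast$, extends $\eta$ to a $C^4$ function on a neighbourhood of $\ol{\mi{D}_\ast}$ vanishing appropriately on $\partial\mi{D}_\ast = \{t_-, t_+\} \times \mc{S}$.

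The subtle point I expect to be the real obstacle is the handling of the absolute value $|\dot\psi|$ and the sign bookkeeping: the bound on $\Dm^2_{\gm} t$ is two-sided, so in the estimate one must pair $\dot\psi \, \Dm^2_{\gm} t(\mf{Z},\mf{Z}) \geq -2B|\dot\psi|(\mf{Z}t)^2$ regardless of the sign of $\dot\psi$, which is fine, but one must make sure the chosen $\psi$ has $\dot\psi$ of a definite sign on each half-interval so that the exponential factor $e^{\pm B(\cdot)}$ is matched to the correct branch — otherwise the damping term works against us rather than for us. A secondary technical nuisance is the regularity: the naive glued $\psi$ is only $C^1$ at $t_m$ (its second derivative jumps), so a genuine $C^4$ (indeed one could ask $C^\infty$) mollification near $t_m$ is needed, carried out so as not to destroy either positivity or the strict differential inequality; since the inequality is strict with room to spare, a standard cutoff-and-average argument suffices. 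Finally, one should remark that the dependence of the threshold length on $B, C$ is explicit, $\pi/\sqrt{C^2 - B^2}$, recovering Proposition \ref{thm.intro_GNCC_ads} in the case $B = 0$, $C = 1/\sqrt{2}$ would give $\pi\sqrt{2}$ rather than $\pi$ — so in fact for the AdS boundary one uses a sharper choice of $\eta$ (not merely the null bound, but the exact identity \eqref{eq.intro_ads_GNCC}), consistent with the excerpt treating Proposition \ref{thm.intro_GNCC_ads} separately.
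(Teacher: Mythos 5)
Your reduction to the scalar differential inequality is correct and is exactly the route the paper (following \cite{ChatzikaleasShao}) sketches: taking $\eta = \psi(t)$ gives $\Dm^2_{\gm}\eta(\mf{Z},\mf{Z}) = \ddot\psi\,(\mf{Z}t)^2 + \dot\psi\,\Dm^2_{\gm}t(\mf{Z},\mf{Z})$, and the bounds \eqref{eq.intro_NCC} reduce the GNCC to finding $\psi>0$ on $(t_-,t_+)$, vanishing at the endpoints, with $\ddot\psi - 2B|\dot\psi| + C^2\psi \geq \epsilon\,\psi$ --- i.e.\ the damped oscillator \eqref{odel}. The gap is in your explicit construction when $B>0$. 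For $\psi_L(t) = e^{B(t-t_-)}\sin(\omega(t-t_-))$ one has $\dot\psi_L = e^{B(t-t_-)}\bigl(B\sin(\omega(t-t_-)) + \omega\cos(\omega(t-t_-))\bigr)$, so $\dot\psi_L(t_m)=0$ forces $\tan(\omega\Delta/2) = -\omega/B$ with $\Delta := t_+-t_-$, i.e.\ $\omega\Delta/2 = \pi - \arctan(\omega/B) > \pi/2$; it does \emph{not} force $\omega\Delta/2 = \pi/2$ (that is only the case $B=0$). With your choice $\omega = \pi/\Delta$ the two branches meet at $t_m$ with one-sided derivatives $+Be^{B\Delta/2}$ and $-Be^{B\Delta/2}$: the glued function is merely Lipschitz, not $C^1$, and the kink is concave. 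This cannot be repaired by mollification: any $C^2$ replacement agreeing with $\psi$ outside $(t_m-\delta,t_m+\delta)$ must satisfy $\ddot\psi \leq -Be^{B\Delta/2}/\delta$ somewhere in that interval, while $2B|\dot\psi|$ and $C^2\psi$ stay bounded, so the strict inequality is destroyed however the smoothing is done. Thus the step ``the inequality is strict with room to spare, a standard cutoff-and-average argument suffices'' fails exactly at the point you yourself flagged as delicate.

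The fix --- essentially what \cite{ChatzikaleasShao, hol_shao:uc_ads_ns} do --- is to glue at the first critical point of the rising branch: choose $\omega\in(0,\sqrt{C^2-B^2})$ (equivalently take $b,c$ with $B\leq b<c<C$ as in \eqref{odel}) and fix the half-length by $\omega\Delta/2 = \pi - \arctan(\omega/B)$. Then by the mirror symmetry $\psi_R(t)=\psi_L(t_++t_--t)$ the junction is automatically $C^2$ (only the third derivative jumps), $\psi>0$ on $(t_-,t_+)$ since $\omega\Delta/2<\pi$, $\dot\psi\geq 0$ on the left half so the $|\dot\psi|$ bookkeeping works as intended, and $\ddot\psi - 2B|\dot\psi| + C^2\psi = (C^2-B^2-\omega^2)\psi$ on each half; mollifying a jump in $\psi'''$ is then genuinely harmless, since it perturbs $\psi$ only slightly in $C^2$ and the inequality is open. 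The price is that the admissible length is $t_+-t_- \geq 2\omega^{-1}\bigl(\pi-\arctan(\omega/B)\bigr)$, not your $\pi/\sqrt{C^2-B^2}$; since the proposition only asserts ``large enough depending on $B$ and $C$'', this is sufficient. Two minor remarks: the uniform positivity required by Definition \ref{def.carleman_gncc} also uses $(\mf{Z}t)^2 \gtrsim \mf{h}(\mf{Z},\mf{Z})$ along null directions, which follows from compactness of $\bar{\mi{D}}_\ast$ together with $\mf{Z}t\neq 0$; and for exact AdS the NCC holds with $B=0$, $C=1$ (not $1/\sqrt{2}$), because $\mathring{\gamma}_{n-1}(\mf{Z},\mf{Z})=(\mf{Z}t)^2$ for null $\mf{Z}$, so this route does recover the sharp threshold $\pi$ of Proposition \ref{thm.intro_GNCC_ads}.
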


The proof of Proposition \ref{thm.intro_NCC} is similar to that of Proposition \ref{thm.intro_GNCC_ads}, except one now chooses $\eta$ (still depending only on $t$) to roughly solve a damped harmonic oscillator:
\begin{equation}
\label{odel} \ddot{\eta} - 2 b | \dot{\eta} | + c^2 \eta = 0 \text{,} \qquad B \leq b < c < C \text{.}
\end{equation}

\begin{remark}
The connection between damped harmonic oscillators \eqref{odel} and unique continuation from $\mi{D}_\ast$ was first illuminated in \cite{hol_shao:uc_ads_ns}; see the discussions therein.
\end{remark}

\begin{remark}
Proposition \ref{thm.intro_NCC} can be used to generalize the conclusions of Proposition \ref{thm.intro_GNCC_ads}:
\begin{itemize}
\item For instance, if $t_+ - t_- > \pi$, then Proposition \ref{thm.intro_NCC} implies that $( \mi{D}_0, \gm )$ satisfies the GNCC whenever $\gm$ is a sufficiently small perturbation of $\gm_\textrm{AdS}$.

\item If $\gm$ is static with respect to $t$, and if the cross-sections $\mc{S}$ have positive Ricci curvature, then $( \mi{D}_\ast, \gm )$ satisfies the GNCC for sufficiently large $t_+ - t_-$; see \cite[Proposition B.2]{hol_shao:uc_ads_ns}.
\end{itemize}
\end{remark}

The conditions \eqref{eq.intro_NCC} were first identified in \cite{mcgill_shao:psc_aads} and were named the \emph{null convexity criterion} (or \emph{NCC}).
Proposition \ref{thm.intro_NCC} shows that the GNCC indeed generalizes the NCC, both removing the need for a predetermined time function and allowing for a larger class of boundary domains $\mi{D}$.

One advantage of the NCC \eqref{eq.intro_NCC} is that it is easier to check than the rather abstract GNCC.
On the other hand, one shortcoming of \eqref{eq.intro_NCC} is that it fails to be conformally invariant, as a conformal transformation of $\gm$ can cause \eqref{eq.intro_NCC} to no longer hold.
This makes the NCC undesirable for the main results of this paper and provides a key motivation for developing the GNCC.

\subsubsection{Geodesic Return}

The necessity of \emph{some} geometric condition in Theorem \ref{theo:adscft} was already conjectured in \cite{hol_shao:uc_ads, hol_shao:uc_ads_ns}, due to the special properties of AdS geometry near its conformal boundary.

On AdS spacetime, there exist null geodesics which propagate arbitrarily close to the conformal boundary $\mi{I}_\textrm{AdS}$, but only intersect $\mi{I}_\textrm{AdS}$ at two points that are time $\pi$ apart; see \cite[Section 1.2]{hol_shao:uc_ads}.\footnote{In terms of the standard embedding of AdS spacetime into the Einstein cylinder $\R \times \Sph^n_+$, these geodesics move forward in time and along great circles in the spatial component, both with constant speed.}
One can then construct, via the geometric optics methods of Alinhac and Baouendi \cite{alin_baou:non_unique}, solutions to linear wave equations that are concentrated along such a family of geodesics.\footnote{One important caveat is that the methods of \cite{alin_baou:non_unique} only apply directly to wave operators $\Box_g + \sigma$ with the conformal mass $4 \sigma := n^2 - 1$. The case of general $\sigma$ will be treated in the upcoming work of Guisset \cite{guiss:non_unique}.}
These solutions yield, for AdS spacetime, counterexamples to unique continuation for various linear wave equations when the data on the conformal boundary is imposed on a timespan of less than $\pi$ (the return time of these null geodesics), between the start and end times of the geodesics.

\begin{remark}
We note that not every wave equation can have such counterexamples to unique continuation.
By Holmgren's theorem \cite{hor:lpdo2}, if all the coefficients of the wave equation (including the principal part $g$) are real-analytic, then the above counterexamples cannot exist.
\end{remark}

One can in fact view the GNCC as a generalization of the above intuitions for AdS spacetime to aAdS settings.
This observation was given in \cite[Theorem 4.1]{ChatzikaleasShao}, which connected the GNCC to the trajectories of null geodesics near the conformal boundary.
In particular, given a spacetime null geodesic $\Lambda$ that is sufficiently close to the conformal boundary and that travels over $\mi{D}$ satisfying the GNCC, \cite[Theorem 4.1]{ChatzikaleasShao} established that $\Lambda$ must intersect the conformal boundary within $\mi{D}$ (in either the future or past direction).
In other words, there cannot exist any near-boundary null geodesics that travel over $\mi{D}$ but do not terminate at $\mi{D}$ itself.
From this, one concludes that the Alinhac-Baouendi counterexamples of \cite{alin_baou:non_unique} in AdS cannot be constructed over $\mi{D}$.

\begin{remark}
In addition, \cite[Theorem 4.5]{mcgill_shao:psc_aads} established an analogue of \cite[Theorem 4.1]{ChatzikaleasShao} for the NCC \eqref{eq.intro_NCC}.
This was the first result in general aAdS settings that directly connected criteria for unique continuation to near-boundary null geodesic trajectories.
\end{remark}

Consequently, the above discussions give us two justifications for the GNCC being the crucial condition for unique continuation of wave equations from the conformal boundary:
\begin{itemize}
\item The GNCC rules out the known counterexamples to unique continuation for waves.

\item The GNCC implies pseudoconvexity, allowing for unique continuation results to be proved.
\end{itemize}

\begin{remark}
Though the GNCC is crucial to our proof of Theorem \ref{theo:adscft}, it is not known whether the methods of \cite{alin_baou:non_unique} extend to the Einstein-vacuum equations.
The construction of counterexamples to unique continuation in Theorem \ref{theo:adscft} when the GNCC is violated is a challenging problem.
\end{remark}

Finally, we note that this connection between the GNCC and null geodesics can be used to show that \emph{no subdomain $\mi{D}$ of the planar AdS or toric AdS conformal boundaries can satisfy the GNCC}; see \cite[Corollary 3.10]{ChatzikaleasShao}.\footnote{These are analogues of AdS, but with the spheres $\Sph^{n-1}$ are replaced with $\R^{n-1}$ or the flat torus $\mathbb{T}^{n-1}$.}
In particular, on both planar and toric AdS spacetimes, there exist null geodesics that remain arbitrarily close to but never intersect the conformal boundary for all times.

\subsection{Proof Overview of Theorem \ref{theo:adscft}} \label{sec.intro_summary}

In this subsection, we provide an outline of the proof of Theorem \ref{theo:adscft}, our key result.
First, via an appropriate gauge transformation, we can assume
\begin{equation}
\label{eq.intro_proof_ass} ( \gb{0}, \gb{n} ) = ( \gbc{0}, \gbc{n} )
\end{equation}
on $\mi{D}$, without any loss of generality; for details of this process, see Section \ref{sec.app_gauge}.

Furthermore, since we are only concerned with the near-boundary region, we can assume (see Remark \ref{rmk.intro_adscft_mfld}) that $\smash{ \check{\mi{M}} = \mi{M} }$, so that the two aAdS metrics $g$ and $\check{g}$ take the forms.
\begin{equation}
\label{eq.intro_proof_g} g = \rho^{-2} [ d \rho^2 + \gv ( \rho ) ] \text{,} \qquad \check{g} = \rho^{-2} [ d \rho^2 + \check{\gv} ( \rho ) ] \text{.}
\end{equation}
In light of \eqref{eq.intro_proof_g}, it suffices to show that
\begin{equation}
\label{eq.intro_proof_goal} \gv - \check{\gv} = 0 \text{.}
\end{equation}
Below, we discuss each of the three key components of the proof of Theorem \ref{theo:adscft}.

\subsubsection{The Vertical Tensor Calculus}

The main objects of analysis in the proof of Theorem \ref{theo:adscft} are so-called \emph{vertical tensor fields}.
These can be thought of as tensor fields on $\mi{M}$ that are everywhere tangent to the level sets of $\rho$; an equivalent way to view vertical tensor fields is as $\rho$-parametrized families of tensor fields on $\mi{I}$.
See Section \ref{sec.aads_fg} for a more detailed development.

The simplest examples of vertical tensor fields are the \emph{vertical metrics} $\gv$ and $\check{\gv}$.
As these define Lorentzian metrics on each level set of $\rho$, one can also define corresponding vertical connections $\Dv$ and $\check{\Dv}$ on $\mi{M}$, respectively.
Other vertical tensor fields are obtained by appropriate decompositions of spacetime quantities, such as the Weyl curvature $W$ associated with $g$:\footnote{See Section \ref{sec.aads_fg} for precise coordinate conventions. Roughly, Latin letters $a, b, \dots$ denote vertical components.}
\[
\wv^0_{ a b c d } := \rho^2 \, W_{ a b c d } \text{,} \qquad \wv^1_{ a b c } := \rho^2 \, W_{ \rho a b c } \text{,} \qquad \wv^2_{ a b } := \rho^2 \, W_{ \rho a \rho b } \text{.}
\]

One reason for formulating our main quantities as vertical tensor fields is that these, when viewed as $\rho$-parametrised tensor fields on $\mi{I}$, have a natural notion of limits at the conformal boundary---as $\rho \rightarrow 0$.
(For instance, the boundary limit of $\gv$ is the boundary metric $\gm = \gb{0}$.)
This allows one to easily connect quantities in the bulk spacetime with those on the conformal boundary.

Analogues of vertical tensor fields have been widely used in mathematical relativity,\footnote{These are usually formulated as \emph{horizontal tensors} that are everywhere tangent to a foliation of spacelike submanifolds. Common examples include the connection and curvature components in a double null foliation.} but here we also extend these ideas beyond the standard uses.
In particular, since tensorial wave equations play a key role in the proof of Theorem \ref{theo:adscft}, we want to make sense of a spacetime wave operator $\Boxm$ applied to vertical tensor fields.
Furthermore, we aim to do this in a covariant manner, so that the usual operations of geometric analysis---such as Leibniz rules and integrations by parts---continue to hold.
As a result of this, we can present the analysis of vertical tensor fields in almost the exact same manner as corresponding analyses of scalar fields.

The difficulty in defining $\Boxm$ covariantly lies in making proper sense of \emph{second, spacetime} derivatives of vertical tensor fields.\footnote{First spacetime derivatives can be straightforwardly defined by projecting spacetime covariant derivatives.}
To get around this, we extend our calculus to \emph{mixed tensor fields}---those that contain both spacetime and vertical components.
This allows us to make sense of the spacetime Hessian as adding spacetime components to a mixed field; see Section \ref{sec.aads_mixed} for details.

Mixed tensor fields and extended wave operators $\Boxm$ originated from \cite{shao:ksp} and have been applied in aAdS contexts in \cite{ChatzikaleasShao, hol_shao:uc_ads, hol_shao:uc_ads_ns, mcgill_shao:psc_aads}.\footnote{Similar notions were independently developed and used in \cite{keir:weak_null}.}
The full vertical (and mixed) tensor calculus, in the form shown in this paper, was first constructed in \cite{mcgill_shao:psc_aads, shao:aads_fg} and was also adopted in \cite{ChatzikaleasShao}.

\begin{remark}
An alternative approach is to decompose our quantities into scalar fields and derive an analogue of the wave-transport system used in \cite{alex_io_kl:hawking_anal}.
One disadvantage is that the unknowns are only locally defined, while we have to work with all of $\mi{D}$ simultaneously.
In contrast, the vertical formalism allows us to present our arguments in a geometric and frame-independent manner.
\end{remark}

\subsubsection{The Wave-Transport System}

The strategy for obtaining \eqref{eq.intro_proof_goal} is to formulate $\gv - \check{\gv}$ as an unknown in a closed system of (vertical) tensorial transport and wave equations, and to then apply the requisite unique continuation results to this system.

From the Gauss-Codazzi equations on level sets of $\rho$ and from \eqref{evec}, one derives
\begin{equation}
\label{basiceq} \partial_\rho \gv_{ab} =: \Lv_{ab} \text{,} \qquad \partial_\rho \Lv_{ab} = -2 \wv^2_{ab} + \rho^{-1} \Lv_{ab} + \frac{1}{2} \gv^{cd} \Lv_{ad} \Lv_{bc} \text{,} \qquad \Dv_b \Lv_{ a c } - \Dv_a \Lv_{ b c } = 2 \wv^1_{ c a b } \text{;}
\end{equation}
analogous formulas also hold with respect to $\check{\gv}$.
We wish to couple the transport equations \eqref{basiceq} to wave equations satisfied by the Weyl curvature $W$ (see Proposition \ref{thm.aads_weyl_pre} for precise formulas):
\[
\Box_g W + 2n W = W \cdot W \text{.}
\]
Decomposing $W$ into vertical components as before, we derive, for $\ms{U} \in \{ \wv^\star, \wv^1, \wv^2 \}$,
\begin{equation}
\label{wavew} \Boxm \ms{U} + c_\ms{U} \ms{U} = \ms{NL} ( \gv, \Lv, \Dv \Lv, \wv^\star, \wv^1, \wv^2, \Dv \wv^\star, \Dv \wv^1, \Dv \wv^2 ) \text{,}
\end{equation}
where $c_\ms{U} \in \Z$ depends on the component $\ms{U}$ considered.
Moreover, $\ms{NL} (\cdot)$ represents terms involving (contractions of) the listed quantities that decay sufficiently quickly toward the conformal boundary, while $\wv^\star$ in \eqref{wavew} is a renormalization of $\wv^0$; see \eqref{eq.aads_wstar} and \eqref{eq.aads_wave} for precise formulas.

\begin{remark}
That different masses $c_{\ms{U}}$ appear in \eqref{wavew}, at least when $n > 3$, is because $\wv^\star$, $\wv^1$, and $\wv^2$ have different asymptotics (in powers of $\rho$) at the conformal boundary.\footnote{The case $n = 3$ is an exception, as $c_{ \wv^1 } = c_{ \wv^2 } = 2$, and $\wv^\star$ is fully determined by $\wv^1$ and $\wv^2$.}
One consequence of this is that we must treat the components $\wv^\star$, $\wv^1$, $\wv^2$ separately in our analysis.
\end{remark}

Subtracting \eqref{basiceq}-\eqref{wavew} from their counterparts for $\check{\gv}$ yields a closed wave-transport system for the quantities $\gv - \check{\gv}$,  $\Lv - \check{\Lv}$, $\wv^\star - \check{\wv}^\star$, $\wv^1 - \check{\wv}^1$, and $\wv^2 - \check{\wv}^2$.
However this system \emph{fails to close for the purpose of applying our Carleman estimates}.
In particular, the wave equation will only allow us to control up to one derivative of $\wv^\star - \check{\wv}^\star$, $\wv^1 - \check{\wv}^1$, and $\wv^2 - \check{\wv}^2$, which, in turn, allows us to control only one derivative of $\Lv - \check{\Lv}$ and $\gv - \check{\gv}$.
On the other hand, when we take a difference of the wave equations \eqref{wavew}, we obtain a term involving the difference of $\Boxm$ and $\smash{\check{\Boxm}}$, which contains second derivatives of $\gv - \check{\gv}$ (since $\ms{U}$ is a tensor) that we a priori cannot handle.\footnote{In principle, one may try to obtain estimates for two vertical derivatives of the metric from the structure equation involving $\ms{w}^0$ (see the second equation in \eqref{eq.aads_connection}) and the vertical Riemann curvature. However, this introduces other difficulties related to finding an appropriate gauge on the vertical slices.}

The resolution, inspired by the symmetry extension result \cite{ionescusk} of Ionescu and Klainerman, is to apply a careful renormalization of the system that eliminates the troublesome quantities.
(See also Section \ref{sec.intro_ik} below, where we compare our wave-transport system with that of \cite{ionescusk}.)

The first crucial observation is that while $\Dv^2 ( \gv - \check{\gv} )$ is off limits, we can obtain improved control if one derivative is a curl.
In particular, the first and third equations in \eqref{basiceq} yield, roughly,
\begin{align*}
\partial_\rho [ \Dv_{db} ( \gv - \check{\gv} )_{ac} - \Dv_{da} ( \gv - \check{\gv} )_{bc} ] &\sim \Dv_d [ ( \Dv_b \Lv_{ a c } - \Dv_a \Lv_{ b c } ) - ( \check{\Dv}_b \check{\Lv}_{ a c } - \check{\Dv}_a \check{\Lv}_{ b c } ) ] \\
&\sim \Dv_d ( \wv^1 - \check{\wv}^1 )_{cab} \text{.}
\end{align*}
The above still does not quite suffice, and we need one more renormalization---this is due to terms involving $\Dv ( \check{\Dv} - \Dv )$, which again contain the undesirable $\Dv^2 ( \gv - \check{\gv} )$.
All this leads us to define the auxiliary quantities (see \eqref{Qdef} and \eqref{Bdef} for precise formulas)
\begin{align}
\label{eq.intro_proof_B} \ms{B}_{cab} &:= \Dv_c ( \gv - \check{\gv} )_{ab} - \Dv_a ( \gv - \check{\gv} )_{cb} - \Dv_b \ms{Q}_{ca} \text{,} \\
\notag \partial_\rho \ms{Q}_{ca} &:= \gv^{de} \Lv_{ce} ( \gv - \check{\gv} )_{ad} - \gv^{de} \Lv_{ae} ( \gv - \check{\gv} )_{cd} \text{,}
\end{align}
with $\ms{Q} \rightarrow 0$ as $\rho \rightarrow 0$.
We then show that $\Dv \ms{B}$ can indeed be adequately controlled by $\Dv ( \wv^1 - \check{\wv}^1 )$.

The second crucial observation comes from a detailed examination of the difference $\smash{ \Boxm - \check{\Boxm} }$.
To appreciate this, we consider the wave equation for just $\wv^2 - \check{\wv}^2$ for concreteness:
\begin{equation}
\label{eq.intro_proof_boxdiff1} \Boxm ( \wv^2 - \check{\wv}^2 ) = \Boxm \wv^2 - \check{\Boxm} \check{\wv}^2 - ( \Boxm - \check{\Boxm} ) \check{\wv}^2 \text{.}
\end{equation}
The dangerous terms arise from the following (rather long) computation,
\begin{equation}
\label{eq.intro_proof_boxdiff2} ( \Boxm - \check{\Boxm} ) \check{\wv}^2_{ab} = - \frac{1}{2} \rho^2 \gv^{cd} \gv^{ef} \check{\wv}^2_{eb} \Dv_c \ms{B}_{afd} - \frac{1}{2} \gv^{ef} \check{\wv}^2_{eb} \Boxm ( \gv - \check{\gv} + \ms{Q} )_{af} + \{ a \leftrightarrow b \} + \ms{Err} \text{,}
\end{equation}
where $\{ a \leftrightarrow b \}$ denotes the preceding terms repeated but with $a$ and $b$ interchanged, and where $\ms{Err}$ consists of (many) terms containing only difference quantities that we can control.

The key point is that the only instances of $\Dv^2 ( \gv - \check{\gv} )$ appear either as $\Dv \ms{B}$, which we can control, or as $\Boxm$ applied to difference quantities.
This leads us to the renormalized curvature difference
\begin{equation}
\label{eq.intro_proof_wdiff} \ms{W}^2_{ab} := \wv^2_{ab} - \check{\wv}^2_{ab} + \frac{1}{2} \gv^{de} \check{\wv}^2_{ad} ( \gv - \check{\gv} + \ms{Q} )_{be} + \frac{1}{2} \gv^{de} \check{\wv}^2_{db} ( \gv - \check{\gv} + \ms{Q} )_{ae} \text{,}
\end{equation}
which in essence shifts the $\Boxm$-terms from the right-hand side of \eqref{eq.intro_proof_boxdiff2} into the left; one can also define the remaining $\ms{W}^1$ and $\ms{W}^\star$ similarly.
In light of \eqref{eq.intro_proof_boxdiff1} and \eqref{eq.intro_proof_boxdiff2}, we obtain that $\ms{W}^\star$, $\ms{W}^1$, $\ms{W}^2$ satisfy wave equations that do not contain $\Dv^2 ( \gv - \check{\gv} )$ as sources.

Finally, the renormalized wave-transport system is obtained by treating the quantities
\begin{equation}
\label{eq.intro_proof_unknown} \gv - \check{\gv} \text{,} \qquad \ms{Q} \text{,} \qquad \Lv - \check{\Lv} \text{,} \qquad \ms{B} \text{,} \qquad \ms{W}^\star \text{,} \qquad \ms{W}^1 \text{,} \qquad \ms{W}^2
\end{equation}
as unknowns.
In particular, from the above discussions, and from various asymptotic properties of geometric quantities, we arrive at the (schematic) transport equations
\begin{align}
\label{eq.intro_proof_transport} \partial_\rho ( \gv - \check{\gv} ) &= \Lv - \check{\Lv} \text{,} \\
\notag \partial_\rho \ms{Q} &= \mc{O} (\rho) \, ( \gv - \check{\gv}, \ms{Q} ) \text{,} \\
\notag \partial_\rho \ms{B} &= 2 ( \wv^1 - \check{\wv}^1 ) + \mc{O} (\rho) \, ( \gv - \check{\gv}, \ms{Q}, \ms{B} ) + \mc{O} (1) \, ( \Lv - \check{\Lv} ) \text{,} \\
\notag \partial_\rho [ \rho^{-1} ( \Lv - \check{\Lv} ) ] &= - 2 \rho^{-1} \ms{W}^2 + \mc{O} (1) \, ( \gv - \check{\gv}, \Lv - \check{\Lv}, \ms{Q} ) \text{,}
\end{align}
coupled to the following (schematic) wave equations for any $\ms{W} \in \{ \ms{W}^\star, \ms{W}^1, \ms{W}^2 \}$:
\begin{align}
\label{eq.intro_proof_wave} \Boxm \ms{W} + c_\ms{W} \ms{W} &= \sum_{ \ms{V} \in \{ {\ms{W}^\star}, \ms{W}^1, \ms{W}^2 \} } \left[ \mc{O} (\rho^2) \ms{V} + \mc{O} (\rho^3) \Dv \ms{V}  \right] + \mc{O} (\rho) \, ( \Lv - \check{\Lv} ) \\
\notag &\qquad + \mc{O} (\rho^2) \, ( \gv - \check{\gv}, \ms{Q}, \Dv ( \gv - \check{\gv} ), \Dv \ms{Q}, \Dv ( \Lv - \check{\Lv} ), \Dv \ms{B} ) \text{,}
\end{align}
The $\mc{O} ( \cdot )$'s in \eqref{eq.intro_proof_transport}--\eqref{eq.intro_proof_wave} indicate the asymptotics of various coefficients as $\rho \rightarrow 0$.

For more precise formulas, see Propositions \ref{thm.sys_transport} and \ref{thm.sys_wave}.
In particular, the wave-transport system \eqref{eq.intro_proof_transport}--\eqref{eq.intro_proof_wave} indeed closes from the point of view of derivatives.\footnote{The system \eqref{eq.intro_proof_transport}--\eqref{eq.intro_proof_wave} could also be used to derive general unique continuation results for the Einstein equations near general timelike hypersurfaces, providing an alternate approach to that of \cite{alex:uc_vacuum}.}

\subsubsection{The Carleman Estimate}

The technical workhorse in the proof of Theorem \ref{theo:adscft}, connecting the system \eqref{eq.intro_proof_transport}--\eqref{eq.intro_proof_wave} with unique continuation, is a Carleman estimate for wave equations that are satisfied by vertical tensor fields near aAdS conformal boundaries.

The role of Carleman estimates in unique continuation theory has an extensive history, tracing back to the seminal \cite{cald:unique_cauchy, carl:uc_strong} for elliptic problems.
Classical results for wave equations---see \cite{hor:lpdo4, ler_robb:unique}---highlight \emph{pseudoconvexity} as the crucial condition needed for Carleman estimates, and hence unique continuation results, to hold across a given hypersurface.
The novelty in aAdS settings is that the conformal boundary is \emph{zero-pseudoconvex}, so the classical results no longer apply.\footnote{In particular, the null geodesics with respect to $\rho^2 g$ asymptote toward being tangent to the boundary $\rho = 0$. As a result, the conformal boundary just barely fails to be pseudoconvex.}

These difficulties were overcome in a series of results \cite{ChatzikaleasShao, hol_shao:uc_ads, hol_shao:uc_ads_ns, mcgill_shao:psc_aads} by the authors, Chatzikaleas, and McGill, leading to Carleman estimates and unique continuation results on FG-aAdS segments, under the assumption of the GNCC.\footnote{The machinery for deriving Carleman estimates in zero-pseudoconvex settings originated from works of the second author with Alexakis and Schlue \cite{alex_schl_shao:uc_inf, alex_shao:uc_global}. See also \cite{peters:cpt_cauchy}, which independently studied zero-pseudoconvex settings.}
As mentioned before, the GNCC ensures the existence of a foliation of pseudoconvex hypersurfaces near the conformal boundary.
(See the above references for further discussions of the ideas leading to the Carleman estimates.)

We now give a rough statement of the wave Carleman estimate used in this article:

\begin{theorem}[Carleman estimate for wave equations, \cite{ChatzikaleasShao}] \label{prop:keycarleman}
Let $( \mi{M}, g )$ be a vacuum FG-aAdS segment, and suppose its conformal infinity $( \mi{I}, \gm )$ has a subdomain $\mi{D} \subset \mi{I}$ such that $( \mi{D}, \gm )$ satisfies the GNCC.
Also, fix $\sigma \in \R$, set $\smash{ f := \frac{\rho}{\eta} }$ (with $\eta$ as in Definition \ref{def:admissibleright}), and define the region
\[
\Omega_{ f_\star } := \{ f < f_\star \} \subset \mi{M} \text{,} \qquad f_\star > 0 \text{.}
\]
Then, the following holds for any vertical tensor field $\ms{\Phi}$ on $\mi{M}$ with $\ms{\Phi}$, $\nabla \ms{\Phi}$ vanishing on $f = f_\star$,
\begin{align}
\label{eq.intro_carleman} &\int_{ \Omega_{ f_\star } } e^{ -\frac{\lambda f^p}{p} } f^{ n-2-p-2\kappa } | ( \Boxm + \sigma ) \ms{\Phi} |^2 \, dg \\
\notag &\qquad + \lambda^3 \limsup_{ \rho_\star \searrow 0 } \int_{ \{ \rho = \rho_\star \} } ( | \partial_\rho ( \rho^{-\kappa} \ms{\Phi} ) |^2 + | \Dv ( \rho^{-\kappa} \ms{\Phi} ) |^2 + | \rho^{-\kappa-1} \ms{\Phi} |^2 ) \, d \gv \\
\notag &\quad \gtrsim \lambda \int_{ \Omega_{ f_\star } } e^{ -\frac{\lambda f^p}{p} } f^{ n-2-2\kappa } ( f \rho^3 | \Dv \ms{\Phi} |^2 + f^{2p} | \ms{\Phi} |^2 ) \, dg \text{,}
\end{align}
provided $\kappa$ and $\lambda$ are sufficiently large, $f_\star$ is sufficiently small, and $0 < p < \frac{1}{2}$.
\end{theorem}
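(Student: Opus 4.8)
The plan is to establish \eqref{eq.intro_carleman} as a Carleman estimate via a positive-commutator (Morawetz-multiplier) argument, in which the pseudoconvexity of the level sets of $f = \rho / \eta$ — which is precisely what the GNCC delivers — supplies the decisive positivity. The first step is to peel the Carleman weight and the asymptotic renormalisation off of $\ms{\Phi}$: one works with the conjugated field $\ms{\Psi}$ defined by $\ms{\Phi} = e^{ \frac{\lambda f^p}{2p} } \rho^{\kappa} \ms{\Psi}$, so that the weighted norms of $\ms{\Phi}$ appearing on the right-hand side of \eqref{eq.intro_carleman} become (essentially) unweighted, $f$-weighted norms of $\ms{\Psi}$, and $( \Boxm + \sigma ) \ms{\Phi}$ becomes $e^{ \frac{\lambda f^p}{2p} } \rho^{\kappa}$ times the conjugated operator $\mc{L}_{ \lambda, \kappa } := e^{ -\frac{\lambda f^p}{2p} } \rho^{-\kappa} ( \Boxm + \sigma ) e^{ \frac{\lambda f^p}{2p} } \rho^{\kappa}$ applied to $\ms{\Psi}$. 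Using the explicit FG form \eqref{FGform}, one computes $\mc{L}_{ \lambda, \kappa }$ in the $( \rho, x )$-coordinates: the $\rho^{\kappa}$-conjugation shifts the effective zeroth-order coefficient (by an amount governed by $\kappa$) and produces a favourable first-order term $\sim \kappa \rho \, \partial_\rho$, while the $e^{ \lambda f^p / (2p) }$-conjugation splits $\mc{L}_{ \lambda, \kappa }$ into a formally self-adjoint part $\mc{S}$ and a formally skew-adjoint part $\mc{A}$ (both with respect to $dg$), with the weight gradient $\sim \lambda f^{ p - 1 } df$ entering the principal symbol of $\mc{A}$.

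The heart of the argument is then the identity $\| \mc{L}_{ \lambda, \kappa } \ms{\Psi} \|^2 = \| \mc{S} \ms{\Psi} \|^2 + \| \mc{A} \ms{\Psi} \|^2 + \langle [ \mc{S}, \mc{A} ] \ms{\Psi}, \ms{\Psi} \rangle$, integrated over $\Omega_{ f_\star }$. After integration by parts, the commutator term produces a bulk integral of a quadratic form in $\partial_\rho \ms{\Psi}$ and $\Dv \ms{\Psi}$ whose leading coefficients are built from the Hessian of $f$ — equivalently, from the null second fundamental form of the hypersurfaces $\{ f = \text{const} \}$ together with a contribution from $\partial_\rho^2 f$. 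The GNCC, in the reformulated version of Remark \ref{rem:coinv}, is exactly what renders this quadratic form uniformly positive-definite on $\{ f < f_\star \}$ once $f_\star$ is small; this is where the conformally invariant boundary criterion gets translated into genuine spacetime pseudoconvexity, and where the damped-oscillator heuristic \eqref{odel} is hidden. The restriction $0 < p < \tfrac{1}{2}$ is forced here: it is precisely the range in which the weight-gradient contributions dominate the (unavoidable) \emph{degeneration} of the pseudoconvexity as $\rho \to 0$ — the zero-pseudoconvexity phenomenon that makes classical Carleman theory inapplicable. Choosing $\kappa$ large then absorbs the zeroth-order terms, and choosing $\lambda$ large absorbs the remaining lower-order errors; in the tensorial setting these errors include curvature and commutator terms from the vertical calculus of Section \ref{sec.aads_fg}, but these carry extra powers of $\rho$ and so are controlled in the same manner.

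It remains to handle boundary terms. The boundary $\{ f = f_\star \}$ contributes nothing, since $\ms{\Phi}$ and $\nabla \ms{\Phi}$ vanish there by hypothesis. Since $\rho = 0$ is not contained in $\mi{M}$, one carries the above out on $\{ \rho > \rho_\star \} \cap \Omega_{ f_\star }$ and lets $\rho_\star \searrow 0$: the boundary terms generated on $\{ \rho = \rho_\star \}$ are, after undoing the $\rho^{\kappa}$-renormalisation, exactly of the form appearing in the $\limsup$ integral on the left of \eqref{eq.intro_carleman}, and the $\lambda^3$ prefactor leaves enough room for the $\lambda$-powers produced by the conjugation. I expect the principal obstacle to be this last point combined with the pseudoconvexity verification — keeping track of every power of $\rho$, $\eta$, and $f$ so that the positive commutator term, though it degenerates as $\rho \to 0$, still survives with the correct sign, while simultaneously the limiting boundary terms converge to the stated norms rather than diverging. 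This is the analysis carried out in \cite{ChatzikaleasShao}, building on \cite{hol_shao:uc_ads, hol_shao:uc_ads_ns, mcgill_shao:psc_aads}, to which we refer for the complete details.
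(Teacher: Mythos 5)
This statement is not proved in the paper at all: it is an imported result, restated precisely as Theorem \ref{thm.carleman} and attributed to \cite{ChatzikaleasShao} (Theorem 5.11 there), so there is no in-paper argument to compare yours against. Your sketch is a fair description of the strategy actually used in that reference and its predecessors \cite{hol_shao:uc_ads, hol_shao:uc_ads_ns, mcgill_shao:psc_aads}: conjugate by the weight and by $\rho^{\kappa}$, derive a multiplier/commutator identity for the conjugated operator, and extract positivity from the pseudoconvexity of the level sets of $f=\rho/\eta$ supplied by the GNCC (in the form of Remark \ref{rem:coinv}), with the degeneration of that pseudoconvexity as $\rho\to 0$ forcing the $f^p$ weight, $0<p<\tfrac12$, and the decaying factors $f\rho^3$, $f^{2p}$ on the right-hand side, and with the $\{\rho=\rho_\star\}$ boundary terms surviving the limit only after the $\rho^{-\kappa}$ renormalisation. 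But be clear that what you have written is a heuristic outline, not a proof: the decisive steps---the exact multiplier (the cited works use a carefully chosen vector-field multiplier adapted to $f$ rather than a bare $\mathcal{S}/\mathcal{A}$ splitting), the verification that the GNCC quadratic form controls \emph{all} the error terms uniformly on $\Omega_{f_\star}$ despite the degeneration, and the bookkeeping that produces precisely the stated boundary integrand with the $\lambda^3$ prefactor---are all deferred to \cite{ChatzikaleasShao}, exactly as the paper itself defers them. So your proposal is acceptable in the same sense the paper's citation is, but it should not be mistaken for an independent derivation of the estimate.
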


\begin{remark}
The norm $| \cdot |$ can be defined relative to a given Riemannian metric on the space of vertical tensors.
Moreover, the admissible values of $\kappa$, $\lambda$, $f_\star$ depend on $\gv$, $\mi{D}$, $\sigma$, and the rank of $\Phi$.
\end{remark}

See Theorem \ref{thm.carleman} below for a precise statement of the Carleman estimate.
The region $\Omega_{ f_\star }$, on which the Carleman estimate holds, is illustrated in Figure \ref{regionsfig}.

\begin{figure}
\includegraphics[width=3.2cm]{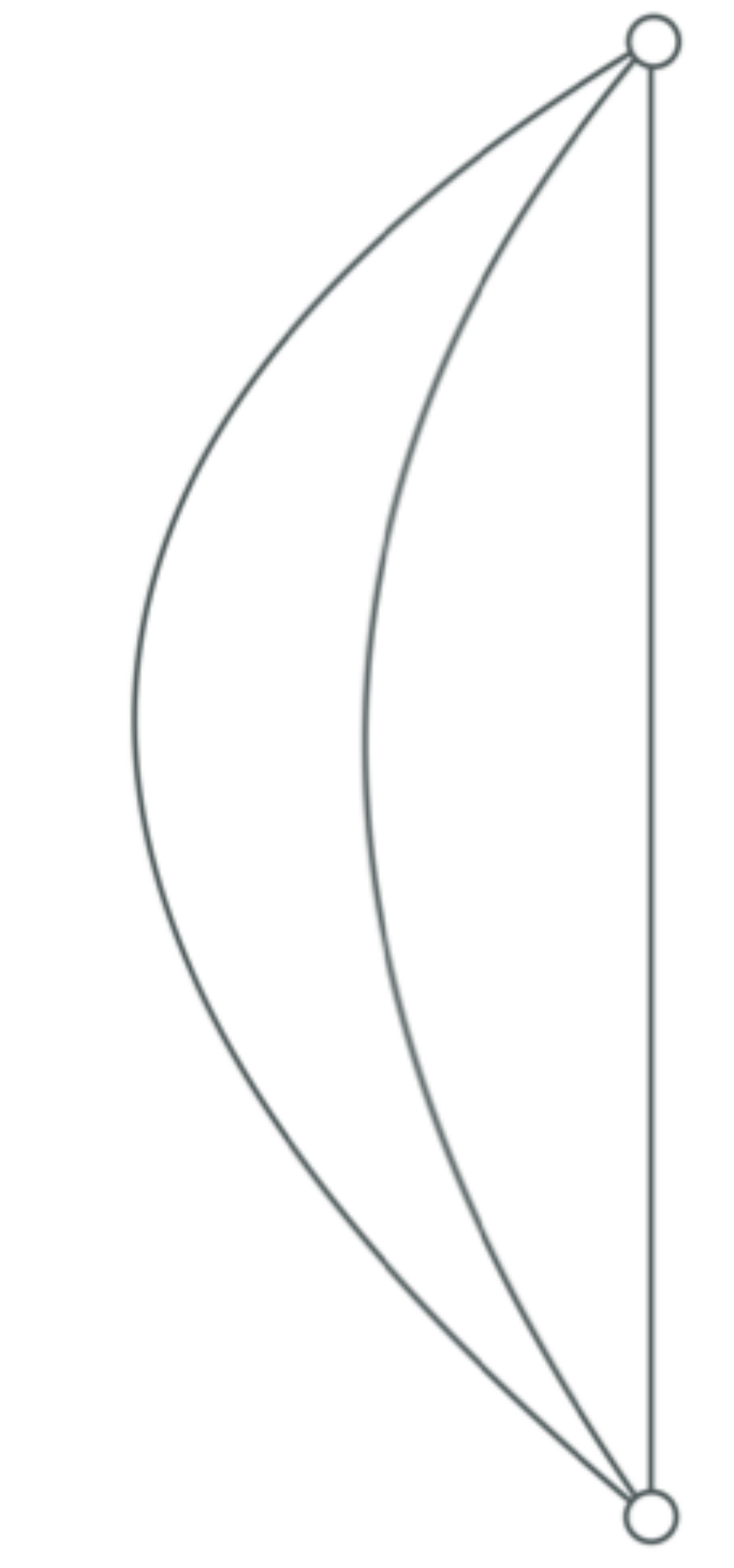}
\setlength{\unitlength}{1657sp}%
\begingroup\makeatletter\ifx\SetFigFont\undefined%
\gdef\SetFigFont#1#2#3#4#5{%
  \reset@font\fontsize{#1}{#2pt}%
  \fontfamily{#3}\fontseries{#4}\fontshape{#5}%
  \selectfont}%
\fi\endgroup%
\begin{picture}(2502,6478)(4339,-7800)
\put(1871,-2400){\rotatebox{50.0}{\makebox(0,0)[lb]{\smash{{\SetFigFont{6}{7.2}{\rmdefault}{\mddefault}{\updefault}{\color[rgb]{0,0,0}$f=\textrm{const}$}%
}}}}}
\put(3950,-4000){\rotatebox{0.0}{\makebox(0,0)[lb]{\smash{{\SetFigFont{12}{7.2}{\rmdefault}{\mddefault}{\updefault}{\color[rgb]{0,0,0}$\mi{I}$}%
}}}}}
\put(2670,-3100){\rotatebox{68.0}{\makebox(0,0)[lb]{\smash{{\SetFigFont{6}{7.2}{\rmdefault}{\mddefault}{\updefault}{\color[rgb]{0,0,0}$f=f_\star$}%
}}}}}
\put(2900,-4400){\makebox(0,0)[lb]{\smash{{\SetFigFont{10}{12.0}{\rmdefault}{\mddefault}{\updefault}{\color[rgb]{0,0,0}$\Omega_{f_\star}$}%
}}}}
\put(4231,-6271){\rotatebox{90.0}{\makebox(0,0)[lb]{\smash{{\SetFigFont{10}{12.0}{\rmdefault}{\mddefault}{\updefault}{\color[rgb]{0,0,0}$\rho=0$}%
}}}}}
\end{picture}%

\caption{The slices of constant $f$ and the region $\Omega_{f_\star}$. $\phantom{XX}$}
\label{regionsfig}
\end{figure}

The zero-pseudoconvexity of the conformal boundary leads to several complications in both the statement and the proof of Theorem \ref{prop:keycarleman}.
For instance, one consequence of this is that in contrast to classical results, which apply in small neighbourhoods of a single point, the estimate \eqref{eq.intro_carleman} holds only near sufficiently large domains $\mi{D}$ in the conformal boundary that satisfy the GNCC.
This is a feature that is exclusive to zero-pseudoconvex settings.

A second complication, arising from the degeneration of the pseudoconvexity of the level sets of $f$ toward the conformal boundary, is the presence of decaying weights in \eqref{eq.intro_carleman}---$f \rho^3$ and $f^{2p}$ in the right-hand side.
This makes absorption arguments in the proof of Theorem \ref{prop:keycarleman} far more delicate, and it restricts the class of wave equations for which one can prove unique continuation---namely, to equations with similarly decaying lower-order coefficients.

\begin{remark}
There do exist stronger unique continuation results, for which pseudoconvexity is not needed at all; see \cite{io_kl:unique_ip, ler:unique_tchar, robb_zuil:uc_interp, tata:uc_interp}.
However, these results require either additional symmetries or partial analyticity, neither of which is available in our setting.
\end{remark}

Theorem \ref{prop:keycarleman} is supplemented by a Carleman estimate---with the same region $\Omega_{ f_\ast }$ and Carleman weight---for transport equations.
In contrast to Theorem \ref{prop:keycarleman}, the proof of the transport Carleman estimate is straightforward; see Proposition \ref{thm.carleman_transport} below for the precise statement and for details.

\subsubsection{Unique Continuation}

The last step is to apply the wave and transport Carleman estimates to our system \eqref{eq.intro_proof_transport}--\eqref{eq.intro_proof_wave} to derive unique continuation---in particular \eqref{eq.intro_proof_goal}.
First, we claim \emph{all the unknowns \eqref{eq.intro_proof_unknown} vanish to arbitrarily high order at the conformal boundary}, so there are no boundary terms present in the Carleman estimates.
This follows from two key observations:
\begin{itemize}
\item From \eqref{eq.intro_proof_ass}, one can also derive that $\gb{k} = \gbc{k}$ for all $k < n$, and that $\gb{\star} = \gbc{\star}$; see \cite{shao:aads_fg} for details.
This leads to high-order vanishing for all the unknowns \eqref{eq.intro_proof_unknown}.

\item Further orders of vanishing can then be derived from transport and Bianchi equations.\footnote{This can be viewed as coefficients of the two FG expansions matching beyond $\gb{n}$ and $\gbc{n}$.}
\end{itemize}
See Section \ref{sec:addvanish} for further details on these steps.

From here, the process is mostly standard.
We apply the wave Carleman estimate \eqref{eq.intro_carleman} to $\chi \ms{W}^\star$, $\chi \ms{W}^1$, and $\chi \ms{W}^2$---for an appropriate cutoff $\chi := \chi (f)$---and recall \eqref{eq.intro_proof_wave} in order to obtain
\begin{align}
\label{eq.intro_proof_1} &\lambda \int_{ \Omega_i } w_\lambda (f) \sum_{ \ms{V} \in \{ \ms{W}^\star, \ms{W}^1, \ms{W}^2 \} } ( \rho^{2p} | \ms{V} |^2 + \rho^4 | \ms{D} \ms{V} |^2 ) \, d g \\
\notag &\quad \lesssim \int_{ \Omega_i } w_\lambda (f) \sum_{ \ms{U} \in \{ \gv - \check{\gv}, \ms{Q}, \Lv - \check{\Lv}, \ms{B} \} } ( \rho^{ 2 - p } | \ms{U} |^2 + \rho^{ 4 - p } | \Dv \ms{U} |^2 ) \, d g \\
\notag &\quad\qquad + \int_{ \Omega_i } w_\lambda (f) \sum_{ \ms{V} \in \{ \ms{W}^\star, \ms{W}^1, \ms{W}^2 \} } ( \rho^{ 4 - p } | \ms{V} |^2 + \rho^{ 6 - p } | \Dv \ms{V} |^2 ) \, d g + \int_{ \Omega_e } w_\lambda (f) \, ( \dots ) \, d g \text{.}
\end{align}
where $\smash{ w_\lambda (f) := e^{ -\lambda p^{-1} f^p } f^{n-2-2\kappa} }$ is the Carleman weight, and where\footnote{See \eqref{eq.proof_chi}--\eqref{eq.proof_cutoff} for the definition of $\chi$ and its relation to $f_i$, $f_e$; in particular, $\chi$ is non-constant precisely on $\Omega_e$.}
\[
\Omega_i := \{ 0 < f_i \} \text{,} \qquad \Omega_e := \{ f_i \leq f < f_e \} \text{,} \qquad 0 < f_i < f_e < f_\star \text{.}
\]
(The ``$\dots$" in the $\Omega_e$-integral depends on the unknowns \eqref{eq.intro_proof_unknown}, various weights in $\rho$ and $f$, and the cutoff $\chi$; however, its precise contents are irrelevant, as we only require that this integral is finite.)
Similar (but easier) applications of the transport Carleman estimate and \eqref{eq.intro_proof_transport} yield
\begin{align}
\label{eq.intro_proof_2} &\lambda \int_{ \Omega_i } w_\lambda (f) \sum_{ \ms{U} \in \{ \gv - \check{\gv}, \ms{Q}, \Lv - \check{\Lv}, \ms{B} \} } ( | \ms{U} |^2 + \rho^3 | \Dv \ms{U} |^2 ) \, d g \\
\notag &\quad \lesssim \int_{ \Omega_i } w_\lambda (f) \sum_{ \ms{U} \in \{ \gv - \check{\gv}, \ms{Q}, \Lv - \check{\Lv}, \ms{B} \} } ( \rho^{ 2 - p } | \ms{U} |^2 + \rho^{ 5 - p } | \Dv \ms{U} |^2 ) \, d g \\
\notag &\quad \qquad + \int_{ \Omega_i } w_\lambda (f) \sum_{ \ms{V} \in \{ \ms{W}^\star, \ms{W}^1, \ms{W}^2 \} } ( \rho^{ 2 - p } | \ms{V} |^2 + \rho^{ 5 - p } | \Dv \ms{V} |^2 ) \, d g \text{.}
\end{align}

The key point here is that the $\rho$-weights on the right-hand sides of \eqref{eq.intro_proof_1} and \eqref{eq.intro_proof_2} come from the $\mc{O} ( \cdot )$-coefficients in \eqref{eq.intro_proof_transport}--\eqref{eq.intro_proof_wave}.
The final crucial feature of our system is these $\rho$-weights are strong enough that, after summing \eqref{eq.intro_proof_1} and \eqref{eq.intro_proof_2}, \emph{the $\Omega_i$-integrals on the right-hand side can be absorbed into the left-hand side} (once $\lambda$ is sufficiently large).
From the above, we conclude that
\[
\lambda \int_{ \Omega_i } w_\lambda (f) \sum_{ \ms{V} \in \{ \ms{W}^\star, \ms{W}^1, \ms{W}^2 \} } \rho^{2p} | \ms{V} |^2 \, d g + \lambda \int_{ \Omega_i } w_\lambda (f) \sum_{ \ms{U} \in \{ \gv - \check{\gv}, \ms{Q}, \Lv - \check{\Lv}, \ms{B} \} } | \ms{U} |^2 \, d g \lesssim \int_{ \Omega_e } w_\lambda (f) \, ( \dots ) \, d g \text{.}
\]
Finally, $w_\lambda (f)$ in the above can be removed in the standard fashion by noting that $w_\lambda (f) \leq w_\lambda ( f_i )$ on $\Omega_e$ and $w_\lambda (f) \geq w_\lambda ( f_i )$ on $\Omega_i$.
The result \eqref{eq.intro_proof_goal} now follows by letting $\lambda \rightarrow \infty$.

\subsection{Comparison with Similar Results}

It is instructive to compare the proof of Theorem \ref{theo:adscft} with those of some related results in the existing literature.

\subsubsection{Biquard's Riemannian analogue}

We recall \cite{Biquard}, which considered asymptotically hyperbolic Einstein manifolds $( M, g )$.
These (Riemannian) manifolds also have a conformal boundary $( \partial M, \gm )$, as well as a Fefferman-Graham expansion from $\partial M$.
In this setting, \cite{Biquard} proved that \emph{the coefficients $( \gb{0}, \gb{n} )$ in the expansion uniquely determine the metric $g$ on $M$}---the analogue of Theorem \ref{theo:adscft}.

The main difference between Theorem \ref{theo:adscft} and \cite{Biquard} is that the key equations in the latter are elliptic.
Recall that all hypersurfaces are pseudoconvex in elliptic settings, hence the major difficulties of zero-pseudoconvexity and of constructing pseudoconvex hypersurfaces are entirely avoided.

Moreover, \cite{Biquard} can avoid working with the curvature directly, instead deriving a second order elliptic equation for the analogue of the second fundamental form $\Lv$, which sees arbitrary second derivatives of the metric $\gv$ on the right hand side.\footnote{This equation can be derived from analogues of \eqref{basiceq} and the Bianchi equation for $\partial_\rho \wv^2_{ab}$.}
Since the equation is elliptic, the Carleman estimate for $\Lv$ allows for controlling \emph{two} derivatives of $\Lv$ in terms of two derivatives of $\gv$.
Also, as the (commuted) transport equation for $\gv$ estimates two derivatives of $\gv$ in terms of two derivatives of $\Lv$, the Carleman estimates already close at the level of the second fundamental form.

For our setting, the analogous equation for $\Lv$ would be \emph{hyperbolic}, and the Carleman estimates cannot be closed in the same way, since the hyperbolic version loses a derivative compared with the elliptic case.
Consequently, we must also introduce the curvature as an unknown, which greatly complicates both our system and the ensuing analysis.

\subsubsection{The Ionescu-Klainerman symmetry extension} \label{sec.intro_ik}

Next, we look at \cite{ionescusk}, which proved a symmetry extension result similar to Theorem \ref{theo:killing}, but through finite hypersurface in a vacuum spacetime $( M, g )$.
In particular, \cite{ionescusk} showed that a Killing vector field $K$ on a domain $U \subset M$ can be extended through a point $p \in \partial U$, provided $\partial U$ is pseudoconvex near $p$.

The proof of this result begins by extending $K$ along a geodesic vector field $X$ through $\partial U$ using the Jacobi equation.
One key step in showing that this extended $K$ remains Killing is the derivation of a wave-transport system, on which a unique continuation result is applied:
\begin{align}
\label{eq.intro_ik_system} \nabla_X \mathbf{B} &= c \cdot \mathbf{B} + c \cdot \mathbf{P} \text{,} \\
\notag \nabla_X \mathbf{P} &= c \cdot \mathbf{A} + c \cdot \mathbf{B} + c \cdot \mathbf{P} \text{,} \\
\notag \Box \mathbf{A} &= c \cdot \nabla \mathbf{P} + c \nabla \mathbf{B} + c \cdot \mathbf{P} + c \cdot \mathbf{B} + c \cdot {\bf A} \cdot {\bf A} \text{.}
\end{align}
Here, ``$c \, \cdot$" denotes various contractions with tensorial coefficients, which we avoid specifying here.
The unknowns $\mathbf{B}$, $\mathbf{P}$, $\mathbf{A}$ are spacetime tensor fields, roughly described as follows:
\begin{itemize}
\item $\mathbf{B}$ consists of $\mi{L}_K g$ plus a specially chosen antisymmetric renormalization term $\omega$, while $\mathbf{P}$ consists of certain careful combinations of $\nabla \mi{L}_K g$ and $\nabla \omega$:
\[
\mathbf{B}_{\alpha \beta} := \frac{1}{2} ( \mi{L}_K g_{\alpha \beta} + \omega_{\alpha \beta} ) \text{,} \qquad \mathbf{P}_{\alpha \beta \mu} := \nabla_\alpha \pi_{\beta \mu} - \nabla_\beta \pi_{\alpha \mu} - \nabla_\mu \omega_{\alpha \beta} \text{.}
\]

\item $\mathbf{A}$ is a ``modified Lie derivative" of the Weyl curvature $W$:
\[
\mathbf{A}_{ \alpha \beta \mu \nu } = \mi{L}_K W_{ \alpha \beta \mu \nu } - ( \mathbf{B} \odot W )_{ \alpha \beta \mu \nu } \text{.}
\]
\end{itemize}
That $K$ is Killing follows from showing, via unique continuation, that $\mathbf{B}$, $\mathbf{P}$, $\mathbf{A}$ all vanish.

There are two connections we can make between the system \eqref{eq.intro_ik_system} and our results.
The first is that an analogous system can be applied to give a direct proof of Theorem \ref{theo:killing}, without appealing to Theorem \ref{theo:adscft}.
Setting $X := \rho \partial_\rho$ and $K$ to be extension of the Killing field from the conformal boundary, we obtain a system of the same form \eqref{eq.intro_ik_system}.\footnote{In aAdS settings, $\Box$ is replaced by $\Box + 2 n$ due to the cosmological constant.}
However, we would also need to apply vertical decompositions to $\mathbf{B}$, $\mathbf{P}$, $\mathbf{A}$, since different components have different asymptotic behaviors at the conformal boundary.
Nonetheless, this decomposed system has the same qualities as \eqref{eq.intro_proof_transport}--\eqref{eq.intro_proof_wave}, and we can similarly apply our Carleman estimates to this.

The second connection is that we can in fact draw a direct parallel between \eqref{eq.intro_ik_system} and our wave-transport system \eqref{eq.intro_proof_transport}--\eqref{eq.intro_proof_wave}.
One can construct a rough ``dictionary" between the unknowns of \cite{ionescusk} and our system by \emph{replacing each $\mi{L}_K$ applied to a quantity by the corresponding difference of that quantity for two metrics}.
More specifically, we identify the following:
\begin{itemize}
\item $\mi{L}_K g$ in \cite{ionescusk} corresponds to $\gv - \check{\gv}$ in our paper.

\item The renormalized term $\omega$ in \cite{ionescusk} corresponds to our renormalization $\ms{Q}$.

\item The components of $\mathbf{P}$ roughly map to both $\Lv - \check{\Lv}$ and $\ms{B}$ in our paper.

\item The modified Lie derivative $\mathbf{A}$ corresponds to our renormalized curvature differences $\ms{W}^\star$, $\ms{W}^1$, $\ms{W}^2$.
Moreover, $\mathbf{B} \otimes W$ connect directly to the renormalization terms in \eqref{eq.intro_proof_wdiff}.
\end{itemize}

\begin{remark}
As a result of the above, the preceding discussions of our system \eqref{eq.intro_proof_transport}--\eqref{eq.intro_proof_wave} also help to explain the various renormalizations used in \eqref{eq.intro_ik_system} and \cite{ionescusk}.
\end{remark}


\subsection{Further Questions}

Finally, we conclude the introduction by discussing some further directions of investigation that are related to or raised by Theorem \ref{theo:adscft}.

\subsubsection{The Case $n = 2$.}

Recall Theorem \ref{theo:adscft}---and Theorems \ref{theo:killing} and \ref{theo:symmetries} by extension---all assume that the dimension $n$ of the conformal boundary is strictly greater than $2$.
This raises the question of whether analogues of Theorems \ref{theo:adscft}, \ref{theo:killing}, and \ref{theo:symmetries} hold in the case $n = 2$.

In fact, the problem simplifies considerably when $n = 2$ due to the rigidity of low-dimensional settings.
In particular, since the Weyl curvature vanishes identically in $3$ dimensions, it follows already that any vacuum aAdS spacetime when $n = 2$ must be locally isometric to the ($3$-dimensional) AdS metric.
Furthermore, as all curvature terms disappear from the system \eqref{basiceq}, one can prove unique continuation using only transport equations (and avoiding wave equations).
This yields analogues of all our main theorems for $n=2$, but from \emph{any} domain $\mi{D}$---\emph{without requiring the GNCC}.\footnote{This is consistent with the fact that the Einstein-vacuum equations lose their hyperbolicity in ($2+1$)-dimensions.}

\subsubsection{Optimal Boundary Conditions}

An often studied setting in the physics literature is the case when the boundary region $\mi{D} \subset \mi{I}$ in Theorem \ref{theo:adscft} is a causal diamond,\footnote{$\mc{I}^+$ and $\mc{I}^-$ denotes the causal future and past, respectively, in $( \mi{I}, \gm )$.}
\begin{equation}
\label{eq.intro_causal_diamond} \mi{D} := \mc{I}^+ (p) \cap \mc{I}^- (q) \text{,} \qquad p, q \in \mi{I} \text{.}
\end{equation}
Unfortunately, one expects that causal diamonds \eqref{eq.intro_causal_diamond}, regardless of how large they are, should generically fail to satisfy the GNCC when $n > 2$; see the argument in \cite[Section 3.3]{ChatzikaleasShao}.
As a result, Theorem \ref{theo:adscft} fails to apply when $\mi{D}$ is as in \eqref{eq.intro_causal_diamond}---in other words, we cannot establish that vacuum aAdS spacetimes are uniquely determined by their boundary data on a causal diamond.

This leads to the question of whether the GNCC can be further refined, so that Theorem \ref{theo:adscft} can be somehow extended to apply to $\mi{D}$ as in \eqref{eq.intro_causal_diamond}.
One observation here is that the failure of the GNCC is due only to the presence of corners in $\partial \mi{D}$ where the boundaries of $\mc{I}^+ (p)$ and $\mc{I}^- (q)$ intersect.
Near these corners, one can find near-boundary null (spacetime) geodesics ``flying over" but avoiding $\mi{D}$.
This leads to the following question:\ \emph{Could boundary data $( \gb{0}, \gb{n} )$ on $\mi{D}$ uniquely determine the vacuum aAdS spacetime near some proper subset $\mi{D}' \subseteq \mi{D}$}, in particular when $\mi{D}$ is sufficiently large, and when $\mi{D}'$ is sufficiently far from any corners in $\partial \mc{I}^+ (p) \cap \partial \mc{I}^- (q)$?

At the same time, one may ask whether this refined GNCC can also be formulated for more general domains $\mi{D} \subset \mi{I}$.
More specifically, one can formulate the following:

\begin{problem} \label{prb.intro_egncc}
Consider the setting of Theorem \ref{theo:adscft}.
Show that if $\mi{D}' \subseteq \mi{D}$ satisfies some (yet to be formulated) ``refined GNCC" relative to $\mi{D}$, then $( \mi{M}, g )$ is uniquely determined near $\mi{D}'$ by the boundary data $( \gb{0}, \gb{n} )$ on $\mi{D}$, again up to gauge equivalence.
\end{problem}

Keeping with the above intuitions, the optimal formulation of such a ``refined GNCC" would be one that directly characterizes null geodesic trajectories near the conformal boundary.
Such a criterion would confirm the belief that unique continuation holds if and only if one cannot construct geometric optics counterexamples near the conformal boundary to unique continuation for waves, as in \cite{alin_baou:non_unique}.
However, a proof of such a statement may require incorporating, in a novel manner, ideas from microlocal analysis and propagation of singularities.

\subsubsection{Global Correspondences}

Our main result, Theorem \ref{theo:adscft}, is ``local" in nature, in the sense that the vacuum spacetime is only uniquely determined near the conformal boundary.
This is due to our rather general setup, which does not provide any information on the global spacetime geometry.
However, this leaves open the question of whether a more global unique continuation result can be established if more additional assumptions are imposed.

For example, one can consider aAdS spacetimes $( \mi{M}, g )$ that are global perturbations, in some sense, of a Kerr-AdS spacetime.
One can then ask whether the boundary data $( \gb{0}, \gb{n} )$ determines the spacetime in the full domain of outer communications, or if additional conditions are needed to rule out bifurcating counterexamples.
In the positive scenario, another physical question of interest is whether one can construct a one-to-one correspondence between conformal boundary data $( \gb{0}, \gb{n} )$ and some (appropriately conceived) data on the black hole horizon.

In \cite[Section 6]{hol_shao:uc_ads}, the authors applied the Carleman estimates of that paper to show that the linearized Einstein-vacuum equations on AdS spacetime (formulated as Bianchi equations for spin-$2$ fields) is globally characterized by its boundary data on a sufficiently long time interval.
Upcoming work by McGill and the second author will extend this to the nonlinear setting---roughly, under additional global assumptions, AdS spacetime is globally uniquely determined, as a solution to \eqref{evec}, by its holographic boundary data.
An interesting next step would be to explore whether these analyses can be extended to black hole aAdS spacetimes.

\subsection{Organization of the Paper}

In Section \ref{sec.aads}, we provide a detailed development of vacuum FG-aAdS segments and our vertical tensor formalism.
Section \ref{sec.sys} is dedicated to the wave-transport system that is at the heart of the proof of Theorem \ref{theo:adscft}, while Section \ref{sec.carleman} presents the key Carleman estimates for both wave and transport equations.
Finally, Section \ref{sec.proof} proves the key unique continuation result for vacuum FG-aAdS segments, and Section \ref{sec.app} proves our main results:\ Theorems \ref{theo:adscft}, \ref{theo:killing}, and \ref{theo:symmetries}.
Finally, various proofs and derivations are presented separately in Appendix \ref{sec.extra}.

\subsection*{Acknowledgments}

A.S.~acknowledges support by EPSRC grant EP/R011982/1 for a portion of this project.
G.H.~acknowledges support by the Alexander von Humboldt Foundation in the framework of the Alexander von Humboldt Professorship endowed by the Federal Ministry of Education and Research as well as ERC Consolidator Grant 772249 and funding through Germany’s Excellence Strategy EXC 2044 390685587, Mathematics M\"unster: Dynamics–Geometry–Structure.

\subsection*{Data Availability Statement}

Data sharing is not applicable to this article as no datasets were generated or analysed during the current study.

\subsection*{Statement on Competing Interests}

The authors have no financial or proprietary interests in any material discussed in this article.

\section{Preliminaries} \label{sec.aads}

This section is devoted to developing the background material that will be used throughout this article.
First, we give a precise description of the aAdS setting that we will study, and we state the assumptions we will impose on our spacetimes and their conformal boundaries.
We then turn our attention to Einstein-vacuum spacetimes, and we recall the Fefferman--Graham partial expansions derived in \cite{shao:aads_fg}.
In the remaining parts, we recall the mixed tensor fields introduced in \cite{mcgill_shao:psc_aads}.\footnote{Similar notions were also used in \cite{hol_shao:uc_ads, hol_shao:uc_ads_ns}.}
These are used to make sense of the wave operator $\bar{\Box}$ applied to vertical tensor fields.
Finally, we derive various identities connecting vertical and spacetime geometric quantities.

\subsection{Asymptotically AdS Spacetimes} \label{sec.aads_fg}

The first objective is to give precise descriptions of the aAdS spacetimes that we will study.
We begin with the background manifold itself:\footnote{Most of the material in this subsection can also be found in \cite[Sections 2.1--2.3]{mcgill_shao:psc_aads}.
We give an abridged discussion here for the purpose of keeping the present article self-contained.}

\begin{definition} \label{def.aads_manifold}
An \emph{aAdS region} is a manifold with boundary of the form 
\begin{equation}
\label{eq.aads_manifold} \mi{M} := ( 0, \rho_0 ] \times \mi{I} \text{,} \qquad \rho_0 > 0 \text{,}
\end{equation}
where $\mi{I}$ is a smooth $n$-dimensional manifold, and where $n \in \N$.\footnote{While we refer to $\mi{M}$ as the aAdS region, this also implicitly includes the associated quantities $n$, $\mi{I}$, $\rho_0$.}
\end{definition}

\begin{definition} \label{def.aads_vertical}
Let $\mi{M}$ be an aAdS region.
Then:
\begin{itemize}
\item We let $\rho \in C^\infty ( \mi{M} )$ denote the projection onto its $( 0, \rho_0 ]$-component.

\item We let $\partial_\rho$ denote the lift to $\mi{M}$ of the canonical vector field $d_\rho$ on $( 0, \rho_0 ]$.

\item The \emph{vertical bundle} $\ms{V}^k_l \mi{M}$ of rank $( k, l )$ over $\mi{M}$ is defined to be the manifold consisting of all tensors of rank $( k, l )$ on each level set of $\rho$ in $\mi{M}$:
\begin{equation}
\label{eq.aads_vertical} \ms{V}^k_l \mi{M} = \bigcup_{ \sigma \in ( 0, \rho_0 ] } T^k_l \{ \rho = \sigma \} \text{.}
\end{equation}

\item A (smooth) section of $\ms{V}^k_l \mi{M}$ is called a \emph{vertical tensor field} of rank $( k, l )$.
\end{itemize}
\end{definition}

\begin{remark} \label{rmk.aads_tensor}
We adopt the following conventions and identifications on an aAdS region $\mi{M}$:
\begin{itemize}
\item We use italicized font, serif font, and Fraktur font (for instance, $g$, $\gv$, and $\gm$) to denote tensor fields on $\mi{M}$, vertical tensor fields, and tensor fields on $\mi{I}$, respectively.

\item Given $\sigma \in ( 0, \rho_0 ]$, we let $\ms{A} |_\sigma$ be the tensor field on $\mi{I}$ obtained from restricting $\ms{A}$ to $\{ \rho = \sigma \}$.

\item A vertical tensor field $\ms{A}$ of rank $( k, l )$ can be equivalently viewed as a one-parameter family, $\{ \ms{A} |_\sigma \mid \sigma \in ( 0, \rho_0 ] \}$, of rank $( k, l )$ tensor fields on $\mi{I}$.

\item Given a tensor field $\mf{A}$ on $\mi{I}$, we will also use $\mf{A}$ to denote the vertical tensor field on $\mi{M}$ obtained by extending $\mf{A}$ as a $\rho$-independent family of tensor fields on $\mi{I}$.\footnote{In particular, a scalar function on $\mi{I}$ also defines a $\rho$-independent function on $\mi{M}$.}

\item Any vertical tensor field $\ms{A}$ can be uniquely identified with a tensor field on $\mi{M}$ (of the same rank) via the following rule:~the contraction of any component of $\ms{A}$ with $\partial_\rho$ or $d \rho$ (whichever is appropriate) is defined to vanish identically.
\end{itemize}
Finally, unless otherwise specfied, we always implicitly assume any given tensor field is smooth.
\end{remark}

\begin{definition} \label{def.aads_lie}
Let $\mi{M}$ be an aAdS region.
\begin{itemize}
\item We use the symbol $\mi{L}$ to denote Lie derivatives of tensor fields, on both $\mi{M}$ and $\mi{I}$.

\item We can also make sense of Lie derivatives of any vertical tensor field $\ms{A}$ by treating it as a spacetime tensor field, as described in Remark \ref{rmk.aads_tensor}.

\item For convenience, we will often abbreviate $\mi{L}_{ \partial_\rho }$ as $\mi{L}_\rho$.
\end{itemize}
\end{definition}

\begin{remark}
Observe that, in the context of Definition \ref{def.aads_lie}, for any vertical tensor field $\ms{A}$, its Lie derivative $\mi{L}_\rho \ms{A}$ can also be characterized as the unique vertical tensor field satisfying\footnote{\eqref{eq.aads_vertical_lie} was the definition of $\mi{L}_\rho$ used in \cite{mcgill_shao:psc_aads, shao:aads_fg}, and can be shown to be consistent with Definition \ref{def.aads_lie}.}
\begin{equation}
\label{eq.aads_vertical_lie} \mi{L}_\rho \ms{A} |_\sigma = \lim_{ \sigma' \rightarrow \sigma } ( \sigma' - \sigma )^{-1} ( \ms{A} |_{ \sigma' } - \ms{A} |_\sigma ) \text{,} \qquad \sigma \in ( 0, \rho_0 ] \text{.}
\end{equation}
\end{remark}

In the following definitions, we establish conventions for coordinate systems and limits:

\begin{definition} \label{def.aads_index}
Let $\mi{M}$ be an aAdS region, and let $( U, \varphi )$ be a coordinate system on $\mi{I}$:
\begin{itemize}
\item Let $\varphi_\rho := ( \rho, \varphi )$ denote the corresponding lifted coordinates on $( 0, \rho_0 ] \times U$.

\item We use lower-case Latin indices $a, b, c, \dots$ to denote $\varphi$-coordinate components, and we use lower-case Greek indices $\alpha, \beta, \mu, \nu, \dots$ to denote $\varphi_\rho$-coordinate components.
As usual, repeated indices indicate summations over the appropriate components.

\item $( U, \varphi )$ is called \emph{compact} iff $\bar{U}$ is a compact subset of $\mi{I}$ and $\varphi$ extends smoothly to $\bar{U}$.

\item Given a vertical tensor field $\ms{A}$ of rank $( k, l )$, we define (with respect to $\varphi$-coordinates)
\begin{equation}
\label{eq.aads_norm} | \ms{A} |_{ M, \varphi } := \sum_{ m = 0 }^M \sum_{ \substack{ a_1, \dots, a_m \\ b_1, \dots, b_k \\ c_1, \dots, c_l } } | \partial^m_{ a_1 \dots a_m } \ms{A}^{ b_1 \dots b_k }_{ c_1 \dots c_l } | \text{.}
\end{equation}
\end{itemize}
\end{definition}

\begin{definition} \label{def.aads_limit}
Let $\mi{M}$ be an aAdS region, let $M \geq 0$, and let $\ms{A}$ and $\mf{A}$ be a vertical tensor field and a tensor field on $\mi{I}$, respectively, both of the same rank $(k, l)$.
\begin{itemize}
\item $\ms{A}$ is \emph{locally bounded in $C^M$} iff for any compact coordinates $( U, \varphi )$ on $\mi{I}$,
\begin{equation}
\label{eq.aads_bounded} \sup_{ ( 0, \rho_0 ] \times U } | \ms{A} |_{ M, \varphi } < \infty \text{.}
\end{equation}

\item $\ms{A}$ \emph{converges to $\mf{A}$ in $C^M$}, denoted $\ms{A} \rightarrow^M \mf{A}$, iff for any compact coordinates $( U, \varphi )$ on $\mi{I}$,
\begin{equation}
\label{eq.aads_limit} \lim_{ \sigma \searrow 0 } \sup_{ \{ \sigma \} \times U } | \ms{A} - \mf{A} |_{ M, \varphi } = 0 \text{.}
\end{equation}
\end{itemize}
\end{definition}

We now describe the metrics that we will consider on our aAdS segments.
This is summarized through the notion of ``FG-aAdS segments" from \cite{mcgill_shao:psc_aads, shao:aads_fg}.

\begin{definition} \label{def.aads_metric}
$( \mi{M}, g )$ is called an \emph{FG-aAdS segment} iff the following hold:\footnote{Though we refer to $( \mi{M}, g )$ as the FG-aAdS segment, this also implicitly includes the quantities $\gv$ and $\gm$ below.}
\begin{itemize}
\item $\mi{M}$ is an aAdS region, and $g$ is a Lorentzian metric on $\mi{M}$.

\item There exist a vertical tensor field $\gv$ of rank $( 0, 2 )$ and a Lorentzian metric $\gm$ on $\mi{I}$ with
\begin{equation}
\label{eq.aads_metric} g := \rho^{-2} ( d \rho^2 + \gv ) \text{,} \qquad \gv \rightarrow^0 \gm \text{.}
\end{equation}
\end{itemize}
\end{definition}

\begin{remark}
Given an FG-aAdS segment $( \mi{M}, g )$:
\begin{itemize}
\item We refer to the form \eqref{eq.aads_metric} of $g$ as the \emph{Fefferman--Graham} (or \emph{FG}) \emph{gauge condition}.

\item We refer to $( \mi{I}, \gm )$ as the \emph{conformal boundary} for $( \mi{M}, g, \rho )$.
\end{itemize}
\end{remark}

The following definitions describe the basic geometric objects in our setting:

\begin{definition} \label{def.aads_covar}
Given an FG-aAdS segment $( \mi{M}, g )$:
\begin{itemize}
\item Let $g^{-1}$, $\nabla$, $\nabla^\sharp$, and $R$ denote the metric dual, the Levi-Civita connection, the gradient, and the Riemann curvature (respectively) associated with the spacetime metric $g$.

\item Let $\gm^{-1}$, $\Dm$, $\Dm^\sharp$, and $\Rm$ denote the metric dual, the Levi-Civita connection, the gradient, and the Riemann curvature (respectively) associated with the boundary metric $\gm$.

\item Let $\gv^{-1}$, $\Dv$, $\Dv^\sharp$, and $\Rv$ denote the metric dual, the Levi-Civita connection, the gradient, and the Riemann curvature (respectively) associated with the vertical metric $\gv$.\footnote{More specifically, $\gv^{-1} |_\sigma$ and $\Rv |_\sigma$ are the metric dual and Riemann curvature of $\gv |_\sigma$ for any $\sigma \in ( 0, \rho_0 ]$.
Moreover, $\Dv$ and $\Dv^\sharp$ act like the Levi-Civita connection and the gradient for $\gv |_\sigma$ on each $\{ \rho = \sigma \}$, for all $\sigma \in ( 0, \rho_0 ]$.}
\end{itemize}
As is standard, we omit the superscript ``${}^{-1}$" when describing metric duals in index notion.
\end{definition}

\begin{definition} \label{def.aads_curvature}
Furthermore, given an FG-aAdS segment $( \mi{M}, g )$:
\begin{itemize}
\item Let $W$, $Rc$, and $Rs$ denote the Weyl, Ricci, and scalar curvatures (respectively) for $g$.

\item Let $\Rcm$ and $\Rsm$ denote the Ricci and scalar curvatures (respectively) for $\gm$.
\end{itemize}
\end{definition}

\subsection{Vacuum Spacetimes}

The final assumption we will pose is that our spacetime satisfies the Einstein-vacuum equations (with normalized negative cosmological constant).

\begin{definition} \label{def.aads_vacuum}
An FG-aAdS segment $( \mi{M}, g )$ is called \emph{vacuum} iff the following holds:
\begin{equation}
\label{eq.aads_vacuum} Rc - \frac{1}{2} Rs \cdot g + \Lambda \cdot g = 0 \text{,} \qquad \Lambda := - \frac{ n (n - 1) }{ 2 } \text{.}
\end{equation}
\end{definition}

\begin{proposition} \label{thm.aads_einstein}
Suppose $( \mi{M}, g )$ is a vacuum FG-aAdS segment.
Then,
\begin{equation}
\label{eq.aads_einstein} Rc = - n \cdot g \text{,} \qquad Rs = - n (n + 1) \text{.}
\end{equation}
Furthermore, the following holds with respect to any coordinates on $\mi{M}$:
\begin{equation}
\label{eq.aads_einstein_weyl} W_{ \alpha \beta \gamma \delta } = R_{ \alpha \beta \gamma \delta } + g_{ \alpha \gamma } g_{ \beta \delta } - g_{ \alpha \delta } g_{ \beta \gamma } \text{.}
\end{equation}
\end{proposition}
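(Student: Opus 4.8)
The statement is proven by purely algebraic manipulations of the Einstein-vacuum equation \eqref{eq.aads_vacuum}, using that the spacetime $\mi{M}$ has dimension $n + 1$ (one more than the boundary). The plan is the following.

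First, I would take the $g$-trace of \eqref{eq.aads_vacuum}. Since $\operatorname{tr}_g g = n + 1$, this gives $Rs - \tfrac{n+1}{2} Rs + ( n + 1 ) \Lambda = 0$, that is, $\tfrac{1-n}{2} Rs = - ( n + 1 ) \Lambda$. Substituting $\Lambda = - \tfrac{n(n-1)}{2}$ yields $Rs = - n ( n + 1 )$, which is the second identity in \eqref{eq.aads_einstein}. Feeding this back into \eqref{eq.aads_vacuum} and solving for the Ricci tensor,
\[
Rc = \tfrac{1}{2} Rs \cdot g - \Lambda \cdot g = \left( - \tfrac{ n ( n + 1 ) }{ 2 } + \tfrac{ n ( n - 1 ) }{ 2 } \right) g = - n \cdot g \text{,}
\]
which is the first identity in \eqref{eq.aads_einstein}.

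For \eqref{eq.aads_einstein_weyl}, I would invoke the standard algebraic decomposition of the Riemann curvature of an $( n + 1 )$-dimensional (pseudo-)Riemannian metric into its Weyl part, trace-free Ricci part, and scalar part (valid whenever $n + 1 \geq 3$), which with the curvature conventions of this paper reads
\[
W_{ \alpha \beta \gamma \delta } = R_{ \alpha \beta \gamma \delta } - \frac{1}{n-1} \left( g_{ \alpha \gamma } Rc_{ \beta \delta } - g_{ \alpha \delta } Rc_{ \beta \gamma } + g_{ \beta \delta } Rc_{ \alpha \gamma } - g_{ \beta \gamma } Rc_{ \alpha \delta } \right) + \frac{ Rs }{ n ( n - 1 ) } \left( g_{ \alpha \gamma } g_{ \beta \delta } - g_{ \alpha \delta } g_{ \beta \gamma } \right) \text{.}
\]
Substituting $Rc = - n \cdot g$ and $Rs = - n ( n + 1 )$ from the previous step, the Ricci term collapses to $\tfrac{2n}{n-1} ( g_{ \alpha \gamma } g_{ \beta \delta } - g_{ \alpha \delta } g_{ \beta \gamma } )$ while the scalar term becomes $- \tfrac{ n + 1 }{ n - 1 } ( g_{ \alpha \gamma } g_{ \beta \delta } - g_{ \alpha \delta } g_{ \beta \gamma } )$; since $2n - ( n + 1 ) = n - 1$, their sum is precisely $g_{ \alpha \gamma } g_{ \beta \delta } - g_{ \alpha \delta } g_{ \beta \gamma }$, which is \eqref{eq.aads_einstein_weyl}.

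Every step here is elementary linear algebra on constant-coefficient tensor identities, so I do not expect a genuine obstacle. The only point requiring care is bookkeeping: keeping the spacetime dimension $n + 1$ (not $n$) consistent throughout the traces and in the coefficients of the Weyl decomposition, and making sure the sign convention for the Riemann tensor is the one under which a space of constant curvature $-1$ (such as AdS) has $R_{ \alpha \beta \gamma \delta } = - ( g_{ \alpha \gamma } g_{ \beta \delta } - g_{ \alpha \delta } g_{ \beta \gamma } )$, so that \eqref{eq.aads_einstein_weyl} indeed forces $W = 0$ there.
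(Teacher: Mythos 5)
Your proposal is correct and is essentially the paper's own argument: the paper simply cites these as direct computations (deferring to \cite{shao:aads_fg}), namely tracing \eqref{eq.aads_vacuum} in spacetime dimension $n+1$ to get $Rs$ and $Rc$, then substituting into the standard Weyl decomposition with $d-2 = n-1$, exactly as you do. Your arithmetic checks out, so no further comment is needed.
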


\begin{proof}
These are direct computations; see \cite[Proposition 2.24]{shao:aads_fg}.
\end{proof}

The following results, which give partial Fefferman--Graham expansions for Einstein-vacuum metrics from the conformal boundary---are a portion of the main results of \cite{shao:aads_fg}:

\begin{definition} \label{def.aads_regular}
Fix an integer $M \geq 0$.
An FG-aAdS segment $( \mi{M}, g )$ is \emph{regular to order $M$} iff:
\begin{itemize}
\item $\gv$ is locally bounded in $C^{ M + 2 }$.

\item The following holds for any compact coordinates $( U, \varphi )$ on $\mi{I}$:
\begin{equation}
\label{eq.aads_regular} \sup_U \int_0^{ \rho_0 } | \mi{L}_\rho \gv |_{ 0, \varphi } |_\sigma d \sigma < \infty \text{.}
\end{equation}
\end{itemize}
\end{definition}

\begin{definition} \label{def.aads_depend}
Let $( \mi{M}, g )$ be a FG-aAdS segment, and let $k \geq 2$.
We say that a tensor field $\mf{A}$ on $\mi{I}$ \emph{depends only on $\gm$ to order $k$} iff $\mf{A}$ can be expressed as contractions and tensor products of zero or more instances of each of the following: $\gm, \gm^{-1}, \Rm, \dots, \Dm^{ k - 2 } \Rm$.
\end{definition}

\begin{theorem}{\cite[Theorem 3.3]{shao:aads_fg}} \label{thm.aads_fg}
Let $( \mi{M}, g )$ be a vacuum FG-aAdS segment, and assume $n > 1$.
Moreover, suppose $( \mi{M}, g )$ is regular to some order $M \geq n + 2$.
Then:
\begin{itemize}
\item $\gv$ and $\gv^{-1}$ satisfy
\begin{equation}
\label{eq.aads_fg_low} \gv \rightarrow^M \gm \text{,} \qquad \gv^{-1} \rightarrow^M \gm^{-1} \text{.}
\end{equation}

\item There exist tensor fields $\gb{k}$, $0 \leq k < n$, on $\mi{I}$ such that
\begin{equation}
\label{eq.aads_fg_high} \mi{L}_\rho^k \gv \rightarrow^{ M - k } k! \, \gb{k} \text{,} \qquad \rho \mi{L}_\rho^{ k + 1 } \gv \rightarrow^{ M - k } 0 \text{.}
\end{equation}
Furthermore, $\gb{0} = \gm$, and the following properties hold:
\begin{itemize}
\item If $1 \leq k < n$ is odd, then $\gb{k} = 0$.

\item If $2 \leq k < n$, then $\gb{k}$ depends only on $\gm$ to order $k$.
In particular,
\begin{equation}
\label{eq.aads_fg_schouten} \gb{2} = - \frac{ 1 }{ n - 2 } \left[ \Rcm - \frac{ 1 }{ 2 ( n - 1 ) } \Rsm \cdot \gm \right] \text{,} \qquad n > 2 \text{.}
\end{equation}
\end{itemize}

\item There exists a tensor field $\gb{\star}$ on $\mi{I}$ such that
\begin{equation}
\label{eq.aads_fg_top} \rho \mi{L}_\rho^{ n + 1 } \gv \rightarrow^{ M - n } n! \, \gb{\star} \text{.}
\end{equation}
In addition, $\gb{\star}$ satisfies the following:
\begin{itemize}
\item Both the $\gm$-trace and the $\gm$-divergence of $\gb{\star}$ vanish on $\mi{I}$.

\item If $n$ is odd or if $n = 2$, then $\gb{\star} = 0$.

\item If $n$ is even, then $\gb{\star}$ depends only on $\gm$ to order $n$.
\end{itemize}

\item There exist a $C^{ M - n }$ tensor field $\gb{\dagger}$ on $\mi{I}$ such that
\begin{equation}
\label{eq.fg_main_free} \mi{L}_\rho^n \gv - n! \, ( \log \rho ) \gb{\star} \rightarrow^{ M - n } n! \, \gb{\dagger} \text{.}
\end{equation}
\end{itemize}
\end{theorem}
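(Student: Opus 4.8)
The plan is to read the Einstein-vacuum equations \eqref{evec}, written in the FG gauge \eqref{eq.aads_metric}, as a Fuchsian system in the boundary-defining function $\rho$ whose indicial roots at $\rho=0$ are $0$ and $n$, and then to peel off the coefficients $\gb{k}$ one order at a time by induction on $k$. Concretely, with $\Lv:=\mi{L}_\rho\gv$ and the vertical decomposition $\wv^0,\wv^1,\wv^2$ of the spacetime Weyl tensor (Section~\ref{sec.aads_fg}), the Gauss--Codazzi relations together with \eqref{eq.aads_einstein} produce the first-order system \eqref{basiceq}, supplemented by the Hamiltonian-type identity coming from the $\rho\rho$-component of $Rc=-n\,g$, which pins down $\gv^{ab}\wv^2_{ab}$ — hence, after eliminating $\wv^2$, the $\gv$-trace of $\partial_\rho\Lv$ — in terms of $\gv,\Lv$ alone. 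Differentiating \eqref{basiceq} in $\rho$ and feeding in the second Bianchi identity for $W$ closes this into transport equations for the $\wv^i$ in which the only singular factor is a single $\rho^{-1}$, and eliminating the curvature recovers the classical second-order Fefferman--Graham equation for $\gv$ with linear part $\rho\,\partial_\rho^2-(n-1)\,\partial_\rho$. The structural point that makes the scheme run is that the singular term $\rho^{-1}\Lv$ in \eqref{basiceq} disappears under the substitution $\Lv\mapsto\rho^{-1}\Lv$, since $\partial_\rho(\rho^{-1}\Lv_{ab})=\rho^{-1}(-2\wv^2_{ab}+\tfrac{1}{2}\gv^{cd}\Lv_{ad}\Lv_{bc})$ is integrable at $\rho=0$ once $\wv^2$ and $\Lv$ are known to decay.

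First I would run a regularity bootstrap: assuming $(\mi{M},g)$ is regular to order $M\geq n+2$, the equations \eqref{basiceq} (and the Bianchi equations for the $\wv^i$), commuted with up to $M$ vertical derivatives and combined with the a priori local $C^{M+2}$-bound on $\gv$, close into Gr\"onwall inequalities on the slices $\{\rho=\sigma\}$ controlling $\gv-\gm$, $\Lv$ and the $\wv^i$ in the relevant $C^M$-norms, with \eqref{eq.aads_regular} supplying the needed $L^1_\rho$-control; this upgrades $\gv\rightarrow^0\gm$ to $\gv\rightarrow^M\gm$, $\gv^{-1}\rightarrow^M\gm^{-1}$, i.e.\ \eqref{eq.aads_fg_low}. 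Next I would extract the coefficients below the indicial root by induction on $k$, $0\leq k<n$: differentiating the second-order Fefferman--Graham equation $k-1$ further times in $\rho$, the equation for $\mi{L}_\rho^k\gv$ carries indicial exponent $k-n\neq 0$, so it can be integrated twice from $\rho=0$ with no resonance, the inductive hypothesis controlling the already-convergent lower-order sources; this yields \eqref{eq.aads_fg_high}. A parity argument — the source is a sum of odd powers of $\rho$, hence the expansion contains only even powers below order $n$ — forces $\gb{k}=0$ for odd $k$; reading off the $\rho^2$-coefficient identifies $\gb{2}$ with minus the Schouten tensor, namely \eqref{eq.aads_fg_schouten}; and tracking the universal contractions through the induction shows that each $\gb{k}$, $2\leq k<n$, is built solely from $\gm$, $\gm^{-1}$ and $\leq k-2$ covariant derivatives of $\Rm$.

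The crux is the order-$\rho^n$ analysis, where the operator $\rho\,\partial_\rho^2-(n-1)\,\partial_\rho$ is resonant: the order-$\rho^{n-1}$ term of the source — nonvanishing precisely when $n$ is even, by the parity count above — resonates with the indicial root $n$ and forces a $\log\rho$ contribution, so one writes $\mi{L}_\rho^n\gv=n!\,(\log\rho)\,\gb{\star}+(\text{remainder})$, with $\gb{\star}=0$ when $n$ is odd. I would then show that $\gb{\star}$ depends only on $\gm$ to order $n$ and that the remainder converges in $C^{M-n}$ to $n!\,\gb{\dagger}$, establishing \eqref{eq.aads_fg_top} and \eqref{eq.fg_main_free}, by integrating the resonance-corrected radial equation twice from $\rho=0$ against the convergent sources produced above. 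Finally, the Codazzi constraint (third equation of \eqref{basiceq}) and the Hamiltonian constraint, evaluated at the appropriate orders, express the $\gm$-divergence and $\gm$-trace of $\gb{n}$ in terms of $\gm$ — this is \eqref{constraints} — and force the $\gm$-trace and $\gm$-divergence of $\gb{\star}$ to vanish. I expect this last step to be the main obstacle: integrating a Fuchsian system whose homogeneous solutions at $\rho=0$ are $\rho^0$ and $\rho^n\log\rho$, with no analyticity available, requires carefully weighted slice estimates — an iterated, $\log\rho$-subtracted refinement of the $\rho^{-1}\Lv$ renormalization above — to guarantee that every relevant quantity has a genuine limit as $\rho\searrow 0$ and that no spurious lower-order logarithms are generated; the remaining work is bookkeeping of asymptotic orders together with Gr\"onwall estimates.
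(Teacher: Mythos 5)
There is no in-paper proof to compare against here: Theorem \ref{thm.aads_fg} is quoted verbatim from \cite{shao:aads_fg} (Theorem 3.3 there), and the present paper's ``proof'' is simply that citation. Your sketch is, in outline, essentially the strategy of that reference: write the Einstein-vacuum equations in the FG gauge as a Fuchsian system in $\rho$ with indicial roots $0$ and $n$ (equivalently, work with the first-order transport system \eqref{basiceq} for $\gv$, $\Lv$ and the vertical Weyl components together with the Bianchi identities), run a finite-regularity bootstrap to upgrade $\gv \rightarrow^0 \gm$ to \eqref{eq.aads_fg_low}, inductively integrate from $\rho = 0$ to extract $\gb{k}$ for $k < n$ with the parity argument killing the odd coefficients and identifying $\gb{2}$, and then treat the resonant order $n$, where the $\log\rho$ term and the trace/divergence constraints appear. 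Two cautions on the details: your claim that the resonant source is ``nonvanishing precisely when $n$ is even'' overstates the situation --- the correct statement is that it can be nonzero \emph{only} when $n$ is even (and $\gb{\star}$ may still vanish, e.g.\ $n=2$ is asserted in the theorem to force $\gb{\star}=0$, a case your parity count does not by itself settle); and the genuinely hard content of \cite{shao:aads_fg} is precisely the part you defer, namely carrying out the iteration at finite regularity so that each $\mi{L}_\rho^k \gv$ has an honest $C^{M-k}$ limit, with the loss of exactly one derivative per order and no spurious logarithms below order $n$, which requires the quantitative slice estimates you only gesture at. As a blueprint your proposal is sound and matches the cited proof's architecture; as a proof it leaves the central analytic work unexecuted.
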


\begin{remark} \label{rmk.aads_strong}
In particular, when $n > 2$, Theorem \ref{thm.aads_fg} implies that any vacuum FG-aAdS segment $( \mi{M}, g )$ is also a \emph{strongly FG-aAdS segment}, in the sense of \cite[Definition 2.13]{mcgill_shao:psc_aads}---that is,
\begin{equation}
\label{eq.aads_strong} \gv \rightarrow^3 \gm \text{,} \qquad \mi{L}_\rho \gv \rightarrow^2 0 \text{,} \qquad \mi{L}_\rho^2 \gv \rightarrow^1 \gs \text{,}
\end{equation}
for some rank $( 0, 2 )$ tensor field $\gs$ on $\mi{I}$, and $\mi{L}_\rho^3 \gv$ is locally bounded in $C^0$.\footnote{Notice also that when $n = 2$, the conclusions of Theorem \ref{thm.aads_fg} still imply the limits \eqref{eq.aads_strong}.}
These are the main regularity and asymptotic assumptions required for the Carleman estimates of \cite{mcgill_shao:psc_aads} to hold.
\end{remark}

\begin{corollary}{\cite[Theorem 3.6]{shao:aads_fg}} \label{thm.aads_fg_exp}
Assume the hypotheses of Theorem \ref{thm.aads_fg}, and let the quantities $\gb{0}, \dots, \gb{n-1}, \gb{\star}$ be as in the conclusions of Theorem \ref{thm.aads_fg}.
Then, there exists a $C^{ M - n }$ tensor field $\gb{n}$ on $\mi{I}$ and a vertical tensor field $\ms{r}_{ \gv }$ such that the following partial expansion holds for $\gv$,
\begin{equation}
\label{eq.aads_fg_exp} \gv = \begin{cases} \sum_{ k = 0 }^{ \frac{ n - 1 }{2} } \gb{ 2k } \rho^{ 2 k } + \gb{n} \rho^n + \ms{r}_{ \gv } \rho^n & \text{$n$ odd,} \\ \sum_{ k = 0 }^{ \frac{ n - 2 }{2} } \gb{ 2k } \rho^{ 2 k } + \gb{\star} \rho^n \log \rho + \gb{n} \rho^n + \ms{r}_{ \gv } \rho^n & \text{$n$ even,} \end{cases}
\end{equation}
where the remainder $\ms{r}_{ \gv }$ satisfies
\begin{equation}
\label{eq.aads_fg_error} \ms{r}_{ \gv } \rightarrow^{ M - n } 0 \text{.}
\end{equation}
Furthermore, $\gb{n}$ satisfies the following:
\begin{itemize}
\item If $n$ is odd, then the $\gm$-trace and $\gm$-divergence of $\gb{n}$ vanish on $\mi{I}$.

\item On the other hand, if $n$ is even, then the $\gm$-trace of $\gb{n}$ depends only on $\gm$ to order $n$, and the $\gm$-divergence of $\gb{n}$ depend only on $\gm$ to order $n+1$.
\end{itemize}
\end{corollary}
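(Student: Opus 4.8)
The plan is to separate the statement into the partial expansion \eqref{eq.aads_fg_exp}--\eqref{eq.aads_fg_error}, which is essentially a repackaging of Theorem \ref{thm.aads_fg} via Taylor's theorem in $\rho$, and the trace and divergence conditions on $\gb{n}$, which require the constraint components of the Einstein-vacuum equations \eqref{evec}. For the expansion: since $\mi{L}_\rho ( \mi{L}_\rho^j \gv ) = \mi{L}_\rho^{j+1} \gv$ as vertical tensor fields, and since each $\mi{L}_\rho^j \gv$ with $0 \leq j < n$ admits a limit at $\rho = 0$ by \eqref{eq.aads_fg_high}, one has the integral identities $\mi{L}_\rho^j \gv ( \rho ) = j! \, \gb{j} + \int_0^\rho \mi{L}_\rho^{j+1} \gv \, d \sigma$, the $j = n-1$ integral converging because $\mi{L}_\rho^n \gv$ is at worst logarithmically singular by \eqref{eq.fg_main_free}. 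Writing $\mi{L}_\rho^n \gv = n! ( \log \rho ) \gb{\star} + n! \, \gb{\dagger} + \ms{e}$ with $\ms{e} \rightarrow^{M-n} 0$ and integrating this relation $n$ times --- using $\int_0^\rho \sigma^{k-1} \log \sigma \, d\sigma = \tfrac{1}{k} \rho^k \log \rho - \tfrac{1}{k^2} \rho^k$ --- the logarithm survives only in the single $\gb{\star} \rho^n \log \rho$ term; the constant $\gb{j}$ picked up at each step produces the lower-order terms $\gb{2k} \rho^{2k}$; all remaining polynomial-in-$\rho^n$ contributions (including the constant shift arising from $\mi{L}_\rho^n ( \rho^n \log \rho ) = n! \log \rho + n! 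H_n$, with $H_n := 1 + \tfrac12 + \dots + \tfrac1n$) are absorbed into the definition of $\gb{n}$, a $C^{M-n}$ combination of $\gb{\dagger}$ and $\gb{\star}$; and the iterated integrals of $\ms{e}$ together with the $o ( \rho^n )$ leftovers define $\ms{r}_{\gv}$. That $\ms{r}_{\gv} \rightarrow^{M-n} 0$ follows since each integration of a $C^{M-n}$-quantity tending to zero gains a power of $\rho$, which one makes precise by checking inductively that $\mi{L}_\rho^k ( \rho^n \ms{r}_{\gv} )$ has the appropriate vanishing for $0 \leq k \leq n$ and applying a Taylor estimate. This part is routine.

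For the trace and divergence, the point is that beyond the transport system \eqref{basiceq} the Einstein-vacuum equations supply the Hamiltonian and momentum constraints, which here take a clean form thanks to the trace-freeness of the Weyl tensor (a consequence of Proposition \ref{thm.aads_einstein}): contracting the second equation of \eqref{basiceq} over $\gv^{ab}$ and using $\gv^{ab} \wv^2_{ab} = 0$ gives the scalar transport equation $\mi{L}_\rho ( \rho^{-1} \operatorname{tr}_{\gv} \Lv ) = -\tfrac12 \rho^{-1} | \Lv |^2_{\gv}$, while contracting the Codazzi equation (third equation of \eqref{basiceq}) over $\gv^{ac}$ and using $\gv^{ac} \wv^1_{cab} = 0$ gives $\operatorname{div}_{\gv} \Lv = \Dv ( \operatorname{tr}_{\gv} \Lv )$; these are the identities from which the formulas \eqref{constraints} are read off. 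Substituting the expansion above (with $\Lv = \mi{L}_\rho \gv$) and matching at the lowest order where $\gb{n}$ appears --- $\rho^{n-3}$ in the first identity, $\rho^{n-1}$ in the second --- one finds in the first that the left side equals $( n - 2 ) \, ( n \operatorname{tr}_{\gm} \gb{n} + ( \text{a } \gm\text{-expression} ) )$, using $\operatorname{tr}_{\gm} \gb{\star} = 0$, whereas the right side is purely a $\gm$-expression, since $\Lv$ has no $\rho^0$-coefficient and so $\gb{n}$ first enters $| \Lv |^2_{\gv}$ at order $\rho^n$; as $n > 2$ this yields $\operatorname{tr}_{\gm} \gb{n} = \mc{G} ( \gm, \dots, \partial^n \gm )$ (the order $n$ coming from $\gb{\star}$). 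Feeding this into the $\rho^{n-1}$-coefficient of the momentum constraint, where $\gb{n}$ enters through $\operatorname{div}_{\gm} \gb{n}$ and through $\Dm \operatorname{tr}_{\gm} \gb{n}$, gives $\operatorname{div}_{\gm} \gb{n} = \mc{F} ( \gm, \dots, \partial^{n+1} \gm )$. When $n$ is odd, a parity observation closes the argument: below order $\rho^n$ the expansion \eqref{eq.aads_fg_exp} carries only even powers of $\rho$ (all odd $\gb{k}$ and $\gb{\star}$ vanish), so $\Lv$ has only odd-power coefficients below order $\rho^{n-1}$ while $\gv^{-1}$ and the connection difference $\Dv - \Dm$ have only even-power ones; hence the $\gm$-expressions above are forced to drop out, leaving $\operatorname{tr}_{\gm} \gb{n} = 0$ and therefore $\operatorname{div}_{\gm} \gb{n} = 0$.

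The main obstacle is the bookkeeping in the second part: one must propagate the $\rho$-expansions --- with their logarithmic terms when $n$ is even --- through the quadratic nonlinearities $| \Lv |^2_{\gv}$ and $\gv^{cd} \Lv_{ad} \Lv_{bc}$ and through the connection difference $\Dv - \Dm$, verify that passing from $\gv$-contractions to $\gm$-contractions alters the relevant coefficients only by lower-order, hence $\gm$-determined, amounts, and confirm that the resulting universal expressions depend on $\gm$ to exactly order $n$ (for the trace) and order $n + 1$ (for the divergence). The parity argument for odd $n$ must be tracked with care, since it is precisely what turns these universal expressions into $0$ rather than merely local quantities. One should also check that the regularity hypothesis $M \geq n + 2$ of Theorem \ref{thm.aads_fg} suffices for every limit invoked --- each integration and each curvature contraction costs derivatives --- which is what yields the claimed $C^{M-n}$ regularity of $\gb{n}$ and $\ms{r}_{\gv}$. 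Finally, the case $n = 2$ is genuinely exceptional, $\gb{2}$ then being the free datum rather than the Schouten tensor and $\operatorname{tr}_{\gm} \gb{2}$ encoding the conformal anomaly, and is handled by a minor variant of the same computation.
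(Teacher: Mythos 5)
The paper offers no internal argument for this statement---it is imported wholesale from \cite[Theorem 3.6]{shao:aads_fg}---so the only comparison available is with the natural route that the cited work takes, and your reconstruction is essentially that route. The expansion \eqref{eq.aads_fg_exp}--\eqref{eq.aads_fg_error} by integrating the limits \eqref{eq.aads_fg_high}, \eqref{eq.aads_fg_top}, \eqref{eq.fg_main_free} $n$ times from $\rho = 0$ (with the harmonic-number shift from $\mi{L}_\rho^n(\rho^n\log\rho)$ absorbed into $\gb{n}$) is correct, as are your two constraint identities: the traced Riccati equation $\mi{L}_\rho(\rho^{-1}\operatorname{tr}_{\gv}\Lv) = -\tfrac12\rho^{-1}|\Lv|^2_{\gv}$ and the contracted Codazzi identity $\operatorname{div}_{\gv}\Lv = \Dv(\operatorname{tr}_{\gv}\Lv)$ do follow from \eqref{basiceq} together with the $g$-tracelessness of $W$, and the parity argument that forces $\operatorname{tr}_{\gm}\gb{n} = 0$ and $\operatorname{div}_{\gm}\gb{n} = 0$ for odd $n$ is sound. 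One bookkeeping remark: since the expansion is only partial, the ``coefficient matching'' at orders $\rho^{n-3}$ and $\rho^{n-1}$ has to be run through the limit statements of Theorem \ref{thm.aads_fg} (e.g.\ taking $\rho \searrow 0$ of $\rho^{3-n}$ times the traced identity, using $\rho\,\mi{L}_\rho^{k+1}\gv \rightarrow 0$ to discard the remainder), rather than as a literal power-series identity; you acknowledge this and it does go through at the stated regularity.

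The one genuine gap is $n = 2$, which the hypotheses ($n > 1$) do cover and which you dismiss as ``a minor variant of the same computation.'' It is not: in your traced Riccati equation the coefficient of $\operatorname{tr}_{\gm}\gb{n}$ at the matched order is $n(n-2)$, which vanishes when $n = 2$, so the two identities you use say nothing about $\operatorname{tr}_{\gm}\gb{2}$. For $n = 2$ one must instead invoke the Gauss equation---the second identity in \eqref{eq.aads_connection}, with $\wv^0 \equiv 0$ since the Weyl tensor vanishes in three spacetime dimensions---whose $\gv$-trace determines $\operatorname{tr}_{\gm}\gb{2}$ through the vertical scalar curvature (this is the Hamiltonian constraint proper, which your scheme never uses). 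With that equation added, the $n = 2$ conclusion follows; without it, that case of the corollary is unproved.
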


\begin{remark}
The conclusions of Theorem \ref{thm.aads_fg} and Corollary \ref{thm.aads_fg_exp} imply that the coefficients $\gb{1}, \dots, \gb{n-1}, \gb{\star}$---as well as the $\gm$-trace and the $\gm$-divergence of $\gb{n}$---are determined by the boundary metric $\gm = \gb{0}$.
As a result, we can view $\gb{0}$ and the $\gm$-trace-free, $\gm$-divergence-free part of $\gb{n}$ as the ``free" data for the Einstein-vacuum equations at the conformal boundary.
\end{remark}

\begin{definition} \label{def.aads_boundary_data}
Let $( \mi{M}, g )$ be a vacuum FG-aAdS segment, and let $\gb{0}, \dots, \gb{n-1}, \gb{\star}, \gb{n}$ be defined as in the statements of Theorem \ref{thm.aads_fg} and Corollary \ref{thm.aads_fg_exp} above.
We then refer to the triple $( \mi{I}, \gb{0}, \gb{n} )$ as the \emph{holographic data} associated to or induced by $( \mi{M}, g )$.
\end{definition}

\subsection{The Mixed Covariant Formalism} \label{sec.aads_mixed}

In this subsection, we recall the notion of mixed tensor fields from \cite{mcgill_shao:psc_aads}.
In order to better handle some of the more complicated tensorial expressions in this section, we will make use of the following notational conventions for multi-indices:

\begin{definition} \label{def.aads_indices}
In general, we will use symbols containing an overhead bar to denote multi-indices.
Moreover, given a multi-index $\ix{A} := A_1 \dots A_l$ (with spacetime or vertical components):
\begin{itemize}
\item For any $1 \leq i \leq l$, we write \smash{$\ixd{A}{i}$} to denote $\ix{A}$, but with $A_i$ removed.
Moreover, given another index $B$, we write \smash{$\ixr{A}{i}{B}$} to denote $\ix{A}$, but with $A_i$ replaced by $B$.

\item Similarly, given any $1 \leq i, j \leq l$, with $i \neq j$, we write \smash{$\ixd{A}{ i, j }$} to denote $\ix{A}$ except with $A_i$ and $A_j$ removed.
Furthermore, given any indices $B$ and $C$, we write \smash{$\ixr{A}{ i, j }{ B, C }$} to denote $\ix{A}$, but with $A_i$ and $A_j$ replaced by $B$ and $C$, respectively.
\end{itemize}
\end{definition}

The first step in this process is to construct connections on the vertical bundles.
The Levi-Civita connections $\Dv$ already define covariant derivatives of vertical tensor fields in the vertical directions.
We now extend these connections to also act in the $\rho$-direction.

\begin{proposition} \label{thm.aads_vertical_connection}
Let $( \mi{M}, g )$ be an FG-aAdS segment.
There exists a (unique) family of connections $\Dvm$ on the vertical bundles $\ms{V}^k_l \mi{M}$, for all ranks $( k, l )$, such that given any vertical tensor field $\ms{A}$ of rank $( k, l )$, the following formula holds, with respect to any coordinates $( U, \varphi )$ on $\mi{I}$,
\begin{align}
\label{eq.aads_vertical_connection} \bar{\Dv}_c \ms{A}^{ \ix{a} }{}_{ \bar{b} } &= \Dv_c \ms{A}^{ \bar{a} }{}_{ \bar{b} } \text{,} \\
\notag \bar{\Dv}_\rho \ms{A}^{ \ix{a} }{}_{ \ix{b} } &= \mi{L}_\rho \ms{A}^{ \ix{a} }{}_{ \ix{b} } + \frac{1}{2} \sum_{ i = 1 }^k \gv^{ a_i c } \mi{L}_\rho \gv_{ c d } \, \ms{A}^{ \ixr{a}{i}{d} }{}_{ \ix{b} } - \frac{1}{2} \sum_{ j = 1 }^l \gv^{ c d } \mi{L}_\rho \gv_{ b_j c } \, \ms{A}^{ \ix{a} }{}_{ \ixr{b}{j}{d} } \text{,}
\end{align}
where $\ix{a} := a_1 \dots a_k$ and $\ix{b} := b_1 \dots b_l$ are multi-indices.

Furthermore, given any vector field on $X$ on $\mi{M}$, the operator $\Dvm_X$ satisfies the following:
\begin{itemize}
\item For any vertical tensor fields $\ms{A}$ and $\ms{B}$,
\begin{equation}
\label{eq.aads_vertical_leibniz} \Dvm_X ( \ms{A} \otimes \ms{B} ) = \Dvm_X \ms{A} \otimes \ms{B} + \ms{A} \otimes \Dvm_X \ms{B} \text{.}
\end{equation}

\item For any vertical tensor field $\ms{A}$ and any tensor contraction operation $\mc{C}$,
\begin{equation}
\label{eq.aads_vertical_contract} \Dvm_X ( \mc{C} \ms{A} ) = \mc{C} ( \Dvm_X \ms{A} ) \text{.}
\end{equation}

\item $\Dvm_X$ annihilates the vertical metric:
\begin{equation}
\label{eq.aads_vertical_compat} \Dvm_X \gv = 0 \text{,} \qquad \Dvm_X \gv^{-1} = 0 \text{.}
\end{equation}
\end{itemize}
\end{proposition}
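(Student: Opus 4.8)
The plan is to \emph{define} $\Dvm$ by the formula \eqref{eq.aads_vertical_connection}, then verify that this prescription is coordinate-independent, that it satisfies the connection axioms, and finally that properties \eqref{eq.aads_vertical_leibniz}--\eqref{eq.aads_vertical_compat} hold; uniqueness will be automatic once we observe that the Leibniz rule plus compatibility with contractions forces the formula. First, for the vertical directions there is nothing to prove: $\Dvm_c := \Dv_c$ is the Levi-Civita connection of $\gv|_\sigma$ on each slice $\{\rho=\sigma\}$, which is already coordinate-independent. The only genuine content is the $\rho$-direction. Here I would first check that the right-hand side of the second line of \eqref{eq.aads_vertical_connection} transforms as a vertical tensor of rank $(k,l)$ under a change of coordinates $\varphi \rightsquigarrow \varphi'$ on $\mi{I}$. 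The term $\mi{L}_\rho \ms{A}^{\ix a}{}_{\ix b}$ fails to transform tensorially precisely because the Jacobian $\partial x^a/\partial x'^{a'}$ is $\rho$-dependent, producing correction terms proportional to $\mi{L}_\rho(\partial x^a/\partial x'^{a'})$; the key computation is that these correction terms are exactly cancelled by the transformation-law discrepancies of the two sums involving $\tfrac12 \gv^{\,\cdot\,c}\mi{L}_\rho\gv_{c\,\cdot}$. Concretely, differentiating the identity $\gv_{ab} = \gv'_{a'b'}\,\partial x'^{a'}/\partial x^a \cdot \partial x'^{b'}/\partial x^b$ in $\rho$ and contracting with $\gv^{-1}$ produces precisely the needed Christoffel-type object $\tfrac12\gv^{ac}\mi{L}_\rho\gv_{cb}$, so the mismatch telescopes away. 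This is the step I expect to be the main obstacle — it is a bookkeeping-heavy but conceptually routine check that the asymmetric-looking formula is in fact the unique tensorial correction of $\mi{L}_\rho$, exactly analogous to how the ordinary Christoffel symbols correct the coordinate derivative.

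Once coordinate-invariance is established, the connection axioms follow quickly. $\R$-linearity in $\ms{A}$ is immediate from linearity of $\mi{L}_\rho$. For the Leibniz rule over multiplication by functions, $\Dvm_\rho(f\ms{A}) = (\partial_\rho f)\ms{A} + f\,\Dvm_\rho\ms{A}$, one uses $\mi{L}_\rho(f\ms{A}) = (\partial_\rho f)\ms{A} + f\,\mi{L}_\rho\ms{A}$ and notes the correction sums are $f$-linear in $\ms{A}$. Tensoriality in the direction field $X = X^\rho\partial_\rho + X^c\partial_c$ is then defined by $\Dvm_X := X^\rho\Dvm_\rho + X^c\Dvm_c$, and is automatically $C^\infty(\mi{M})$-linear in $X$. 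For \eqref{eq.aads_vertical_leibniz}, the product rule for $\mi{L}_\rho$ on $\ms{A}\otimes\ms{B}$ combined with the fact that the correction sums distribute over the tensor factors (each upper/lower index of $\ms{A}\otimes\ms{B}$ is an index of exactly one factor) gives the result; the same indexing observation, together with the standard compatibility of $\Dv_c$ with contractions on each slice, yields \eqref{eq.aads_vertical_contract}.

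Finally, \eqref{eq.aads_vertical_compat}: in the vertical directions $\Dvm_c\gv = \Dv_c\gv = 0$ by definition of the Levi-Civita connection of $\gv|_\sigma$. For the $\rho$-direction, apply \eqref{eq.aads_vertical_connection} to $\ms{A} = \gv$ (rank $(0,2)$):
\[
\Dvm_\rho\gv_{ab} = \mi{L}_\rho\gv_{ab} - \tfrac12\gv^{cd}\mi{L}_\rho\gv_{ac}\,\gv_{db} - \tfrac12\gv^{cd}\mi{L}_\rho\gv_{bc}\,\gv_{ad} = \mi{L}_\rho\gv_{ab} - \tfrac12\mi{L}_\rho\gv_{ab} - \tfrac12\mi{L}_\rho\gv_{ab} = 0,
\]
and $\Dvm_\rho\gv^{-1} = 0$ follows by applying $\Dvm_\rho$ to the contraction $\gv^{ac}\gv_{cb} = \delta^a_b$ and using \eqref{eq.aads_vertical_contract} together with $\Dvm_\rho\delta = 0$ (the Kronecker delta is $\rho$-independent and its correction terms cancel in pairs). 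For uniqueness, suppose $\Dvm'$ is any connection on the vertical bundles restricting to $\Dv$ in vertical directions and satisfying the Leibniz rule \eqref{eq.aads_vertical_leibniz} and compatibility with contractions \eqref{eq.aads_vertical_contract}; then $\Dvm'_\rho - \Dvm_\rho$ is a tensorial (i.e.\ $C^\infty(\mi{M})$-linear) operator that annihilates scalars, so it is determined by its action on rank-$(1,0)$ and rank-$(0,1)$ vertical fields, and compatibility with $\gv$ forces this action to agree with that of $\Dvm_\rho$. Hence $\Dvm$ is the unique such family.
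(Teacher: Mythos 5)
Your construction is essentially the standard one, and since the paper's own ``proof'' of this proposition is only a citation to \cite{mcgill_shao:psc_aads}, a self-contained verification like yours is a legitimately different (more explicit) route. The computations you actually carry out are correct: $\Dvm_\rho \gv = 0$ by contracting the two correction sums, $\Dvm_\rho \gv^{-1} = 0$ via $\Dvm_\rho \delta = 0$ together with \eqref{eq.aads_vertical_leibniz}--\eqref{eq.aads_vertical_contract}, and the Leibniz and contraction properties from the fact that the corrections act index by index. One simplification: the step you single out as the main obstacle is in fact vacuous. The coordinates in the statement are lifted from $\mi{I}$, so the transition maps are $\rho$-independent, the Jacobians $\partial x^a / \partial x'^{a'}$ have vanishing $\rho$-derivative, and $\mi{L}_\rho \ms{A}$ (see \eqref{eq.aads_vertical_lie}), $\gv^{-1}$ and $\mi{L}_\rho \gv$ are each already vertical tensor fields; hence every term on the right-hand side of the second line of \eqref{eq.aads_vertical_connection} is separately tensorial, well-definedness is immediate, and there are no $\mi{L}_\rho ( \partial x^a / \partial x'^{a'} )$ terms to cancel.

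The genuine defect is your closing uniqueness argument, which is incorrect as stated: the properties you impose (agreement with $\Dv$ in vertical directions, the Leibniz rule, compatibility with contractions, and $\Dvm \gv = 0$) do \emph{not} characterise $\Dvm$. If $T$ is any vertical $(1,1)$-tensor field that is $\gv$-antisymmetric, i.e.\ $\gv ( T \ms{X}, \ms{Y} ) = - \gv ( \ms{X}, T \ms{Y} )$, then setting $\Dvm'_\rho \ms{X} := \Dvm_\rho \ms{X} + T ( \ms{X} )$ on vertical vector fields, keeping $\partial_\rho$ on scalars and $\Dv$ in vertical directions, and extending by duality and the Leibniz rule yields a different family of connections satisfying all of your hypotheses; metric compatibility alone never fixes the rotational part of a connection (that is what torsion-freeness does for the Levi-Civita connection, and no analogous condition in the $\rho$-direction appears among your hypotheses). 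Fortunately the uniqueness asserted in the proposition is much weaker and is immediate: \eqref{eq.aads_vertical_connection} prescribes $\Dvm_{\partial_\rho} \ms{A}$ and $\Dvm_{\partial_c} \ms{A}$ for every vertical tensor field $\ms{A}$, and $C^\infty ( \mi{M} )$-linearity of $\Dvm_X$ in $X$ then determines $\Dvm_X$ for every vector field $X$, so you can simply delete the final paragraph of your argument. As an aside, the remark following the proposition (identifying $\Dvm$ with the Levi-Civita connection of $\rho^2 g$ acting on vertical fields) gives an alternative one-line route to existence and to \eqref{eq.aads_vertical_compat}.
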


\begin{proof}
See \cite[Definition 2.22, Proposition 2.23]{mcgill_shao:psc_aads}.
\end{proof}

In summary, Proposition \ref{thm.aads_vertical_connection} states that $\Dvm$ extends the vertical Levi-Civita connections $\Dv$ to all directions along $\mi{M}$, satisfy the same algebraic properties as the usual Levi-Civita derivatives (such as $\nabla$ and $\Dv$), and are compatible with the vertical metric $\gv$.

\begin{remark}
If we identify vertical tensor fields with spacetime tensor fields via Remark \ref{rmk.aads_tensor}, then $\Dvm$ can alternately be defined as the Levi-Civita connection associated with $\rho^2 g$.
\end{remark}

Next, we construct mixed tensor bundles and their associated connections.

\begin{definition} \label{def.aads_mixed}
Let $( \mi{M}, g )$ be an FG-aAdS segment.
We then define the \emph{mixed bundle} of rank $( \kappa, \lambda; k, l )$ over $\mi{M}$ to be the tensor product bundle given by
\begin{equation}
\label{eq.aads_mixed} T^\kappa_\lambda \ms{V}^k_l \mi{M} := T^\kappa_\lambda \mi{M} \otimes \ms{V}^k_l \mi{M} \text{.}
\end{equation}
We refer to sections of $T^\kappa_\lambda \ms{V}^k_l \mi{M}$ as \emph{mixed tensor fields} of rank $( \kappa, \lambda; k, l )$.

Moreover, we define the connection $\bar{\nabla}$ on $T^\kappa_\lambda \ms{V}^k_l \mi{M}$ to be the tensor product connection of the spacetime connection $\nabla$ on $T^\kappa_\lambda \mi{M}$ and the vertical connection $\Dvm$ on $\ms{V}^k_l \mi{M}$.
More specifically, given any vector field $X$ on $\mi{M}$, tensor field $G$ on $\mi{M}$, and vertical tensor field $\ms{B}$, we have
\[
\nablam_X ( G \otimes \ms{B} ) := \nabla_X G \otimes \ms{B} + G \otimes \bar{\Dv}_X \ms{B} \text{.}
\]
\end{definition}

Less formally, mixed tensor fields are those with some components designated as ``spacetime" and other designated as ``vertical".
The mixed connections are then defined on mixed tensor fields by acting like $\nabla$ on the spacetime components and like $\Dvm$ on the vertical components.

\begin{remark}
Any tensor field of rank $( \kappa, \lambda )$ on $\mi{M}$ can be viewed as a mixed tensor field, with rank $( \kappa, \lambda; 0, 0 )$.
Similarly, any vertical tensor field is a mixed tensor field.
\end{remark}

\begin{proposition} \label{thm.aads_mixed_connection}
Let $( \mi{M}, g )$ be an FG-aAdS segment.
Then, given any vector field $X$ on $\mi{M}$:
\begin{itemize}
\item The following holds for any mixed tensor fields $\mathbf{A}$ and $\mathbf{B}$:\footnote{As usual, $\mathbf{A} \otimes \mathbf{B}$ is defined componentwise by multiplying the components of $\mathbf{A}$ and $\mathbf{B}$.}
\begin{equation}
\label{eq.aads_mixed_connection_leibniz} \nablam_X ( \mathbf{A} \otimes \mathbf{B} ) = \nablam_X \mathbf{A} \otimes \mathbf{B} + \mathbf{A} \otimes \nablam_X \mathbf{B} \text{.}
\end{equation}

\item The operator $\nablam_X$ annihilates both the spacetime and the vertical metrics:
\begin{equation}
\label{eq.aads_mixed_connection_compat} \nablam_X g = 0 \text{,} \qquad \nablam_X g^{-1} = 0 \text{,} \qquad \nablam_X \gv = 0 \text{,} \qquad \nablam_X \gv^{-1} = 0 \text{.}
\end{equation}
\end{itemize}
\end{proposition}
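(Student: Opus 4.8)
The plan is to deduce both assertions from the corresponding properties of the two connections out of which $\nablam$ is built---the spacetime Levi-Civita connection $\nabla$ and the vertical connection $\Dvm$ of Proposition \ref{thm.aads_vertical_connection}---since, by Definition \ref{def.aads_mixed}, $\nablam$ is precisely their tensor product connection. First I would record the local structure of mixed tensor fields: with respect to lifted coordinates $\varphi_\rho = ( \rho, \varphi )$ over a coordinate patch $( 0, \rho_0 ] \times U$, every mixed tensor field of rank $( \kappa, \lambda; k, l )$ is a finite $C^\infty$-linear combination of \emph{simple} mixed fields of the form $G \otimes \ms{B}$, where $G$ is a coordinate tensor field of rank $( \kappa, \lambda )$ on $\mi{M}$ and $\ms{B}$ is a coordinate vertical tensor field of rank $( k, l )$. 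Since $\nabla_X$ and $\Dvm_X$ are both $\R$-linear and satisfy the derivation property against multiplication by smooth functions, so does $\nablam_X$; hence it suffices to verify \eqref{eq.aads_mixed_connection_leibniz} and \eqref{eq.aads_mixed_connection_compat} on simple mixed tensor fields and then extend by $C^\infty$-(bi)linearity.

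For the Leibniz rule \eqref{eq.aads_mixed_connection_leibniz}, I would take $\mathbf{A} = G \otimes \ms{B}$ and $\mathbf{B} = H \otimes \ms{C}$. Reordering tensor factors canonically, $\mathbf{A} \otimes \mathbf{B}$ is identified with $( G \otimes H ) \otimes ( \ms{B} \otimes \ms{C} )$, so the defining formula for $\nablam$ in Definition \ref{def.aads_mixed} gives $\nablam_X ( \mathbf{A} \otimes \mathbf{B} ) = \nabla_X ( G \otimes H ) \otimes ( \ms{B} \otimes \ms{C} ) + ( G \otimes H ) \otimes \Dvm_X ( \ms{B} \otimes \ms{C} )$. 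I would then apply the ordinary Leibniz rule for $\nabla$ to the first term and the vertical Leibniz rule \eqref{eq.aads_vertical_leibniz} for $\Dvm$ to the second, and regroup the four resulting terms, using the definition of $\nablam$ once more, as $\nablam_X \mathbf{A} \otimes \mathbf{B} + \mathbf{A} \otimes \nablam_X \mathbf{B}$. Extending by $C^\infty$-bilinearity in $\mathbf{A}$ and $\mathbf{B}$ recovers \eqref{eq.aads_mixed_connection_leibniz} in general.

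For metric compatibility \eqref{eq.aads_mixed_connection_compat}, I would observe that $g$ and $g^{-1}$ are purely spacetime mixed tensors (rank $( \cdot, \cdot; 0, 0 )$), so by Definition \ref{def.aads_mixed} one has $\nablam_X g = \nabla_X g$ and $\nablam_X g^{-1} = \nabla_X g^{-1}$, both of which vanish because $\nabla$ is the Levi-Civita connection of $g$. Symmetrically, $\gv$ and $\gv^{-1}$ are purely vertical mixed tensors (rank $( 0, 0; \cdot, \cdot )$), so $\nablam_X \gv = \Dvm_X \gv$ and $\nablam_X \gv^{-1} = \Dvm_X \gv^{-1}$, which vanish by \eqref{eq.aads_vertical_compat} of Proposition \ref{thm.aads_vertical_connection}.

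There is no genuine obstacle in this argument: it is entirely an exercise in unwinding definitions. The only point deserving a word of care is \emph{well-definedness}---that the formula in Definition \ref{def.aads_mixed}, initially prescribed on simple fields $G \otimes \ms{B}$, descends to a bona fide connection on $T^\kappa_\lambda \ms{V}^k_l \mi{M}$ independently of the chosen decomposition into simple fields. This is the standard construction of a tensor product connection, and it can be confirmed either abstractly---the formula is compatible with the $C^\infty(\mi{M})$-module relations defining $T^\kappa_\lambda \ms{V}^k_l \mi{M}$, in particular with the Leibniz rule against scalar multiplication on each factor---or, more concretely, by writing $\nablam_X$ in the coordinates $\varphi_\rho$ as a single Christoffel-type expression combining the Christoffel symbols of $\nabla$ with the vertical connection coefficients appearing in \eqref{eq.aads_vertical_connection}, after which \eqref{eq.aads_mixed_connection_leibniz} and \eqref{eq.aads_mixed_connection_compat} follow by direct computation.
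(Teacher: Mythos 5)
Your proof is correct and is exactly the standard argument one expects here: reduce to simple mixed fields $G \otimes \ms{B}$, apply the Leibniz and compatibility properties of $\nabla$ and $\Dvm$ (the latter from Proposition \ref{thm.aads_vertical_connection}), and note well-definedness of the tensor product connection. The paper itself does not spell this out but simply cites \cite[Proposition 2.28]{mcgill_shao:psc_aads}, whose proof proceeds along the same lines, so there is nothing to add.
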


\begin{proof}
See \cite[Proposition 2.28]{mcgill_shao:psc_aads}.
\end{proof}

In summary, the mixed connections $\nablam$ naturally extend $\nabla$ and $\Dvm$ to mixed fields, have the same algebraic properties as the usual Levi-Civita derivatives, and are compatible with both $g$ and $\gv$.

The main reason for expanding our scope from vertical to mixed tensor fields is that we can now make sense of higher covariant derivatives of mixed tensor fields:

\begin{definition} \label{def.aads_mixed_wave}
Given an FG-aAdS segment $( \mi{M}, g )$ and a mixed tensor field $\mathbf{A}$ of rank $( \kappa, \lambda; k, l )$:
\begin{itemize}
\item The \emph{mixed covariant differential} of $\mathbf{A}$ is the mixed tensor field $\bar{\nabla} \mathbf{A}$, of rank $( \kappa, \lambda + 1; k, l )$, that maps each vector field $X$ on $\mi{M}$ (in the extra covariant slot) to $\bar{\nabla}_X \mathbf{A}$.

\item The \emph{mixed Hessian} $\nablam^2 \mathbf{A}$ is then defined to be the mixed covariant differential of $\bar{\nabla} \mathbf{A}$.

\item In particular, we now define $\Boxm \mathbf{A}$---the \emph{wave operator} applied to $\mathbf{A}$---to be the $g$-trace of $\bar{\nabla}^2 \mathbf{A}$, where the trace is applied to the two derivative components.
\end{itemize}
\end{definition}

\begin{remark}
In this article, we will only consider $\Boxm$ applied to vertical tensor fields.
The main novelty, and subtlety, in this case is that the outer derivative acts as a spacetime derivative $\nabla$ on the inner derivative slot and as a vertical derivative $\bar{\Dv}$ on the vertical tensor field itself.
\end{remark}

Finally, we list the following identities, which will be useful in upcoming computations:

\begin{proposition} \label{thm.aads_Gamma}
Let $( \mi{M}, g )$ be an FG-aAdS segment.
In addition, let $( U, \varphi )$ denote coordinates on $\mi{I}$, and let $\Gamma$ and $\Gammav$ denote Christoffel symbols in $\varphi_\rho$-coordinates for $\nabla$ and $\Dvm$, respectively:
\begin{equation}
\label{eq.aads_Gamma_def} \nabla_\alpha \partial_\beta := \Gamma^\gamma_{ \alpha \beta } \partial_\gamma \text{,} \qquad \Dvm_\alpha \partial_b := \Gammav^c_{ \alpha b } \partial_c \text{.}
\end{equation}
Then, the following relations hold:
\begin{align}
\label{eq.aads_Gamma} \Gamma_{ \rho \rho }^\alpha = - \rho^{-1} \delta^\alpha{}_\rho \text{,} &\qquad \Gamma_{ a \rho }^\rho = 0 \text{,} \\
\notag \Gamma_{ a \rho }^c = - \rho^{-1} \delta^c{}_a + \frac{1}{2} \gv^{ c d } \mi{L}_\rho \gv_{ a d } \text{,} &\qquad \Gamma_{ \rho a }^c - \Gammav_{ \rho a }^c = - \rho^{-1} \delta^c{}_a \text{,} \\
\notag \Gamma_{ a b }^\rho = \rho^{-1} \gv_{ a b } - \frac{1}{2} \mi{L}_\rho \gv_{ a b } \text{,} &\qquad \Gamma_{ a b }^c - \Gammav_{ a b }^c = 0 \text{.}
\end{align}
Furthermore, for any mixed tensor field $\mathbf{A}$ of rank $( \kappa, \lambda; k, l )$, we have, in $\varphi$- and $\varphi_\rho$-coordinates,
\begin{align}
\label{eq.aads_deriv} \nablam_\gamma \mathbf{A}^{ \ix{\alpha} }{}_{ \ix{\beta} }{}^{ \ix{a} }{}_{ \ix{b} } &= \partial_\gamma ( \mathbf{A}^{ \ix{\alpha} }{}_{ \ix{\beta} }{}^{ \ix{a} }{}_{ \ix{b} } ) + \sum_{ i = 1 }^\kappa \Gamma^{ \alpha_i }_{ \gamma \delta } \, \mathbf{A}^{ \ixr{\alpha}{i}{\delta} }{}_{ \ix{\beta} }{}^{ \ix{a} }{}_{ \ix{b} } - \sum_{ j = 1 }^\lambda \Gamma^\delta_{ \gamma \beta_j } \, \mathbf{A}^{ \ix{\alpha} }{}_{ \ixr{\beta}{j}{\delta} }{}^{ \ix{a} }{}_{ \ix{b} } \\
\notag &\qquad + \sum_{ i = 1 }^k \Gammav^{ a_i }_{ \gamma d } \, \mathbf{A}^{ \ix{\alpha} }{}_{ \ix{\beta} }{}^{ \ixr{a}{i}{d} }{}_{ \ix{b} } - \sum_{ j = 1 }^l \Gammav^d_{ \gamma b_j } \, \mathbf{A}^{ \ix{\alpha} }{}_{ \ix{\beta} }{}^{ \ix{a} }{}_{ \ixr{b}{j}{d} } \text{,}
\end{align}
where $\ix{\alpha} := \alpha_1 \dots \alpha_\kappa$, $\ix{\beta} := \beta_1 \dots \beta_\lambda$, $\ix{a} := a_1 \dots a_k$, and $\ix{b} := b_1 \dots b_l$.
\end{proposition}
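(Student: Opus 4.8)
The proof is a coordinate computation, and it splits into two essentially independent parts. The plan for the relations \eqref{eq.aads_Gamma} is to expand the Christoffel symbols of $g$ directly from its Fefferman--Graham form \eqref{eq.aads_metric} and to compare with the Christoffel symbols of $\Dvm$ read off from Proposition \ref{thm.aads_vertical_connection}. The plan for the coordinate formula \eqref{eq.aads_deriv} is to expand a mixed tensor field in the product of coordinate and vertical (co)frames and apply the Leibniz rule \eqref{eq.aads_mixed_connection_leibniz} together with the defining relations \eqref{eq.aads_Gamma_def}.

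For the first part, I would write the metric components in $\varphi_\rho$-coordinates: from \eqref{eq.aads_metric} one has $g_{ \rho \rho } = \rho^{-2}$, $g_{ \rho a } = 0$, $g_{ ab } = \rho^{-2} \gv_{ ab }$, hence $g^{ \rho \rho } = \rho^2$, $g^{ \rho a } = 0$, $g^{ ab } = \rho^2 \gv^{ ab }$. Substituting into $\Gamma^\gamma_{ \alpha \beta } = \tfrac{1}{2} g^{ \gamma \delta } ( \partial_\alpha g_{ \delta \beta } + \partial_\beta g_{ \delta \alpha } - \partial_\delta g_{ \alpha \beta } )$, and using that $g_{ \rho \rho }$ and $g_{ \rho a }$ are independent of the $\varphi$-coordinates while $\partial_\rho g_{ ab } = - 2 \rho^{-3} \gv_{ ab } + \rho^{-2} \mi{L}_\rho \gv_{ ab }$, each of the six cases collapses to a short calculation. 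The purely vertical symbols $\Gamma^c_{ ab }$ reduce to $\tfrac{1}{2} \gv^{ cd } ( \partial_a \gv_{ db } + \partial_b \gv_{ da } - \partial_d \gv_{ ab } )$, which are exactly the Levi-Civita Christoffel symbols of $\gv$ on each level set of $\rho$; on the other hand, applying \eqref{eq.aads_vertical_connection} to the $\rho$-independent vertical vector field $\partial_b$ gives $\Dvm_c \partial_b = \Dv_c \partial_b$ and $\Dvm_\rho \partial_b = \tfrac{1}{2} \gv^{ cd } \mi{L}_\rho \gv_{ db } \, \partial_c$, so $\Gammav^c_{ ab }$ coincides with the vertical-metric Christoffel symbol and $\Gammav^c_{ \rho a } = \tfrac{1}{2} \gv^{ cd } \mi{L}_\rho \gv_{ ad }$. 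Subtracting these from the corresponding $\Gamma$'s yields all of \eqref{eq.aads_Gamma}.

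For the second part, I would expand a mixed tensor field $\mathbf{A}$ of rank $( \kappa, \lambda; k, l )$ over the product of the coordinate (co)frames $\partial_{ \alpha_i }$, $dx^{ \beta_j }$ and the vertical (co)frames, and apply $\nablam_\gamma$ via the Leibniz property \eqref{eq.aads_mixed_connection_leibniz}. Since $\nablam$ acts as $\nabla$ on the spacetime tensor factors and as $\Dvm$ on the vertical factors (Definition \ref{def.aads_mixed}), the derivatives of the basis elements contribute $\nabla_\gamma \partial_{ \alpha_i } = \Gamma^\delta_{ \gamma \alpha_i } \partial_\delta$ and $\nabla_\gamma dx^{ \beta_j } = - \Gamma^{ \beta_j }_{ \gamma \delta } dx^\delta$ for the spacetime slots, and the analogous expressions with $\Gammav$ in place of $\Gamma$ for the vertical slots, by \eqref{eq.aads_Gamma_def}. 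Collecting the terms produces \eqref{eq.aads_deriv} directly.

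The computation has no genuine obstacle; the two points requiring care are (i) verifying that the abstractly defined $\Gammav$ of Proposition \ref{thm.aads_vertical_connection} is the vertical-metric Christoffel symbol in the purely vertical directions, so that the last identity in \eqref{eq.aads_Gamma} reads correctly --- this is where one must invoke \eqref{eq.aads_vertical_connection} in the $\varphi$-directions, not only in the $\rho$-direction --- and (ii) the multi-index bookkeeping in \eqref{eq.aads_deriv}. As a conceptual cross-check for the mixed-direction identities, one may instead recall that $\Dvm$ is the Levi-Civita connection associated with $\rho^2 g$ and apply the conformal transformation law for Christoffel symbols with conformal factor $\rho^{-2}$; this reproduces $\Gamma^c_{ \rho a } - \Gammav^c_{ \rho a } = - \rho^{-1} \delta^c{}_a$ together with the purely vertical and $\rho$-valued relations.
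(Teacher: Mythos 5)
Your proposal is correct and takes essentially the same route as the paper, whose proof is a one-line citation of exactly the ingredients you compute explicitly: the relations \eqref{eq.aads_Gamma} from the FG form \eqref{eq.aads_metric} together with the defining formula \eqref{eq.aads_vertical_connection} for $\Dvm$ (applied to the frame $\partial_b$ to read off $\Gammav^c_{\rho a}$ and $\Gammav^c_{ab}$), and \eqref{eq.aads_deriv} from the tensor-product structure of Definition \ref{def.aads_mixed} via the Leibniz rule. Your conformal cross-check ($\Dvm$ as the Levi-Civita connection of $\rho^2 g$) is consistent with the paper's remark following Proposition \ref{thm.aads_vertical_connection}.
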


\begin{proof}
\eqref{eq.aads_Gamma} follows from \eqref{eq.aads_metric} and \eqref{eq.aads_vertical_connection}, while \eqref{eq.aads_deriv} follows from Definition \ref{def.aads_mixed}.
\end{proof}

\subsection{Some General Formulas}

Next, we provide some general formulas for vertical tensor fields, as well as relations between spacetime and vertical tensor fields.
These will be used for deriving many of the equations we will need for proving Theorem \ref{theo:adscft}.
Moreover, we give a general development here, as these will be of independent interest beyond the present article.

\begin{remark}
Similar formulas were proved in \cite{mcgill:loc_ads}, however here we need more details for the ``error" terms.
Thus, we provide full derivations of these properties in Appendix \ref{sec.extra}.
\end{remark}

First, we devise some schematic notations, originally from \cite{shao:aads_fg}, for describing error terms:

\begin{definition} \label{def.aads_schematic}
Let $( \mi{M}, g )$ be an FG-aAdS segment.
Given any $N \geq 1$ and vertical tensor fields $\ms{A}_1, \dots, \ms{A}_N$ on $\mi{M}$, we write $\sch{ \ms{A}_1, \dots, \ms{A}_N }$ to represent any vertical tensor field of the form
\begin{equation}
\label{eq.aads_schematic} \sum_{ j = 1 }^J \mc{Q}_j ( \ms{A}_1 \otimes \dots \otimes \ms{A}_N ) \text{,} \qquad J \geq 0 \text{,}
\end{equation}
where each $\mc{Q}_j$, $1 \leq j \leq J$, is a composition of zero or more of the following operations:
\begin{itemize}
\item Component permutations.

\item (Non-metric) contractions.

\item Multiplications by a scalar constant.
\end{itemize}
\end{definition}

\begin{definition} \label{def.aads_schematic_g}
Let $( \mi{M}, g )$ be an FG-aAdS segment.
\begin{itemize}
\item For any $N \geq 1$, we define the shorthands
\begin{equation}
\label{eq.aads_schematic_g} \gv^N := \bigotimes_{ i = 1 }^N \gv \text{,} \qquad \gv^{-N} := \bigotimes_{ i = 1 }^N \gv^{-1} \text{.}
\end{equation}

\item For brevity, we also use the shorthand $\Lv$ to denote the $\rho$-derivative of $\gv$:
\begin{equation}
\label{eq.aads_sff} \Lv := \mi{L}_\rho \gv \text{.}
\end{equation}
\end{itemize}
\end{definition}

Next, we establish some general identities for vertical tensor fields:

\begin{proposition} \label{thm.aads_comm}
Let $( \mi{M}, g )$ be an FG-aAdS segment.
Then:
\begin{itemize}
\item The following commutation identities hold for any vertical tensor field $\ms{A}$:\footnote{To clarify, if $\ms{A}$ has rank $( k, l )$, then $\Dvm_\rho ( \Dv \ms{A} )$ denotes $\Dvm_\rho$ acting on the rank $( k, l + 1 )$ field $\Dv \ms{A}$, while $\Dv ( \Dvm_\rho \ms{A} )$ denotes the covariant differential of the rank $( k, l )$ field $\Dvm_\rho \ms{A}$.
A similar point holds for $\mi{L}_\rho ( \Dv \ms{A} )$ and $\Dv ( \mi{L}_\rho \ms{A} )$.}
\begin{align}
\label{eq.aads_comm} \mi{L}_\rho ( \Dv \ms{A} ) &= \Dv ( \mi{L}_\rho \ms{A} ) + \mi{S} ( \gv^{-1}, \Dv \Lv, \ms{A} ) \text{,} \\
\notag \Dvm_\rho ( \Dv \ms{A} ) &= \Dv ( \Dvm_\rho \ms{A} ) + \mi{S} ( \gv^{-1}, \Lv, \Dv \ms{A} ) + \mi{S} ( \gv^{-1}, \Dv \Lv, \ms{A} ) \text{.}
\end{align}

\item The following identity holds for vertical tensor field $\ms{A}$ and $p \in \R$,
\begin{equation}
\label{eq.aads_comm_rho} \Boxm ( \rho^p \ms{A} ) = \rho^p \Boxm \ms{A} + 2 p \rho^{ p + 1 } \Dvm_\rho \ms{A} - p ( n - p ) \rho^p \ms{A} + p \rho^2 \, \mi{S} ( \gv^{-1}, \Lv, \rho^p \ms{A} ) \text{.}
\end{equation}

\item Furthermore, the following hold for any vertical tensor field $\ms{A}$:
\begin{align}
\label{eq.aads_boxm} \Boxm \ms{A} &= \rho^2 \mi{L}_\rho^2 \ms{A} - ( n - 1 ) \rho \mi{L}_\rho \ms{A} + \rho^2 \gv^{ab} \Dv_{ab} \ms{A} + \rho^2 \, \sch{ \gv^{-1}, \Lv, \mi{L}_\rho \ms{A} } \\
\notag &\qquad + \rho \, \sch{ \gv^{-1}, \Lv, \ms{A} } + \rho^2 \, \sch{ \gv^{-1}, \mi{L}_\rho \Lv, \ms{A} } + \rho^2 \, \sch{ \gv^{-2}, \Lv, \Lv, \ms{A} } \text{.}
\end{align}
\end{itemize}
\end{proposition}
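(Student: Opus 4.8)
The three identities are proved by direct computation in lifted coordinates $\varphi_\rho$, using the Christoffel relations of Proposition~\ref{thm.aads_Gamma} and the structural properties \eqref{eq.aads_vertical_leibniz}--\eqref{eq.aads_vertical_compat} of $\Dvm$; the content is in organising the lower-order terms into the schematic types of Definition~\ref{def.aads_schematic}. For the commutator identities \eqref{eq.aads_comm}, note first that on a $\varphi$-coordinate component $\mi{L}_\rho$ acts as $\partial_\rho$, so $\mi{L}_\rho ( \Dv \ms{A} ) - \Dv ( \mi{L}_\rho \ms{A} )$ reduces to the contraction of $\ms{A}$ against $\mi{L}_\rho \Gammav^c_{ab}$, the $\rho$-variation of the Levi-Civita symbols of $\gv |_\sigma$ on the slices. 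The classical first-variation formula gives $\mi{L}_\rho \Gammav^c_{ab} = \frac{1}{2} \gv^{cd} ( \Dv_a \Lv_{bd} + \Dv_b \Lv_{ad} - \Dv_d \Lv_{ab} )$, which is exactly of type $\mi{S} ( \gv^{-1}, \Dv \Lv )$; this yields the first line of \eqref{eq.aads_comm}. For the second line, write $\Dvm_\rho = \mi{L}_\rho + \mathcal{M}$, where by \eqref{eq.aads_vertical_connection} the operator $\mathcal{M}$ is algebraic in the indices of its argument with coefficients of type $\mi{S} ( \gv^{-1}, \Lv, \cdot )$. Commuting $\mathcal{M}$ past $\Dv$ produces precisely two corrections: one from $\Dv$ falling on the $\Lv$-coefficient (contributing $\mi{S} ( \gv^{-1}, \Dv \Lv, \ms{A} )$, using $\Dv \gv^{-1} = 0$) and one from $\mathcal{M}$ acting additionally on the new derivative slot of $\Dv \ms{A}$ (contributing $\mi{S} ( \gv^{-1}, \Lv, \Dv \ms{A} )$); adding the $\mi{L}_\rho$-commutator already found gives the claim.

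For \eqref{eq.aads_comm_rho}, since $\rho$ is a scalar this is the usual commutation of $\Boxm$ with multiplication by $\rho^p$, with the Levi-Civita connection on the vertical factor replaced by $\Dvm_\rho$. The inputs are $g^{-1} = \rho^2 ( \partial_\rho \otimes \partial_\rho + \gv^{-1} )$ from \eqref{eq.aads_metric}---whence $| d \rho |_g^2 = \rho^2$ and $\nabla^\sharp \rho = \rho^2 \partial_\rho$---together with $\Box \rho = ( 1 - n ) \rho + \frac{1}{2} \rho^2 \, \mr{tr}_{ \gv } \Lv$, which follows from the values of $\Gamma^\alpha_{\rho\rho}$ and $\Gamma^\rho_{ab}$ in \eqref{eq.aads_Gamma}. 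Substituting into the Leibniz expansion of $\nablam^2 ( \rho^p \ms{A} )$ and tracing: the two terms built from $| d \rho |^2$ and $\Box \rho$ combine into the mass shift $- p ( n - p ) \rho^p \ms{A}$ (via $p ( p - 1 ) + p ( 1 - n ) = - p ( n - p )$) plus a remainder proportional to $\mr{tr}_{ \gv } \Lv$, hence of type $\mi{S} ( \gv^{-1}, \Lv, \cdot )$ with the asserted $\rho$-weight, while the gradient term contributes $2 p \rho^{ p + 1 } \Dvm_\rho \ms{A}$ through $\nabla^\sharp \rho = \rho^2 \partial_\rho$.

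For \eqref{eq.aads_boxm}, expand $\Boxm \ms{A} = g^{\alpha\beta} \nablam^2_{\alpha\beta} \ms{A}$ and split along $g^{-1} = \rho^2 ( \partial_\rho \otimes \partial_\rho + \gv^{-1} )$. The $\partial_\rho \partial_\rho$-block, using $\Gamma^\alpha_{\rho\rho} = - \rho^{-1} \delta^\alpha{}_\rho$, yields $\rho^2 \Dvm_\rho^2 \ms{A} + \rho \Dvm_\rho \ms{A}$; the vertical block, using $\Gamma^\rho_{ab} = \rho^{-1} \gv_{ab} - \frac{1}{2} \Lv_{ab}$, $g^{ab} = \rho^2 \gv^{ab}$ and $\Gamma^c_{ab} = \Gammav^c_{ab}$ (so the remaining Christoffel pieces assemble into a genuine vertical Hessian), yields $\rho^2 \gv^{ab} \Dv_{ab} \ms{A} - n \rho \Dvm_\rho \ms{A} + \rho^2 \, \mi{S} ( \gv^{-1}, \Lv, \Dvm_\rho \ms{A} )$. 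It remains to convert every $\Dvm_\rho$ into $\mi{L}_\rho$: applying $\Dvm_\rho = \mi{L}_\rho + \mathcal{M}$ repeatedly, and using $\Dvm_\rho \gv^{-1} = 0$ and $\Dvm_\rho \Lv = \mi{L}_\rho \Lv + \mi{S} ( \gv^{-1}, \Lv, \Lv )$, one finds $\rho^2 \Dvm_\rho^2 \ms{A} = \rho^2 \mi{L}_\rho^2 \ms{A} + \rho^2 [ \mi{S} ( \gv^{-1}, \Lv, \mi{L}_\rho \ms{A} ) + \mi{S} ( \gv^{-1}, \mi{L}_\rho \Lv, \ms{A} ) + \mi{S} ( \gv^{-2}, \Lv, \Lv, \ms{A} ) ]$, while $( 1 - n ) \rho \Dvm_\rho \ms{A} = - ( n - 1 ) \rho \mi{L}_\rho \ms{A} + \rho \, \mi{S} ( \gv^{-1}, \Lv, \ms{A} )$ and the term $\rho^2 \, \mi{S} ( \gv^{-1}, \Lv, \Dvm_\rho \ms{A} )$ splits into $\rho^2 \, \mi{S} ( \gv^{-1}, \Lv, \mi{L}_\rho \ms{A} ) + \rho^2 \, \mi{S} ( \gv^{-2}, \Lv, \Lv, \ms{A} )$; summing these reproduces \eqref{eq.aads_boxm} exactly.

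The only real difficulty I anticipate is the organisational one in \eqref{eq.aads_boxm}: each of the several $\Dvm_\rho \rightsquigarrow \mi{L}_\rho$ substitutions generates fresh algebraic debris, and one must check---carefully tracking which tensor slots $\mathcal{M}$ occupies at every stage---that all of it collapses into the finite list of types $\mi{S} ( \gv^{-1}, \Lv, \mi{L}_\rho \ms{A} )$, $\mi{S} ( \gv^{-1}, \mi{L}_\rho \Lv, \ms{A} )$, $\mi{S} ( \gv^{-2}, \Lv, \Lv, \ms{A} )$ and $\mi{S} ( \gv^{-1}, \Lv, \ms{A} )$ with the stated $\rho$-weights, with nothing of lower order left over. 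Establishing \eqref{eq.aads_comm} first and feeding it into the computation of \eqref{eq.aads_boxm} is what keeps this manageable; the full term-by-term verification would be carried out in Appendix~\ref{sec.extra}.
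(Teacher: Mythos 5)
Your proposal is correct and follows essentially the same route as the paper's Appendix \ref{sec.aads_comm}: the first commutator via the first variation of the vertical Christoffel symbols (the paper imports this formula from \cite[Proposition 2.27]{shao:aads_fg}), the second by expanding $\Dvm_\rho = \mi{L}_\rho + \mi{S}(\gv^{-1},\Lv,\cdot)$ against $\Dv$, the identity \eqref{eq.aads_comm_rho} from the Leibniz expansion together with $\Box \rho = -(n-1)\rho + \tfrac{1}{2}\rho^2 \gv^{ab}\Lv_{ab}$, and \eqref{eq.aads_boxm} by splitting $\Boxm$ into the $\rho\rho$- and vertical blocks using Proposition \ref{thm.aads_Gamma} and then converting every $\Dvm_\rho$ into $\mi{L}_\rho$. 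The bookkeeping of the schematic terms matches the paper's term by term, so no gap to report.
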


\begin{proof}
See Appendix \ref{sec.aads_comm}.
\end{proof}

Lastly, the key differential equations behind the proof of Theorem \ref{theo:adscft} are given in terms of spacetime tensor fields.
On the other hand, our main quantities of analysis are vector tensor fields, for which one can make sense of limits at the conformal boundary.
Thus, we will need to convert our equations for spacetime quantities into corresponding equations for vertical quantities.

\begin{proposition} \label{thm.aads_decomp}
Let $( \mi{M}, g )$ be an FG-aAdS segment.
Let $F$ be a rank $( 0, l_1 + l_2 )$ tensor field on $\mi{M}$, and let $\ms{f}$ be the rank $( 0, l_2 )$ vertical tensor field defined, in any coordinates $( U, \varphi )$ on $\mi{I}$, by
\begin{equation}
\label{eq.aads_decomp_phi} \ms{f}_{ \ix{a} } := F_{ \ix{\rho} \ix{a} } \text{,}
\end{equation}
where the multi-index $\ix{\rho} := \rho \dots \rho$ represents $l_1$ copies of $\rho$, while $\ix{a} := a_1 \dots a_{ l_2 }$.

Then, the following identities hold with respect any coordinates $( U, \varphi )$ on $\mi{I}$,
\begin{align}
\label{eq.aads_decomp} \nabla_\rho F_{ \ix{\rho} \ix{a} } &= \rho^{ - l_1 - l_2 } \Dvm_\rho ( \rho^{ l_1 + l_2 } \ms{f} )_{ \ix{a} } \text{,} \\
\notag \nabla_c F_{ \ix{\rho} \ix{a} } &= \Dvm_c \ms{f}_{ \ix{a} } + \rho^{-1} \sum_{ i = 1 }^{ l_1 } ( \ms{f}^\rho_i )_{ c \ix{a} } - \rho^{-1} \sum_{ j = 1 }^{ l_2 } \gv_{ c a_j } \, ( \ms{f}^v_j )_{ \ixd{a}{j} } \\
\notag &\qquad + \sum_{ i = 1 }^{ l_1 } \mi{S} ( \gv^{-1}, \Lv, \ms{f}^\rho_i )_{ c \ix{a} } + \sum_{ j = 1 }^{ l_2 } \mi{S} ( \Lv; \ms{f}^v_j )_{ c \ix{a} } \text{,} \\
\notag \Box F_{ \ix{\rho} \ix{a} } &= \rho^{ - l_1 - l_2 } \Boxm ( \rho^{ l_1 + l_2 } \ms{f} )_{ \ix{a} } + 2 \rho \sum_{ i = 1 }^{ l_1 } \gv^{ b c } \, \Dvm_b ( \ms{f}^\rho_i )_{ c \ix{a} } - 2 \rho \sum_{ j = 1 }^{ l_2 } \Dvm_{ a_j } ( \ms{f}^v_j )_{ \ixd{a}{j} } - ( n l_1 + l_2 ) \, \ms{f}_{ \ix{a} } \\
\notag &\qquad - 2 \sum_{ i = 1 }^{ l_1 } \sum_{ j = 1 }^{ l_2 } ( \ms{f}^{ \rho, v }_{ i, j } )_{ a_j \ixd{a}{j} } + 2 \sum_{ 1 \leq i < j \leq l_1 } \gv^{ b c } \, ( \ms{f}^{ \rho, \rho }_{ i, j } )_{ b c \ix{a} } + 2 \sum_{ 1 \leq i < j \leq l_2 } \gv_{ a_i a_j } \, ( \ms{f}^{ v, v }_{ i, j } )_{ \ixd{a}{i,j} } \\
\notag &\qquad + \rho^2 \sum_{ i = 1 }^{ l_1 } \mi{S} ( \gv^{-2}, \Lv, \Dvm \ms{f}^\rho_i )_{ \ix{a} } + \rho^2 \sum_{ j = 1 }^{ l_2 } \mi{S} ( \gv^{-1}, \Lv, \Dvm \ms{f}^v_j )_{ \ix{a} } \\
\notag &\qquad + \rho^2 \sum_{ i = 1 }^{ l_1 } \mi{S} ( \gv^{-2}, \Dv \Lv, \ms{f}^\rho_i )_{ \ix{a} } + \rho^2 \sum_{ j = 1 }^{ l_2 } \mi{S} ( \gv^{-1}, \Dv \Lv, \ms{f}^v_j )_{ \ix{a} } \\
\notag &\qquad + \rho \, \mi{S} ( \gv^{-1}, \Lv, \ms{f} )_{ \ix{a} } + \rho^2 \, \mi{S} ( \gv^{-2}, \Lv, \Lv, \ms{f} )_{ \ix{a} } \\
\notag &\qquad + \rho \sum_{ 1 \leq i < j \leq l_1 } \mi{S} ( \gv^{-2}, \Lv, \ms{f}^{ \rho, \rho }_{ i, j } )_{ \ix{a} } + \rho^2 \sum_{ 1 \leq i < j \leq l_1 } \mi{S} ( \gv^{-3}, \Lv, \Lv, \ms{f}^{ \rho, \rho }_{ i, j } )_{ \ix{a} } \\
\notag &\qquad + \rho \sum_{ 1 \leq i < j \leq l_2 } \mi{S} ( \gv, \gv^{-1}, \Lv, \ms{f}^{ v, v }_{ i, j } )_{ \ix{a} } + \rho^2 \sum_{ 1 \leq i < j \leq l_2 } \mi{S} ( \gv^{-1}, \Lv, \Lv, \ms{f}^{ v, v }_{ i, j } )_{ \ix{a} } \\
\notag &\qquad + \rho \sum_{ i = 1 }^{ l_1 } \sum_{ j = 1 }^{ l_2 } \mi{S} ( \gv^{-1}, \Lv, \ms{f}^{ \rho, v }_{ i, j } )_{ \ix{a} } + \rho \sum_{ i = 1 }^{ l_1 } \sum_{ j = 1 }^{ l_2 } \mi{S} ( \gv, \gv^{-2}, \Lv, \ms{f}^{ \rho, v }_{ i, j } )_{ \ix{a} } \\
\notag &\qquad + \rho^2 \sum_{ i = 1 }^{ l_1 } \sum_{ j = 1 }^{ l_2 } \mi{S} ( \gv^{-2}, \Lv, \Lv, \ms{f}^{ \rho, v }_{ i, j } )_{ \ix{a} } \text{.}
\end{align}
where $\ix{\rho}$ and $\ix{a}$ are as defined above, and where:
\begin{itemize}
\item For any $1 \leq i \leq l_1$, the rank $( 0, l_2 + 1 )$ vertical tensor field $\ms{f}^\rho_i$ is given by
\begin{equation}
\label{eq.aads_decomp_phi_rv} ( \ms{f}^\rho_i )_{ b \ix{a} } := F_{ \ixr{\rho}{i}{b} \ix{a} } \text{.}
\end{equation}

\item For any $1 \leq j \leq l_2$, the rank $( 0, l_2 - 1 )$ vertical tensor field $\ms{f}^v_j$ is given by
\begin{equation}
\label{eq.aads_decomp_phi_vr} ( \ms{f}^v_j )_{ \ixd{a}{j} } := F_{ \ix{\rho} \ixr{a}{j}{\rho} } \text{.}
\end{equation}

\item For any $1 \leq i, j \leq l_1$ with $i \neq j$, the rank $( 0, l_2 + 2 )$ vertical field $\ms{f}^{ \rho, \rho }_{ i, j }$ is given by
\begin{equation}
\label{eq.aads_decomp_phi_rrvv} ( \ms{f}^{ \rho, \rho }_{ i, j } )_{ c b \ix{a} } := F_{ \ixr{\rho}{i,j}{c,b} \ix{a} } \text{.}
\end{equation}

\item For any $1 \leq i, j \leq l_2$ with $i \neq j$, the rank $( 0, l_2 - 2 )$ vertical field $\ms{f}^{ v, v }_{ i, j }$ is given by
\begin{equation}
\label{eq.aads_decomp_phi_vvrr} ( \ms{f}^{ v, v }_{ i, j } )_{ \ixd{a}{i,j} } := F_{ \ix{\rho} \ixr{a}{i,j}{\rho,\rho} } \text{.}
\end{equation}

\item For any $1 \leq i \leq l_1$ and $1 \leq j \leq l_2$, the rank $( 0, l_2 )$ vertical field $\ms{f}^{ \rho, v }_{ i, j }$ is given by
\begin{equation}
\label{eq.aads_decomp_phi_rvvr} ( \ms{f}^{ \rho, v }_{ i, j } )_{ b \ixd{a}{j} } := F_{ \ixr{\rho}{i}{b} \ixr{a}{j}{\rho} } \text{.}
\end{equation}
\end{itemize}
\end{proposition}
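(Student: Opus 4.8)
\emph{Proof plan.} The plan is to establish all three identities by a direct expansion in the lifted coordinates $\varphi_\rho := ( \rho, \varphi )$, using the Christoffel identities \eqref{eq.aads_Gamma}, the componentwise formula \eqref{eq.aads_deriv} for $\nablam$, and the general vertical identities of Proposition \ref{thm.aads_comm}. First I would treat the $\rho$-derivative formula: expanding $\nabla_\rho F_{ \ix{\rho} \ix{a} }$ through Christoffel symbols, the only relevant ones are $\Gamma^\rho_{ \rho \rho } = - \rho^{-1}$, $\Gamma^\rho_{ a \rho } = 0$, and $\Gamma^c_{ \rho a } - \Gammav^c_{ \rho a } = - \rho^{-1} \delta^c{}_a$, so each of the $l_1$ leading $\rho$-slots and each of the $l_2$ trailing slots contributes a factor $+ \rho^{-1} \ms{f}$, while the remaining $\Gammav$-pieces combine with $\mi{L}_\rho \ms{f}$ to give $\Dvm_\rho \ms{f}$; hence $\nabla_\rho F_{ \ix{\rho} \ix{a} } = \Dvm_\rho \ms{f}_{ \ix{a} } + ( l_1 + l_2 ) \rho^{-1} \ms{f}_{ \ix{a} } = \rho^{ - l_1 - l_2 } \Dvm_\rho ( \rho^{ l_1 + l_2 } \ms{f} )_{ \ix{a} }$, since $\Dvm_\rho$ is a derivation with $\Dvm_\rho \rho = 1$.

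The $\varphi$-derivative formula is obtained the same way. Expanding $\nabla_c F_{ \ix{\rho} \ix{a} }$, the leading $\rho$-slots bring in $- \Gamma^d_{ c \rho } = \rho^{-1} \delta^d{}_c - \tfrac{1}{2} \gv^{ de } \Lv_{ c e }$ acting on $F$ with one leading index converted to a coordinate index, i.e.\ on $\ms{f}^\rho_i$, producing the $\rho^{-1} \ms{f}^\rho_i$ term together with the $\mi{S} ( \gv^{-1}, \Lv, \ms{f}^\rho_i )$ error; the trailing slots bring in $- \Gamma^\rho_{ c a_j } = - \rho^{-1} \gv_{ c a_j } + \tfrac{1}{2} \Lv_{ c a_j }$ acting on $F$ with one trailing index converted to $\rho$, i.e.\ on $\ms{f}^v_j$, producing $- \rho^{-1} \gv_{ c a_j } \ms{f}^v_j$ together with the $\mi{S} ( \Lv; \ms{f}^v_j )$ error; and the remaining $- \Gammav^e_{ c a_j }$ combines with $\partial_c \ms{f}$ into $\Dvm_c \ms{f}_{ \ix{a} }$.

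The bulk of the work is the wave-operator formula. Using $g^{ \rho \rho } = \rho^2$, $g^{ ab } = \rho^2 \gv^{ ab }$, $g^{ \rho a } = 0$ from \eqref{eq.aads_metric}, one writes $\Box F = \rho^2 \nabla_\rho \nabla_\rho F + \rho^2 \gv^{ ab } \nabla_a \nabla_b F$. The $\rho\rho$-piece is handled by applying the first identity twice, with $\nabla F$ playing the role of $F$ and the extra derivative slot counted among the leading $\rho$-indices, which gives $\rho^2 \nabla_\rho \nabla_\rho F_{ \ix{\rho} \ix{a} } = \rho^{ 1 - l_1 - l_2 } \Dvm_\rho ( \rho \, \Dvm_\rho ( \rho^{ l_1 + l_2 } \ms{f} ) )_{ \ix{a} }$. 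For the $ab$-piece one expands $\nabla_a \nabla_b F_{ \ix{\rho} \ix{a} } = \nabla_a ( \nabla F )_{ b \ix{\rho} \ix{a} }$ through the Christoffel symbols: $\Gamma^\rho_{ ab }$ and $\Gamma^c_{ ab } = \Gammav^c_{ ab }$ act on the new derivative slot; the symbols $\Gamma^c_{ a \rho }$ act on each leading $\rho$-slot, converting it to a coordinate index and producing first derivatives of $\ms{f}^\rho_i$, or the second-order quantities $\ms{f}^{ \rho, \rho }_{ i, j }$ when two such slots collide; and the symbols $\Gamma^\rho_{ a a_j }$, $\Gammav^c_{ a a_j }$ act on the trailing slots, producing first derivatives of $\ms{f}^v_j$, or $\ms{f}^{ v, v }_{ i, j }$ and $\ms{f}^{ \rho, v }_{ i, j }$ from collisions. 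Tracing against $\gv^{ ab }$ and multiplying by $\rho^2$, the contributions in which every Christoffel symbol is taken "diagonal" (its $- \rho^{-1} \delta$ part) reassemble, together with the $\rho\rho$-piece, into $\rho^{ - l_1 - l_2 } \Boxm ( \rho^{ l_1 + l_2 } \ms{f} )_{ \ix{a} }$ after invoking \eqref{eq.aads_boxm} and \eqref{eq.aads_comm_rho}; the single-$\nabla$-hit terms give the $2 \rho \, \gv^{ bc } \Dvm_b \ms{f}^\rho_i$ and $- 2 \rho \, \Dvm_{ a_j } \ms{f}^v_j$ terms; the double-hit terms give the $\ms{f}^{ \rho, v }_{ i, j }$, $\ms{f}^{ \rho, \rho }_{ i, j }$, $\ms{f}^{ v, v }_{ i, j }$ terms with the stated combinatorial factors and summation ranges $1 \leq i < j$; the product of two $- \rho^{-1} \delta$ factors traced against $\gv^{ ab }$ (using $\gv^{ ab } \gv_{ ab } = n$) produces $- ( n l_1 + l_2 ) \ms{f}$; and every leftover term, each carrying at least one factor of $\Lv$ or $\Dv \Lv$, is collected into the appropriate schematic $\mi{S} ( \dots )$ class, with the commutation identities \eqref{eq.aads_comm} used to trade $\Dvm_\rho$-after-$\Dv$ for $\Dv$-after-$\Dvm_\rho$ wherever needed.

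The hard part will be this last step. Each individual manipulation is routine, but the bookkeeping is delicate: one must correctly separate the Christoffel contributions that reassemble into $\Boxm ( \rho^{ l_1 + l_2 } \ms{f} )$ from the genuine corrections, pin down the exact numerical and combinatorial coefficients, and verify that each leftover term lands in a schematic class with precisely the stated power of $\rho$ — in particular distinguishing the $\rho^1$-weighted from the $\rho^2$-weighted $\mi{S} ( \dots )$ terms, since that distinction is exactly what makes these formulas usable in the later Carleman arguments. I would carry out the full computation in Appendix \ref{sec.extra}, organizing it by the rank pattern $( l_1, l_2 )$ so that the combinatorics of the collision terms can be tracked systematically.
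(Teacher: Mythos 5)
Your plan follows essentially the same route as the paper's proof in Appendix \ref{sec.aads_decomp}: a direct expansion in $\varphi_\rho$-coordinates using \eqref{eq.aads_Gamma} and \eqref{eq.aads_deriv}, splitting $\Box$ into the $\rho\rho$-piece and the $\gv^{bc}$-traced vertical piece, and separating the diagonal ($\rho^{-1}\delta$, $\rho^{-1}\gv$) parts of the Christoffel symbols --- which reassemble into $\rho^{-l_1-l_2}\Boxm(\rho^{l_1+l_2}\ms{f})$, the first-order $\ms{f}^\rho_i$, $\ms{f}^v_j$ terms, the collision terms, and $-(n l_1+l_2)\ms{f}$ --- from the $\Lv$- and $\Dv\Lv$-weighted remainders that populate the schematic $\mi{S}(\cdot)$ classes with the stated $\rho$-weights. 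Your treatment of the $\rho\rho$-piece by iterating the first identity on $\nabla F$ (with the derivative slot counted among the leading $\rho$-indices) is a harmless shortcut equivalent to the paper's direct verification that $\nabla_{\rho\rho}F_{\ix{\rho}\ix{a}} = \rho^{-l_1-l_2}\nablam_{\rho\rho}(\rho^{l_1+l_2}\ms{f})_{\ix{a}}$, so the proposal is correct in approach.
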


\begin{proof}
See Appendix \ref{sec.aads_decomp}.
\end{proof}

\section{The Wave-Transport System} \label{sec.sys}

In this section, we establish various geometric identities relating the metrics and curvatures of vacuum FG-aAdS segments.
We then apply these identities in order to derive a system of wave and transport equations that are satisfied by the \emph{difference} of two vacuum FG-aAdS geometries.
This wave-transport system will be central to the proof of our main results.

\subsection{The Structure Equations}

We now consider several identities connecting different geometric quantities in vacuum aAdS spacetimes.
We begin by defining vertical tensor fields that capture the nontrivial components of the spacetime Weyl curvature:

\begin{definition} \label{def.aads_weyl}
Let $( \mi{M}, g )$ be a vacuum FG-aAdS segment.
We then define vertical tensor fields $\wv^0$, $\wv^1$, and $\wv^2$---of ranks $( 0, 4 )$, $( 0, 3 )$, and $( 0, 2 )$, respectively---by the formulas
\begin{equation}
\label{eq.aads_weyl} \wv^0_{ a b c d } := \rho^2 \, W_{ a b c d } \text{,} \qquad \wv^1_{ a b c } := \rho^2 \, W_{ \rho a b c } \text{,} \qquad \wv^2_{ a b } := \rho^2 \, W_{ \rho a \rho b } \text{.}
\end{equation}
In addition, when $n > 2$, we let $\wv^\star$ denote the rank $( 0, 4 )$ vertical tensor field defined as
\begin{equation}
\label{eq.aads_wstar} \wv^\star_{ a b c d } := \wv^0_{ a b c d } - \frac{1}{ n - 2 } ( \gv_{ a d } \wv^2_{ b c } + \gv_{ b c } \wv^2_{ a d } - \gv_{ a c } \wv^2_{ b d } - \gv_{ b d } \wv^2_{ a c } ) \text{.}
\end{equation}
In both \eqref{eq.aads_weyl} and \eqref{eq.aads_wstar}, the indices are respect to arbitrary coordinates $( U, \varphi )$ (and $\varphi_\rho$) on $\mi{I}$.
\end{definition}

\begin{remark}
The reason for the renormalization \eqref{eq.aads_wstar} is that $\wv^\star$ satisfies a tensorial wave equation (see \eqref{eq.aads_wave}) that can be treated with our Carleman estimates, whereas $\wv^0$ does not.
\end{remark}

The following three identities, derived in \cite{shao:aads_fg}, relate the spacetime Weyl curvature (expressed in terms of \eqref{eq.aads_weyl}) to the vertical metric and its derivatives.

\begin{proposition} \label{thm.aads_connection}
Let $( \mi{M}, g )$ be a vacuum FG-aAdS segment.
Then, the following identities hold with respect to an arbitrary coordinate system $( U, \varphi )$ on $\mi{I}$:
\begin{align}
\label{eq.aads_connection} \wv^1_{ c a b } &= \frac{1}{2} \Dv_b \Lv_{ a c } - \frac{1}{2} \Dv_a \Lv_{ b c } \text{,} \\
\notag \wv^0_{ a b c d } &= \Rv_{ a b c d } + \frac{1}{4} ( \Lv_{ a d } \Lv_{ b c } - \Lv_{ b d } \Lv_{ a c } ) + \frac{1}{2} \rho^{-1} ( \gv_{ b d } \Lv_{ a c } + \gv_{ a c } \Lv_{ b d } - \gv_{ a d } \Lv_{ b c } - \gv_{ b c } \Lv_{ a d } ) \text{,} \\
\notag \wv^2_{ a b } &= - \frac{1}{2} \mi{L}_\rho \Lv_{ a b } + \frac{1}{2} \rho^{-1} \Lv_{ a b } + \frac{1}{4} \gv^{ c d } \Lv_{ a d } \Lv_{ b c } \text{.}
\end{align}
\end{proposition}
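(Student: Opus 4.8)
\emph{Proof proposal.}
The plan is to compute the spacetime Weyl curvature of $g$ directly in the Fefferman--Graham gauge \eqref{eq.aads_metric} and then read off the vertical components \eqref{eq.aads_weyl}. Since $( \mi{M}, g )$ is vacuum, Proposition \ref{thm.aads_einstein} lets us replace $W$ by the Riemann curvature via \eqref{eq.aads_einstein_weyl}, so the task reduces to expressing $\rho^2 R_{abcd}$, $\rho^2 R_{\rho abc}$, and $\rho^2 R_{\rho a \rho b}$ in terms of $\gv$, $\Lv$, $\Dv \Lv$, $\mi{L}_\rho \Lv$, and $\Rv$. The cleanest way to organize this is to pass to the conformally rescaled metric $\bar{g} := \rho^2 g = d \rho^2 + \gv$: for $\bar{g}$ the level sets $\{ \rho = \sigma \}$ form a foliation by hypersurfaces with unit normal $\partial_\rho$, with second fundamental form $\tfrac12 \Lv$ (up to sign), and with induced Levi-Civita connections exactly the $\Dv |_\sigma$; hence the curvatures of $\bar{g}$ are governed by the Gauss, Codazzi, and radial equations of this foliation --- in particular $R_{abcd}[\bar{g}] = \Rv_{abcd} + \tfrac14 ( \Lv_{ad} \Lv_{bc} - \Lv_{bd} \Lv_{ac} )$, $R_{\rho cab}[\bar{g}] = \tfrac12 \Dv_b \Lv_{ac} - \tfrac12 \Dv_a \Lv_{bc}$, and $R_{\rho a \rho b}[\bar{g}] = -\tfrac12 \mi{L}_\rho \Lv_{ab} + \tfrac14 \gv^{cd} \Lv_{ad} \Lv_{bc}$. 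The curvature of $g$ is then recovered from that of $\bar{g}$ by the standard conformal transformation rule with factor $e^{2\omega} = \rho^{-2}$, $\omega := - \log \rho$, whose ingredients are immediate from the Christoffel symbols of $\bar{g}$: the purely vertical part of the $\bar{g}$-Hessian of $\omega$ is $-\tfrac12 \rho^{-1} \Lv_{ab}$, and $| d \omega |^2_{ \bar{g} } = \rho^{-2}$. (Alternatively, and entirely within the scope of this excerpt, one may run the same computation directly from the Christoffel symbols of $g$ recorded in Proposition \ref{thm.aads_Gamma}.)

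Assembling the pieces component by component then gives the three identities:
\begin{itemize}
\item[(i)] On the index pattern $( \rho, c, a, b )$ both the $g \otimes g$ terms in \eqref{eq.aads_einstein_weyl} and the conformal correction vanish, so $\wv^1_{ cab } = \rho^2 W_{ \rho c a b } = R_{ \rho c a b }[\bar{g}]$, which is the asserted Codazzi formula.
\item[(ii)] On the index pattern $( \rho, a, \rho, b )$, combining $R_{ \rho a \rho b }[\bar{g}]$ with the conformal correction and with the $g_{ \rho \rho } g_{ ab }$ term in \eqref{eq.aads_einstein_weyl} produces the extra summand $\tfrac12 \rho^{-1} \Lv_{ab}$ while the $\rho^{-2} \gv_{ab}$ pieces cancel, yielding the formula for $\wv^2_{ ab }$.
\item[(iii)] On the purely vertical index pattern, the term $\tfrac12 \rho^{-1} ( \gv_{bd} \Lv_{ac} + \gv_{ac} \Lv_{bd} - \gv_{ad} \Lv_{bc} - \gv_{bc} \Lv_{ad} )$ arises from the $\bar{g}$-Hessian of $\omega$, while the $\rho^{-2} ( \gv_{ac} \gv_{bd} - \gv_{ad} \gv_{bc} )$ contribution from $| d \omega |^2_{ \bar{g} }$ is cancelled exactly by the $g_{ac} g_{bd} - g_{ad} g_{bc}$ term in \eqref{eq.aads_einstein_weyl}, leaving the stated formula for $\wv^0_{ abcd }$.
\end{itemize}

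All of this is routine calculation; the only step requiring genuine care is the index bookkeeping in the rank-$4$ identity (iii) --- keeping the antisymmetrizations in the Gauss equation and in the Kulkarni--Nomizu-type terms generated by the conformal change straight, and fixing the Riemann sign convention at the outset (it is pinned down by \eqref{eq.aads_einstein_weyl}) so that all signs are consistent. No conceptual obstruction appears, and these identities are in any event already established in \cite{shao:aads_fg}, which one may simply cite.
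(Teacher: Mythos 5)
Your proposal is correct, but it is worth noting that the paper itself does not prove these identities at all: its ``proof'' is a one-line citation to \cite[Proposition 2.25]{shao:aads_fg} together with the definition \eqref{eq.aads_sff} of $\Lv$, which is exactly the fallback you offer in your last sentence. What you supply instead is a self-contained derivation, and it checks out. With $\bar g := \rho^2 g = d\rho^2 + \gv$, the unit normal $\partial_\rho$, and second fundamental form $\tfrac12 \Lv$, the Gauss, Codazzi, and Riccati (radial) equations give precisely the three $\bar g$-curvature components you state; and for the conformal change $g = e^{2\omega}\bar g$ with $\omega = -\log\rho$, one indeed has $\bar\nabla_a\bar\nabla_b\omega = -\tfrac12\rho^{-1}\Lv_{ab}$ (since $\bar\Gamma^\rho_{ab} = -\tfrac12\Lv_{ab}$), $\bar\nabla_\rho\bar\nabla_a\omega = 0$ (no shift terms in the FG gauge, so the conformal correction really does drop out of the $(\rho,c,a,b)$ pattern along with the $g\otimes g$ terms), and $|d\omega|^2_{\bar g} = \rho^{-2}$; the Kulkarni--Nomizu contribution of $\tfrac12|d\omega|^2\bar g$ then produces exactly the $\rho^{-2}(\gv_{ac}\gv_{bd}-\gv_{ad}\gv_{bc})$ and $\rho^{-2}\gv_{ab}$ pieces that are cancelled by the substitution \eqref{eq.aads_einstein_weyl}, leaving the extra $\tfrac12\rho^{-1}\Lv$-terms in $\wv^0$ and $\wv^2$ as claimed. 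So the only difference from the paper is that you actually perform the computation (presumably the same one carried out in \cite{shao:aads_fg}) rather than citing it; what your route buys is transparency about where each term in \eqref{eq.aads_connection} comes from, at the cost of the sign/convention bookkeeping you rightly flag as the one delicate point.
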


\begin{proof}
See \cite[Proposition 2.25]{shao:aads_fg} and \eqref{eq.aads_sff}.
\end{proof}

Next, we derive identities satisfied by the spacetime Weyl curvature itself.
First, we recall the more familiar formulas in terms of spacetime tensor fields:

\begin{proposition} \label{thm.aads_weyl_pre}
Let $( \mi{M}, g )$ be a vacuum FG-aAdS segment.
Then, the following identities hold for the spacetime Weyl curvature $W$, with respect to any coordinates on $\mi{M}$:
\begin{align}
\label{eq.aads_weyl_pre} g^{ \mu \nu } \nabla_\mu W_{ \nu \beta \gamma \delta } &= 0 \text{,} \\
\notag ( \Box + 2 n ) W_{ \alpha \beta \gamma \delta } &= g^{ \lambda \kappa } g^{ \mu \nu } ( 2 W_{ \lambda \alpha \mu \delta } W_{ \kappa \beta \nu \gamma } - 2 W_{ \lambda \alpha \mu \gamma } W_{ \kappa \beta \nu \delta } - W_{ \lambda \mu \gamma \delta } W_{ \alpha \beta \kappa \nu } ) \text{.}
\end{align}
\end{proposition}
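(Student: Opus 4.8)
The plan for the first identity is to read it off from the contracted second Bianchi identity. Contracting $\nabla_\mu R_{\nu\lambda\gamma\delta} + \nabla_\nu R_{\lambda\mu\gamma\delta} + \nabla_\lambda R_{\mu\nu\gamma\delta} = 0$ over one pair of indices produces the contracted Bianchi identity, which expresses $g^{\mu\nu}\nabla_\mu R_{\nu\beta\gamma\delta}$ in terms of covariant derivatives of $Rc$. By Proposition \ref{thm.aads_einstein} we have $Rc = -n\cdot g$, so, since $\nabla g = 0$, those derivatives vanish. Finally, \eqref{eq.aads_einstein_weyl} gives $W_{\alpha\beta\gamma\delta} = R_{\alpha\beta\gamma\delta} + g_{\alpha\gamma}g_{\beta\delta} - g_{\alpha\delta}g_{\beta\gamma}$, and the correction term is $\nabla$-parallel, hence divergence-free, so $g^{\mu\nu}\nabla_\mu W_{\nu\beta\gamma\delta} = g^{\mu\nu}\nabla_\mu R_{\nu\beta\gamma\delta} = 0$.

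For the second identity, I would first note that, again by \eqref{eq.aads_einstein_weyl}, $W$ differs from $R$ by a $\nabla$-parallel tensor, so the second Bianchi identity holds for $W$ itself: $\nabla_\mu W_{\nu\lambda\gamma\delta} + \nabla_\nu W_{\lambda\mu\gamma\delta} + \nabla_\lambda W_{\mu\nu\gamma\delta} = 0$. Applying $g^{\kappa\mu}\nabla_\kappa$ to this and commuting the two covariant derivatives, the terms of the form $\nabla(\text{divergence of }W)$ vanish by the first identity, leaving
\[
\Box W_{\nu\lambda\gamma\delta} = (\text{commutator terms}) = (\text{contractions of the Riemann tensor } R \text{ against } W) \text{.}
\]
Into these commutator terms I would substitute $R = W + (g_{\alpha\gamma}g_{\beta\delta} - g_{\alpha\delta}g_{\beta\gamma})$ from \eqref{eq.aads_einstein_weyl}: the contractions of the $g\otimes g$ part against $W$, simplified using the tracelessness $g^{\mu\nu}W_{\mu\nu\gamma\delta} = g^{\mu\nu}W_{\mu\gamma\nu\delta} = 0$, collapse to a scalar multiple of $W$, which one checks is exactly $-2n\,W$; the remaining $W$-against-$W$ contractions reorganize, via the pair symmetries and the algebraic Bianchi identity of $W$, into the displayed right-hand side. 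Moving $2n\,W$ to the left then yields the claimed wave equation.

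The routine but delicate step is the bookkeeping in the second part: carrying out the commutator computation and then verifying that the scalar coefficient of the surviving $W$-term is precisely $2n$ and that the quadratic part assembles into exactly $g^{\lambda\kappa}g^{\mu\nu}(2W_{\lambda\alpha\mu\delta}W_{\kappa\beta\nu\gamma} - 2W_{\lambda\alpha\mu\gamma}W_{\kappa\beta\nu\delta} - W_{\lambda\mu\gamma\delta}W_{\alpha\beta\kappa\nu})$, which requires careful use of the symmetries and tracelessness of $W$ and is sign- and coefficient-sensitive. As an alternative, one can invoke the analogous Weyl wave equation from the asymptotically flat literature and simply track the extra contributions produced by the negative cosmological constant.
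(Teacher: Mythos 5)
Your proposal follows essentially the same route as the paper's own proof: the first identity from the (contracted) Bianchi identity together with $Rc=-n\,g$ and the parallel $g\wedge g$ correction in \eqref{eq.aads_einstein_weyl}, and the second by taking a divergence of the Bianchi identity for $W$, commuting derivatives, discarding the divergence terms via the first identity, and substituting $R = W + g\wedge g$ so that the trace part produces the $2n\,W$ term and the Weyl--Weyl contractions give the quadratic right-hand side. The paper simply carries out explicitly the term-by-term bookkeeping (the terms $I_{i,j}$ in Appendix \ref{sec.aads_weyl_pre}) that you correctly identify as the remaining sign- and coefficient-sensitive step.
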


\begin{proof}
See Appendix \ref{sec.aads_weyl_pre}.
\end{proof}

The next step is to reformulate Proposition \ref{thm.aads_weyl_pre} in terms of the corresponding vertical quantities $\ms{w}^0$, $\ms{w}^1$, $\ms{w}^2$.
We begin with the \emph{vertical Bianchi identities} for $\ms{w}^0$, $\ms{w}^1$, and $\ms{w}^2$:

\begin{proposition} \label{thm.aads_bianchi}
Let $( \mi{M}, g )$ be a vacuum FG-aAdS segment.
Then, the following vertical Bianchi identities hold with respect to any coordinates $( U, \varphi )$ on $\mi{I}$:
\begin{align}
\label{eq.aads_bianchi} \Dvm_\rho \wv^2_{ a b } &= \gv^{ c d } \Dv_c \wv^1_{ b a d } + ( n - 2 ) \rho^{-1} \wv^2_{ a b } + \sch{ \gv^{-2}, \Lv, \wv^0 }_{ a b } + \sch{ \gv^{-1}, \Lv, \wv^2 }_{ a b } \text{,} \\
\notag \Dvm_\rho \wv^1_{ a b c } &= - \gv^{ d e } \Dv_d \wv^0_{ e a b c } + ( n - 2 ) \rho^{-1} \wv^1_{ a b c } + \sch{ \gv^{-1}, \Lv, \wv^1 }_{ a b c } \text{,} \\
\notag \Dvm_\rho \wv^1_{ a b c } &= \Dv_b \wv^2_{ a c } - \Dv_c \wv^2_{ a b } + \rho^{-1} \wv^1_{ a b c } + \sch{ \gv^{-1}, \Lv, \wv^1 }_{ a b c } \text{,} \\
\notag \Dvm_\rho \wv^0_{ a b c d } &= \Dv_a \wv^1_{ b c d } - \Dv_b \wv^1_{ a c d } + \rho^{-1} \gv_{ a d } \wv^2_{ b c } + \rho^{-1} \gv_{ b c } \wv^2_{ a d } - \rho^{-1} \gv_{ a c } \wv^2_{ b d } - \rho^{-1} \gv_{ b d } \wv^2_{ a c } \\
\notag &\qquad + \sch{ \gv^{-1}, \Lv, \wv^0 }_{ a b c d } + \sch{ \Lv, \wv^2 }_{ a b c d } \text{.}
\end{align}
\end{proposition}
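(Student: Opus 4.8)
The plan is to derive each of the four identities as a vertical-component projection of one of two differential identities for the spacetime Weyl curvature $W$: the contracted Bianchi identity $g^{\mu\nu}\nabla_\mu W_{\nu\beta\gamma\delta} = 0$ from \eqref{eq.aads_weyl_pre}, and the (differential) second Bianchi identity $\nabla_{[\lambda}W_{\mu\nu]\gamma\delta} = 0$. The latter holds in our setting because $( \mi{M}, g )$ is Einstein by Proposition \ref{thm.aads_einstein}, so the Schouten tensor is a constant multiple of $g$ and hence $\nabla$-parallel, making the second Bianchi identities of $W$ and of the Riemann tensor coincide; when $n = 2$ all four identities are vacuous, since $W \equiv 0$, so one may assume $n \geq 3$. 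Concretely, I would obtain the first identity of \eqref{eq.aads_bianchi} from $g^{\mu\nu}\nabla_\mu W_{\nu a\rho b} = 0$, the second from $g^{\mu\nu}\nabla_\mu W_{\nu abc} = 0$, the third from $\nabla_\rho W_{bc\rho a} + \nabla_b W_{c\rho\rho a} + \nabla_c W_{\rho b\rho a} = 0$, and the fourth from $\nabla_\rho W_{abcd} + \nabla_a W_{b\rho cd} + \nabla_b W_{\rho acd} = 0$.

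The conversion of each such spacetime-index equation into the stated vertical-index equation would be carried out mechanically using Proposition \ref{thm.aads_decomp}. First I would reposition $\rho$-indices with the algebraic symmetries of $W$ so that the decomposition formulas apply, and handle every contraction of $W$ that arises via its trace-freeness (a consequence of \eqref{eq.aads_einstein_weyl}); for instance $g^{\mu\nu}W_{\mu a\nu b} = 0$ converts vertical traces of $\wv^0$ into multiples of $\wv^2$. Each term $\nabla_\rho W_{\rho\dots}$ would be rewritten by the first identity in \eqref{eq.aads_decomp}, which after extracting the $\rho^2$-weights of \eqref{eq.aads_weyl} produces the left-hand side $\Dvm_\rho \wv^\bullet$ plus a fixed $\rho^{-1}$-term from the chain rule; each tangential term $\nabla_c W_{\dots}$ would be rewritten by the second identity in \eqref{eq.aads_decomp}, whose leading $\Dvm_c$-terms become (after contraction, in the two divergence cases) the stated $\gv^{cd}\Dv_c\wv^1$, $\Dv\wv^2$, and $\Dv\wv^1$ terms, whose explicit $\rho^{-1}\gv$-corrections supply the $\rho^{-1}\gv\wv^2$ terms in the fourth identity and, together with the chain-rule contribution, assemble the coefficients $(n-2)\rho^{-1}$ and $\rho^{-1}$ in the first three, and whose remaining terms are all of the form $\mi{S}(\gv^{-1},\Lv,\ms{f}^\bullet)$ with $\ms{f}^\bullet$ a vertical component of $W$, hence of type $\sch{\gv^{-\bullet},\Lv,\wv^\bullet}$ once the $\rho^2$-weight is restored. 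A final step would use the algebraic symmetries of $\wv^1$, including the first Bianchi identity $\wv^1_{abc} + \wv^1_{bca} + \wv^1_{cab} = 0$, to rearrange the $\Dv\wv$- and $\rho^{-1}\wv$-terms into the exact index orderings displayed in \eqref{eq.aads_bianchi}.

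The hard part will be the bookkeeping of the $\rho^{-1}$-coefficients: for each identity this coefficient is the sum of a chain-rule contribution (the total $\rho$-weight of the differentiated object) and several contributions from the $\rho^{-1}\delta$- and $\rho^{-1}\gv$-terms in \eqref{eq.aads_decomp} and the Christoffel relations \eqref{eq.aads_Gamma}, with multiplicities depending on the number and placement of vertical indices and on which contractions of $W$ and first-Bianchi rearrangements of $\wv^1$ are invoked, and checking that these combine to precisely $n-2$ or $1$ is the one genuinely delicate point. A secondary and routine check is that no $\mi{L}_\rho^2\gv$, $\Dv\Lv$, $\Dv\wv$, or $\mi{L}_\rho\wv$ survives in the error terms, which is automatic from the structure of \eqref{eq.aads_decomp} once one never differentiates $W$ twice and recognizes that the single $\nabla_\rho W$ is exactly what becomes the left-hand side. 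I would relegate the full computation to the appendix, in the style of the derivations already referenced there.
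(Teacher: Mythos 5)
Your proposal follows essentially the same route as the paper's own proof: the first two identities come from the divergence-free (contracted Bianchi) property of $W$ and the last two from the full second Bianchi identity (valid for $W$ since the spacetime is Einstein), with each spacetime equation converted to vertical form via Proposition \ref{thm.aads_decomp}, the traces handled by trace-freeness of $W$, the $\rho^{-1}$-coefficients assembled from the chain-rule and $\rho^{-1}\delta$, $\rho^{-1}\gv$ terms of \eqref{eq.aads_decomp}, and the leftover terms absorbed into $\sch{\gv^{-\bullet},\Lv,\wv^\bullet}$ after using the algebraic symmetries of $\wv^1$. The plan is correct, and the only unexecuted part (the coefficient bookkeeping yielding $n-2$ and $1$) is exactly the routine computation the paper carries out in Appendix \ref{sec.aads_bianchi}.
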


\begin{proof}
See Appendix \ref{sec.aads_bianchi}.
\end{proof}

In the following, we derive the wave equations satisfied by $\wv^\star$, $\wv^1$, and $\wv^2$:

\begin{proposition} \label{thm.aads_wave}
Let $( \mi{M}, g )$ be a vacuum FG-aAdS segment, and let $n > 2$.
Then,\footnote{The assumption $n > 2$ is needed only to make sense of $\wv^\star$; the first two parts of \eqref{eq.aads_wave} also hold when $n = 2$.}
\begin{align} 
\label{eq.aads_wave} \Boxm \wv^2 + 2 ( n - 2 ) \wv^2 &= \rho^2 \, \sch{ \gv^{-2}, \Lv, \Dv \wv^1 } + \rho^2 \, \sch{ \gv^{-2}, \Dv \Lv, \wv^1 } + \rho \, \mi{S} ( \gv^{-1}, \Lv, \wv^2 ) \\
\notag &\qquad + \rho^2 \, \sch{ \gv^{-2}, \Lv, \Lv, \wv^2 } + \rho \, \sch{ \gv^{-2}, \Lv, \wv^0 } \\
\notag &\qquad + \rho^2 \, \sch{ \gv^{-3}, \Lv, \Lv, \wv^0 } + \rho \, \sch{ \gv, \gv^{-2}, \Lv, \wv^2 } \\
\notag &\qquad + \rho^2 \, \sch{ \gv^{-1}, \wv^2, \wv^2 } + \rho^2 \, \sch{ \gv^{-2}, \wv^1, \wv^1 } + \rho^2 \, \sch{ \gv^{-2}, \wv^0, \wv^2 } \text{,} \\
\notag \Boxm \wv^1 + ( n - 1 ) \wv^1 &= \rho^2 \, \sch{ \gv^{-2}, \Lv, \Dv \wv^0 } + \rho^2 \, \sch{ \gv^{-1}, \Lv, \Dv \wv^2 } + \rho^2 \, \sch{ \gv^{-2}, \Dv \Lv, \wv^0 } \\
\notag &\qquad + \rho^2 \, \sch{ \gv^{-1}, \Dv \Lv, \wv^2 } + \rho \, \sch{ \gv^{-1}, \Lv, \wv^1 } + \rho^2 \, \sch{ \gv^{-2}, \Lv, \Lv, \wv^1 } \\
\notag &\qquad + \rho \, \sch{ \gv, \gv^{-2}, \Lv, \wv^1 } + \rho^2 \, \sch{ \gv^{-1}, \wv^1, \wv^2 } + \rho^2 \, \sch{ \gv^{-2}, \wv^0, \wv^1 } \text{,} \\
\notag \Boxm \wv^\star &= \rho^2 \, \sch{ \gv^{-1}, \Lv, \Dv \wv^1 } + \rho^2 \, \sch{ \gv, \gv^{-2}, \Lv, \Dv \wv^1 } + \rho^2 \, \sch{ \gv^{-1}, \Dv \Lv, \wv^1 } \\
\notag &\qquad + \rho^2 \, \sch{ \gv, \gv^{-2}, \Dv \Lv, \wv^1 } + \rho \, \sch{ \gv^{-1}, \Lv, \wv^0 } + \rho \, \sch{ \gv, \gv^{-1}, \Lv, \wv^2 } \\
\notag &\qquad + \rho^2 \, \sch{ \gv, \gv^{-2}, \Lv, \Lv, \wv^2 } + \rho \, \sch{ \gv, \gv^{-2}, \Lv, \wv^0 } \\
\notag &\qquad + \rho^2 \, \sch{ \gv^{-2}, \Lv, \Lv, \wv^0 } + \rho^2 \, \sch{ \gv, \gv^{-3}, \Lv, \Lv, \wv^0 } \\
\notag &\qquad + \rho \, \sch{ \gv^2, \gv^{-2}, \Lv, \wv^2 } + \rho^2 \, \sch{ \gv^{-1}, \Lv, \Lv, \wv^2 } + \rho^2 \, \sch{ \wv^2, \wv^2 } \\
\notag &\qquad + \rho^2 \, \sch{ \gv, \gv^{-1}, \wv^2, \wv^2 } + \rho^2 \, \sch{ \gv^{-1}, \wv^1, \wv^1 } + \rho^2 \, \sch{ \gv^{-2}, \wv^0, \wv^0 } \\
\notag &\qquad + \rho^2 \, \sch{ \gv, \gv^{-2}, \wv^1, \wv^1 } + \rho^2 \, \sch{ \gv, \gv^{-2}, \wv^0, \wv^2 } \text{.}
\end{align}
\end{proposition}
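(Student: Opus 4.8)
The plan is to push the two spacetime identities of Proposition \ref{thm.aads_weyl_pre} — the contracted Bianchi identity $g^{\mu\nu}\nabla_\mu W_{\nu\beta\gamma\delta} = 0$ and the tensorial wave equation $(\Box + 2n)W = \mc{Q}(W)$, with $\mc{Q}(W)$ the quadratic expression on the right of \eqref{eq.aads_weyl_pre} — through the decomposition formulas of Proposition \ref{thm.aads_decomp}, and then to clean up the result using the vertical Bianchi identities (Proposition \ref{thm.aads_bianchi}), the commutator identities (Proposition \ref{thm.aads_comm}), and the structure equations (Proposition \ref{thm.aads_connection}).

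First, since Proposition \ref{thm.aads_decomp} is phrased for components $F_{\bar\rho\bar a}$ with the $\rho$-slots grouped at the front, I would introduce, for each of the three component families $W_{\rho a\rho b}$, $W_{\rho abc}$, $W_{abcd}$, an auxiliary rank-$(0,4)$ tensor field obtained from $W$ by permuting its slots according to the Weyl symmetries, so that the relevant components become of the form $F_{\bar\rho\bar a}$ with $l_1 = 2$, $1$, $0$ copies of $\rho$. Because $\Box$, $\nabla$, and the contractions involved commute with such relabellings, applying $\Box$ to $F$ agrees with applying $\Box$ to $W$ and relabelling. Since $l_1 + l_2 = 4$ in all three cases and the raw component equals $\rho^{-2}\wv^i$, Proposition \ref{thm.aads_decomp} then rewrites $\rho^{4}\,(\Box W)_{\mathrm{comp}}$ as $\Boxm(\rho^2\wv^i)$ plus: (i) $\gv^{-1}$-contracted first vertical derivatives of neighbouring curvature components (from the $\Dvm\ms{f}^\rho_i$ and $\Dvm\ms{f}^v_j$ terms — after identifying, via the Weyl symmetries, which of $\wv^0,\wv^1,\wv^2$ each auxiliary field is, and noting that those $\ms{f}^v_j$ with $\rho$ in an antisymmetric slot vanish); (ii) purely algebraic $\gv$- or $\gv^{-1}$-contractions of $\wv^0,\wv^1,\wv^2$ (from the $\ms{f}^{\rho,\rho}$, $\ms{f}^{v,v}$, $\ms{f}^{\rho,v}$ terms), which feed the integer mass terms and the $\wv^0 \leftrightarrow \wv^2$ shifts; and (iii) the explicit $\sch{\cdot}$-errors carrying factors of $\Lv$ or $\Dv\Lv$ with the advertised powers of $\rho$.

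On the left I would substitute $\Box W = -2n W + \mc{Q}(W)$; decomposing $\mc{Q}(W)$ into vertical components produces exactly the $\rho^2\,\sch{\gv^{-2},\wv^i,\wv^j}$-type terms appearing on the right of \eqref{eq.aads_wave} (the $-2n$ and the algebraic-curvature pieces implicit in \eqref{eq.aads_einstein_weyl} either contribute to the integer masses $c_\ms{U}$ or are annihilated by the tracelessness of $W$). I would then convert $\Boxm(\rho^2\wv^i)$ to $\Boxm\wv^i$ via \eqref{eq.aads_comm_rho}, which supplies the correct integer mass $c_\ms{U}$, a term proportional to $\rho\,\Dvm_\rho\wv^i$, and harmless $\sch{\gv^{-1},\Lv,\cdot}$ errors. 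The decisive simplification is to eliminate the first-derivative terms that are \emph{not} accompanied by $\Lv$: both the $\gv^{-1}$-contracted vertical divergences from (i) and the $\Dvm_\rho\wv^i$ terms just produced are re-expressed, using the vertical Bianchi identities of Proposition \ref{thm.aads_bianchi} (which are themselves the vertical decompositions of $g^{\mu\nu}\nabla_\mu W_{\nu\beta\gamma\delta}=0$), in terms of one another plus $\rho^{-1}\wv$- and $\Lv$-errors; matching coefficients, the dangerous contributions cancel and one is left only with $\Dv\wv^1$-terms weighted by $\Lv$ and $\rho^2$, exactly as in \eqref{eq.aads_wave}. Where convenient I would also use the structure equations \eqref{eq.aads_connection} and the commutators \eqref{eq.aads_comm} to absorb $\mi{L}_\rho\wv^i$ and $\mi{L}_\rho^2\wv^i$ contributions. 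This handles the equations for $\wv^2$ and $\wv^1$.

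For $\wv^\star$ I would not use the equation for $\Boxm\wv^0$ as an endpoint — by itself it has an uncontrollable $\wv^2$ source at the wrong weight (cf.\ the remark after \eqref{eq.aads_wstar}) — but instead exploit $\nablam\gv = 0$, which makes $\Boxm$ commute past every $\gv$-contraction, so that, from \eqref{eq.aads_wstar}, $\Boxm\wv^\star = \Boxm\wv^0 - \tfrac{1}{n-2}(\gv_{ad}\Boxm\wv^2_{bc} + \gv_{bc}\Boxm\wv^2_{ad} - \gv_{ac}\Boxm\wv^2_{bd} - \gv_{bd}\Boxm\wv^2_{ac})$. Substituting the raw $\wv^0$-equation and the already-derived $\wv^2$-equation, the offending $\wv^2$ (and $\wv^0$) terms cancel, leaving the massless wave equation claimed for $\wv^\star$. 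I expect the main obstacle to be precisely this bookkeeping: tracking all numerical coefficients through the decomposition formula, the commutator \eqref{eq.aads_comm_rho}, the Bianchi identities, and the renormalization \eqref{eq.aads_wstar}, and verifying that the low-$\rho$-weight first-derivative terms and the wrong-weight curvature terms cancel exactly — this is where the specific structure of the Einstein-vacuum equations enters. The remaining task, namely confirming that every leftover error carries the advertised power of $\rho$ and fits the schematic $\sch{\cdot}$-pattern, is routine but long, and I would carry it out in an appendix.
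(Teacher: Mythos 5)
Your proposal follows essentially the same route as the paper's proof in Appendix \ref{sec.aads_wave}: decompose $\Box W$ and the quadratic source $Q$ via Proposition \ref{thm.aads_decomp}, shift the $\rho$-weight with \eqref{eq.aads_comm_rho}, use the vertical Bianchi identities \eqref{eq.aads_bianchi} to trade the bare divergence terms for $\Dvm_\rho\wv$-terms that cancel against those produced by the conjugation, and obtain the $\wv^\star$ equation by combining the raw $\wv^0$ equation with the $\wv^2$ equation through the metric-compatibility of $\Boxm$ in \eqref{eq.aads_wstar}. The argument and the key cancellations are the ones the paper carries out, with the remaining work being exactly the coefficient bookkeeping you flag.
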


\begin{proof}
See Appendix \ref{sec.aads_wave}.
\end{proof}

Finally, we list some asymptotic bounds for various geometric quantities.
To make these easier to state, we construct the following notations, which were also used in \cite{mcgill:loc_ads}.

\begin{definition} \label{def.aads_O}
Let $( \mi{M}, g )$ be an FG-aAdS segment, fix an integer $M \geq 0$, and let $h \in C^\infty ( \mi{M} )$.
\begin{itemize}
\item We use the notation $\oo{M}{ h }$ to refer to any vertical tensor field $\ms{a}$ satisfying the following bound for any compact coordinate system $( U, \varphi )$ on $\mi{I}$:
\begin{equation}
\label{eq.aads_O} | \ms{a} |_{ M, \varphi } \lesssim_\varphi h \text{.}
\end{equation}

\item Given a vertical tensor field $\ms{B}$, we use the notation $\oo{M}{ h; \ms{B} }$ to refer to any vertical tensor field of the form $\sch{ \ms{e}, \ms{B} }$, where $\ms{e}$ is a vertical tensor field satisfying $\ms{e} = \oo{M}{ h }$.
\end{itemize}
\end{definition}

\begin{proposition} \label{thm.aads_O_regular}
Suppose $( \mi{M}, g )$ is a vacuum FG-aAdS segment, and assume $( \mi{M}, g )$ is regular to order $M \geq n + 2$.\footnote{See Definition \ref{def.aads_regular}.}
Then, the following properties hold for $\gv$ and $\Lv$:
\begin{equation}
\label{eq.aads_O_g} \gv = \oo{M}{1} \text{,} \qquad \gv^{-1} = \oo{M}{1} \text{,} \qquad \Lv = \oo{ M - 1 }{1} = \oo{ M - 2 }{ \rho } \text{,} \qquad \mi{L}_\rho \Lv = \oo{ M - 2 }{1} \text{.}
\end{equation}
Furthermore, we have the following properties for $\wv^0$, $\wv^1$, and $\wv^2$:
\begin{align}
\label{eq.aads_O_w} \wv^0 = \oo{ M - 2 }{1} \text{,} &\qquad \mi{L}_\rho \wv^0 = \oo{ M - 3 }{1} \text{,} \\
\notag \wv^1 = \oo{ M - 2 }{1} = \oo{ M - 3 }{ \rho } \text{,} &\qquad \mi{L}_\rho \wv^1 = \oo{ M - 3 }{1} \text{,} \\
\notag \wv^2 = \oo{ M - 2 }{1} = \oo{ M - 3 }{ \rho } \text{,} &\qquad \mi{L}_\rho \wv^2 = \oo{ M - 3 }{1} \text{,} \\
\notag \wv^\star = \oo{ M - 2 }{1} \text{,} &\qquad \mi{L}_\rho \wv^\star = \oo{ M - 3 }{1} \text{.}
\end{align}
\end{proposition}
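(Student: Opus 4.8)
The plan is to obtain the metric bounds directly from the Fefferman--Graham expansion of Theorem~\ref{thm.aads_fg}, and then the curvature bounds from the structure equations of Proposition~\ref{thm.aads_connection} for the $\oo{M-2}{\cdot}$-level estimates and from the vertical Bianchi identities of Proposition~\ref{thm.aads_bianchi} for the $\mi{L}_\rho$-derivatives. Throughout, $(U,\varphi)$ denotes a fixed compact coordinate system on $\mi{I}$, and all asymptotic estimates below are understood with respect to it. First I would dispose of $\gv$ and $\Lv$: the bounds $\gv=\oo{M}{1}$ and $\gv^{-1}=\oo{M}{1}$ are immediate from \eqref{eq.aads_fg_low}, while \eqref{eq.aads_fg_high} with $k=1$ (valid since $1<n$) gives $\Lv=\mi{L}_\rho\gv\rightarrow^{M-1}\gb{1}=0$, so $\Lv=\oo{M-1}{1}$ and $\Lv|_{\rho=0}=0$. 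For the refinement $\Lv=\oo{M-2}{\rho}$, I would note that $\mi{L}_\rho\Lv=\mi{L}_\rho^2\gv$ converges in $C^{M-2}$ as $\rho\searrow0$ --- to $2\gb{2}$ when $n\geq3$ by \eqref{eq.aads_fg_high}, and to $2\gb{\dagger}$ when $n=2$ by \eqref{eq.fg_main_free} using $\gb{\star}=0$ --- hence is locally bounded in $C^{M-2}$, i.e.\ $\mi{L}_\rho\Lv=\oo{M-2}{1}$; then writing $\Lv=\int_0^\rho\mi{L}_\rho^2\gv\,d\sigma$ in $\varphi_\rho$-coordinates (where $\mi{L}_\rho$ acts as $\partial_\rho$ on components and commutes with the $\varphi$-derivatives) and pulling $|\cdot|_{M-2,\varphi}$ inside the integral gives $\Lv=\oo{M-2}{\rho}$. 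I would also record the auxiliary fact $\mi{L}_\rho^3\gv=\oo{M-3}{1}$, which follows from \eqref{eq.aads_fg_high} with $k=3$ when $n\geq4$ and from \eqref{eq.fg_main_free} with $\gb{\star}=0$ when $n=3$.

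Next I would bound the Weyl components at the $\oo{M-2}{1}$/$\oo{M-3}{\rho}$ levels via the structure equations \eqref{eq.aads_connection}. When $n=2$ the spacetime Weyl tensor vanishes identically, so all of \eqref{eq.aads_O_w} holds trivially; assume $n>2$. The second line of \eqref{eq.aads_connection} writes $\wv^0$ as the vertical Riemann curvature $\Rv$ (two $\varphi$-derivatives of $\gv=\oo{M}{1}$, hence $\oo{M-2}{1}$) plus a quadratic term $\sch{\Lv,\Lv}=\oo{M-2}{\rho^2}$ plus $\rho^{-1}\sch{\gv,\Lv}=\oo{M-2}{1}$ (using $\Lv=\oo{M-2}{\rho}$), so $\wv^0=\oo{M-2}{1}$; combining this with $\wv^2=\oo{M-3}{\rho}$ (below) in \eqref{eq.aads_wstar} gives $\wv^\star=\oo{M-2}{1}$. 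The first line of \eqref{eq.aads_connection} expresses $\wv^1$ through $\Dv\Lv$, which is $\oo{M-2}{1}$ and also $\oo{M-3}{\rho}$, so $\wv^1=\oo{M-2}{1}=\oo{M-3}{\rho}$. For $\wv^2$ the third line of \eqref{eq.aads_connection} reads $-\frac{1}{2}\mi{L}_\rho\Lv+\frac{1}{2}\rho^{-1}\Lv+\frac{1}{4}\sch{\gv^{-1},\Lv,\Lv}$; the last term is $\oo{M-2}{\rho^2}$, while $\mi{L}_\rho\Lv=\mi{L}_\rho^2\gv$ and $\rho^{-1}\Lv$ share the same $C^{M-2}$-limit ($2\gb{2}$), and applying the fundamental-theorem-of-calculus argument above --- once to $\mi{L}_\rho^2\gv$ using $\mi{L}_\rho^3\gv=\oo{M-3}{1}$, and once more to $\Lv$ --- yields $\mi{L}_\rho\Lv-\rho^{-1}\Lv=\oo{M-3}{\rho}$. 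Hence $\wv^2=\oo{M-3}{\rho}$, and a fortiori $\wv^2=\oo{M-2}{1}$.

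For the $\mi{L}_\rho$-derivatives of the curvature I would avoid differentiating \eqref{eq.aads_connection} (that would cost two extra $\varphi$-derivatives through $\mi{L}_\rho\Rv$) and instead use the vertical Bianchi identities \eqref{eq.aads_bianchi}, rewriting $\Dvm_\rho$ in terms of $\mi{L}_\rho$ via \eqref{eq.aads_vertical_connection} (the correction being of schematic type $\sch{\gv^{-1},\Lv,\,\cdot\,}$). The last line of \eqref{eq.aads_bianchi} gives $\mi{L}_\rho\wv^0=\Dv\wv^1+\rho^{-1}\sch{\gv,\wv^2}+\sch{\gv^{-1},\Lv,\wv^0}+\sch{\Lv,\wv^2}$; every term is $\oo{M-3}{1}$, the key point being that $\rho^{-1}\wv^2=\oo{M-3}{1}$ precisely because $\wv^2=\oo{M-3}{\rho}$; hence $\mi{L}_\rho\wv^0=\oo{M-3}{1}$. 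Using the \emph{second} line of \eqref{eq.aads_bianchi} (not the third), $\mi{L}_\rho\wv^1=-\gv^{de}\Dv_d\wv^0+(n-2)\rho^{-1}\wv^1+\sch{\gv^{-1},\Lv,\wv^1}$, all terms $\oo{M-3}{1}$; the first line gives $\mi{L}_\rho\wv^2=\gv^{cd}\Dv_c\wv^1+(n-2)\rho^{-1}\wv^2+\sch{\gv^{-2},\Lv,\wv^0}+\sch{\gv^{-1},\Lv,\wv^2}$, again all $\oo{M-3}{1}$; and $\mi{L}_\rho\wv^\star=\mi{L}_\rho\wv^0-\frac{1}{n-2}\mi{L}_\rho\sch{\gv,\wv^2}$ with $\mi{L}_\rho\sch{\gv,\wv^2}=\sch{\Lv,\wv^2}+\sch{\gv,\mi{L}_\rho\wv^2}=\oo{M-3}{1}$, so $\mi{L}_\rho\wv^\star=\oo{M-3}{1}$.

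I expect the main obstacle to be the sharp bound $\wv^2=\oo{M-3}{\rho}$: the naive estimate from \eqref{eq.aads_connection} gives only $\oo{M-2}{1}$, and one must spot the cancellation of the leading coefficient ($2\gb{2}$, i.e.\ twice the negative Schouten tensor of $\gm$) between $\mi{L}_\rho\Lv$ and $\rho^{-1}\Lv$ and then quantify the remainder by two integrations in $\rho$ --- which requires the boundedness of $\mi{L}_\rho^3\gv$, and hence forces one to treat the borderline case $n=3$ through the vanishing of the logarithmic coefficient $\gb{\star}$. This $\rho$-gain on $\wv^2$ is load-bearing: it is exactly what keeps the $\rho^{-1}\wv^2$ contributions in the Bianchi identities bounded in the $\mi{L}_\rho$-step. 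A secondary point needing care is choosing, for each of $\mi{L}_\rho\wv^1$ and $\mi{L}_\rho\wv^2$, the form of \eqref{eq.aads_bianchi} that expresses $\Dvm_\rho$ of one curvature component in terms of $\Dv$ of a strictly less singular one.
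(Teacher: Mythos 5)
Your argument is correct, and the first half coincides with the paper's: the bounds for $\gv$, $\gv^{-1}$, $\Lv$, $\mi{L}_\rho\Lv$ come from the limits in Theorem \ref{thm.aads_fg} plus a fundamental-theorem-of-calculus (Taylor) step, and the zeroth-order Weyl bounds come from \eqref{eq.aads_connection}, with the crucial $\oo{M-3}{\rho}$ gain for $\wv^2$ obtained from the cancellation between $\mi{L}_\rho\Lv$ and $\rho^{-1}\Lv$ quantified via $\mi{L}_\rho^3\gv=\oo{M-3}{1}$ (and you correctly route the borderline cases $n=2,3$ through $\gb{\star}=0$ and \eqref{eq.fg_main_free}). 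Where you genuinely diverge is in the $\mi{L}_\rho$-derivative estimates for the curvature: the paper applies $\mi{L}_\rho$ directly to \eqref{eq.aads_connection}, using the commutation formula \eqref{eq.aads_comm}, the bound $\mi{L}_\rho\Rv=\oo{M-3}{1}$, and the cancellation $\mi{L}_\rho(\rho^{-1}\Lv)=\oo{M-3}{1}$, whereas you instead feed the already-established bounds into the vertical Bianchi identities \eqref{eq.aads_bianchi} (which are legitimate to invoke, being proved independently of this proposition). Both routes land on exactly $\oo{M-3}{1}$; your stated reason for avoiding the paper's route --- that $\mi{L}_\rho\Rv$ would cost two extra $\varphi$-derivatives --- is not actually an obstruction, since $\mi{L}_\rho\Rv$ involves only two derivatives of $\Lv$, which is bounded in $C^{M-1}$, giving precisely $\oo{M-3}{1}$. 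What your route buys is that it needs only one $\Dv$-derivative of curvature components already controlled (so no commutation of $\mi{L}_\rho$ with $\Dv$ and no separate estimate for $\mi{L}_\rho\Rv$), at the price of relying on the $\rho^{-1}\wv^1$, $\rho^{-1}\wv^2$ terms in \eqref{eq.aads_bianchi}, which is exactly why your sharpened bounds $\wv^1,\wv^2=\oo{M-3}{\rho}$ are load-bearing, as you correctly flag; the paper's route instead concentrates the cancellation in the single quantity $\mi{L}_\rho(\rho^{-1}\Lv)$.
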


\begin{proof}
See Appendix \ref{sec.aads_O_regular}.
\end{proof}

\subsection{Difference Relations} \label{sec.sys_diff}

We now consider two aAdS geometries on a common manifold, and we derive equations relating quantities representing the \emph{difference} between two geometries.
More specifically, we consider in this subsection \emph{two} vacuum aAdS metrics on a common manifold---that is, we consider two vacuum FG-aAdS segments $( \mi{M}, g )$ and $( \mi{M}, \check{g} )$, with
\begin{equation}
\label{metricsinsameform} \mi{M} := ( 0, \rho_0 ] \times \mi{I} \text{,} \qquad g := \rho^{-2} ( d \rho^2 + \gv ) \text{,} \qquad \check{g} := \rho^{-2} ( d \rho^2 + \check{\gv} ) \text{.}
\end{equation}
Note that $g$ and $\check{g}$ live on a common manifold $\mi{M}$, and they share a common ``radial" variable $\rho$ that is used for the Fefferman-Graham expansion with respect to both $g$ and $\check{g}$.

To simplify matters, we will adopt the conventions for describing two geometries:

\begin{definition} \label{def.sys_notation}
Let $( \mi{M}, g )$ and $( \mi{M}, \check{g} )$ denote two vacuum FG-aAdS segments, as in \eqref{metricsinsameform}.
\begin{itemize}
\item We use the notations described in Section \ref{sec.aads} for both $( \mi{M}, g )$ and $( \mi{M}, \check{g} )$.

\item In particular, all objects associated with $\check{g}$ will use the usual notation, but with a ``check" added above the symbol (e.g.~$\gv$ and $\check{\gv}$ for the associated vertical metrics).
\end{itemize}
\end{definition}

\begin{remark}
Since $( \mi{M}, g )$, $( \mi{M}, \check{g} )$ in Definition \ref{def.sys_notation} have the same vertical bundles, the calculus for vertical tensors developed in Section \ref{sec.aads} also applies to differences of corresponding geometric quantities, e.g.\ $\gv - \check{\gv}$ and $\Lv - \check{\Lv}$.
In particular, all the equations in Propositions \ref{thm.aads_connection}, \ref{thm.aads_bianchi}, \ref{thm.aads_wave} hold for both the $g$- and $\check{g}$-geometries, hence we can consider differences of all these identities. 
\end{remark}

We now derive a closed system of wave and transport equations for the difference between two aAdS geometries.
Later, we will show in Section \ref{sec.proof} that the proof of Theorem \ref{theo:adscft} reduces precisely to unique continuation results, and hence Carleman estimates, for this coupled system.

To obtain this system, it will be convenient to have some additional auxiliary quantities:

\begin{definition} \label{def.sys_auxiliary}
Let $( \mi{M}, g )$ and $( \mi{M}, \check{g} )$ denote two vacuum FG-aAdS segments.
\begin{itemize}
\item We define the rank $( 0, 2 )$ vertical tensor field $\ms{Q}$ to be the solution of the transport equation
\begin{equation}
\label{Qdef} \mi{L}_\rho \ms{Q}_{ca} = \frac{1}{2} \gv^{de} [ \Lv_{ad} ( \gv - \check{\gv} + \ms{Q} )_{ce} - \Lv_{cd} ( \gv - \check{\gv} + \ms{Q} )_{ae} ] \text{,} \qquad \ms{Q} \rightarrow^0 0 \text{.}
\end{equation}

\item We define the rank $( 0, 3 )$ vertical tensor field $\ms{B}$ by
\begin{equation}
\label{Bdef} \ms{B}_{cab} := \Dv_c ( \gv - \check{\gv} )_{ab} - \Dv_a ( \gv - \check{\gv} )_{cb} - \Dv_b \ms{Q}_{ca}
\end{equation}

\item We define the vertical tensor fields $\ms{W}^2, \ms{W}^1, \ms{W}^\star$---of ranks $( 0, 2 )$, $( 0, 3 )$, $( 0, 4 )$, respectively---as follows:\ given a multi-index $\bar{a}$ of appropriate length, we set
\begin{align}
\label{Wdef} \ms{W}^2_{ \bar{a} } := ( \wv^2 - \check{\wv}^2 )_{ \bar{a} } + \gv^{bc} \sum_{ j = 1 }^2 \check{\wv}^2_{ \bar{a}_j [b] } ( \gv - \check{\gv} + \ms{Q} )_{ a_j c } \text{,} &\qquad \bar{a} := ( a_1 a_2 ) \text{,} \\
\notag \ms{W}^1_{ \bar{a} } := ( \wv^1 - \check{\wv}^1 )_{ \bar{a} } + \gv^{bc} \sum_{ j = 1 }^3 \check{\wv}^1_{ \bar{a}_j [b] } ( \gv - \check{\gv} + \ms{Q} )_{ a_j c } \text{,} &\qquad \bar{a} := ( a_1 a_2 a_3 ) \text{,} \\
\notag \ms{W}^\star_{ \bar{a} } := ( \wv^\star - \check{\wv}^\star )_{ \bar{a} } + \gv^{bc} \sum_{ j = 1 }^4 \check{\wv}^\star_{ \bar{a}_j [b] } ( \gv - \check{\gv} + \ms{Q} )_{ a_j c } \text{,} &\qquad \bar{a} := ( a_1 a_2 a_3 a_4 ) \text{.}
\end{align}
\end{itemize}
In the above, all indices are with respect to any arbitrary coordinate system $( U, \varphi )$ on $\mi{I}$.
\end{definition}

\begin{proposition} \label{thm.sys_O_QB}
Let $( \mi{M}, g )$ and $( \mi{M}, \check{g} )$ denote two vacuum FG-aAdS segments.
Then:
\begin{itemize}
\item The following asymptotic relations hold (with $\ms{Q}$, $\ms{B}$ be as in Definition \ref{def.sys_auxiliary}):
\begin{equation}
\label{eq.sys_O_QB} \gv^{-1} - \check{\gv}^{-1} = \oo{ M }{ 1; \gv - \check{\gv} } \text{,} \qquad \ms{Q} = \oo{ M - 1 }{ \rho } = \oo{ M - 2 }{ \rho^2 } \text{,} \qquad \ms{B} = \oo{ M - 2 }{1} \text{.}
\end{equation}

\item The following hold with respect to any coordinate system $( U, \varphi )$ on $\mi{I}$, where the Christoffel symbols $\Gamma$, $\check{\Gamma}$, $\Gammav$, $\check{\Gammav}$ are defined as in Proposition \ref{thm.aads_Gamma}:
\begin{align}
\label{eq.sys_diff_Gamma} ( \Gamma - \check{\Gamma} )^\alpha_{ \rho \rho } = ( \Gamma - \check{\Gamma} )^\rho_{ \alpha \rho } &= 0 \text{,} \\
\notag ( \Gamma - \check{\Gamma} )^\rho_{ a b } &= \rho^{-1} ( \gv - \check{\gv} )_{ab} + \frac{1}{2} ( \Lv - \check{\Lv} )_{ab} \text{,} \\
\notag ( \Gamma - \check{\Gamma} )^a_{ \rho b } = ( \Gammav - \check{\Gammav} )^a_{ \rho b } &= \oo{ M - 2 }{ \rho; \gv - \check{\gv} }^a{}_b + \oo{M}{ 1; \Lv - \check{\Lv} }^a{}_b \text{,} \\
\notag ( \Gamma - \check{\Gamma} )^c_{ a b } = ( \Gammav - \check{\Gammav} )^c_{ a b } &= \frac{1}{2} \check{\gv}^{cd} [ \Dv_a ( \gv - \check{\gv} )_{db} + \Dv_b ( \gv - \check{\gv} )_{da} - \Dv_d ( \gv - \check{\gv} )_{ab} ] \\
\notag &= \oo{M}{ 1; \Dv ( \gv - \check{\gv} ) }^c_{ab} \text{.}
\end{align}
\end{itemize}
\end{proposition}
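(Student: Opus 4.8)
The plan is to establish the two bullet points in sequence: first the asymptotic relations \eqref{eq.sys_O_QB}, and then the coordinate identities \eqref{eq.sys_diff_Gamma} by direct substitution into Proposition \ref{thm.aads_Gamma}. Throughout, I would fix a compact coordinate system $(U, \varphi)$ on $\mi{I}$ and freely use the asymptotics of Proposition \ref{thm.aads_O_regular} for both $(\mi{M}, g)$ and $(\mi{M}, \check{g})$ --- in particular that $\gv, \check{\gv}, \gv^{-1}, \check{\gv}^{-1}$ and $\gv - \check{\gv}$ are $\oo{M}{1}$ and that $\Lv, \check{\Lv}$ are $\oo{M-1}{1} = \oo{M-2}{\rho}$. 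For the first claim of \eqref{eq.sys_O_QB}, I would use the algebraic identity $\gv^{-1} - \check{\gv}^{-1} = - \gv^{-1} ( \gv - \check{\gv} ) \check{\gv}^{-1}$ (with all tensors suitably contracted), which exhibits $\gv^{-1} - \check{\gv}^{-1}$ as $\sch{ \gv^{-1}, \check{\gv}^{-1}, \gv - \check{\gv} }$, hence as $\oo{M}{1; \gv - \check{\gv}}$.

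For $\ms{Q}$ I would view the defining relation \eqref{Qdef} as a linear $\rho$-transport equation for the $\varphi$-components of $\ms{Q}$, schematically $\mi{L}_\rho \ms{Q} = \sch{ \gv^{-1}, \Lv, \gv - \check{\gv} } + \sch{ \gv^{-1}, \Lv, \ms{Q} }$. Differentiating up to $M-1$ times in the $\varphi$-directions (which commute with $\partial_\rho$) and bounding the coefficient factors using $\gv^{-1}, \gv - \check{\gv} = \oo{M}{1}$ and $\Lv = \oo{M-1}{1}$ gives $| \mi{L}_\rho \ms{Q} |_{M-1, \varphi} \lesssim 1 + | \ms{Q} |_{M-1, \varphi}$. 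Since $\ms{Q} \rightarrow^0 0$ and the right-hand side of this differentiated equation is bounded, $\ms{Q}$ in fact converges to $0$ in $C^{M-2}$ and can be integrated from $\rho = 0$; a Gr\"onwall argument then yields $\ms{Q} = \oo{M-1}{\rho}$. Feeding this back into \eqref{Qdef} but now using the sharper $\Lv = \oo{M-2}{\rho}$ gives $| \mi{L}_\rho \ms{Q} |_{M-2, \varphi} \lesssim \rho ( 1 + | \ms{Q} |_{M-2, \varphi} ) \lesssim \rho$, and a second integration produces $\ms{Q} = \oo{M-2}{\rho^2}$. The bound $\ms{B} = \oo{M-2}{1}$ is then immediate from \eqref{Bdef}, since the $\Dv ( \gv - \check{\gv} )$ terms are controlled in high regularity and $\Dv \ms{Q} = \oo{M-2}{\rho}$ by the bound just obtained.

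The identities \eqref{eq.sys_diff_Gamma} would then follow by plugging the explicit expressions \eqref{eq.aads_Gamma} into the relevant differences. The $\Gamma^\alpha_{\rho\rho}$- and $\Gamma^\rho_{\alpha\rho}$-components are metric-independent, so their differences vanish; likewise $\Gamma^c_{\rho a} - \Gammav^c_{\rho a} = - \rho^{-1} \delta^c{}_a$ and $\Gamma^c_{ab} - \Gammav^c_{ab} = 0$ are metric-independent, giving $( \Gamma - \check{\Gamma} )^a_{\rho b} = ( \Gammav - \check{\Gammav} )^a_{\rho b}$ and $( \Gamma - \check{\Gamma} )^c_{ab} = ( \Gammav - \check{\Gammav} )^c_{ab}$. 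Subtracting the formula for $\Gamma^\rho_{ab}$ produces the stated expression for $( \Gamma - \check{\Gamma} )^\rho_{ab}$ in terms of $\gv - \check{\gv}$ and $\Lv - \check{\Lv}$; subtracting the formula for $\Gamma^c_{a\rho}$ and splitting $\gv^{cd} \Lv_{ad} - \check{\gv}^{cd} \check{\Lv}_{ad} = \gv^{cd} ( \Lv - \check{\Lv} )_{ad} + ( \gv^{cd} - \check{\gv}^{cd} ) \check{\Lv}_{ad}$, then applying the first claim of \eqref{eq.sys_O_QB} together with $\check{\Lv} = \oo{M-2}{\rho}$, gives the decomposition into $\oo{M-2}{\rho; \gv - \check{\gv}}$ and $\oo{M}{1; \Lv - \check{\Lv}}$. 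Finally, $( \Gammav - \check{\Gammav} )^c_{ab}$ would be computed from the standard tensorial identity for the difference of two torsion-free metric connections (polarizing $\Dv ( \gv - \check{\gv} )$ against $\check{\gv}^{-1}$), yielding both the displayed formula and --- since $\check{\gv}^{-1} = \oo{M}{1}$ --- the bound $\oo{M}{1; \Dv ( \gv - \check{\gv} )}$. The only step that demands genuine care is the bootstrap for $\ms{Q}$: one must first upgrade the $C^0$-vanishing of $\ms{Q}$ at $\rho = 0$ to $C^{M-2}$-vanishing (via the uniform boundedness of its $\partial_\rho$-derivatives in finite regularity, together with interpolation) so that integration from the conformal boundary is legitimate; once that is in place the two Gr\"onwall steps deliver the linear and then quadratic decay in $\rho$, and everything else is bookkeeping of how many coordinate derivatives each factor absorbs.
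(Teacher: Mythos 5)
Your proposal is correct and follows essentially the same route as the paper: the resolvent identity for $\gv^{-1} - \check{\gv}^{-1}$, treating \eqref{Qdef} as a linear $\rho$-transport equation whose bounded coefficients ($\gv^{-1}, \gv - \check{\gv} = \oo{M}{1}$, $\Lv = \oo{M-1}{1} = \oo{M-2}{\rho}$) yield first $\ms{Q} = \oo{M-1}{\rho}$ and then, reinserting the sharper bound on $\Lv$, $\ms{Q} = \oo{M-2}{\rho^2}$, with $\ms{B} = \oo{M-2}{1}$ read off from \eqref{Bdef}, and finally direct subtraction of the Christoffel formulas \eqref{eq.aads_Gamma}. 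The only cosmetic difference is that the paper writes the solution of \eqref{Qdef} via an explicit Duhamel/matrix-exponential integral from $\rho = 0$, whereas you run a Gr\"onwall argument on the $\varphi$-differentiated equation; these are equivalent, and your extra care about upgrading the $C^0$-vanishing of $\ms{Q}$ at the boundary before integrating is a reasonable (if implicit in the paper) refinement.
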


\begin{proof}
See Appendix \ref{sec.sys_O_QB}.
\end{proof}

The following two propositions contain our main wave-transport system.
In particular, the key step in the proof of our main result is a unique continuation result on this system.

\begin{proposition} \label{thm.sys_transport}
Let $( \mi{M}, g )$ and $( \mi{M}, \check{g} )$ denote two vacuum FG-aAdS segments.
Then,
\begin{align} 
\label{eq.sys_transport} \mi{L}_\rho ( \gv - \check{\gv} ) &= \Lv - \check{\Lv} \text{,} \\
\notag \mi{L}_\rho \ms{Q} &= \oo{M-2}{ \rho; \gv - \check{\gv} } + \oo{M-2}{ \rho; \ms{Q} } \text{,} \\
\notag \mi{L}_\rho [ \rho^{-1} ( \Lv - \check{\Lv} ) ] &= - 2 \rho^{-1} \ms{W}^2 + \oo{M-3}{ 1; \gv - \check{\gv} } + \oo{M-3}{ 1; \ms{Q} } + \oo{M-2}{ 1; \Lv - \check{\Lv} } \text{,} \\
\notag \mi{L}_\rho \ms{B} &= \sch{ \ms{W}^1 } + \oo{M-3}{ \rho; \gv - \check{\gv} } + \oo{M-3}{ \rho; \ms{Q} } + \oo{ M - 1 }{ 1; \Lv - \check{\Lv} } + \oo{ M - 2 }{ \rho; \ms{B} } \text{.}
\end{align}
In addition, the following derivative transport equations hold:
\begin{align}
\label{eq.sys_transport_deriv} \mi{L}_\rho \Dv ( \gv - \check{\gv} ) &= \Dv ( \Lv - \check{\Lv} ) + \oo{ M - 3 }{ \rho; \gv - \check{\gv} } \text{,} \\
\notag \mi{L}_\rho \Dv \ms{Q} &= \oo{M-2}{ \rho; \Dv ( \gv - \check{\gv} ) } + \oo{M-2}{ \rho; \Dv \ms{Q} } + \oo{M-3}{ \rho; \gv - \check{\gv} } + \oo{M-3}{ \rho; \ms{Q} } \text{,} \\
\notag \mi{L}_\rho \Dv [ \rho^{-1} ( \Lv - \check{\Lv} ) ] &= - 2 \rho^{-1} \Dv \ms{W}^2 + \oo{M-3}{ 1; \Dv ( \gv - \check{\gv} ) } + \oo{M-3}{ 1; \Dv \ms{Q} } + \oo{M-2}{ 1; \Dv ( \Lv - \check{\Lv} ) } \\
\notag &\qquad + \oo{M-4}{ 1; \gv - \check{\gv} } + \oo{M-4}{ 1; \ms{Q} } + \oo{M-3}{ 1; \Lv - \check{\Lv} } \text{,} \\
\notag \mi{L}_\rho \Dv \ms{B} &= \sch{ \Dv \ms{W}^1 } + \oo{M-3}{ \rho; \Dv ( \gv - \check{\gv} ) } + \oo{M-3}{ \rho; \Dv \ms{Q} } + \oo{ M - 1 }{ 1; \Dv ( \Lv - \check{\Lv} ) } \\
\notag &\qquad + \oo{ M - 2 }{ \rho; \Dv \ms{B} } + \oo{M-4}{ \rho; \gv - \check{\gv} } + \oo{M-4}{ \rho; \ms{Q} } \\
\notag &\qquad + \oo{ M - 2 }{ 1; \Lv - \check{\Lv} } + \oo{ M - 3 }{ \rho; \ms{B} } \text{.}
\end{align}
\end{proposition}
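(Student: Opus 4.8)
The plan is to obtain each identity in \eqref{eq.sys_transport} by subtracting the corresponding structure equation of Proposition~\ref{thm.aads_connection} (or the defining relations \eqref{Qdef}, \eqref{Bdef}) written for $\check{g}$ from the one written for $g$, substituting the renormalized quantities of Definition~\ref{def.sys_auxiliary} in place of the raw curvature differences $\wv^i - \check{\wv}^i$ via \eqref{Wdef}, and then absorbing every remaining term into the schematic error classes using the telescoping identity $\ms{A}\ms{B} - \check{\ms{A}}\check{\ms{B}} = (\ms{A} - \check{\ms{A}})\ms{B} + \check{\ms{A}}(\ms{B} - \check{\ms{B}})$ together with the asymptotics of Propositions~\ref{thm.aads_O_regular} and \ref{thm.sys_O_QB}. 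Throughout, the single recurring mechanism is that a product of a difference quantity with a coefficient carrying a power of $\rho$ lands in one of the prescribed $\oo{\cdot}{\cdot; \ms{X}}$ classes; the decay rates $\Lv = \oo{M-2}{\rho}$, $\check{\wv}^i = \oo{M-3}{\rho}$, $\ms{Q} = \oo{M-2}{\rho^2}$, and $\gv^{-1} - \check{\gv}^{-1} = \oo{M}{1; \gv - \check{\gv}}$ are exactly what make this work.

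The first identity is immediate from \eqref{eq.aads_sff}. The second is read off directly from \eqref{Qdef}: its right-hand side is $\tfrac{1}{2} \gv^{de} \Lv$ contracted into $\gv - \check{\gv} + \ms{Q}$, and $\tfrac{1}{2} \gv^{de} \Lv = \oo{M-2}{\rho}$ by \eqref{eq.aads_O_g}. For the third identity I would solve the last relation of \eqref{eq.aads_connection} for $\mi{L}_\rho \Lv$, multiply through by $\rho^{-1}$ to produce $\mi{L}_\rho ( \rho^{-1} \Lv ) = - 2 \rho^{-1} \wv^2 + \tfrac{1}{2} \rho^{-1} \gv^{cd} \Lv_{ad} \Lv_{bc}$, subtract the $\check{g}$-version, replace $\wv^2 - \check{\wv}^2$ by $\ms{W}^2$ via \eqref{Wdef} (the correction being $\rho^{-1} \check{\wv}^2 = \oo{M-3}{1}$ contracted into $\gv - \check{\gv} + \ms{Q}$, hence in $\oo{M-3}{1; \gv - \check{\gv}} + \oo{M-3}{1; \ms{Q}}$), and telescope the quadratic $\gv^{-1} \Lv \Lv$ difference into the remaining listed classes.

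The crux is the fourth identity, for $\mi{L}_\rho \ms{B}$. Differentiating \eqref{Bdef} in $\rho$ and commuting $\mi{L}_\rho$ past $\Dv$ through the first relation of \eqref{eq.aads_comm} gives $\mi{L}_\rho \ms{B}_{cab} = [ \Dv_c ( \Lv - \check{\Lv} )_{ab} - \Dv_a ( \Lv - \check{\Lv} )_{cb} ] - \Dv_b ( \mi{L}_\rho \ms{Q}_{ca} )$ up to $\sch{ \gv^{-1}, \Dv \Lv, \gv - \check{\gv} }$ and $\sch{ \gv^{-1}, \Dv \Lv, \ms{Q} }$ terms, which lie in $\oo{M-3}{\rho; \gv - \check{\gv}} + \oo{M-3}{\rho; \ms{Q}}$ since $\Dv \Lv = \oo{M-3}{\rho}$. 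In the first bracket, the $g$-part equals $2 \wv^1$ (up to index permutation) by the first relation of \eqref{eq.aads_connection}, while the $\check{g}$-part equals $2 \check{\wv}^1$ plus the connection discrepancy $\Gammav - \check{\Gammav}$ contracted into $\check{\Lv}$; the part of $\Gammav - \check{\Gammav}$ symmetric in its lower indices drops out of this antisymmetrised combination by Proposition~\ref{thm.sys_O_QB}, leaving only terms schematically of the form $\gv^{-1} \, \check{\Lv} \, \Dv ( \gv - \check{\gv} )$. In the second term, expanding $\Dv_b \mi{L}_\rho \ms{Q}_{ca}$ by the Leibniz rule and \eqref{Qdef} produces $\gv^{-1} \Lv \Dv ( \gv - \check{\gv} )$ and $\gv^{-1} \Lv \Dv \ms{Q}$, plus errors in $\oo{M-3}{\rho; \gv - \check{\gv}} + \oo{M-3}{\rho; \ms{Q}}$. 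The key observation --- and the reason \eqref{Qdef} is chosen precisely as it is --- is that, after replacing $\Lv$ by $\check{\Lv}$ (the mismatch being $\oo{M-1}{1; \Lv - \check{\Lv}}$), the $\Dv ( \gv - \check{\gv} )$-contributions from these two sources cancel exactly, while $\Dv \ms{Q}$ is rewritten through $\ms{B}$ using \eqref{Bdef} itself, giving $\oo{M-2}{\rho; \ms{B}}$. Replacing $\wv^1 - \check{\wv}^1$ by $\ms{W}^1$ through \eqref{Wdef} then yields the stated equation. Verifying this cancellation and tracking all residual terms into the prescribed error classes --- which leans on the sharp decay rates of Proposition~\ref{thm.sys_O_QB} --- is the main technical obstacle.

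Finally, \eqref{eq.sys_transport_deriv} is obtained by applying $\Dv$ to each identity of \eqref{eq.sys_transport}, once more using \eqref{eq.aads_comm} to interchange $\Dv$ and $\mi{L}_\rho$, and differentiating each schematic term: writing $\oo{M}{h; \ms{X}} = \sch{ \ms{e}, \ms{X} }$ with $\ms{e} = \oo{M}{h}$, the Leibniz rule gives $\sch{ \Dv \ms{e}, \ms{X} } + \sch{ \ms{e}, \Dv \ms{X} }$, and since the vertical derivative $\Dv$ preserves $\rho$-weights one has $\Dv \ms{e} = \oo{M-1}{h}$, so $\oo{M}{h; \ms{X}}$ becomes $\oo{M-1}{h; \ms{X}} + \oo{M-1}{h; \Dv \ms{X}}$. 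Collecting these, with $\Dv ( \gv - \check{\gv} )$, $\Dv \ms{Q}$, $\Dv ( \Lv - \check{\Lv} )$, $\Dv \ms{B}$ now taking the roles of the original unknowns and one-lower-order copies of the undifferentiated quantities appearing as genuine lower-order terms, produces exactly \eqref{eq.sys_transport_deriv}.
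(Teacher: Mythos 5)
Your proposal is correct and follows essentially the same route as the paper's proof: differencing the structure equations \eqref{eq.aads_connection}, \eqref{Qdef}, \eqref{Bdef}, commuting $\mi{L}_\rho$ and $\Dv$ via \eqref{eq.aads_comm}, telescoping with the asymptotics of Propositions \ref{thm.aads_O_regular} and \ref{thm.sys_O_QB}, trading $\wv^i - \check{\wv}^i$ for $\ms{W}^i$ via \eqref{Wdef}, and then applying $\Dv$ for the derivative equations. Your account of the key cancellation in the $\mi{L}_\rho \ms{B}$ equation is the same algebra as the paper's, only organized differently: the paper groups the surviving $\Dv(\gv-\check{\gv})$ and $\Dv\ms{Q}$ terms so that they reconstitute $\ms{B}$ contracted with $\gv^{-1}\Lv$ (hence $\oo{M-2}{\rho;\ms{B}}$), which is equivalent to your "cancel the $\Dv(\gv-\check{\gv})$ pieces after rewriting $\Dv\ms{Q}$ through \eqref{Bdef}".
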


\begin{proof}
See Appendix \ref{sec.sys_transport}.
\end{proof}

\begin{proposition} \label{thm.sys_wave}
Let $( \mi{M}, g )$ and $( \mi{M}, \check{g} )$ denote two vacuum FG-aAdS segments.
Then,
\begin{align}
\label{eq.sys_wave} \Boxm \ms{W}^2 + 2(n-2) \ms{W}^2 &= \ms{F}^2 \text{,} \\
\notag \Boxm \ms{W}^1 + (n-1) \ms{W}^1 &= \ms{F}^1\text{,} \\
\notag \Boxm \ms{W}^\star &= \ms{F}^\star \text{,}
\end{align}
where each $\ms{F} \in \{ \ms{F}^2, \ms{F}^1, \ms{F}^\star \}$ is schematically of the form
\begin{align}
\label{Fdef} \ms{F} &= \oo{ M - 4 }{ \rho^2; \gv - \check{\gv} }_{ \bar{a} } + \oo{ M - 3 }{ \rho^2; \ms{Q} } + \oo{ M - 3 }{ \rho; \Lv - \check{\Lv} }_{ \bar{a} } \\
\notag &\qquad + \oo{M-3}{ \rho^2; \Dv ( \gv - \check{\gv} ) }_{ \bar{a} } + \oo{M-3}{ \rho^2; \Dv \ms{Q} }_{ \bar{a} } + \oo{ M - 2 }{ \rho^2; \Dv ( \Lv - \check{\Lv} ) }_{ \bar{a} } \\
\notag &\qquad + \oo{M-2}{ \rho^2; \Dv \ms{B} }_{ \bar{a} } + \sum_{ \ms{V} \in \{ \ms{W}^\star, \ms{W}^1, \ms{W}^2 \} } [ \oo{ M - 3 }{ \rho^2; \ms{V} } + \oo{ M - 2 }{ \rho^3; \Dv \ms{V} } ]_{ \bar{a} } \text{.}
\end{align}
\end{proposition}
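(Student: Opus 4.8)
The plan is to subtract the wave equations of Proposition~\ref{thm.aads_wave} for the two geometries $g$ and $\check{g}$. For $\wv^2$ this produces a wave equation for $\wv^2-\check{\wv}^2$ whose source is the difference of the right-hand sides of \eqref{eq.aads_wave} minus the operator difference $(\Boxm-\check{\Boxm})\check{\wv}^2$, and similarly for $\wv^1$ and $\wv^\star$. The difference of the right-hand sides of \eqref{eq.aads_wave} is harmless: every term there is $\rho$ or $\rho^2$ times a schematic contraction of $\gv^{\pm}$, $\Lv$, $\Dv\Lv$, $\wv^j$, $\Dv\wv^j$, and differencing by the Leibniz rule produces, using Propositions~\ref{thm.aads_O_regular} and~\ref{thm.sys_O_QB} (in particular $\Lv=\oo{M-2}{\rho}$, $\wv^1=\wv^2=\oo{M-3}{\rho}$), the terms $\oo{M-3}{\rho;\Lv-\check{\Lv}}$ and $\oo{M-2}{\rho^2;\Dv(\Lv-\check{\Lv})}$ of \eqref{Fdef}, together with $\sch{\wv^j-\check{\wv}^j}$- and $\sch{\Dv(\wv^j-\check{\wv}^j)}$-type contributions. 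Rewriting $\wv^j-\check{\wv}^j$ and $\Dv(\wv^j-\check{\wv}^j)$ through the definitions \eqref{Wdef}, and differentiating, then recovers the remaining $\rho^2$- and $\rho^3$-weighted terms of \eqref{Fdef} carrying $\ms{V}$, $\Dv\ms{V}$, $\gv-\check{\gv}$, $\ms{Q}$, $\Dv(\gv-\check{\gv})$, $\Dv\ms{Q}$.

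\textbf{The heart of the argument} is controlling the operator difference $(\Boxm-\check{\Boxm})\check{\wv}^2$. From \eqref{eq.aads_boxm} for $\Boxm$ (or from \eqref{eq.aads_deriv} with the Christoffel differences \eqref{eq.sys_diff_Gamma}), the only genuinely dangerous contribution is $\rho^2\gv^{ab}(\Dv_{ab}-\check{\Dv}_{ab})\check{\wv}^2$, since $\Gammav-\check{\Gammav}$ is $\oo{M}{1;\Dv(\gv-\check{\gv})}$ and hence carries $\Dv^2(\gv-\check{\gv})$ after one more $\Dv$; every other term in $\Boxm-\check{\Boxm}$ involves at most $\Dv(\gv-\check{\gv})$ or $\mi{L}_\rho(\gv-\check{\gv})=\Lv-\check{\Lv}$ contracted with $\check{\wv}^2$ and already conforms to \eqref{Fdef}. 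Expanding this term and symmetrising in the two free indices yields a schematic $\Dv^2(\gv-\check{\gv})$-combination. The crucial algebraic identity is that this combination can be rewritten, modulo controllable error, as
\[
(\Boxm-\check{\Boxm})\check{\wv}^2_{ab} = -\frac{1}{2}\rho^2\gv^{cd}\gv^{ef}\check{\wv}^2_{eb}\Dv_c\ms{B}_{afd} - \frac{1}{2}\gv^{ef}\check{\wv}^2_{eb}\Boxm(\gv-\check{\gv}+\ms{Q})_{af} + \{a\leftrightarrow b\} + \ms{Err}\text{,}
\]
which is proved by inserting the definition \eqref{Bdef} of $\ms{B}$, using the transport equation \eqref{Qdef} for $\ms{Q}$ to trade $\mi{L}_\rho$-derivatives, and commuting vertical derivatives (the resulting $\Rv$-curvature terms being absorbed into $\ms{Err}$). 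Here $\ms{Err}$ is a sum of contractions of $\gv^\pm$, $\Lv$, $\wv^2$, $\Rv$ with $\gv-\check{\gv}$, $\ms{Q}$, $\ms{B}$, $\Lv-\check{\Lv}$ and their single $\Dv$-derivatives, all carrying the $\rho$-weights demanded by \eqref{Fdef}, as is checked termwise using Propositions~\ref{thm.aads_O_regular},~\ref{thm.sys_O_QB} and the transport equations of Proposition~\ref{thm.sys_transport}.

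\textbf{The renormalisation then closes the system.} Defining $\ms{W}^2$ by \eqref{Wdef}, so that $\ms{W}^2=(\wv^2-\check{\wv}^2)+R$ with $R_{ab}:=\frac{1}{2}\gv^{de}\check{\wv}^2_{ad}(\gv-\check{\gv}+\ms{Q})_{be}+\{a\leftrightarrow b\}$, and applying $\Boxm+2(n-2)$, the Leibniz rule for $\Boxm$ in the mixed formalism (with $\Dvm\gv^{-1}=\Boxm\gv^{-1}=0$) shows that the $\Boxm(\gv-\check{\gv}+\ms{Q})$ terms in $\Boxm R$ cancel those in $-(\Boxm-\check{\Boxm})\check{\wv}^2$, while the $(\Boxm\check{\wv}^2)\cdot(\gv-\check{\gv}+\ms{Q})$ term in $\Boxm R$, after substituting $\Boxm\check{\wv}^2=-2(n-2)\check{\wv}^2+\oo{M-4}{\rho^2}$ from the $\wv^2$-wave equation, cancels the mass term $2(n-2)R$ up to $\oo{M-4}{\rho^2;\gv-\check{\gv}}+\oo{M-3}{\rho^2;\ms{Q}}$. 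The remaining pieces of $\Boxm R$ are the cross terms $g^{\mu\nu}\Dvm_\mu\check{\wv}^2\,\Dvm_\nu(\gv-\check{\gv}+\ms{Q})$; splitting these into their vertical and $\rho$-parts, using $\check{\wv}^2=\oo{M-3}{\rho}$, the Bianchi identity \eqref{eq.aads_bianchi} for $\Dvm_\rho\check{\wv}^2$, and the transport equations of Proposition~\ref{thm.sys_transport} to reduce $\Dvm_\rho$-content to $\Lv-\check{\Lv}$, $\gv-\check{\gv}$, $\ms{Q}$, they land in the $\oo{}{\rho^2;\Dv(\gv-\check{\gv})}$, $\oo{}{\rho^2;\Dv\ms{Q}}$, $\oo{}{\rho;\Lv-\check{\Lv}}$ buckets of \eqref{Fdef}; and the surviving $\Dv\ms{B}$-term is precisely $\oo{M-2}{\rho^2;\Dv\ms{B}}$ because $\gv^{ef}\check{\wv}^2_{eb}=\oo{M-2}{1}$. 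Collecting these contributions gives \eqref{eq.sys_wave}--\eqref{Fdef} for $\ms{W}^2$. The cases $\ms{W}^1$ and $\ms{W}^\star$ are structurally identical: subtract the corresponding equations of \eqref{eq.aads_wave}, isolate the dangerous $\rho^2\gv^{ab}(\Dv_{ab}-\check{\Dv}_{ab})\check{\wv}^1$ (resp.\ $\check{\wv}^\star$) term, perform the same rewriting in terms of $\Dv\ms{B}$ and $\Boxm(\gv-\check{\gv}+\ms{Q})$ on each vertical index, and choose the renormalisations $\ms{W}^1$, $\ms{W}^\star$ by the corresponding formulas in \eqref{Wdef} to cancel the $\Boxm$-terms; the extra indices only increase the bookkeeping. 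The full computation is carried out in the appendix.

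\textbf{Main obstacle.} The hard part is the crucial algebraic identity above --- rewriting the symmetrised $\Dv^2(\gv-\check{\gv})$-combination as $\Dv\ms{B}$ plus $\Boxm(\gv-\check{\gv}+\ms{Q})$ plus controllable error. This is exactly what forces the renormalisations $\ms{Q}$, $\ms{B}$, and the curvature renormalisation \eqref{Wdef} upon us, in parallel with the Ionescu--Klainerman choices \eqref{eq.intro_ik_system}, and it demands careful index manipulations together with \eqref{Qdef} and vertical-derivative commutations. The secondary, more tedious difficulty is verifying that the many error terms so produced each carry both the correct $\rho$-weight and the correct derivative count to fit the schematic right-hand side \eqref{Fdef}; this is a termwise application of Propositions~\ref{thm.aads_O_regular},~\ref{thm.sys_O_QB} and the transport identities of Proposition~\ref{thm.sys_transport}.
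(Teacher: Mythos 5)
Your proposal follows essentially the same route as the paper's proof: subtract the wave equations \eqref{eq.aads_wave}, isolate the dangerous second vertical derivatives of $\gv-\check{\gv}$ inside $(\Boxm-\check{\Boxm})\check{\wv}$, recast them via $\Dv\ms{B}$ and $\Boxm(\gv-\check{\gv}+\ms{Q})$, and let the renormalisation \eqref{Wdef} absorb the $\Boxm$-terms while the $\check{\wv}$-wave equation controls the mass-shifted contribution — this is exactly the paper's decomposition into $\mi{A}_0+\mi{A}_1+\mi{A}_2$, merely rearranged. The only slight gloss is that differencing the $\sch{\gv^{-1},\mi{L}_\rho\Lv,\cdot}$ term of \eqref{eq.aads_boxm} produces $\mi{L}_\rho(\Lv-\check{\Lv})$, which must be eliminated through the third transport equation of \eqref{eq.sys_transport} (thereby introducing $\ms{W}^2$), a step you cover only implicitly when invoking Proposition \ref{thm.sys_transport}.
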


\begin{proof}
See Appendix \ref{sec.sys_wave}.
\end{proof}

Finally, we collect here convenient forms for the differences of the Bianchi equations \eqref{eq.aads_bianchi}.
These will be needed in another step in the proof of our main result---showing that the quantities relating to the differences of two geometries vanish to arbitrarily high order.

\begin{proposition} \label{thm.sys_vanish}
Let $( \mi{M}, g )$ and $( \mi{M}, \check{g} )$ denote two vacuum FG-aAdS segments.
Then,
\begin{align}
\label{eq.sys_vanish} \mi{L}_\rho ( \gv - \check{\gv} ) &= \Lv - \check{\Lv} \text{,} \\
\notag \mi{L}_\rho [ \rho^{-1} ( \Lv - \check{\Lv} ) ] &= - 2 \rho^{-1} ( \wv^2 - \check{\wv}^2 ) + \mi{O}_{ M - 2 } ( \rho; \gv - \check{\gv} ) + \oo{ M - 2 }{ 1; \Lv - \check{\Lv} } \text{,} \\
\notag \mi{L}_\rho [ \rho^{2-n} ( \wv^2 - \check{\wv}^2 ) ] &= \oo{M}{ \rho^{2-n}; \Dv ( \wv^1 - \check{\wv}^1 ) } + \mi{O}_{ M - 4 } ( \rho^{3-n}; \gv - \check{\gv} ) \\
\notag &\qquad + \oo{ M - 2 }{ \rho^{2-n}; \Lv - \check{\Lv} } + \oo{ M - 3 }{ \rho^{3-n}; \Dv ( \gv - \check{\gv} ) } \\
\notag &\qquad + \oo{ M - 2 }{ \rho^{3-n}; \wv^0 - \check{\wv}^0 } + \oo{ M - 2 }{ \rho^{3-n}; \wv^2 - \check{\wv}^2 } \text{,} \\ 
\notag \mi{L}_\rho [ \rho^{-1} ( \wv^1 - \check{\wv}^1 ) ] &= \rho^{-1} \, \sch{ \Dv ( \wv^2 - \check{\wv}^2 ) } + \oo{ M - 3 }{ 1; \gv - \check{\gv} } \\
\notag &\qquad + \oo{ M - 2 }{ \rho^{-1}; \Lv - \check{\Lv} } + \oo{ M - 3 }{ 1; \Dv ( \gv - \check{\gv} ) } \\
\notag &\qquad + \oo{ M - 1 }{ 1; \wv^1 - \check{\wv}^1 } \text{,} \\
\notag \mi{L}_\rho ( \wv^0 - \check{\wv}^0 ) &= \sch{ \Dv ( \wv^1 - \check{\wv}^1 ) } + \oo{ M - 2 }{ \rho^{-1}; \wv^2 - \check{\wv}^2 } \\
\notag &\qquad + \oo{ M - 3 }{ 1; \gv - \check{\gv} } + \oo{ M - 2 }{ 1; \Lv - \check{\Lv} } \\
\notag &\qquad + \oo{ M - 3 }{ \rho; \Dv ( \gv - \check{\gv} ) } + \oo{ M - 2 }{ \rho; \wv^0 - \check{\wv}^0 } \text{.} 
\end{align}
\end{proposition}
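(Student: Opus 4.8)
The plan is to derive each line of \eqref{eq.sys_vanish} by taking the difference of the corresponding identity from Proposition \ref{thm.aads_connection} or Proposition \ref{thm.aads_bianchi} for the $g$- and $\check{g}$-geometries, and then absorbing all terms that do not appear explicitly into the schematic error notation of Definition \ref{def.aads_O}. The first equation is immediate from $\Lv = \mi{L}_\rho \gv$, $\check{\Lv} = \mi{L}_\rho \check{\gv}$. For the second equation, I would start from the third identity in \eqref{eq.aads_connection}, namely $\wv^2_{ab} = -\frac{1}{2}\mi{L}_\rho \Lv_{ab} + \frac{1}{2}\rho^{-1}\Lv_{ab} + \frac{1}{4}\gv^{cd}\Lv_{ad}\Lv_{bc}$, rewrite it as $\mi{L}_\rho[\rho^{-1}(\Lv-\check{\Lv})] = -2\rho^{-1}(\wv^2 - \check{\wv}^2) + \rho^{-1}[\frac{1}{2}\gv^{cd}\Lv_{ad}\Lv_{bc} - \frac{1}{2}\check{\gv}^{cd}\check{\Lv}_{ad}\check{\Lv}_{bc}]$ (using $\mi{L}_\rho[\rho^{-1}\Lv] = \rho^{-1}\mi{L}_\rho\Lv - \rho^{-2}\Lv$), and then expand the quadratic difference telescopically using $\gv^{-1}-\check{\gv}^{-1} = \oo{M}{1;\gv-\check{\gv}}$ from \eqref{eq.sys_O_QB} together with the bounds $\Lv = \oo{M-2}{\rho}$, $\check{\Lv} = \oo{M-2}{\rho}$ from \eqref{eq.aads_O_g}; each resulting piece is either $\oo{M-2}{\rho;\gv-\check{\gv}}$ or $\oo{M-2}{1;\Lv-\check{\Lv}}$, matching the claim.

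For the third, fourth, and fifth equations I would take differences of the first, third, and fourth identities in \eqref{eq.aads_bianchi}, respectively, after multiplying by the appropriate power of $\rho$ to cancel the $\rho^{-1}$ factors accompanying the leading curvature term (so that the transport coefficient becomes integrable; e.g.\ $\mi{L}_\rho[\rho^{2-n}\wv^2] = \rho^{2-n}[\Dvm_\rho\wv^2 - (n-2)\rho^{-1}\wv^2]$). Each such difference produces a ``good'' leading term --- $\oo{M}{\rho^{2-n};\Dv(\wv^1-\check{\wv}^1)}$, $\rho^{-1}\sch{\Dv(\wv^2-\check{\wv}^2)}$, or $\sch{\Dv(\wv^1-\check{\wv}^1)}$ --- from differencing the $\Dv\wv$ term, plus a commutator error: $\gv^{cd}\Dv_c\wv^1 - \check{\gv}^{cd}\check{\Dv}_c\check{\wv}^1$ expands, via $\gv^{-1}-\check{\gv}^{-1} = \oo{M}{1;\gv-\check{\gv}}$ and $\Gamma - \check{\Gamma} = \oo{M}{1;\Dv(\gv-\check{\gv})}$ from \eqref{eq.sys_diff_Gamma}, into $\oo{\cdot}{\cdot;\wv-\check{\wv}} + \oo{\cdot}{\cdot;\Dv(\gv-\check{\gv})}$ contributions, while the schematic $\sch{\gv^{-1},\Lv,\wv}$-type remainders in \eqref{eq.aads_bianchi} difference into combinations of $\oo{\cdot}{\cdot;\Lv-\check{\Lv}}$, $\oo{\cdot}{\cdot;\gv-\check{\gv}}$, and $\oo{\cdot}{\cdot;\wv-\check{\wv}}$ using \eqref{eq.aads_O_g}--\eqref{eq.aads_O_w} and \eqref{eq.sys_O_QB}. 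The $\rho$-powers and regularity indices on the right-hand sides of \eqref{eq.sys_vanish} are exactly what one gets by bookkeeping these sources; I would verify them term by term.

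The main obstacle I anticipate is the careful tracking of $\rho$-weights and differentiability orders through the commutator-of-connections terms --- in particular ensuring that differencing $\Dv$ against $\check{\Dv}$ (which differ at the level of one derivative of $\gv - \check{\gv}$, hence contribute a $\Dv(\gv-\check{\gv})$ term, \emph{not} a $\Dv^2(\gv-\check{\gv})$ term) does not secretly lose a derivative or a power of $\rho$, and that terms like $\oo{M-2}{\rho^{-1};\wv^2-\check{\wv}^2}$ in the last equation genuinely arise with a negative power of $\rho$ (coming from the $\rho^{-1}\gv\wv^2$ terms in the $\Dvm_\rho\wv^0$ identity, where the difference of the $\gv$-factor produces an $\oo{\cdot}{\cdot;\gv-\check{\gv}}$ at order $\rho^{0}$ but the surviving $\rho^{-1}$ multiplies $\wv^2-\check{\wv}^2$). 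This is essentially the same bookkeeping already carried out in \cite{mcgill:loc_ads} for a single geometry, so I would organize the computation to mirror that reference, differencing at the end; the details are routine but lengthy, and are best deferred to Appendix \ref{sec.extra} (specifically Appendix \ref{sec.sys_vanish}).
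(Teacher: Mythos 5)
Your proposal is correct and follows essentially the same route as the paper's proof: the first two identities come from differencing the definition of $\Lv$ and the last equation of \eqref{eq.aads_connection} (using Proposition \ref{thm.aads_O_regular} and \eqref{eq.sys_O_QB}), and the remaining three from differencing the first, third, and fourth Bianchi identities in \eqref{eq.aads_bianchi}, with $\Dv - \check{\Dv}$ controlled by $\Dv ( \gv - \check{\gv} )$ via the last part of \eqref{eq.sys_diff_Gamma} and the $\rho^{3-n}$ weights on the $\gv - \check{\gv}$ and $\Dv ( \gv - \check{\gv} )$ terms obtained from the improved rate $\wv^1 = \oo{M-3}{\rho}$ in \eqref{eq.aads_O_w}. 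One minor imprecision: your displayed identity $\mi{L}_\rho [ \rho^{2-n} \wv^2 ] = \rho^{2-n} [ \Dvm_\rho \wv^2 - (n-2) \rho^{-1} \wv^2 ]$ conflates $\mi{L}_\rho$ with $\Dvm_\rho$, which by \eqref{eq.aads_vertical_connection} differ by $\sch{ \gv^{-1}, \Lv, \wv^2 }$-type terms; since these have the same schematic form as the errors already present in \eqref{eq.aads_bianchi}, the bookkeeping is unaffected (the paper performs this conversion explicitly before differencing).
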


\begin{proof}
See Appendix \ref{sec.sys_vanish}.
\end{proof}

\begin{remark}
The system \eqref{eq.sys_vanish} is sufficient to obtain higher-order vanishing of the differences of geometric quantities at the boundary (see Proposition \ref{thm.proof_vanish}).
However, we will need to work with the larger system \eqref{eq.sys_transport}--\eqref{eq.sys_wave}---containing renormalized quantities $\ms{Q}$, $\ms{B}$, $\ms{W}^\star$, $\ms{W}^1$, $\ms{W}^2$---in order to close the Carleman estimates, within which we cannot afford a loss in derivatives.
\end{remark}

\section{The Carleman Estimates} \label{sec.carleman}

In this section, we state the two Carleman estimates for vertical tensor fields that constitute the main analytic ingredients for the proof of our main results.

\subsection{The Wave Carleman Estimate}

The first Carleman estimate we discuss is that for wave equations---namely, the main results obtained in \cite{ChatzikaleasShao, hol_shao:uc_ads, hol_shao:uc_ads_ns, mcgill_shao:psc_aads}.
We begin by discussing the best-known conditions needed on the conformal boundary for such an estimate to hold.

\begin{definition} \label{def.carleman_h}
Let $( \mi{M}, g )$ be an FG-aAdS segment, and let $\mf{h}$ be a Riemannian metric on $\mi{I}$.
\begin{itemize}
\item We can also view $\mf{h}$ as a $\rho$-independent vertical Riemannian metric.\footnote{See Remark \ref{rmk.aads_tensor}.}

\item For a vertical tensor field $\ms{A}$, we write $| \ms{A} |_{ \mf{h} }$ to denote its pointwise $\mf{h}$-norm.
In other words, if $\ms{A}$ has rank $( k, l )$, then with respect to any coordinate system $( U, \varphi )$ on $\mi{I}$, we have
\begin{equation}
\label{eq.carleman_norm} | \ms{A} |_{ \mf{h} }^2 = \mf{h}_{ a_1 c_1 } \dots \mf{h}_{ a_k c_k } \mf{h}^{ b_1 d_1 } \dots \mf{h}^{ b_l d_l } \ms{A}^{ a_1 \dots a_k }_{ b_1 \dots b_l } \ms{A}^{ c_1 \dots c_k }_{ d_1 \dots d_l } \text{.}
\end{equation}
\end{itemize}
\end{definition}

\begin{remark}
The metric $\mf{h}$ is only used as a coordinate-independent way to measure the sizes of vertical tensor fields.
Our main results will not depend on a particular choice of $\mf{h}$.
\end{remark}

\begin{definition}[Definition 3.1 of \cite{ChatzikaleasShao}] \label{def.carleman_gncc}
Let $( \mi{M}, g )$ be a vacuum FG-aAdS segment, let $\mf{h}$ be a Riemannian metric on $\mi{I}$, and let $\mi{D} \subset \mi{I}$ be open with compact closure.
We say $( \mi{D}, \mf{g} )$ satisfies the \emph{generalized null convexity criterion} (or \emph{GNCC}) iff there exist $\eta \in C^4 ( \bar{\mi{D}} )$ and $c > 0$ satisfying
\begin{align} \label{eq.carleman_gncc}
\begin{cases}
\left( \mf{D}^2 \eta + \frac{1}{n-2} \, \eta \, \Rcm \right) ( \mf{Z}, \mf{Z} ) > c \eta \, \mf{h} ( \mf{Z}, \mf{Z} ) &\text{in } \mi{D} \text{,} \\
\eta > 0 &\text{in } \mi{D} \text{,} \\
\eta = 0 &\text{on } \partial \mi{D} \text{,}
\end{cases} 
\end{align}
for all vectors fields $\mf{Z}$ on $\mi{D}$ satisfying $\mf{g} ( \mf{Z}, \mf{Z} ) = 0$.
\end{definition}

\begin{remark}
In Definition \ref{def.carleman_gncc}, we specialized to vacuum FG-aAdS segments.
However, the GNCC can be directly extended, as in \cite{ChatzikaleasShao}, to strongly FG-aAdS segments that are not necessarily vacuum.
In the more general setting, the Ricci curvature $\Rcm$ in \eqref{eq.carleman_gncc} is replaced by $-\gb{2}$.\footnote{Note that for vacuum FG-aAdS segments, \eqref{eq.aads_fg_schouten} implies $-\gb{2} ( \mf{Z}, \mf{Z} ) = \frac{1}{n-2} \, \Rcm ( \mf{Z}, \mf{Z} )$ for any $\mf{g}$-null vector field $\mf{Z}$.}
\end{remark}

Next, we recall some quantities that will be essential to our Carleman estimates:

\begin{definition} \label{def.carleman_f}
Assume the setting of Definition \ref{def.carleman_gncc}---in particular, let $( \mi{D}, \gm )$ satisfy the GNCC, with $\eta$ satisfying \eqref{eq.carleman_gncc}.
Within this setting, we define the following additional quantities:
\begin{itemize}
\item Let $f_\eta := f \in C^4 ( ( 0, \rho_0 ] \times \mi{D} )$ denote the function
\begin{equation}
\label{eq.carleman_f} f := \rho \eta^{-1} \text{.}
\end{equation}

\item In addition, given $f_\star > 0$, we define the domain
\begin{equation}
\label{eq.carleman_omega} \Omega_{ f_\star } := \{ z \in \mi{M} \mid f (z) < f_\star \} \text{.}
\end{equation}
\end{itemize}
\end{definition}

We can now state the precise form of the Carleman estimate for wave equations from \cite{ChatzikaleasShao}.
Here, to slightly simplify the presentation, we express this in a less general form than in \cite{ChatzikaleasShao}.

\begin{theorem}[Theorem 5.11 of \cite{ChatzikaleasShao}] \label{thm.carleman}
Let $( \mi{M}, g )$ be a vacuum FG-aAdS segment.
In addition:
\begin{itemize}
\item Let $\mf{h}$ be a Riemannian metric on $\mi{I}$, and $\mi{D} \subset \mi{I}$ be open with compact closure.

\item Assume $( \mi{D}, \gm )$ satisfies the GNCC, with $\eta \in C^4 ( \bar{\mi{D}} )$ as in \eqref{eq.carleman_gncc} and $\mf{h}$ as above.

\item Fix integers $k, l \geq 0$ and a constant $\sigma \in \R$.
\end{itemize}
Then, there exist $\mc{C}_0 \geq 0$ and $\mc{C}, \mc{C}_b > 0$ (depending on $\ms{g}$, $\mf{h}$, $\mi{D}$, $k$, $l$) such that for any $\kappa \in \mathbb{R}$ with 
\begin{equation}
\label{eq.carleman_ass_kappa} 2 \kappa \geq n - 1 + \mc{C}_0 \text{,} \qquad \kappa^2 - (n-2) \kappa + \sigma - (n-1) - \mc{C}_0 \geq 0 \text{,}
\end{equation}
and for any constants $f_\star, \lambda, p > 0$ with 
\begin{equation}
\label{eq.carleman_ass_params} 0 < f_\star \ll_{ \ms{g}, \mf{h}, \mi{D}, k, l } 1 \text{,} \qquad \lambda \gg_{ \ms{g}, \mf{h}, \mi{D}, k, l } |\kappa| + |\sigma| \text{,} \qquad 0 < 2 p < 1 \text{,}
\end{equation}
the following Carleman estimate holds for any vertical tensor field $\ms{\Phi}$ on $\mi{M}$ of rank $( k, l )$ such that both $\ms{\Phi}$ and $\smash{\nablam} \ms{\Phi}$ vanish identically on $\{ f = f_\star \}$:\footnote{For notational convenience, we replaced the parameter $\lambda$ in \cite{ChatzikaleasShao} by $\lambda/2$ here.}
\begin{align}
\label{eq.carleman} &\int_{ \Omega_{ f_\star } } e^{ -\lambda p^{-1} f^{p} } f^{ n - 2 - p - 2 \kappa } | ( \Boxm + \sigma ) \ms{\Phi} |_{ \mf{h} }^2 \, d \mu_g \\
\notag &\quad \qquad + \mc{C}_b \lambda^3 \limsup_{ \rho_\star \searrow 0 } \int_{ \Omega_{ f_\star } \cap \{ \rho = \rho_\star \} } [ | \Dvm_{ \partial_\rho } ( \rho^{-\kappa} \ms{\Phi} ) |_{ \mf{h} }^2 + | \ms{D} ( \rho^{-\kappa} \ms{\Phi} ) |_{ \mf{h} }^2 + | \rho^{-\kappa-1} \ms{\Phi} |_{ \mf{h} }^2 ] \, d \mu_{ \ms{g} } \\
\notag &\quad \geq \mc{C} \lambda \int_{ \Omega_{ f_\star } } e^{-\lambda p^{-1} f^{p}} f^{ n - 2 - 2 \kappa } ( \rho^4 | \Dvm_{ \partial_\rho } \ms{\Phi} |_{ \mf{h} }^2 + \rho^4 | \ms{D} \ms{\Phi} |_{ \mf{h} }^2 + f^{2p} | \ms{\Phi} |_{ \mf{h} }^2 ) \, d \mu_g \text{.}
\end{align}
Here, $d \mu_g$ denotes the volume form on $\mi{M}$ induced by the spacetime metric $g$, while $d \mu_{ \gv }$ denotes the volume forms on the level sets of $\rho$ induced by the vertical metric $\gv$.
\end{theorem}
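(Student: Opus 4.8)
The plan is to deduce this statement from the Carleman estimate of \cite[Theorem 5.11]{ChatzikaleasShao}, which is proved for the broader class of \emph{strongly} FG-aAdS segments (vacuum or not), by checking that our hypotheses imply theirs, by reconciling the two formulations of the GNCC, and by matching the parameter conventions. The estimate \eqref{eq.carleman} is in fact a less general form of the one in \cite{ChatzikaleasShao}---we have fixed the weight exponent $p$ and suppressed the freedom in the first-order coefficients allowed there---so once the reduction is in place the conclusion is immediate.

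First I would observe that, by Remark \ref{rmk.aads_strong}, every vacuum FG-aAdS segment with $n > 2$ is automatically a strongly FG-aAdS segment, so the regularity and asymptotic hypotheses underlying \cite{ChatzikaleasShao} require no extra work. Next I would reconcile the GNCC: in \cite{ChatzikaleasShao} the bilinear form in \eqref{eq.carleman_gncc} is written with $\Rcm$ replaced by $-( n - 2 ) \gb{2}$, so I would invoke the vacuum identity \eqref{eq.aads_fg_schouten}, which gives $-( n - 2 ) \gb{2} = \Rcm - \frac{ 1 }{ 2 ( n - 1 ) } \Rsm \cdot \gm$; since the inequality is tested only along $\gm$-null directions $\mf{Z}$, the term proportional to $\gm$ is annihilated, and Definition \ref{def.carleman_gncc} becomes equivalent, for vacuum segments, to the GNCC of \cite{ChatzikaleasShao}, with the same $\eta \in C^4 ( \bar{\mi{D}} )$ and constant $c > 0$. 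In particular, the auxiliary quantities $f = \rho \eta^{-1}$ and $\Omega_{ f_\star }$ of Definition \ref{def.carleman_f} coincide with those there, and the admissible ranges \eqref{eq.carleman_ass_kappa}--\eqref{eq.carleman_ass_params} for $\kappa$, $f_\star$, $\lambda$, $p$, together with the constants $\mc{C}_0$, $\mc{C}$, $\mc{C}_b$, are precisely those of \cite[Theorem 5.11]{ChatzikaleasShao} after the harmless substitution $\lambda \mapsto \lambda / 2$ recorded in the footnote.

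For context I would also indicate why the underlying estimate holds. The GNCC forces the level sets of $f = \rho / \eta$ to foliate a neighbourhood $\Omega_{ f_\star }$ of $\mi{D}$ by pseudoconvex hypersurfaces, with the pseudoconvexity degenerating as $\rho \searrow 0$. Conjugating $\Boxm + \sigma$ by $e^{ -\lambda p^{-1} f^p }$ and integrating by parts yields a bulk quadratic form that is positive modulo this degeneration, which is absorbed into the decaying weights $\rho^4$ and $f^{ 2 p }$ on the right-hand side of \eqref{eq.carleman}; the curvature terms produced by commuting covariant derivatives on rank-$(k,l)$ vertical tensor fields are controlled by choosing $\kappa$ large, as encoded in \eqref{eq.carleman_ass_kappa}, and the boundary contributions at $\rho = \rho_\star$ persist in the $\limsup$ term. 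The main obstacle---both in \cite{ChatzikaleasShao} and in any self-contained reproof---is exactly the zero-pseudoconvexity of the conformal boundary: it is what rules out the classical Carleman weights and dictates the nonstandard choice $f = \rho / \eta$ and the presence of the degenerate weights, and handling it requires the delicate absorption arguments carried out in \cite{ChatzikaleasShao}.
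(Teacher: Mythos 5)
Your proposal matches the paper's treatment: the result is not reproved here but imported directly from \cite[Theorem 5.11]{ChatzikaleasShao}, with exactly the reductions you describe---vacuum segments are strongly FG-aAdS by Remark \ref{rmk.aads_strong}, the vacuum identity \eqref{eq.aads_fg_schouten} identifies $-(n-2)\gb{2}$ with $\Rcm$ along null directions so the two GNCC formulations agree, and the $\lambda \mapsto \lambda/2$ substitution is recorded in the footnote. Your closing sketch of the underlying pseudoconvexity argument is consistent with the cited work but plays no logical role, just as in the paper.
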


\begin{remark}
We note that Theorem \ref{thm.carleman} only considers vacuum FG-aAdS segments, whereas the more general \cite[Theorem 5.11]{ChatzikaleasShao} also allows for some non-vacuum FG-aAdS segments (under a more general GNCC).
Moreover, \cite[Theorem 5.11]{ChatzikaleasShao} allows for an additional first-order term in the wave equation that is vanishing at a slower ``critical" rate toward the conformal boundary.\footnote{See the quantity $X$ in \cite[Theorem 5.11]{ChatzikaleasShao}; however, we will not need this extra generality here.}
\end{remark}

\begin{remark}
If $k = l = 0$ (i.e.\ $\ms{\Phi}$ is scalar), then Theorem \ref{thm.carleman} holds with $\mc{C}_0 = 0$; see \cite{ChatzikaleasShao}.
\end{remark}

\subsection{The Transport Carleman Estimate}

Next, we prove a simple Carleman estimate for transport equations, in the same setting and with the same weights as in Theorem \ref{thm.carleman}.
 
\begin{proposition} \label{thm.carleman_transport}
Let $( \mi{M}, g )$ be a vacuum FG-aAdS segment.
In addition:
\begin{itemize}
\item Let $\mf{h}$ be a Riemannian metric on $\mi{I}$, and $\mi{D} \subset \mi{I}$ be open with compact closure.

\item Assume $( \mi{D}, \gm )$ satisfies the GNCC, with $\eta \in C^4 ( \bar{\mi{D}} )$ as in \eqref{eq.carleman_gncc} and $\mf{h}$ as above.
\end{itemize}
Then, for any $s \geq 0$, $\kappa \in \R$, and $\lambda, f_\star, p > 0$ satisfying
\begin{equation}
\label{eq.carleman_transport_ass} 2 \kappa \geq \max ( n - 2, s - 3 ) \text{,} \qquad 0 < 2 p < 1 \text{,} \qquad 0 < f_\ast \ll_{ \gv, \mi{D} } 1 \text{,}
\end{equation}
there exist $\mc{C}', \mc{C}_b' > 0$ (depending on $\gv$, $\mf{h}$, $\mi{D}$) such that for every vertical tensor field $\Psi$ on $\mi{M}$,
\begin{align}
\label{eq.carleman_transport} &\int_{ \Omega_{ f_\star } } e^{ -\lambda p^{-1} f^p } f^{ n - 2 - p - 2 \kappa } \rho^{s+2} | \mi{L}_\rho \ms{\Psi} |_{ \mf{h} }^2 \, d \mu_g + \mc{C}_b' \lambda \limsup_{ \rho_\star \searrow 0 } \int_{ \Omega_{ f_\star } \cap \{ \rho = \rho_\ast \} } \rho^s | \rho^{ - \kappa - 1 } \ms{\Psi} |_{ \mf{h} }^2 \, d \mu_{ \gv } \\
\notag &\quad \geq \mc{C} \lambda \int_{ \Omega_{ f_\star } } e^{ -\lambda p^{-1} f^p } f^{ n - 2 - 2 \kappa } \rho^s | \ms{\Psi} |_{ \mf{h} }^2 \, d \mu_g \text{.}
\end{align}
\end{proposition}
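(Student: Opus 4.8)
The plan is first to reduce the tensorial inequality to a scalar one. Since $\mf{h}$ is $\rho$-independent, for any vertical tensor field $\ms{\Psi}$ we have $\langle \mi{L}_\rho \ms{\Psi}, \ms{\Psi} \rangle_{\mf{h}} = \tfrac12 \partial_\rho ( |\ms{\Psi}|_{\mf{h}}^2 )$, so the nonnegative function $v := |\ms{\Psi}|_{\mf{h}}$ satisfies $|\partial_\rho v| \leq |\mi{L}_\rho \ms{\Psi}|_{\mf{h}}$ (pointwise where $v > 0$, and everywhere by approximating $|\ms{\Psi}|_{\mf{h}}$ by $( |\ms{\Psi}|_{\mf{h}}^2 + \epsilon )^{1/2}$ and letting $\epsilon \searrow 0$). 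Because the left-hand side of \eqref{eq.carleman_transport} is nondecreasing in $|\mi{L}_\rho \ms{\Psi}|_{\mf{h}}$ while the right-hand side and the boundary term depend only on $|\ms{\Psi}|_{\mf{h}}$, it suffices to prove \eqref{eq.carleman_transport} with $\ms{\Psi}$ and $|\mi{L}_\rho \ms{\Psi}|_{\mf{h}}$ replaced by $v$ and $\partial_\rho v$. Next, using $d\mu_g = \rho^{-n-1}\, d\rho \, d\mu_{\gv}$, the fact that all weights are products of powers of $\rho$ and $\eta$ times $e^{-\lambda p^{-1} f^p}$ with $f = \rho/\eta$, and the identity $\Omega_{f_\star} \cap ( (0,\rho_0] \times \{x\} ) = ( 0, f_\star \eta(x) ) \times \{x\}$ (valid since $f_\star \ll_{\gv, \mi{D}} 1$), Fubini's theorem reduces the claim to a one-variable weighted inequality for $v = v(\rho)$ on $( 0, f_\star \eta(x) )$, at each fixed $x \in \mi{D}$, which is then integrated over $\mi{D}$; the nonnegative powers of $\eta$ arising in this last step are absorbed into $\mc{C}'$, $\mc{C}_b'$.

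The one-variable inequality would then be proved by integrating the derivative $\partial_\rho [ e^{-\lambda p^{-1} f^p} \beta\, v^2 ]$ over $( 0, f_\star \eta )$, where the auxiliary weight $\beta$ is taken to be a fixed power of $\rho$ (times a power of $\eta$) chosen so that differentiating the exponential factor, via $\partial_\rho e^{-\lambda p^{-1} f^p} = - \lambda f^{p-1} \eta^{-1} e^{-\lambda p^{-1} f^p}$, produces a bulk term comparable to $\lambda$ times the right-hand density of \eqref{eq.carleman_transport}. The resulting identity writes $\lambda \int (\text{right density})\, v^2$ as a boundary contribution plus $\int e^{-\lambda p^{-1} f^p} ( \partial_\rho \beta ) v^2 + 2 \int e^{-\lambda p^{-1} f^p} \beta\, v\, \partial_\rho v$; the cross term is distributed, by Cauchy--Schwarz, between a small multiple of the main term and the $\int (\text{left density}) (\partial_\rho v)^2$ integral, and the $\partial_\rho \beta$ term is handled the same way. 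The hypotheses $0 < 2p < 1$ and $2\kappa \geq \max(n-2, s-3)$, together with $f_\star \ll 1$, are precisely what make these absorptions consistent. The boundary term at $\rho = f_\star \eta$ (i.e.\ on $\{ f = f_\star \}$) carries a favorable sign and is discarded, whereas the limit as $\rho \searrow 0$ reproduces exactly the $\limsup$-boundary term in \eqref{eq.carleman_transport}.

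The delicate point---and the reason for the loss factors $\rho^{2}$ ($\rho^{s+2}$ versus $\rho^s$) and $f^{-p}$ ($f^{n-2-p-2\kappa}$ versus $f^{n-2-2\kappa}$) on the left---is the region near the conformal boundary, where $e^{-\lambda p^{-1} f^p} \to 1$, so the weight supplies no decay and the absorptions above degenerate once $f \lesssim \lambda^{-1/p}$. On that region I would argue directly instead: since $\lambda \rho^p \lesssim \eta^p \lesssim 1$ there, one has $\lambda \cdot (\text{right density}) \lesssim \rho^{-2} (\text{left density})$, and then the elementary Hardy-type bound $\int (\text{left density})\, \rho^{-2} v^2 \lesssim \int (\text{left density}) (\partial_\rho v)^2 + (\text{boundary term at } \rho \searrow 0)$ closes the estimate; gluing with the multiplier estimate on $\{ f \gtrsim \lambda^{-1/p} \}$, the interface contribution being controlled by monotonicity of the weights, completes the argument---equivalently, one may build a single multiplier $\beta$ interpolating between the two regimes. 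I expect this near-boundary bookkeeping, rather than any isolated hard inequality, to be the only real obstacle, which is consistent with the proof being otherwise routine.
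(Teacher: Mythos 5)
Your global setup (reduction to $v := |\ms{\Psi}|_{\mf{h}}$ using the $\rho$-independence of $\mf{h}$, fibre-wise integration in $\rho$ over $(0, f_\star\eta(x))$, discarding the good-signed boundary term at $\{f = f_\star\}$, keeping the $\rho\searrow 0$ term) matches the paper. But there is a genuine gap in how you generate the coercive $\lambda$-term. You propose to get it from differentiating the exponential weight, which forces $\beta \sim f^{n-2-p-2\kappa}\rho^{s-n}$ (the extra $f^{-p}$ is needed to turn $\lambda f^{p}$ into a clean $\lambda$); as you yourself note, the Cauchy--Schwarz absorption of the cross term then only works where $\lambda f^{p}\gtrsim 1$, so you must patch the region $\{f\lesssim\lambda^{-1/p}\}$ separately. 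That patch, as described, does not prove the stated estimate: the Hardy-type integration by parts with the \emph{left} density $f^{n-2-p-2\kappa}\rho^{s+2}$ produces a $\rho\searrow 0$ boundary term scaling like $\rho^{s-p-2\kappa-2}|\ms{\Psi}|_{\mf h}^2$, i.e.\ carrying an extra $\rho^{-p}$ compared with the boundary term $\lambda\rho^{s-2\kappa-2}|\ms{\Psi}|_{\mf h}^2$ allowed in \eqref{eq.carleman_transport}, so it is not controlled by the statement's boundary term. Moreover, the positive interface contribution at $\{\lambda f^{p}\approx 1\}$ coming from your main-region multiplier identity is a full-strength term $\lambda f^{n-2-2\kappa}\rho^{s-n}v^2$ that must be cancelled against the Hardy region, and ``monotonicity of the weights'' does not do this; it requires a genuine matching argument (and, with your weights, an additional rebalancing to fix the $\rho^{-p}$ mismatch).

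The paper's proof shows none of this is necessary, because the coercivity is \emph{not} taken from the exponential. One differentiates the single quantity $e^{-\lambda p^{-1} f^{p}} f^{\,n-2-2\kappa}\rho^{\,s-n}|\ms{\Psi}|_{\mf h}^2$ (no $f^{-p}$ in the multiplier): the power weights contribute the coefficient $2\kappa+2-s$, while the exponential contributes $\lambda f^{p}$, and the \emph{entire} $\lambda f^{p}$ term is spent on Cauchy--Schwarz to absorb the cross term, leaving a gradient term with factor $\lambda^{-1} f^{-p}$. Integrating, dropping the favorable term at $\{f=f_\star\}$, and then multiplying the whole inequality by $\lambda$ yields exactly the stated weights in one stroke: a unit-constant gradient term $f^{\,n-2-p-2\kappa}\rho^{\,s+2}|\mi{L}_\rho\ms{\Psi}|_{\mf h}^2$, a clean $\lambda$-coercive term (from the $2\kappa+2-s$ coefficient, which is where the condition on $\kappa$ versus $s$ enters), and the factor $\lambda$ on the $\rho\searrow 0$ boundary term, uniformly on all of $\Omega_{f_\star}$ with no regime splitting, no Hardy inequality, and no interface terms. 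If you reorganize your multiplier computation this way, your argument collapses to the paper's; as written, the near-boundary bookkeeping you flagged as the ``only real obstacle'' is in fact not closed.
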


\begin{proof}
Using that both $\mf{h}$ and $\eta$ are $\rho$-independent, and recalling \eqref{eq.carleman_f}, we obtain
\begin{align}
\label{eql.carleman_transport_a} &\mi{L}_\rho \left( e^{ - \frac{ \lambda f^p }{p} } f^{ n - 2 - 2 \kappa } \rho^{ s - n } | \ms{\Psi} |_{ \mf{h} }^2 \right) + ( 2 \kappa + 2 - s + \lambda f^p ) e^{ - \frac{ \lambda f^p }{p} } f^{ n - 2 - 2 \kappa } \rho^{ s - n - 1 } | \ms{\Psi} |_{ \mf{h} }^2 \\
\notag &\quad \leq e^{ - \frac{ \lambda f^p }{p} } f^{ n - 2 - 2 \kappa } \rho^{ s - n } | \ms{\Psi} |_{ \mf{h} } | \mi{L}_\rho \ms{\Psi} |_{ \mf{h} } \text{.}
\end{align}
Applying the Cauchy-Schwarz inequality to the right-hand side of the above and rearranging yields
\begin{align}
\label{eql.carleman_transport_b} ( 2 \kappa + 2 - s ) e^{ - \frac{ \lambda f^p }{p} } f^{ n - 2 - 2 \kappa } \rho^{ s - n - 1 } | \ms{\Psi} |_{ \mf{h} }^2 &\leq - \mi{L}_\rho \left( e^{ - \frac{ \lambda f^p }{p} } f^{ n - 2 - p - 2 \kappa } \rho^{ s - n } | \ms{\Psi} |_{ \mf{h} }^2 \right) \\
\notag &\qquad + \frac{1}{4} \lambda^{-1} e^{ - \frac{ \lambda f^p }{p} } f^{ n - 2 - p - 2 \kappa } \rho^{ s - n + 1 } | \mi{L}_\rho \ms{\Psi} |_{ \mf{h} }^2 \text{.}
\end{align}

We now integrate the above inequality over the region $\Omega_{ f_\star, \rho_\star } := \Omega_{ f_\star } \cap \{ \rho > \rho_\star \}$ for an arbitrary $0 < \rho_\star \ll f_\star$, first over level sets of $\rho$ with respect to the $\rho$-independent volume forms $d \mu_{ \mf{h} }$ induced by $\mf{h}$, and then over $\rho$.
This yields the estimate
\begin{align*}
&( 2 \kappa + 2 - s ) \lambda \int_{ \Omega_{ f_\star, \rho_\star } } e^{ - \frac{ \lambda f^p }{p} } f^{ n - 2 - 2 \kappa } \rho^{ s - n - 1 } | \ms{\Psi} |_{ \mf{h} }^2 \ d \mu_{ \mf{h} } d \rho \\
&\quad \leq \int_{ \Omega_{ f_\star, \rho_\star } } \left[ \frac{1}{4} e^{ - \frac{ \lambda f^p }{p} } f^{ n - 2 - p - 2 \kappa } \rho^{ s - n + 1 } | \mi{L}_\rho \ms{\Psi} |_{ \mf{h} }^2 - \lambda \mi{L}_\rho \left( e^{ - \frac{ \lambda f^p }{p} } f^{ n - 2 - 2 \kappa } \rho^{ s - n } | \ms{\Psi} |_{ \mf{h} }^2 \right) \right] d \mu_{ \mf{h} } d \rho \\
&\quad \leq \frac{1}{4} \int_{ \Omega_{ f_\star, \rho_\star } } e^{ - \frac{ \lambda f^p }{p} } f^{ n - 2 - p - 2 \kappa } \rho^{ s - n + 1 } | \mi{L}_\rho \ms{\Psi} |_{ \mf{h} }^2 \, d \mu_{ \mf{h} } d \rho + \lambda \int_{ \Omega_{ f_\star } \cap \{ \rho = \rho_\star \} } e^{ - \frac{ \lambda f^p }{p} } f^{ n - 2 - 2 \kappa } \rho^{ s - n } | \ms{\Psi} |_{ \mf{h} }^2 d \mu_{ \mf{h} } \text{,}
\end{align*}
where in the last step, we applied the fundamental theorem of calculus; note in particular that the ensuing boundary term on $\{ f = f_\star \}$ is negative and can hence be neglected.
The above now yields, for some constants $\mc{C}', \mc{C}_{ b, 0 }' > 0$ (depending on $\gv$, $\mf{h}$, $\mi{D}$),
\begin{align}
\label{eql.carleman_transport_0} \mc{C}' \lambda \int_{ \Omega_{ f_\star, \rho_\star } } e^{ - \frac{ \lambda f^p }{p} } f^{ n - 2 - 2 \kappa } \rho^s | \ms{\Psi} |_{ \mf{h} }^2 \ d \mu_g &\leq \frac{1}{4} \int_{ \Omega_{ f_\star, \rho_\star } } e^{ - \frac{ \lambda f^p }{p} } f^{ n - 2 - p - 2 \kappa } \rho^{ s + 2 } | \mi{L}_\rho \ms{\Psi} |_{ \mf{h} }^2 \, d \mu_g \\
\notag &\qquad + \mc{C}_{ b, 0 }' \lambda \int_{ \Omega_{ f_\star } \cap \{ \rho = \rho_\star \} } e^{ - \frac{ \lambda f^p }{p} } f^{ n - 2 - 2 \kappa } \rho^{ s - n } | \ms{\Psi} |_{ \mf{h} }^2 d \mu_{ \gv } \text{,}
\end{align}
since $d \mu_{ \mf{h} }$ and $d \mu_{ \gv }$ are comparable due to the compactness of $\bar{\mi{D}}$, and since \eqref{eq.aads_metric} implies
\[
d \mu_g = \rho^{ -n - 1 } \, d \mu_{ \gv } d \rho \text{.}
\]

Finally, letting $\rho_\star \searrow 0$ in \eqref{eql.carleman_transport_0} yields the desired \eqref{eq.carleman_transport}, since
\[
e^{ - \lambda p^{-1} f^p } \leq 1 \text{,} \qquad f^{ n - 2 - 2 \kappa } \rho^{ s - n } | \ms{\Psi} |_{ \mf{h} }^2 = \eta^{ 2 \kappa - ( n - 2 ) } \, \rho^s | \rho^{ - \kappa - 1 } \ms{\Psi} |_{ \mf{h} }^2 \text{,}
\]
and since the exponent $2 \kappa - ( n - 2 ) \geq 0$ by \eqref{eq.carleman_transport_ass}.
\end{proof}

\begin{remark}
Note that the GNCC is not required for Proposition \ref{thm.carleman_transport} and its proof.
However, it is convenient to include this in the statement Proposition \ref{thm.carleman_transport}, as the transport Carleman estimate will be applied concurrently with and in the same setting as the wave Carleman estimate.
\end{remark}

\section{Unique Continuation} \label{sec.proof}

In this section, we turn toward establishing unique continuation for the Einstein-vacuum equations.
In particular, we prove the following theorem, which can be seen as a precise statement of most of Theorem \ref{theo:adscft}---the special case in which the pair of holographic data are equal.\footnote{The remainder of Theorem \ref{theo:adscft}, which deals with gauge covariance issues, is treated in Section \ref{sec.app_gauge} below.}

\begin{theorem} \label{thm.adscft}
Let $n > 2$, and let $( \mi{M}, g )$, $( \mi{M}, \check{g} )$ be vacuum FG-aAdS segments (on a common aAdS region $\mi{M} := ( 0, \rho_0 ] \times \mi{I}$),\footnote{This is the same setup as in Section \ref{sec.sys_diff}; the reader is referred there for further notational details.} with associated holographic data $( \mi{I}, \gb{0}, \gb{n} )$, $( \mi{I}, \gbc{0}, \gbc{n} )$ (respectively).
In addition, let $\mi{D} \subset \mi{I}$ be open with compact closure, and assume:
\begin{itemize}
\item $( \mi{M}, g )$ and $( \mi{M}, \check{g} )$ are regular to some large enough order $M_0$ (depending on $\gv$, $\check{\gv}$, $\mi{D}$).

\item The holographic data coincide on $\mi{D}$:
\begin{equation}
\label{eq.adscft_data} ( \gb{0}, \gb{n} ) |_{ \mi{D} } = ( \gbc{0}, \gbc{n} ) |_{ \mi{D} } \text{.}
\end{equation}

\item $( \mi{D}, \gb{0} )$ (or equivalently, $( \mi{D}, \gbc{0} )$) satisfies the GNCC.
\end{itemize}
Then, $g = \check{g}$ on a neighbourhood of $\{ 0 \} \times \mi{D}$ (viewed as part of the conformal boundary)---more specifically, there exists some sufficiently small $f_\star > 0$ such that $g = \check{g}$ on the region $\Omega_{ f_\star }$.\footnote{See Definition \ref{def.carleman_f} for the definition of $\Omega_{ f_\star }$.}
\end{theorem}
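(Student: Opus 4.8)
The plan is to reduce the statement to the vanishing of a fixed collection of difference quantities on a small near-boundary region, and then to establish that vanishing by a unique continuation argument built on the Carleman estimates of Section~\ref{sec.carleman}. Since both $g$ and $\check g$ are given in Fefferman--Graham gauge \eqref{metricsinsameform} with a \emph{common} radial coordinate $\rho$, the conclusion $g = \check g$ on $\Omega_{ f_\star }$ is equivalent to $\gv - \check{\gv} = 0$ there. I would therefore take as unknowns the seven vertical tensor fields
\[
\gv - \check{\gv}, \qquad \ms{Q}, \qquad \Lv - \check{\Lv}, \qquad \ms{B}, \qquad \ms{W}^\star, \qquad \ms{W}^1, \qquad \ms{W}^2
\]
from Definition~\ref{def.sys_auxiliary}, which by Propositions~\ref{thm.sys_transport} and~\ref{thm.sys_wave} satisfy a closed system consisting of the transport equations \eqref{eq.sys_transport}--\eqref{eq.sys_transport_deriv} and the wave equations \eqref{eq.sys_wave}. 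The entire proof then amounts to showing that all seven fields vanish on $\Omega_{ f_\star }$ for $f_\star$ sufficiently small.

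The first step is to establish infinite-order vanishing of these unknowns, together with their first $\Dv$-derivatives, at the conformal boundary over $\overline{\mi{D}}$. From the hypothesis \eqref{eq.adscft_data} (which, by continuity of the Fefferman--Graham coefficients, holds on $\overline{\mi{D}}$) and the partial expansion results of \cite{shao:aads_fg} recalled in Theorem~\ref{thm.aads_fg} and Corollary~\ref{thm.aads_fg_exp}, one deduces $\gb{k} = \gbc{k}$ for all $0 \le k < n$ and $\gb{\star} = \gbc{\star}$ on $\overline{\mi{D}}$; hence $\gv - \check{\gv}$, and in turn $\ms{Q}$, $\Lv - \check{\Lv}$, $\ms{B}$ and the curvature differences, all vanish at order $\geq n$ in $\rho$ over $\overline{\mi{D}}$. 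One then bootstraps this using the transport/Bianchi system \eqref{eq.sys_vanish}: each equation expresses an $\mi{L}_\rho$-derivative of a difference quantity in terms of quantities already known to vanish to one higher order in $\rho$, so an induction upgrades the vanishing indefinitely (this is Proposition~\ref{thm.proof_vanish}). Since $\eta = 0$ on $\partial \mi{D}$, one has $\overline{\Omega_{ f_\star }} \cap \{ \rho = 0 \} = \overline{\mi{D}}$ for $f_\star$ small, so this vanishing both annihilates every boundary term in the Carleman estimates and guarantees the finiteness of the bulk integrals that will appear over the cutoff collar.

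The second step is the Carleman bootstrap. Fix a cutoff $\chi = \chi(f)$ with $\chi \equiv 1$ on $\{ f < f_i \}$ and $\chi \equiv 0$ on $\{ f \geq f_e \}$, where $0 < f_i < f_e < f_\star$, so that $\chi$ is non-constant only on the collar $\Omega_e := \{ f_i \le f < f_e \}$. Apply the wave Carleman estimate of Theorem~\ref{thm.carleman} to $\chi \ms{W}^\star$, $\chi \ms{W}^1$, $\chi \ms{W}^2$ with $\sigma$ equal to $0$, $n-1$, $2(n-2)$ respectively and with $\kappa, \lambda, p$ as in \eqref{eq.carleman_ass_kappa}--\eqref{eq.carleman_ass_params}, using \eqref{eq.sys_wave} to substitute for $( \Boxm + \sigma )( \chi \ms{W} )$ (the commutator $[ \Boxm, \chi ] \ms{W}$ being supported on $\Omega_e$); this produces a lower bound for $\lambda \int_{ \{ f < f_i \} } w_\lambda(f) ( \rho^{2p} | \ms{W} |^2 + \rho^4 | \Dv \ms{W} |^2 )$, where $w_\lambda(f) := e^{ -\lambda p^{-1} f^p } f^{ n - 2 - 2 \kappa }$. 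Apply the transport Carleman estimate of Proposition~\ref{thm.carleman_transport}, with suitable weight exponents $s$ subject to \eqref{eq.carleman_transport_ass}, to $\chi$ times $\gv - \check{\gv}$, $\ms{Q}$, $\rho^{-1}( \Lv - \check{\Lv} )$, $\ms{B}$ and to their $\Dv$-differentiated versions governed by \eqref{eq.sys_transport_deriv}, to obtain the analogous lower bounds for $\lambda \int_{ \{ f < f_i \} } w_\lambda(f) ( | \ms{U} |^2 + \rho^3 | \Dv \ms{U} |^2 )$. The crucial structural input, already visible in \eqref{eq.sys_transport}--\eqref{eq.sys_wave}, is that every $\mc{O}(\cdot)$-coefficient on the right-hand sides carries at least one extra power of $\rho$ relative to the matching left-hand weight, while the non-decaying coupling terms $\rho^{-1} \ms{W}^2$ and $\sch{ \ms{W}^1 }$ feeding the transport equations are, after the $\rho^{s+2}$-weight of the transport estimate, controlled by the $\rho^{2p} | \ms{W} |^2$ already present on the wave side since $0 < 2p < 1$. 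Summing the estimates and choosing first $\kappa$ large, then $\lambda$ large, then $f_\star$ small, one absorbs all the $\{ 0 < f < f_i \}$-integrals on the right into the left, leaving
\[
\lambda \int_{ \{ f < f_i \} } w_\lambda(f) \Big( \sum_{ \ms{V} } \rho^{2p} | \ms{V} |^2 + \sum_{ \ms{U} } | \ms{U} |^2 \Big) \, d \mu_g \lesssim \int_{ \Omega_e } w_\lambda(f) \, ( \cdots ) \, d \mu_g \text{,}
\]
with the right-hand side finite by the vanishing of step one. Finally, fixing any $f_0 \in ( 0, f_i )$ and using that $f^p$ is increasing — so $w_\lambda$ is bounded below on $\{ f < f_0 \}$ by a multiple of $e^{ -\lambda p^{-1} f_0^p }$ and above on $\Omega_e$ by a multiple of $e^{ -\lambda p^{-1} f_i^p }$ — one divides through and lets $\lambda \to \infty$; the exponent $f_0^p - f_i^p < 0$ kills the right-hand side, forcing all seven unknowns to vanish on $\{ f < f_0 \}$. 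Thus $\gv = \check{\gv}$, i.e.\ $g = \check g$, on $\{ f < f_0 \}$; relabelling $f_0$ as $f_\star$ gives the theorem.

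The main obstacle lies not in the argument above but in the construction of the system in Section~\ref{sec.sys}: the naive difference of the curvature wave equations produces the source $( \Boxm - \check{\Boxm} ) \check{\wv}^2$, which carries $\Dv^2( \gv - \check{\gv} )$, and neither the curvature wave equations nor the metric transport equations can control two vertical derivatives of the metric difference — so the renormalizations defining $\ms{Q}$, $\ms{B}$ and $\ms{W}^\star, \ms{W}^1, \ms{W}^2$ are precisely what removes this obstruction and closes the system without a loss of derivatives. Within the present proof, the two genuinely delicate points are the infinite bootstrap for the boundary vanishing in step one, and the bookkeeping of the $\rho$- and $f$-weights in the absorption of step two, in particular the order in which $\kappa$, $\lambda$, and $f_\star$ must be chosen so that every error integral over $\{ 0 < f < f_i \}$ is strictly subordinate to the left-hand side.
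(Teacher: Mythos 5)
Your proposal is correct and follows essentially the same route as the paper: higher-order vanishing at the boundary from the matching Fefferman--Graham data plus the bootstrap of Proposition \ref{thm.proof_vanish} and Corollary \ref{cor:vanish}, followed by applying the wave Carleman estimate to the cut-off $\ms{W}^\star, \ms{W}^1, \ms{W}^2$ and the transport Carleman estimate to $\gv-\check{\gv}, \ms{Q}, \rho^{-1}(\Lv-\check{\Lv}), \ms{B}$ (and their $\Dv$-derivatives), absorbing the $\Omega_i$-errors via the extra $\rho$-weights and letting $\lambda\to\infty$. The only imprecision is the phrase ``infinite-order vanishing'': since the segments are only regular to finite order $M_0$, the bootstrap yields vanishing to order roughly $M_0$ (which is exactly what Proposition \ref{thm.proof_vanish} states and is all that is needed, as $\kappa \leq M_0-5$), not vanishing to all orders.
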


\begin{remark}
Note that Theorem \ref{thm.adscft} offers a more precise conclusion than Theorem \ref{theo:adscft}---in the special case \eqref{eq.adscft_data}, the isometry in Theorem \ref{theo:adscft} is simply the identity map.
\end{remark}

The proof of Theorem \ref{thm.adscft} is given in the remainder of this section.
Throughout, we assume the hypotheses of Theorem \ref{thm.adscft} hold.
Furthermore, we adopt the notational conventions of Section \ref{sec.sys_diff} regarding quantities defined with respect to $g$ and $\check{g}$.

\subsection{Deducing Higher-Order Vanishing} \label{sec:addvanish}

For convenience, we can assume, without loss of generality, that $M_0 - n$ is an even natural number.
The first step is to derive a sufficiently high order of vanishing for $\gv - \check{\gv}$, $\Lv - \check{\Lv}$, $\wv^2 - \check{\wv}^2$, $\wv^1 - \check{\wv}^1$, and $\wv^0 - \check{\wv}^0$.
This will ensure that we can apply the Carleman estimates to (variants of) these quantities as needed later in the proof.

The key is to use the equations from Proposition \ref{thm.sys_vanish} satisfied by these quantities to exchange regularity in the vertical directions for higher orders of vanishing.
More specifically, the following proposition is a quantitative statement of the fact that if the our spacetime is regular to high enough order, then we can achieve a corresponding order of vanishing for the above quantities as $\rho \searrow 0$:

\begin{proposition} \label{thm.proof_vanish}
Consider the quantities $\ms{g} - \check{\ms{g}}$, $\Lv - \check{\Lv}$, $\wv^2 - \check{\wv}^2$, $\wv^1 - \check{\wv}^1$, $\wv^0 - \check{\wv}^0$.
Then:
\begin{itemize}
\item There exist vertical tensor fields $\ms{r}_{ \gv }$, $\ms{r}_{ \Lv }$ such that
\begin{align}
\label{eq.proof_vanish_BP} \gv - \check{\gv} = \rho^{ M_0 - 2 } \, \ms{r}_{ \gv } \text{,} &\qquad \ms{r}_{ \gv } \rightarrow^2 0 \text{,} \\
\notag \Lv - \check{\Lv} = \rho^{ M_0 - 3 } \, \ms{r}_{ \Lv } \text{,} &\qquad \ms{r}_{ \Lv } \rightarrow^2 0 \text{.}
\end{align}

\item There exist vertical tensor fields $\ms{r}_{ \wv^2 }$, $\ms{r}_{ \wv^1 }$, $\ms{r}_{ \wv^0 }$ such that
\begin{align}
\label{eq.proof_vanish_A} \wv^2 - \check{\wv}^2 = \rho^{ M_0 - 4 } \, \ms{r}_{ \wv^2 } \text{,} &\qquad \ms{r}_{ \wv^2 } \rightarrow^2 0 \text{,} \\
\notag \wv^1 - \check{\wv}^1 = \rho^{ M_0 - 3 } \, \ms{r}_{ \wv^1 } \text{,} &\qquad \ms{r}_{ \wv^1 } \rightarrow^1 0 \text{,} \\
\notag \wv^0 - \check{\wv}^0 = \rho^{ M_0 - 4 } \, \ms{r}_{ \wv^0 } \text{,} &\qquad \ms{r}_{ \wv^0 } \rightarrow^2 0 \text{.}
\end{align}
\end{itemize}
\end{proposition}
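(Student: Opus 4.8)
The argument proceeds in two stages: first extracting an initial order-$n$ vanishing for all five difference quantities from the matching of Fefferman--Graham data, then bootstrapping this to the orders $M_0 - 2$, $M_0 - 3$, $M_0 - 4$ by iterating the transport and Bianchi system of Proposition \ref{thm.sys_vanish}. For the first stage, recall from Theorem \ref{thm.aads_fg} that every coefficient $\gb{k}$ with $0 \le k < n$, as well as $\gb{\star}$ when $n$ is even, is determined by $\gm = \gb{0}$; hence \eqref{eq.adscft_data} forces each of these to agree with its checked counterpart on $\mi{D}$. Together with $\gb{n} = \gbc{n}$ on $\mi{D}$, the partial expansion \eqref{eq.aads_fg_exp} applied to both $\gv$ and $\check{\gv}$ yields, on $( 0, \rho_0 ] \times \mi{D}$,
\[
\gv - \check{\gv} = \rho^n ( \ms{r}_{ \gv } - \check{\ms{r}}_{ \gv } ) \text{,} \qquad \ms{r}_{ \gv } - \check{\ms{r}}_{ \gv } \rightarrow^{ M_0 - n } 0 \text{,}
\]
so $\gv - \check{\gv}$ vanishes to order $n$ in $C^{ M_0 - n }$. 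Differentiating the expansions in $\rho$ and invoking the structure equations \eqref{eq.aads_connection} then shows, in the same way, that $\Lv - \check{\Lv}$, $\wv^1 - \check{\wv}^1$, $\wv^2 - \check{\wv}^2$, and $\wv^0 - \check{\wv}^0$ vanish to orders $n - 1$, $n - 1$, $n - 2$, $n - 2$, in correspondingly lower vertical regularity.

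For the bootstrap I would run a finite induction in which, at stage $i$, the quantities $\gv - \check{\gv}$, $\Lv - \check{\Lv}$, $\wv^1 - \check{\wv}^1$, $\wv^2 - \check{\wv}^2$, $\wv^0 - \check{\wv}^0$ --- together with finitely many of their vertical covariant derivatives --- are known to vanish to orders increasing with $i$, in a vertical regularity decreasing with $i$. The mechanism for advancing the induction is that each equation in Proposition \ref{thm.sys_vanish} has the schematic form $\mi{L}_\rho ( \rho^a \ms{U} ) = \ms{N}$, where $a \in \{ 0, -1, 2 - n \}$ and $\ms{N}$ is a combination --- controlled by the notation of Definitions \ref{def.aads_schematic} and \ref{def.aads_O} --- of the difference quantities and at most one of their vertical derivatives. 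After substituting the inductive hypothesis, both $\ms{N}$ and $\rho^a \ms{U}$ vanish at $\rho = 0$ quantitatively enough that one may integrate from $\rho = 0$ with no integration constant --- this is where the genuine convergence $\ms{r} \rightarrow^m 0$ of the remainders in \eqref{eq.aads_fg_exp}, rather than a mere uniform bound, is used --- thereby upgrading the order of vanishing of $\ms{U}$. Commuting vertical derivatives through these equations by Proposition \ref{thm.aads_comm} produces the analogous equations for the derivatives at the cost of one vertical derivative per commutation. Running this around the closed loop of five quantities upgrades all of them; the number of passes is capped by the $C^{ M_0 - n }$ regularity available in \eqref{eq.aads_fg_exp}, and carrying out exactly this many passes lands one on \eqref{eq.proof_vanish_BP}--\eqref{eq.proof_vanish_A}, with the remainders in $C^2$ and, for the $\wv^1$-component (whose Bianchi equation carries the extra vertical derivative $\sch{ \Dv ( \wv^2 - \check{\wv}^2 ) }$), in $C^1$.

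The main obstacle is exactly the ledger of this iteration: one must check that a single pass through the coupled system of Proposition \ref{thm.sys_vanish} raises the orders of vanishing of all five quantities by the correct amounts while spending precisely the right number of vertical derivatives, so that the two counters --- powers of $\rho$ gained and orders of vertical regularity remaining --- reach the claimed endpoints simultaneously; one must likewise confirm that it is the genuine ``little-$o$'' convergence, not merely a uniform bound, that survives each integration. These are finite, essentially mechanical verifications, carried out by tracking the schematic error terms $\sch{ \cdot }$ and $\mi{O}_M ( \cdot )$ through the commutations and $\rho$-integrations, and they use no tools beyond those assembled in Sections \ref{sec.aads} and \ref{sec.sys}.
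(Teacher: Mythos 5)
Your proposal is correct and follows essentially the same route as the paper: initial order-$n$ (respectively $n-1$, $n-2$) vanishing is extracted from the matched Fefferman--Graham data via Theorem \ref{thm.aads_fg}, Corollary \ref{thm.aads_fg_exp} and the structure equations \eqref{eq.aads_connection}, and then the transport--Bianchi system \eqref{eq.sys_vanish} is integrated from $\rho = 0$ repeatedly, each pass trading two orders of vertical regularity for two powers of $\rho$, with $(M_0-n-2)/2$ passes landing exactly on \eqref{eq.proof_vanish_BP}--\eqref{eq.proof_vanish_A}. The only cosmetic difference is that the paper does not commute vertical derivatives through the equations; the needed vanishing of $\Dv$-terms on the right-hand sides comes for free from the $C^m$ convergence of the remainders, so your appeal to Proposition \ref{thm.aads_comm} is harmless but unnecessary.
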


\begin{proof}
First, note that since $\gb{0} = \gbc{0}$ and $\gb{n} = \gbc{n}$, then Theorem \ref{thm.aads_fg} implies
\begin{equation}
\label{eql.proof_vanish_fg} \gb{k} = \gbc{k} \text{,} \qquad \gb{\star} = \gbc{\star} \text{,} \qquad 0 \leq k \leq n \text{,}
\end{equation}
where the Fefferman-Graham coefficients $\gb{2}, \dots, \gb{\star}, \gb{n}$ and $\gbc{2}, \dots, \gbc{\star}, \gbc{n}$ are defined as in Theorem \ref{thm.aads_fg} and Corollary \ref{thm.aads_fg_exp}---with respect to $g$ and $\check{g}$, respectively.
The above and Corollary \ref{thm.aads_fg_exp} then imply that there are vertical tensor fields $\ms{r}_{ \gv }$, $\ms{r}_{ \Lv }$ satisfying
\begin{align}
\label{eql.proof_vanish_a} \gv - \check{\gv} = \rho^n \, \ms{r}_{ \gv } \text{,} &\qquad \ms{r}_{ \gv } \rightarrow^{ M_0 - n } 0 \text{,} \\
\notag \Lv - \check{\Lv} = \rho^{ n-1 } \, \ms{r}_{ \Lv } \text{,} &\qquad \ms{r}_{ \Lv } \rightarrow^{ M_0-n } 0 \text{.} 
\end{align}
Moreover, the equations \eqref{eq.aads_connection}, along with \eqref{eql.proof_vanish_a}, yield vertical tensor fields $\ms{r}_{ \wv^2 }$, $\ms{r}_{ \wv^1 }$, $\ms{r}_{ \wv^0 }$ with 
\begin{align}
\label{eql.proof_vanish_b} \wv^2 - \check{\wv}^2 = \rho^{n-2} \, \ms{r}_{ \wv^2 } \text{,} &\qquad \ms{r}_{ \wv^2 } \rightarrow^{ M_0-n } 0 \text{,} \\
\notag \wv^1 - \check{\wv}^1 = \rho^{ n-1 } \, \ms{r}_{ \wv^1 } \text{,} &\qquad \ms{r}_{ \wv^1 } \rightarrow^{ M_0-n-1 } 0 \text{,} \\
\notag \wv^0 - \check{\wv}^0 = \rho^{ n-2 } \, \ms{r}_{ \wv^0 } \text{,} &\qquad \ms{r}_{ \wv^0 } \rightarrow^{ M_0-n } 0 \text{.}
\end{align}

The idea now is to use the Bianchi and transport system \eqref{eq.sys_vanish} to inductively improve the initial orders of vanishing from \eqref{eql.proof_vanish_a} and \eqref{eql.proof_vanish_b}. 
Observe that by \eqref{eql.proof_vanish_a} and \eqref{eql.proof_vanish_b}, the quantities
\[
\rho^{-n+2} ( \wv^2 - \check{\wv}^2 ) \text{,} \qquad \rho^{-1} ( \wv^1 - \check{\wv}^1 ) \text{,} \qquad \wv^0 - \check{\wv}^0 \text{,} \qquad \rho^{-1} ( \Lv - \check{\Lv} ) \text{,} \qquad \gv - \check{\gv}
\]
all vanish in the boundary limit $\rho \searrow 0$.
Thus, we can now integrate all the equations in \eqref{eq.sys_vanish} successively from $\rho = 0$, without obtaining any boundary terms as $\rho \searrow 0$.

Integrating the third equation in \eqref{eq.sys_vanish} and using \eqref{eql.proof_vanish_a}--\eqref{eql.proof_vanish_b}, we deduce the existence of a (new) vertical tensor field $\ms{r}_{ \wv^2 }$ (for brevity, we use the same notation as before) such that
\begin{equation}
\label{eql.proof_vanish_1} \wv^2 - \check{\wv}^2 = \rho^n \, \ms{r}_{ \wv^2 } \text{,} \qquad \ms{r}_{ \wv^2 } \rightarrow^{ M_0 - n - 2 } 0 \text{.}
\end{equation}
Integrating next the fifth part of \eqref{eq.sys_vanish} and using \eqref{eql.proof_vanish_a}, \eqref{eql.proof_vanish_b}, and the improved asymptotics for $\wv^2 - \check{\wv}^2$ in \eqref{eql.proof_vanish_1} yields a vertical tensor field $\ms{r}_{ \wv^0 }$ such that
\begin{equation}
\label{eql.proof_vanish_2} \wv^0 - \check{\wv}^0 = \rho^n \, \ms{r}_{ \wv^0 } \text{,} \qquad \ms{r}_{ \wv^0 } \rightarrow^{ M_0 - n - 2 } 0 \text{.}
\end{equation}
The next quantity in the hierarchy is $\Lv - \check{\Lv}$; integrating the second identity in \eqref{eq.sys_vanish} and using \eqref{eql.proof_vanish_a} and \eqref{eql.proof_vanish_1}, we deduce the existence of a vertical tensor field $\ms{r}_{ \Lv }$ such that
\begin{equation}
\label{eql.proof_vanish_3} \Lv - \check{\Lv} = \rho^{n+1} \ms{r}_{ \Lv } \text{,} \qquad \ms{r}_{ \Lv } \rightarrow^{ M_0 - n - 2 } 0 \text{.}
\end{equation}
Then, integrating the first part of \eqref{eq.sys_vanish} and applying \eqref{eql.proof_vanish_3} yields an $\ms{r}_{ \gv }$ with
\begin{equation}
\label{eql.proof_vanish_4} \gv - \check{\gv} = \rho^{n+2} \, \ms{r}_{ \gv } \text{,} \qquad \ms{r}_{ \gv } \rightarrow^{ M_0 - n - 2 } 0 \text{.}
\end{equation}
Integrating finally the fourth identity of \eqref{eq.sys_vanish} and recalling \eqref{eql.proof_vanish_b}--\eqref{eql.proof_vanish_4} results in an $\ms{r}_{ \wv^1 }$ with
\begin{equation}
\label{eql.proof_vanish_5} \wv^1 - \check{\wv}^1 = \rho^{ n + 1 } \, \ms{r}_{ \wv^1 } \text{,} \qquad \ms{r}_{ \wv^1 } \rightarrow^{ M_0 - n - 3 } 0 \text{.}
\end{equation}

At this point, the vanishing of all quantities has been improved by two powers of $\rho$ at the cost of two derivatives of regularity, compared to the initial asymptotics \eqref{eql.proof_vanish_a}--\eqref{eql.proof_vanish_b}.
By iterating this process (now inserting the improved asymptotics into the right-hand sides of \eqref{eq.sys_vanish}), we can repeatedly trade two derivatives of regularity for two powers of $\rho$.
Since $M_0 - n$ is even by assumption, the proof is completed after iterating this process $( M_0 - n - 2 ) / 2$ times.
\end{proof}

\begin{corollary} \label{cor:vanish}
Let the renormalized quantities $\ms{Q}$, $\ms{B}$, $\ms{W}^2$, $\ms{W}^1$, $\ms{W}^\star$ be as in Definition \ref{def.sys_auxiliary}.
Then, there exist vertical tensor fields $\ms{r}_{ \ms{Q} }$, $\ms{r}_{ \ms{B} }$, $\ms{r}_{ \ms{W}^2 }$, $\ms{r}_{ \ms{W}^1 }$, $\ms{r}_{ \ms{W}^\star }$ such that
\begin{align}
\label{eq.proof_vanish_renorm} \ms{Q} = \rho^{ M_0 } \, \ms{r}_{ \ms{Q} } \text{,} &\qquad \ms{r}_{ \ms{Q} } \rightarrow^2 0 \text{,} \\
\notag \ms{B} = \rho^{ M_0 - 2 } \, \ms{r}_{ \ms{B} } \text{,} &\qquad \ms{r}_{ \ms{B} } \rightarrow^1 0 \text{,} \\
\notag \ms{W}^2 = \rho^{ M_0 - 4 } \, \ms{r}_{ \ms{W}^2 } \text{,} &\qquad \ms{r}_{ \ms{W}^2 } \rightarrow^2 0 \text{,} \\
\notag \ms{W}^1 = \rho^{ M_0 - 3 } \, \ms{r}_{ \ms{W}^1 } \text{,} &\qquad \ms{r}_{ \ms{W}^1 } \rightarrow^1 0 \text{,} \\
\notag \ms{W}^\star = \rho^{ M_0 - 4 } \, \ms{r}_{ \ms{W}^\star } \text{,} &\qquad \ms{r}_{ \ms{W}^\star } \rightarrow^2 0 \text{.}
\end{align}
\end{corollary}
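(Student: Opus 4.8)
The plan is to read the corollary off from Proposition~\ref{thm.proof_vanish} together with the defining relations in Definition~\ref{def.sys_auxiliary}, treating $\ms{Q}$ first — it is the only quantity of the list that is defined \emph{implicitly} — and then $\ms{B}$, $\ms{W}^2$, $\ms{W}^1$, $\ms{W}^\star$ algebraically. For $\ms{Q}$, recall from Proposition~\ref{thm.proof_vanish} that $\gv - \check{\gv} = \rho^{M_0-2}\,\ms{r}_{\gv}$ with $\ms{r}_{\gv} \rightarrow^2 0$, and from Proposition~\ref{thm.aads_O_regular} that $\Lv$ is $O(\rho)$ with remainder regular to high order; hence the transport equation for $\ms{Q}$ — in the schematic form given by the second line of \eqref{eq.sys_transport}, namely $\mi{L}_\rho \ms{Q} = \oo{M_0-2}{\rho; \gv - \check{\gv}} + \oo{M_0-2}{\rho; \ms{Q}}$ — takes the form $\mi{L}_\rho \ms{Q} = \rho^{M_0-1}\ms{e} + \oo{M_0-2}{\rho; \ms{Q}}$ for a vertical tensor field $\ms{e}$ with $\ms{e} \rightarrow^2 0$. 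Since $\ms{Q}$, $\Dv\ms{Q}$, $\Dv^2\ms{Q}$ all vanish in the limit $\rho \searrow 0$ (indeed $\ms{Q} = \oo{M_0-1}{\rho}$ by \eqref{eq.sys_O_QB}), one integrates this identity from $\rho = 0$ with no contribution at the boundary; as the self-interaction term $\oo{M_0-2}{\rho; \ms{Q}}$ carries a weight $\rho$, a Gr\"onwall estimate with bounded Gr\"onwall factor $\exp(O(\rho^2))$ yields $\ms{Q} = \rho^{M_0}\ms{r}_{\ms{Q}}$ with $\ms{r}_{\ms{Q}}$ locally bounded and $\ms{r}_{\ms{Q}} \rightarrow^0 0$. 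To upgrade this to $\ms{r}_{\ms{Q}} \rightarrow^2 0$, I would commute one and then two vertical derivatives through the transport equation — the once-commuted version is already recorded in the second line of \eqref{eq.sys_transport_deriv}, and a further commutation uses the identities \eqref{eq.aads_comm} of Proposition~\ref{thm.aads_comm} — and rerun the same integration-from-zero plus Gr\"onwall argument for $\Dv\ms{Q}$ and $\Dv^2\ms{Q}$, using that $\Dv$ annihilates $\rho$ so that $\Dv^{\leq 2}(\gv - \check{\gv}) = \rho^{M_0-2}\,\Dv^{\leq 2}\ms{r}_{\gv}$ with the right-hand side $\rightarrow^0 0$.

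With the bound for $\ms{Q}$ in hand, the remaining quantities are purely algebraic. For $\ms{B}$: by \eqref{Bdef}, $\ms{B}$ is a combination of $\Dv(\gv - \check{\gv})$ and $\Dv\ms{Q}$; using $\gv - \check{\gv} = \rho^{M_0-2}\ms{r}_{\gv}$, $\Dv\rho = 0$, and that $\ms{Q} = \rho^{M_0}\ms{r}_{\ms{Q}}$ is of strictly higher $\rho$-order, one obtains $\ms{B} = \rho^{M_0-2}\ms{r}_{\ms{B}}$ with $\ms{r}_{\ms{B}} = \sch{\Dv\ms{r}_{\gv}} + \rho^2\,\sch{\Dv\ms{r}_{\ms{Q}}}$ up to a vertical reshuffle, whence $\ms{r}_{\ms{B}} \rightarrow^1 0$ since $\ms{r}_{\gv}\rightarrow^2 0$ forces $\Dv\ms{r}_{\gv}\rightarrow^1 0$. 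For $\ms{W}^2$, $\ms{W}^1$, $\ms{W}^\star$: by \eqref{Wdef} each equals the corresponding curvature difference — $\wv^2 - \check{\wv}^2$, $\wv^1 - \check{\wv}^1$, or $\wv^\star - \check{\wv}^\star$ (the last of which, by \eqref{eq.aads_wstar}, is itself a combination of $\wv^0 - \check{\wv}^0$, $\wv^2 - \check{\wv}^2$, and $(\gv - \check{\gv})\,\check{\wv}^2$) — plus a correction of the schematic form $\gv^{-1}\,\check{\wv}^{\bullet}\,(\gv - \check{\gv} + \ms{Q})$. By Proposition~\ref{thm.proof_vanish}, $\wv^2 - \check{\wv}^2 = \rho^{M_0-4}\ms{r}_{\wv^2}$, $\wv^1 - \check{\wv}^1 = \rho^{M_0-3}\ms{r}_{\wv^1}$, $\wv^0 - \check{\wv}^0 = \rho^{M_0-4}\ms{r}_{\wv^0}$ with $\ms{r}_{\wv^2}, \ms{r}_{\wv^0}\rightarrow^2 0$ and $\ms{r}_{\wv^1}\rightarrow^1 0$; by Proposition~\ref{thm.aads_O_regular}, $\gv^{-1}$ and the $\check{\wv}^{\bullet}$ are locally bounded; and $\gv - \check{\gv} + \ms{Q} = O(\rho^{M_0-2})$ by the above. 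Since $M_0 - 2 > M_0 - 3 > M_0 - 4$, the correction terms (and the $(\gv - \check{\gv})\,\check{\wv}^2$ contribution inside $\wv^\star - \check{\wv}^\star$) are of strictly higher $\rho$-order than the respective leading terms, so $\ms{W}^2 = \rho^{M_0-4}\ms{r}_{\ms{W}^2}$, $\ms{W}^1 = \rho^{M_0-3}\ms{r}_{\ms{W}^1}$, $\ms{W}^\star = \rho^{M_0-4}\ms{r}_{\ms{W}^\star}$, with the convergence orders $\ms{r}_{\ms{W}^2}, \ms{r}_{\ms{W}^\star}\rightarrow^2 0$ and $\ms{r}_{\ms{W}^1}\rightarrow^1 0$ inherited from the curvature remainders (the higher-order corrections only improving matters).

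The only step requiring more than bookkeeping is the analysis of $\ms{Q}$: because $\ms{Q}$ appears on the right-hand side of its own transport equation, its order of vanishing cannot simply be read off but must be propagated via a Gr\"onwall argument — and this argument must be repeated after commuting one and then two vertical derivatives, with the commutator errors coming from \eqref{eq.aads_comm} tracked throughout — in order to secure the claimed $C^2$-convergence of the remainder $\ms{r}_{\ms{Q}}$. Once that is done, the statements for $\ms{B}$, $\ms{W}^2$, $\ms{W}^1$, $\ms{W}^\star$ follow by comparing powers of $\rho$ in their defining formulas.
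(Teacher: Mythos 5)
Your proposal is correct and follows essentially the same route as the paper: the paper's proof simply cites \eqref{eq.aads_wstar}, Definition \ref{def.sys_auxiliary}, and Proposition \ref{thm.proof_vanish}, i.e.\ it reads the asymptotics of $\ms{B}$, $\ms{W}^2$, $\ms{W}^1$, $\ms{W}^\star$ off the defining formulas exactly as you do, and your integration-from-$\rho=0$/Gr\"onwall treatment of the transport equation \eqref{Qdef} (with commuted derivatives for the $C^2$ remainder) is just a spelled-out version of the integral-representation argument the paper already uses for $\ms{Q}$ in the proof of \eqref{eq.sys_O_QB}. No gaps; you have merely made explicit the details the paper deems immediate.
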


\begin{proof}
This is an immediate consequence of \eqref{eq.aads_wstar}, Definition \ref{def.sys_auxiliary}, and Proposition \ref{thm.proof_vanish}.
\end{proof}

\subsection{Applying the Carleman Estimates} \label{sec.proof_carleman}

Fix any Riemannian metric $\mf{h}$ on $\mi{I}$, and fix additional constants $p, f_\star > 0$ (whose precise values will be determined later) such that
\begin{equation}
\label{eq.proof_lambda} 0 < 2 p < 1 \text{,} \qquad f_\star \ll_{ \ms{g}, \mi{D} } 1 \text{.}
\end{equation}
Furthermore, as long as $M_0$ is sufficiently large (depending on $\ms{g}$, $\mi{D}$), we can find a sufficiently large $\kappa \in \R$ (again, the precise value will be determined later) satisfying
\begin{equation}
\label{eq.proof_kappa} 1 \ll_{ n, \ms{g}, \mi{D} } \kappa \leq M_0 - 5 \text{.}
\end{equation}

\begin{remark}
In particular, $p, f_\star, \kappa$ are chosen so that we can apply the Carleman estimates \eqref{eq.carleman} and \eqref{eq.carleman_transport} as needed.
Also, although all the above parameters depend on $\mf{h}$, we can instead view this as a dependence on $\mi{D}$, as any choice of a Riemannian metric $\mf{h}$ on $\bar{\mi{D}}$ suffices.
\end{remark}

In addition, let $f$ be as in \eqref{eq.carleman_f}, where $\eta$ is given from the GNCC assumption.
We also let
\begin{equation}
\label{eq.proof_chi} \chi := \bar{\chi} \circ f \text{,}
\end{equation}
where $\bar{\chi}: \R \rightarrow [0,1]$ is a smooth cut-off function satisfying
\begin{equation}
\label{eq.proof_cutoff} \bar{\chi} (s) = \begin{cases} 1 & s \leq f_i \text{,} \\ 0 & s \geq f_e \text{,} \end{cases} \qquad 0 < f_i < f_e < f_\star \text{,}
\end{equation}
and we define the following associated subregions of $\Omega_{ f_\star }$ (see \eqref{eq.carleman_omega}):
\begin{align}
\label{eq.proof_omega} \Omega_i := \{ f < f_i \} \text{,} \qquad \Omega_e := \{ f_i \leq f < f_e \} \text{.}
\end{align}
Lastly, for convenience, we define the following collections of quantities:
\begin{equation}
\label{eq.proof_sets} \Xi^\star := \{ \ms{W}^\star, \ms{W}^1, \ms{W}^2 \} \text{,} \qquad \Upsilon^\star := \{ \gv - \check{\gv}, \ms{Q}, \Lv - \check{\Lv}, \ms{B} \} \text{.}
\end{equation}

We now apply Theorem \ref{thm.carleman} to $\chi \ms{V}$, for each $\ms{V} \in \Xi^\star$, on the region $\Omega_{ f_\star }$ (since $\chi \ms{V}$ vanishes near $\{ f = f_\star \}$).
Restricting the right-hand side of \eqref{eq.carleman} to $\Omega_i$, on which $\chi \equiv 1$, we then obtain
\begin{align}
\label{eq.proof_wave_0} &\lambda \int_{ \Omega_i } e^{ -\frac{ \lambda f^p }{p} } f^{ n - 2 - 2 \kappa } \sum_{ \ms{V} \in \Xi^\star } ( \rho^{2p} | \ms{V} |_{ \mf{h} }^2 + \rho^4 | \ms{D} \ms{V} |_{ \mf{h} }^2 ) \, d \mu_g \\
\notag &\quad \lesssim \int_{ \Omega_i \cup \Omega_e } e^{ -\frac{ \lambda f^p }{p} } f^{ n - 2 - p - 2 \kappa } \sum_{ \ms{V} \in \Xi^\star } | ( \Boxm + \sigma_{ \ms{V} } ) ( \chi \ms{V} ) |_{ \mf{h} }^2 \, d \mu_g \\
\notag &\quad \lesssim \int_{ \Omega_i \cup \Omega_e } e^{ -\frac{ \lambda f^p }{p} } f^{ n - 2 - 2 \kappa } \rho^{-p} \sum_{ \ms{V} \in \Xi^\star } ( | \mc{F}_{ \ms{V} } |_{ \mf{h} }^2 + | \mc{G}_{ \ms{V} } |_{ \mf{h} }^2 ) \, d \mu_g
\end{align}
for any $\lambda \gg_{ \ms{g}, \mi{D} } 1$, where $( \sigma_{ \ms{W}^2 }, \sigma_{ \ms{W}^1 }, \sigma_{ \ms{W}^\star } ) = ( 2n-4, n-1, 0 )$, and where
\begin{equation}
\label{eq.proof_wave_FG} \mc{F}_{ \ms{V} } := \chi ( \Boxm + \sigma_{ \ms{V} } ) \ms{V} \text{,} \qquad \mc{G}_{ \ms{V} } := ( \Boxm + \sigma_{ \ms{V} } ) ( \chi \ms{V} ) - \mc{F}_{ \ms{V} } \text{,} \qquad \ms{V} \in \Xi^\star \text{.}
\end{equation}
Note also in the last step of \eqref{eq.proof_wave_0}, we used that $\rho \lesssim f$ by definition.

Observe the boundary limit as $\rho_\star \searrow 0$ in \eqref{eq.carleman} vanishes due to Corollary \ref{cor:vanish} and \eqref{eq.proof_kappa}.
Moreover, note that since every term in $\mc{G}_{ \ms{V} }$ involves at least one derivative of $\chi$, then it is supported in $\Omega_e$.
Expanding $\mc{F}_{ \ms{V} }$ on $\Omega_i$ using Proposition \ref{thm.sys_wave}, then \eqref{eq.proof_wave_0} becomes
\begin{align}
\label{eq.proof_wave_1} &\lambda \int_{ \Omega_i } e^{ -\frac{ \lambda f^p }{p} } f^{ n - 2 - 2 \kappa } \sum_{ \ms{V} \in \Xi^\star } ( \rho^{2p} | \ms{V} |_{ \mf{h} }^2 + \rho^4 | \ms{D} \ms{V} |_{ \mf{h} }^2 ) \, d \mu_g \\
\notag &\quad \lesssim \int_{ \Omega_i } e^{ -\frac{ \lambda f^p }{p} } f^{ n - 2 - 2 \kappa } \left[ \sum_{ \ms{V} \in \Xi^\star } ( \rho^{ 4 - p } | \ms{V} |_{ \mf{h} }^2 + \rho^{ 6 - p } | \Dv \ms{V} |_{ \mf{h} }^2 ) + \sum_{ \ms{U} \in \Upsilon^\star } ( \rho^{ 2 - p } | \ms{U} |_{ \mf{h} }^2 + \rho^{ 4 - p } | \Dv \ms{U} |_{ \mf{h} }^2 ) \right] d \mu_g \\
\notag &\quad \qquad + \int_{ \Omega_e } e^{ -\frac{ \lambda f^p }{p} } f^{ n - 2 - 2 \kappa } \rho^{-p} \sum_{ \ms{V} \in \Xi^\star } ( | \mc{F}_{ \ms{V} } |_{ \mf{h} }^2 + | \mc{G}_{ \ms{V} } |_{ \mf{h} }^2 ) \, d \mu_g \text{.}
\end{align}

Next, we can similarly apply the transport Carleman estimate of Proposition \ref{thm.carleman_transport} to each $\ms{U} \in \Upsilon^\ast$ on $\Omega_i$.
In particular, applying \eqref{eq.carleman_transport} with $s = 2$ (for $\ms{U} = \Lv - \check{\Lv}$) and $s = 0$ (otherwise) yields
\begin{align}
\label{eq.proof_transport_0} &\lambda \int_{ \Omega_i } e^{ -\frac{ \lambda f^p }{p} } f^{ n - 2 - 2 \kappa } \sum_{ \ms{U} \in \Upsilon^\star } | \ms{U} |_{ \mf{h} }^2 \, d \mu_g \\
\notag &\quad \lesssim \int_{ \Omega_i } e^{ -\frac{ \lambda f^p }{p} } f^{ n - 2 - p - 2 \kappa } \left[ \sum_{ \ms{U} \in \Upsilon^\star \setminus \{ \Lv - \check{\Lv} \} } \rho^2 | \mi{L}_\rho \ms{U} |_{ \mf{h} }^2 + \rho^4 | \mi{L}_\rho [ \rho^{-1} ( \Lv - \check{\Lv} ) ] |_{ \mf{h} }^2 \right] d \mu_g \text{,}
\end{align}
where $\lambda$ is as before.
(The boundary term as $\rho_\star \searrow 0$ in \eqref{eq.carleman_transport} vanishes by Proposition \ref{thm.proof_vanish}, Corollary \ref{cor:vanish}, and \eqref{eq.proof_kappa}.
Also, note the cutoff $\chi$ is not needed here, since Proposition \ref{thm.carleman_transport} does not require $\Psi$ to vanish near $\{ f = f_\ast \}$.)
Applying \eqref{eq.sys_transport} to the right-hand side of \eqref{eq.proof_transport_0}, we obtain that
\begin{align}
\label{eq.proof_transport_1} &\lambda \int_{ \Omega_i } e^{ -\frac{ \lambda f^p }{p} } f^{ n - 2 - 2 \kappa } \sum_{ \ms{U} \in \Upsilon^\star } | \ms{U} |_{ \mf{h} }^2 \, d \mu_g \\
\notag &\quad \lesssim \int_{ \Omega_i } e^{ -\frac{ \lambda f^p }{p} } f^{ n - 2 - 2 \kappa } \left[ \sum_{ \ms{V} \in \Xi^\star } \rho^{ 2 - p } | \ms{V} |_{ \mf{h} }^2 + \sum_{ \ms{U} \in \Upsilon^\star } \rho^{ 2 - p } | \ms{U} |_{ \mf{h} }^2 \right] d \mu_g \text{.}
\end{align}
We can also obtain analogous bounds for $\Dv \ms{U}$, for each $\ms{U} \in \Upsilon^\ast$.
In particular, we apply Proposition \ref{thm.carleman_transport} to the equations \eqref{eq.sys_transport_deriv} (with $s = 5$ for $\ms{U} = \Lv - \check{\Lv}$ and $s = 3$ otherwise), which yields
\begin{align}
\label{eq.proof_transport_2} &\lambda \int_{ \Omega_i } e^{ -\frac{ \lambda f^p }{p} } f^{ n - 2 - 2 \kappa } \sum_{ \ms{U} \in \Upsilon^\star } \rho^3 | \Dv \ms{U} |_{ \mf{h} }^2 \, d \mu_g \\
\notag &\quad \lesssim \int_{ \Omega_i } e^{ -\frac{ \lambda f^p }{p} } f^{ n - 2 - 2 \kappa } \left[ \sum_{ \ms{V} \in \Xi^\star } \rho^{ 5 - p } | \Dv \ms{V} |_{ \mf{h} }^2 + \sum_{ \ms{U} \in \Upsilon^\star } ( \rho^{ 5 - p } | \ms{U} |_{ \mf{h} }^2 + \rho^{ 5 - p } | \Dv \ms{U} |_{ \mf{h} }^2 ) \right] d \mu_g \text{.}
\end{align}

Summing the estimates \eqref{eq.proof_wave_1}, \eqref{eq.proof_transport_1}, and \eqref{eq.proof_transport_2} yields
\begin{align*}
&\lambda \int_{ \Omega_i } e^{ -\frac{ \lambda f^p }{p} } f^{ n - 2 - 2 \kappa } \left[ \sum_{ \ms{V} \in \Xi^\star } ( \rho^{2p} | \ms{V} |_{ \mf{h} }^2 + \rho^4 | \ms{D} \ms{V} |_{ \mf{h} }^2 ) + \sum_{ \ms{U} \in \Upsilon^\star } ( | \ms{U} |_{ \mf{h} }^2 + \rho^3 | \Dv \ms{U} |_{ \mf{h} }^2 ) \right] d \mu_g \\
&\quad \lesssim \int_{ \Omega_i } e^{ -\frac{ \lambda f^p }{p} } f^{ n - 2 - 2 \kappa } \left[ \sum_{ \ms{V} \in \Xi^\star } ( \rho^{ 2 - p } | \ms{V} |_{ \mf{h} }^2 + \rho^{ 5 - p } | \Dv \ms{V} |_{ \mf{h} }^2 ) + \sum_{ \ms{U} \in \Upsilon^\star } ( \rho^{ 2 - p } | \ms{U} |_{ \mf{h} }^2 + \rho^{ 4 - p } | \Dv \ms{U} |_{ \mf{h} }^2 ) \right] d \mu_g \\
&\quad \qquad + \int_{ \Omega_e } e^{ -\frac{ \lambda f^p }{p} } f^{ n - 2 - 2 \kappa } \rho^{-p} \sum_{ \ms{V} \in \Xi^\star } ( | \mc{F}_{ \ms{V} } |_{ \mf{h} }^2 + | \mc{G}_{ \ms{V} } |_{ \mf{h} }^2 ) \, d \mu_g \text{.}
\end{align*}
Taking $\lambda$ sufficiently large in the above, the integral over $\Omega_i$ in the right-hand side can be absorbed into the left-hand side, since the quantities in the right contain higher powers of $\rho$ than those in the left (and since $\rho$ is bounded on $\Omega_i$).
As a result, we have
\begin{align}
\label{eq.proof_carleman} &\lambda \int_{ \Omega_i } e^{ -\frac{ \lambda f^p }{p} } f^{ n - 2 - 2 \kappa } \left[ \sum_{ \ms{V} \in \Xi^\star } ( \rho^{2p} | \ms{V} |_{ \mf{h} }^2 + \rho^4 | \ms{D} \ms{V} |_{ \mf{h} }^2 ) + \sum_{ \ms{U} \in \Upsilon^\star } ( | \ms{U} |_{ \mf{h} }^2 + \rho^3 | \Dv \ms{U} |_{ \mf{h} }^2 ) \right] d \mu_g \\
\notag &\quad \lesssim \int_{ \Omega_e } e^{ -\frac{ \lambda f^p }{p} } f^{ n - 2 - 2 \kappa } \rho^{-p} \sum_{ \ms{V} \in \Xi^\star } ( | \mc{F}_{ \ms{V} } |_{ \mf{h} }^2 + | \mc{G}_{ \ms{V} } |_{ \mf{h} }^2 ) \, d \mu_g \text{.}
\end{align}

Finally, using that the Carleman weight
\[
w_\lambda (f) := e^{-2 \lambda p^{-1} f^p } f^{ n-2-2\kappa-p }
\]
satisfies $w_\lambda (f) \leq w_\lambda ( f_i )$ on $\Omega_e$ and $w_\lambda (f) \geq w_\lambda ( f_i )$ on $\Omega_i$, we can eliminate $w_\lambda (f)$ from \eqref{eq.proof_carleman}:
\begin{align*}
&\lambda \int_{ \Omega_i } \left[ \sum_{ \ms{V} \in \Xi^\star } ( \rho^{2p} | \ms{V} |_{ \mf{h} }^2 + \rho^4 | \ms{D} \ms{V} |_{ \mf{h} }^2 ) + \sum_{ \ms{U} \in \Upsilon^\star } ( | \ms{U} |_{ \mf{h} }^2 + \rho^3 | \Dv \ms{U} |_{ \mf{h} }^2 ) \right] d \mu_g \\
&\quad \lesssim \int_{ \Omega_e } \rho^{-p} \sum_{ \ms{V} \in \Xi^\star } ( | \mc{F}_{ \ms{V} } |_{ \mf{h} }^2 + | \mc{G}_{ \ms{V} } |_{ \mf{h} }^2 ) \, d \mu_g \text{.}
\end{align*}
Since the right-hand side of the above is finite by our vanishing assumptions, we can divide by $\lambda$ and take the limit $\lambda \rightarrow \infty$ to deduce in particular that $\gv - \check{\gv} \equiv 0$ in $\Omega_i$.

\begin{remark}
Note that by taking $f_i$ to be arbitrarily close to $f_\star$, the above argument yields that $\gv - \check{\gv}$ vanishes (along with all the other associated difference quantities) on $\Omega_{ f_\star }$.
\end{remark}

Since $g$ and $\check{g}$ are assumed to be in the FG gauge \eqref{eq.aads_metric}, and with the same radial coordinate $\rho$, the above immediately yields $g = \check{g}$ on $\Omega_i$.
This finishes the proof of Theorem \ref{thm.adscft}.

\section{The Main Results} \label{sec.app}

In this section, we state and prove precise versions of the main results of this paper:
\begin{itemize}
\item Section \ref{sec.app_gauge} is dedicated to the precise statement and proof of Theorem \ref{theo:adscft}.

\item Section \ref{sec.app_symm} is dedicated to the precise statements and proofs of Theorems \ref{theo:killing} and \ref{theo:symmetries}.
\end{itemize}

\subsection{Gauge Covariance} \label{sec.app_gauge}

In Theorem \ref{thm.adscft}, we had assumed the two spacetimes in question had identical boundary data.
Here, we extend Theorem \ref{thm.adscft}---we establish the same conclusions, but we weaken the assumptions so that the spacetimes have \emph{gauge-equivalent} boundary data.
The key idea is to apply a specific change of coordinates (preserving the Fefferman-Graham gauge condition) on one spacetime, so that the boundary data for the two spacetimes become equal.

Before discussing the precise result, let us first give a more detailed description of gauge transformations in our current context of FG-aAdS segments.

\begin{definition} \label{def.app_gauge}
Let $( \mi{M}, g )$ be an FG-aAdS segment, and let $\mi{V} \subset \mi{M}$ be an open neighbourhood of the conformal boundary.\footnote{More precisely, for any $p \in \mi{I}$, there exists $\rho_p > 0$ such that $\mi{V}$ contains $( 0, \rho_p ) \times \{ p \}$.}
A function $\check{\rho} \in C^\infty ( \mi{V} )$ is called an \emph{FG radius} for $( \mi{V}, g )$ iff
\begin{equation}
\label{eq.app_gauge} \check{\rho} > 0 \text{,} \qquad \check{\rho}^{-2} g^{-1} ( d \check{\rho}, d \check{\rho} ) = 1 \text{.}
\end{equation}
\end{definition}

\begin{remark}
Note that $\rho$ itself is an FG radius for $( \mi{M}, g )$.
\end{remark}

Informally, we can view such an FG radius as a change of boundary defining function (from $\rho$ to $\check{\rho}$) that preserves the Fefferman-Graham gauge condition \eqref{eq.aads_metric}.
A more detailed justification, in terms of our language of FG-aAdS segments, arises from the following.

Consider a FG radius $\check{\rho}$ for $( \mi{V}, g )$ satisfying \eqref{eq.app_gauge}.
Given $\sigma > 0$ and $p \in \mi{I}$, we identify the pair $( \sigma, p )$ with $\gamma_p (\sigma) \in \mi{V}$, where $\gamma_p$ is the integral curve of the $\check{\rho}^2 g$-gradient of $\check{\rho}$ that satisfies\footnote{In other words, $\gamma_p$ emanates from the point $p$ on the conformal boundary.}
\[
\lim_{ \sigma \searrow 0 } \gamma_p ( \sigma ) = ( 0, p ) \text{.}
\]
(when all the above quantities exist).
Then, given any $\mi{J} \subset \mi{I}$ with $\bar{\mi{J}} \subseteq \mi{I}$, the above identifies $( 0, \check{\rho}_0 ] \times \mi{J}$, for some $\check{\rho}_0 > 0$, with an open submanifold $\smash{\check{\mi{M}}} \subseteq \mi{M}$.
Moreover, by the second part of \eqref{eq.app_gauge}, the projection onto the $( 0, \check{\rho}_0 ]$-component is simply $\check{\rho}$.
As this gradient is ($g$-)normal to the level sets of $\check{\rho}$, then $g$ is given---in terms of $\check{\rho}$ and frames transported along this gradient---by
\begin{equation}
\label{eq.app_fg_gauge} g = \check{\rho}^{-2} ( d \check{\rho}^2 + \ms{g}^\ast ) \text{,}
\end{equation}
for some ($\check{\rho}$-)vertical tensor field $\ms{g}^\ast$.
In other words, $\tilde{\rho}$ generates an FG-aAdS segment, characterized by \eqref{eq.app_fg_gauge}, that is isometric to (part of) the original FG-aAdS segment $( \mi{M}, g )$ defined from $\rho$.

It is well-known in the physics literature (see, e.g., \cite{deharo_sken_solod:holog_adscft, imbim_schwim_theis_yanki:diffeo_holog}) that these transformations preserving the FG gauge can be characterized in terms of corresponding transformations of the boundary data.
Below, we present this in a more rigorous form, adapted to the setting of this paper.

\begin{proposition} \label{thm.app_gauge_change}
Let $( \mi{M}, g )$ be a vacuum FG-aAdS segment, and fix $\mf{a} \in C^\infty ( \mi{I} )$.
Then, there is a neighbourhood $\mi{V} \subset \mi{M}$ of the conformal boundary and a unique FG radius $\check{\rho}$ on $( \mi{V}, g )$ such that
\begin{equation}
\label{eq.app_gauge_lim} \frac{ \check{\rho} }{ \rho } \rightarrow^0 e^{ \mf{a} } \text{.}
\end{equation}

Furthermore, let $( \mi{I}, \gb{0}, \gb{n} )$ and $( \mi{I}, \gbc{0}, \gbc{n} )$ denote the holographic data associated to the FG-aAdS segments constructed from $( \mi{M}, g )$ with respect to $\rho$ and $\check{\rho}$, respectively.
Then, there exists a universal algebraic function $\mi{F}$ (i.e., independent of $\mi{M}, g, \check{g}, \rho, \check{\rho}$) such that
\begin{equation}
\label{eq.app_gauge_g} \gbc{0} = e^{ 2 \mf{a} } \gb{0} \text{,} \qquad \gbc{n} = \mi{F} ( \gb{n}, \gb{0}, \Rm, \dots, \Dm^{n-2} \Rm, \mf{a}, \Dm \mf{a}, \dots, \Dm^n \mf{a} ) \text{,}
\end{equation}
where $\Dm$ and $\Rm$ denotes the Levi-Civita connection and Riemann curvature for $\gb{0}$, respectively.
\end{proposition}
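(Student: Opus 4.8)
The plan is to build $\check\rho$ by solving an eikonal equation adapted to the FG form of $g$, and then to read off the transformation of the boundary data by expanding the associated change of Fefferman--Graham gauge.

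First I would observe that the normalization in \eqref{eq.app_gauge} is equivalent to the eikonal equation $g^{-1} ( d \log \check\rho, d \log \check\rho ) = 1$. Writing $\check\rho = \rho \, e^{\omega}$ and inserting the FG form \eqref{eq.aads_metric} of $g$, this reduces to
\[
( 1 + \rho \, \partial_\rho \omega )^2 + \rho^2 \, \gv^{ab} \, \partial_a \omega \, \partial_b \omega = 1 \text{,}
\]
to be solved for $\omega$ with $\omega \rightarrow^0 \mf{a}$. I would solve this exactly as in the construction of geodesic boundary defining functions: the argument of \cite{gra:vol_renorm} for asymptotically hyperbolic manifolds carries over verbatim to the present Lorentzian aAdS setting. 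Its characteristics have projections to $\mi{M}$ that are, up to reparametrisation, the curves $\gamma_p$ described before the statement, and they foliate a neighbourhood $\mi{V}$ of the conformal boundary, on which $\check\rho$ is positive, satisfies \eqref{eq.app_gauge}, and is smooth (polyhomogeneous in $\rho$ when $n$ is even). Uniqueness of $\check\rho$ subject to \eqref{eq.app_gauge_lim} follows from uniqueness for the characteristic system. As explained around \eqref{eq.app_fg_gauge}, $\check\rho$ then generates an FG-aAdS segment isometric to (part of) $( \mi{M}, g )$, with vertical metric $\check\gv$ determined by $g = \check\rho^{-2} ( d \check\rho^2 + \check\gv )$.

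Next I would establish the first relation in \eqref{eq.app_gauge_g}: since $\gv = \rho^2 g$ and $\check\gv = \check\rho^2 g$ restricted to the level sets of $\rho$ and $\check\rho$ respectively, and $\check\rho / \rho \rightarrow^0 e^{ \mf{a} }$ while these level sets coincide in the limit at $\mi{I}$, a direct computation of boundary limits gives $\gbc{0} = e^{ 2 \mf{a} } \gb{0}$. For the coefficient $\gb{n}$, I would expand $\omega$ polyhomogeneously, $\omega = \mf{a} + \sum_{ 1 \leq k \leq n } \rho^k \omega^{(k)}$ plus a $\rho^n \log \rho$ term when $n$ is even and a faster-decaying remainder, where each $\omega^{(k)}$ is obtained algebraically from the eikonal equation in terms of $\mf{a}$, its $\gb{0}$-covariant derivatives up to order $\leq k$, and the FG coefficients $\gb{0}, \dots, \gb{k}$ of $\gv$; by Theorem \ref{thm.aads_fg} the latter are themselves functions of $\gb{0}$, so each $\omega^{(k)}$ depends only on $\gb{0}$ and on $\mf{a}$ to order $\leq k$. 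I would then write the change of variables $( \rho, x ) \mapsto ( \check\rho, \check{x} )$ explicitly --- inverting $\check\rho = \rho \, e^\omega$ for $\rho$ (possible near $\rho = 0$, since $\partial_\rho ( \rho \, e^\omega ) > 0$ there) and transporting coordinates along the $\gamma_p$ --- and pull back $g$; comparison with \eqref{eq.app_fg_gauge} expresses the Taylor coefficients of $\check\gv$ at $\check\rho = 0$ up to order $n$ as universal polynomials in the Taylor coefficients of $\gv$ up to order $n$, the $\omega^{(k)}$ with $k \leq n$, and their $\gb{0}$-covariant derivatives. Extracting the $\check\rho^n$ coefficient and invoking Theorem \ref{thm.aads_fg} and Corollary \ref{thm.aads_fg_exp} once more --- so that $\gb{1}, \dots, \gb{n-1}$ and $\gb{\star}$ are functions of $\gb{0}$, i.e.\ of $\Rm, \dots, \Dm^{n-2} \Rm$ --- the only contribution not expressible through $\gb{0}$ and $\mf{a}$ is the term linear in $\gb{n}$ coming from the leading-order part of the coordinate change. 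Collecting all terms yields a universal algebraic $\mi{F}$ with $\gbc{n} = \mi{F} ( \gb{n}, \gb{0}, \Rm, \dots, \Dm^{n-2} \Rm, \mf{a}, \Dm \mf{a}, \dots, \Dm^n \mf{a} )$, reproducing the transformation formulas of \cite{deharo_sken_solod:holog_adscft, imbim_schwim_theis_yanki:diffeo_holog}.

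The hard part will be the bookkeeping in the last step: tracking which quantities enter the $\check\rho^n$ coefficient of the pulled-back metric through the nonlinear coordinate change and the expansion of $\omega$, and --- when $n$ is even --- verifying that the $\rho^n \log \rho$ contributions coming from both $\gv$ and $\omega$ combine consistently, using that $\gb{\star}$ and the logarithmic part of $\omega$ also depend only on $\gb{0}$ and $\mf{a}$. One also has to confirm that the standing regularity assumptions needed for the Fefferman--Graham expansion (Theorem \ref{thm.aads_fg}) are strong enough for all of these expansions, valid to order $n$, to hold.
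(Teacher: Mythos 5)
Your proposal is correct and follows essentially the same route as the paper: the ansatz $\check{\rho} = \rho\, e^{\omega}$ reduces \eqref{eq.app_gauge} to exactly the fully nonlinear first-order equation $( 2 + \rho \partial_\rho \omega ) \partial_\rho \omega + \rho \, \gv^{-1} ( \Dv \omega, \Dv \omega ) = 0$ that the paper solves (likewise by Graham-style characteristics) with boundary data $\mf{a}$, yielding existence and uniqueness of $\check{\rho}$. For the transformation \eqref{eq.app_gauge_g} the paper simply cites \cite{deharo_sken_solod:holog_adscft, imbim_schwim_theis_yanki:diffeo_holog} and \cite[Proposition 3.4]{ChatzikaleasShao}, whereas you sketch the underlying derivation (polyhomogeneous expansion of $\omega$, with coefficients determined by $\gb{0}$ and $\mf{a}$, followed by extraction of the $\check{\rho}^n$ coefficient), which is precisely the computation carried out in those references.
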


\begin{proof}[Proof sketch of Proposition \ref{thm.app_gauge_change}.]
To obtain $\check{\rho}$, we adapt an argument inspired by the constructions from \cite{gra:vol_renorm}.
First, we consider the following ansatz for our desired FG radius:
\begin{equation}
\label{eql.adscft_gauge_0} \check{\rho} := e^{ \ms{a} } \rho \text{,} \qquad \ms{a} \in C^\infty ( \mi{M} ) \text{.}
\end{equation}
Then, the defining equation \eqref{eq.app_gauge} for $\check{\rho}$ expands as
\[
1 = \rho^{-2} g^{-1} ( d \rho, d \rho ) + 2 \rho \cdot \rho^{-2} g^{-1} ( d \rho, d \ms{a} ) + \rho^2 \cdot \rho^{-2} g^{-1} ( d \ms{a}, d \ms{a} ) \text{.}
\]
Since $g$ and $\rho$ satisfy \eqref{eq.aads_metric}, the above becomes
\begin{align}
\label{eql.adscft_gauge_pde} 0 &= 2 \rho^{-2} g^{-1} ( d \rho, d \ms{a} ) + \rho \cdot \rho^{-2} g^{-1} ( d \ms{a}, d \ms{a} ) \\
\notag &= ( 2 + \rho \mi{L}_\rho \ms{a} ) \mi{L}_\rho \ms{a} + \rho \, \gv^{-1} ( \Dv \ms{a}, \Dv \ms{a} ) \text{.}
\end{align}
Observe \eqref{eql.adscft_gauge_pde} yields a fully nonlinear equation for $\ms{a}$, which can be uniquely solved via the method of characteristics on a neighbourhood $\mi{V}$ of the conformal boundary given initial data
\[
\lim_{ \sigma \searrow 0 } \ms{a} |_\sigma = \mf{a} \text{.}
\]
Then, $\check{\rho} = e^{ \ms{a} } \rho$ yields the FG radius on $\mi{V}$ which satisfies \eqref{eq.app_gauge_lim}.

The relations between $( \gb{0}, \gb{n} )$ and $( \gbc{0}, \gbc{n} )$, as well as their derivations, are standard in physics literature---see, for instance, \cite{deharo_sken_solod:holog_adscft, imbim_schwim_theis_yanki:diffeo_holog}.
A detailed derivation of the first relation in \eqref{eq.app_gauge_g}, in the context of FG-aAdS segments, can be found in \cite[Proposition 3.4]{ChatzikaleasShao}.
\end{proof}

\begin{remark}
Using \eqref{eql.adscft_gauge_pde}, along with the techniques of \cite{shao:aads_fg}, one can derive boundary limits for $\rho$-derivatives of both $\ms{a}$ and $\check{\rho}$ in Proposition \ref{thm.app_gauge_change}.
From these limits, one then obtains partial series expansions for $\ms{a}$ and $\check{\rho}$ in terms of powers of $\rho$, similar to those in Corollary \ref{thm.aads_fg_exp}.
In particular, these expansions justify the ansatz for FG gauge transforms used in \cite[Equation (3.4)]{ChatzikaleasShao}.
\end{remark}

\begin{remark}
In addition to \eqref{eq.app_gauge_g}, there are similar transformation formulas for the other coefficients $\gb{2}, \gb{4}, \dots, \gb{\star}$ of the Fefferman-Graham expansion.
(For example, $-\gb{2}$ transforms like the Schouten tensor under conformal rescalings---see \cite[Proposition 3.4]{ChatzikaleasShao} for a detailed derivation.)
\end{remark}

In particular, Proposition \ref{thm.app_gauge_change} implies that two holographic data $( \mi{I}, \gb{0}, \gb{n} )$, $( \mi{I}, \gbc{0}, \gbc{n} )$ that are related via \eqref{eq.app_gauge_g} are associated to the \emph{same} aAdS spacetime, via two different FG gauges on this spacetime.
This motivates the following definition:

\begin{definition} \label{def.app_gauge_equiv}
Let $\mi{I}$ be an $n$-dimensional manifold, let $\gb{0}, \gbc{0}$ be two Lorentzian metrics on $\mi{I}$, and let $\gb{n}, \gbc{n}$ be two symmetric rank-$( 0, 2 )$ tensor fields on $\mi{I}$.
We say $( \gb{0}, \gb{n} )$ and $( \gbc{0}, \gbc{n} )$ are \emph{gauge-equivalent} on $\mi{D} \subset \mi{I}$ iff these quantities satisfy \eqref{eq.app_gauge_g} on $\mi{D}$ for some $\mf{a} \in C^\infty ( \mi{I} )$.
\end{definition}

We can now state the precise version of Theorem \ref{theo:adscft}, which combines the results of Theorem \ref{thm.adscft} while also taking into account the above gauge covariance:

\begin{theorem} \label{thm.adscft_gauge}
Let $n > 2$, and let $( \mi{M}, g )$, $( \mi{M}, \check{g} )$ be vacuum FG-aAdS segments (on a common aAdS region $\mi{M}$), with associated holographic data $( \mi{I}, \gb{0}, \gb{n} )$, $( \mi{I}, \gbc{0}, \gbc{n} )$ (respectively).
In addition, let $\mi{D} \subset \mi{I}$ be open with compact closure, and assume:
\begin{itemize}
\item $( \mi{M}, g )$ and $( \mi{M}, \check{g} )$ are regular to some large enough order $M_0$ (depending on $\gv$, $\check{\gv}$, $\mi{D}$).

\item $( \gb{0}, \gb{n} )$ and $( \gbc{0}, \gbc{n} )$ are gauge-equivalent on $\mi{D}$.

\item $( \mi{D}, \gb{0} )$ (or equivalently, $( \mi{D}, \gbc{0} )$) satisfies the GNCC.
\end{itemize}
Then, $g$ and $\check{g}$ are isometric near $\{ 0 \} \times \mi{D}$ (viewed as part of the conformal boundary).
To be more precise, there exists a sufficiently small $f_\star > 0$ and some $\Psi: \Omega_{ f_\star } \rightarrow \mi{M}$ such that\footnote{In other words, $\Psi$ is an isometry that fixes the conformal boundary.}
\begin{equation}
\label{eq.adscft_gauge_boundary} \Psi_\ast \check{g} = g \text{,} \qquad \lim_{ \sigma \searrow 0 } \Psi ( \sigma, p ) = ( 0, p ) \text{,} \quad p \in \mi{D} \text{.}
\end{equation}
\end{theorem}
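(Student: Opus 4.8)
The plan is to reduce Theorem \ref{thm.adscft_gauge} to Theorem \ref{thm.adscft} by applying a single change of Fefferman--Graham gauge to $(\mi{M}, g)$ that converts its holographic data on $\mi{D}$ into that of $(\mi{M}, \check{g})$. By Definition \ref{def.app_gauge_equiv}, the gauge-equivalence hypothesis provides some $\mf{a} \in C^\infty(\mi{I})$ for which the relations \eqref{eq.app_gauge_g} hold on $\mi{D}$. First I would apply Proposition \ref{thm.app_gauge_change} to $(\mi{M}, g)$ with this $\mf{a}$, producing a neighbourhood $\mi{V} \subset \mi{M}$ of the conformal boundary and the unique FG radius $\check{\rho}$ on $(\mi{V}, g)$ with $\check{\rho}/\rho \rightarrow^0 e^{\mf{a}}$. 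Following the construction described after Definition \ref{def.app_gauge}---flowing along the integral curves of the $\check{\rho}^2 g$-gradient of $\check{\rho}$ emanating from the conformal boundary, and fixing an open $\mi{J}$ with $\bar{\mi{D}} \subset \mi{J}$ and $\bar{\mi{J}}$ a compact subset of $\mi{I}$---this yields an aAdS region $\mi{M}^\dagger := (0, \check{\rho}_0] \times \mi{J}$, an embedding $\Theta : \mi{M}^\dagger \rightarrow \mi{M}$ with $\lim_{\sigma \searrow 0} \Theta(\sigma, p) = (0, p)$, and a vacuum FG-aAdS segment $(\mi{M}^\dagger, g^\dagger)$, $g^\dagger := \Theta^\ast g$, that is in FG gauge with respect to the radial coordinate $\check{\rho}$. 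By Proposition \ref{thm.app_gauge_change}, the holographic data of $(\mi{M}^\dagger, g^\dagger)$ is $(\mi{J}, e^{2\mf{a}} \gb{0}, \mi{F}(\gb{n}, \gb{0}, \dots))$, which---by the gauge-equivalence hypothesis---coincides on $\mi{D}$ with $(\gbc{0}, \gbc{n})|_{\mi{D}}$, i.e.\ with the holographic data of $(\mi{M}, \check{g})$. Moreover, $(\mi{D}, e^{2\mf{a}} \gb{0})$ satisfies the GNCC, either because it equals $(\mi{D}, \gbc{0})$, or via the conformal invariance of the GNCC applied to $(\mi{D}, \gb{0})$, Remark \ref{rmk.intro_GNCC_conformal}, with $\eta' := e^{\mf{a}} \eta$.

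The remaining point to check is that $(\mi{M}^\dagger, g^\dagger)$ is regular to a sufficiently high order so that Theorem \ref{thm.adscft} applies; this I expect to be the main technical obstacle. Here one solves the fully nonlinear transport equation for $\ms{a} := \log(\check{\rho}/\rho)$ appearing in the proof of Proposition \ref{thm.app_gauge_change}, and, following the techniques of \cite{shao:aads_fg}, derives partial expansions in $\rho$ for $\ms{a}$ and $\check{\rho}$ (cf.\ the remarks after Proposition \ref{thm.app_gauge_change}). Differentiating the change-of-coordinates formulas then gives an explicit, finite bound on the number of derivatives lost in passing to the new gauge; enlarging the regularity threshold $M_0$ demanded of $(\mi{M}, g)$ and $(\mi{M}, \check{g})$ in the hypotheses by this fixed amount ensures that $(\mi{M}^\dagger, g^\dagger)$ is regular to the order required by Theorem \ref{thm.adscft} (hence by the Carleman estimate, Theorem \ref{thm.carleman}).

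To conclude, I would restrict both $(\mi{M}^\dagger, g^\dagger)$ and $(\mi{M}, \check{g})$ to a common aAdS region $(0, \rho_0''] \times \mi{J}'$, where $\bar{\mi{D}} \subset \mi{J}'$, $\bar{\mi{J}'}$ is compact in $\mi{J}$, and $0 < \rho_0'' \leq \min(\rho_0, \check{\rho}_0)$; on this region both metrics are in FG gauge with the shared radial coordinate, are regular to the required order, and have holographic data coinciding on $\mi{D}$, with $(\mi{D}, e^{2\mf{a}}\gb{0})$ satisfying the GNCC. Theorem \ref{thm.adscft} then produces $f_\star > 0$ with $g^\dagger = \check{g}$ on $\Omega_{f_\star}$, defined via $f := \check{\rho}/\eta'$. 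Unwinding $g^\dagger = \Theta^\ast g$, this says $\Theta^\ast g = \check{g}$ on $\Omega_{f_\star}$, so $\Psi := \Theta|_{\Omega_{f_\star}}$ is an isometric embedding of $(\Omega_{f_\star}, \check{g})$ into $(\mi{M}, g)$ with $\Psi_\ast \check{g} = g$ on its image and $\lim_{\sigma \searrow 0} \Psi(\sigma, p) = (0, p)$ for $p \in \mi{D}$---precisely \eqref{eq.adscft_gauge_boundary}. All other ingredients---existence and uniqueness of the FG radius, the transformation law \eqref{eq.app_gauge_g}, conformal invariance of the GNCC---are supplied by the results recalled above, so the argument is essentially a bookkeeping-heavy reduction to Theorem \ref{thm.adscft}.
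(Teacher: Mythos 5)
Your proposal is correct and follows essentially the same route as the paper: use Proposition \ref{thm.app_gauge_change} to construct the FG radius $\check{\rho}$ realising the gauge transformation, re-express $(\mi{M},g)$ as an FG-aAdS segment with holographic data $(\gbc{0},\gbc{n})$, invoke conformal invariance of the GNCC, apply Theorem \ref{thm.adscft}, and unwind the identifications to obtain $\Psi$. The extra care you take with the regularity loss under the gauge change and the restriction to a common aAdS region is a reasonable elaboration of details the paper's proof leaves implicit.
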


\begin{proof}
By gauge equivalence, there exists $\mf{a} \in C^\infty ( \mi{I} )$ so that \eqref{eq.app_gauge_g} holds on $\mi{D}$.
Applying Proposition \ref{thm.app_gauge_change}, we obtain an FG radius $\check{\rho}$ such \eqref{eq.app_gauge_lim} holds.
Using $\check{\rho}$ (see the discussion below Definition \ref{def.app_gauge}), we can identify $( \mi{M}, g )$ with another FG-aAdS segment $( \mi{N}, g )$, for which the induced holographic data is $( \mi{I}, \gbc{0}, \gbc{n} )$.
Moreover, applying another isometry that identifies the FG gauges for $( \mi{N}, g )$ and $( \mi{M}, \check{g} )$,\footnote{Here, one maps the integral curves of the gradient of $\tilde{\rho}$ in $\mi{N}$ to the integral curves of the gradient of $\rho$ in $\mi{M}$.} we then arrive at the setting of Theorem \ref{thm.adscft}---two FG-aAdS segments on a common manifold $\mi{M}$, with identical data $( \gbc{0}, \gbc{n} )$ on $\mi{D}$.

The result now follows by applying Theorem \ref{thm.adscft}; the desired isometry $\Psi$ is then constructed by unwinding all the identifications made in the preceding discussion.
\end{proof}

\begin{remark}
Recall the GNCC is gauge-invariant, so that in the setting of Theorem \ref{thm.adscft_gauge}, we have that the GNCC holds for $( \mi{D}, \gb{0} )$ if and only if it holds for $( \mi{D}, \gbc{0} )$; see \cite[Proposition 3.6]{ChatzikaleasShao} for a proof of this statement.
Therefore, it suffices to assume the GNCC only for $\gb{0}$.
\end{remark}

\subsection{Extension of Symmetries} \label{sec.app_symm}

One application of Theorem \ref{thm.adscft} is that it immediately implies that holographic symmetries on the conformal boundary must be inherited in the bulk spacetime.
Here, we prove two versions of this, one for discrete and the other for continuous symmetries.

The first result is the precise analogue of Theorem \ref{theo:symmetries} from the introduction:

\begin{theorem} \label{thm.app_symm}
Let $n > 2$, and let $( \mi{M}, g )$ be a vacuum FG-aAdS segment, with associated holographic data $( \mi{I}, \gb{0}, \gb{n} )$.
Consider also a smooth, invertible function $\phi: \bar{\mi{D}} \rightarrow \mi{I}$,\footnote{More accurately, $\phi: \mi{J} \rightarrow \mi{I}$ for some open $\mi{J} \subseteq \mi{I}$ that contains $\bar{\mi{D}}$.} where $\mi{D} \subset \mi{I}$ is open with compact closure.
In addition, assume that the following hold:
\begin{itemize}
\item $( \mi{M}, g )$ is regular to some large enough order $M_0$ (depending on $\gv$, $\mi{D}$).

\item $( \gb{0}, \gb{n} )$ and $( \phi_\ast \gb{0}, \phi_\ast \gb{n} )$ are gauge-equivalent on $\mi{D}$.

\item $( \mi{D}, \gb{0} )$ satisfies the GNCC.
\end{itemize}
Then, $\phi$ extends to an isometry of $g$ near $\{ 0 \} \times \mi{D}$ (viewed as part of the conformal boundary).
To be more precise, there exists a sufficiently small $f_\star > 0$ and some $\Phi: \Omega_{ f_\star } \rightarrow \mi{M}$ such that\footnote{In other words, $\Phi$ is an isometry that asymptotes to $\phi$ at the conformal boundary.}
\begin{equation}
\label{eq.app_symm_boundary} \Phi_\ast g = g \text{,} \qquad \lim_{ \sigma \searrow 0 } \Phi ( \sigma, p ) = ( 0, \phi (p) ) \text{,} \quad p \in \mi{D} \text{.}
\end{equation}

Furthermore, if $\phi$ is a \emph{holographic isometry} on $\mi{D}$, i.e.,
\begin{equation}
\label{eq.app_symm_isom} ( \gb{0}, \gb{n} ) |_{ \mi{D} } = ( \phi_\ast \gb{0}, \phi_\ast \gb{n} ) |_{ \mi{D} } \text{,}
\end{equation}
then the bulk isometry $\Phi$ is given explicitly as
\begin{equation}
\label{eq.app_symm_extra} \Phi ( \sigma, p ) = ( \sigma, \phi (p) ) \text{,} \qquad ( \sigma, p ) \in \Omega_{ f_\ast } \text{.}
\end{equation}
\end{theorem}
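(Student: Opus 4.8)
The plan is to recognise $\phi$ as the relabeling of the conformal boundary of a second vacuum FG-aAdS segment, and then to reduce everything to the bulk-boundary correspondence of Theorem \ref{thm.adscft_gauge} (respectively Theorem \ref{thm.adscft} in the holographic-isometry case). First I would shrink $\mi{J}$ if necessary so that $\phi : \mi{J} \to \phi ( \mi{J} )$ is a diffeomorphism of open subsets of $\mi{I}$ with $\bar{\mi{D}} \subset \mi{J}$, and let $\bar\phi : ( 0, \rho_0 ] \times \mi{J} \to ( 0, \rho_0 ] \times \phi ( \mi{J} )$ be its lift, $\bar\phi ( \sigma, p ) := ( \sigma, \phi (p) )$. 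I then set $\check{g} := \bar\phi^\ast g$ on the aAdS region $\mi{M}' := ( 0, \rho_0 ] \times \mi{J}$. Since $\bar\phi$ is a diffeomorphism onto its image, $\check g$ is again Einstein with the same cosmological constant; since $\bar\phi$ preserves $\rho$ and satisfies $\bar\phi_\ast \partial_\rho = \partial_\rho$, it preserves the FG gauge form \eqref{eq.aads_metric}, so $\check g = \rho^{-2} ( d\rho^2 + \check\gv )$ with $\check\gv := \bar\phi^\ast \gv$ the fibrewise pullback of $\gv$ by $\phi$. Because $\phi$ is smooth, $( \mi{M}', \check g )$ inherits from $( \mi{M}, g )$ the regularity to order $M_0$ of Definition \ref{def.aads_regular}, and, as pullback by $\bar\phi$ commutes with $\mi{L}_\rho$ and with boundary limits, the Fefferman--Graham coefficients of $\check g$ are the $\phi$-transforms of those of $g$; in particular the holographic data of $( \mi{M}', \check g )$ is $( \mi{J}, \phi_\ast \gb{0}, \phi_\ast \gb{n} )$.

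Next I would apply Theorem \ref{thm.adscft_gauge} to the pair $( \mi{M}', g|_{ \mi{M}' } )$ and $( \mi{M}', \check g )$ on the domain $\mi{D}$: both are vacuum FG-aAdS segments regular to order $M_0$, their holographic data $( \gb{0}, \gb{n} )$ and $( \phi_\ast \gb{0}, \phi_\ast \gb{n} )$ are gauge-equivalent on $\mi{D}$ by hypothesis, and $( \mi{D}, \gb{0} )$ satisfies the GNCC. The theorem then produces $f_\star > 0$ and a map $\Psi : \Omega_{ f_\star } \to \mi{M}'$ with $\Psi_\ast \check g = g$ and $\Psi ( \sigma, p ) \to ( 0, p )$ as $\sigma \searrow 0$ for $p \in \mi{D}$. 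I would then set $\Phi := \bar\phi \circ \Psi^{-1}$, defined on $\Psi ( \Omega_{ f_\star } )$, which is a neighbourhood of $\{ 0 \} \times \mi{D}$ since $\Psi$ asymptotes to the identity there; after possibly shrinking $f_\star$ we may take the domain to be again of the form $\Omega_{ f_\star }$. Unwinding the definitions, $\Psi^{-1}$ is an isometry from $( \cdot, g )$ to $( \cdot, \check g )$ and $\bar\phi$ is an isometry from $( \cdot, \check g )$ to $( \cdot, g )$, so $\Phi^\ast g = g$; and $\Psi^{-1} ( \sigma, p ) \to ( 0, p )$ followed by $\bar\phi ( 0, p ) = ( 0, \phi (p) )$ gives $\Phi ( \sigma, p ) \to ( 0, \phi (p) )$, which is \eqref{eq.app_symm_boundary}. (Here I also use that the GNCC is gauge-invariant, so that assuming it for $\gb{0}$ alone is no loss.)

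For the final assertion, suppose $\phi$ is a holographic isometry on $\mi{D}$, i.e.\ \eqref{eq.app_symm_isom} holds. Then the holographic data of $g|_{ \mi{M}' }$ and of $\check g = \bar\phi^\ast g$ coincide on $\mi{D}$, so Theorem \ref{thm.adscft}—rather than merely Theorem \ref{thm.adscft_gauge}—applies and yields $g = \check g = \bar\phi^\ast g$ on $\Omega_{ f_\star }$ for $f_\star$ sufficiently small, with respect to the same radial coordinate $\rho$. Hence $\Phi := \bar\phi |_{ \Omega_{ f_\star } }$ is itself an isometry of $g$, and $\Phi ( \sigma, p ) = ( \sigma, \phi (p) )$, which is precisely \eqref{eq.app_symm_extra}.

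I expect the main point requiring care to be bookkeeping rather than analysis: the entire analytic content is already packaged in Theorems \ref{thm.adscft} and \ref{thm.adscft_gauge}. The two places where care is needed are (i) verifying that $\bar\phi^\ast g$ is again a vacuum FG-aAdS segment whose holographic data is the $\phi$-transform of $( \gb{0}, \gb{n} )$—this rests only on $\bar\phi$ commuting with $\partial_\rho$ and preserving the FG gauge \eqref{eq.aads_metric}—and (ii) tracking the domains through the composition of the gauge-change isometry $\Psi$ with the relabeling $\bar\phi$, so that the resulting $\Phi$ is genuinely defined on a neighbourhood $\Omega_{ f_\star }$ of $\{ 0 \} \times \mi{D}$ and exhibits the claimed boundary behaviour.
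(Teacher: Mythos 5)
Your proposal is essentially the paper's argument: the paper likewise reduces to the case \eqref{eq.app_symm_isom} by a gauge transformation (as in Theorem \ref{thm.adscft_gauge}) and then observes that the lift $\Phi(\sigma,p)=(\sigma,\phi(p))$ pulls $g$ back to another vacuum FG-aAdS segment $\rho^{-2}(d\rho^2+\phi_\ast\gv)$ whose holographic data is $(\phi_\ast\gb{0},\phi_\ast\gb{n})$, so Theorem \ref{thm.adscft} gives $\Phi_\ast g=g$ on some $\Omega_{f_\star}$; your verification of the special case \eqref{eq.app_symm_extra} is identical to this. The only slip is in the general case: with the paper's convention that $\Psi_\ast\check g=g$ denotes the pullback (so $\Psi$ is an isometry from $(\Omega_{f_\star},g)$ to $(\cdot,\check g)$ and $\bar\phi$ from $(\cdot,\check g)$ to $(\cdot,g)$), the correct composition is $\Phi:=\bar\phi\circ\Psi$, not $\bar\phi\circ\Psi^{-1}$; your choice reverses a direction, is not an isometry of $g$, and also forces the unnecessary domain adjustment, whereas $\bar\phi\circ\Psi$ is defined on $\Omega_{f_\star}$ outright and has the asymptotics \eqref{eq.app_symm_boundary}. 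This is a one-line fix and does not affect the structure of the argument; the paper's ordering (gauge-reduce first, then apply Theorem \ref{thm.adscft} directly to $\Phi_\ast g$) just packages the same reduction slightly more economically.
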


\begin{proof}
First, applying a gauge transformation as in the proof of Theorem \ref{thm.adscft_gauge}, we can assume \eqref{eq.app_symm_isom} holds.
Thus, it suffices to show that the map $\Phi$ given by \eqref{eq.app_symm_extra} is an isometry.

Observe that \eqref{eq.app_symm_extra} implies the pullback $\Phi_\ast g$ satisfies
\[
\Phi_\ast g = \rho^{-2} ( d \rho^2 + \phi^\ast \gv ) \text{,}
\]
where $\phi_\ast \gv$ is defined to the the pullback through $\phi$ on each level set of $\rho$:
\[
( \phi_\ast \ms{g} ) |_\sigma := \phi_\ast ( \ms{g} |_\sigma ) \text{,} \qquad \sigma \in ( 0, \rho_0 ] \text{.}
\]
In particular, $\Phi_\ast g$ defines a vacuum FG-aAdS segment (with the same $\rho$ as before), whose associated boundary data is $( \phi^{-1} ( \mi{I} ), \phi_\ast \gb{0}, \phi_\ast \gb{n} )$.
From \eqref{eq.app_symm_isom}, we see that $g$ and $\Phi_\ast g$ have the same data on $\mi{D}$, hence Theorem \ref{thm.adscft} yields $g = \Phi_\ast g$ on some $\Omega_{ f_\star }$, for $f_\star > 0$ sufficiently small.
\end{proof}

\begin{remark}
Note that in general, $\phi |_{ \mi{D} }$ in Theorem \ref{thm.app_symm} is a conformal isometry of $\gm$.
Furthermore, the extra condition \eqref{eq.app_symm_isom} implies that $\phi |_{ \mi{D} }$ is a full isometry of $\gm$.
\end{remark}

Theorem \ref{thm.app_symm} implies the following extension result for Killing vector fields, which is, in addition, the precise analogue of Theorem \ref{theo:killing} from the introduction:

\begin{theorem} \label{thm.app_killing}
Let $n > 2$, and let $( \mi{M}, g )$ be a vacuum FG-aAdS segment, with holographic data $( \mi{I}, \gb{0}, \gb{n} )$.
Fix also a smooth vector field $\mf{K}$ on $\bar{\mi{D}}$,\footnote{More precisely, $\mf{K}$ is a vector field on some open $\mi{J} \subseteq \mi{I}$ that contains $\bar{\mi{D}}$.} with $\mi{D} \subset \mi{I}$ and $\bar{\mi{D}}$ compact, and assume:
\begin{itemize}
\item $( \mi{M}, g )$ is regular to some large enough order $M_0$ (depending on $\gv$, $\mi{D}$).

\item $( \gb{0}, \gb{n} )$ is gauge-equivalent on $\mi{D}$ to some $( \gbc{0}, \gbc{n} )$ satisfying
\begin{equation}
\label{eq.app_killing_ass} ( \mi{L}_{ \mf{K} } \gbc{0}, \mi{L}_{ \mf{K} } \gbc{n} ) |_{ \mi{D} } = 0 \text{.}
\end{equation}

\item $( \mi{D}, \gb{0} )$ satisfies the GNCC.
\end{itemize}
Then, $\mf{K}$ extends to a Killing vector field $K$ for $g$ near $\{ 0 \} \times \mi{D}$ (viewed as part of the conformal boundary).
More precisely, there exist a sufficiently small $f_\star > 0$ and vector field
\begin{equation}
\label{eq.app_killing_decomp} K := K^\rho \partial_\rho + \ms{K} \text{,}
\end{equation}
where $K^\rho \in C^\infty ( \Omega_{ f_\star } )$ and $\ms{K}$ is a vertical vector field on $\Omega_{ f_\star }$, such that\footnote{In other words, $K$ is a $g$-Killing vector field that asymptotes to $\mf{K}$ at the conformal boundary.}
\begin{equation}
\label{eq.app_killing} \mi{L}_K g = 0 \text{,} \qquad K^\rho \rightarrow^0 0 \text{,} \qquad \ms{K} \rightarrow^0 \mf{K} \text{.}
\end{equation}

Furthermore, if $\mf{K}$ is a \emph{holographic Killing field} on $\mi{D}$, i.e.,
\begin{equation}
\label{eq.app_killing_holo} ( \mi{L}_{ \mf{K} } \gb{0}, \mi{L}_{ \mf{K} } \gb{n} ) |_{ \mi{D} } = 0 \text{.}
\end{equation}
then $K$ is vertical ($K^\rho \equiv 0$) and can be explicitly described via the relation
\begin{equation}
\label{eq.app_killing_boundary} \mi{L}_\rho \ms{K} = 0 \text{,} \qquad \ms{K} |_\sigma \rightarrow^0 \mf{K} \text{.}
\end{equation}
\end{theorem}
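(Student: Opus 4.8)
The plan is to obtain Theorem~\ref{thm.app_killing} as a consequence of the symmetry-extension result Theorem~\ref{thm.app_symm}, by integrating $\mf{K}$ to a local flow and feeding each time-$t$ map into that theorem. I would first dispose of the special case in which $\mf{K}$ is a \emph{holographic Killing field} on $\mi{D}$, i.e.\ \eqref{eq.app_killing_holo} holds for the original data $(\gb{0},\gb{n})$; the general case reduces to this by a single (time-independent) gauge transformation, as explained below. So suppose $(\mi{L}_{\mf{K}}\gb{0},\mi{L}_{\mf{K}}\gb{n})|_{\mi{D}}=0$. Let $\{\phi_t\}$ be the local flow generated by $\mf{K}$ on $\mi{I}$, and let $\ms{K}$ denote the $\rho$-independent vertical extension of $\mf{K}$ to $\mi{M}$, whose flow is then $\Phi_t(\sigma,p)=(\sigma,\phi_t(p))$. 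For each sufficiently small $t$, $\phi_t$ is a holographic isometry on $\mi{D}$ in the sense of \eqref{eq.app_symm_isom} (a routine flow argument reduces this to $\mf{K}$ being Killing for $\gb{0}$ and preserving $\gb{n}$; one only needs mild bookkeeping of domains so that the relevant orbits stay where these properties hold). Theorem~\ref{thm.app_symm}, through its \emph{explicit} conclusion \eqref{eq.app_symm_extra}, then says exactly that $\Phi_t$ is an isometry of $g$ on $\Omega_{f_\star(t)}$ for some $f_\star(t)>0$.

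The next step is to check that $f_\star(t)$ may be taken uniform for $t\in(-\varepsilon,\varepsilon)$. This follows because the smallness threshold for $f_\star$ in Theorem~\ref{thm.adscft} (hence in Theorem~\ref{thm.app_symm}) depends only quantitatively on the two FG-aAdS segments being compared --- here $g$ and $\Phi_t^{\ast}g$ --- and $\Phi_t\to\mathrm{id}$ smoothly as $t\to 0$, so the comparison data vary continuously. Fixing such an $f_\star$, the family $\{\Phi_t\}$ is a local one-parameter group of isometries of $g$ near $\{0\}\times\mi{D}$ (the group law $\Phi_{t+s}=\Phi_t\circ\Phi_s$ is immediate from the explicit form $\Phi_t(\sigma,p)=(\sigma,\phi_t(p))$ together with the flow property of $\phi_t$, so no uniqueness statement for the extension is needed). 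Its infinitesimal generator is $\ms{K}$, whence $\mi{L}_{\ms{K}}g=0$; and since $\ms{K}$ is vertical and $\rho$-independent we get $K^{\rho}\equiv 0$, $\mi{L}_{\rho}\ms{K}=0$, and $\ms{K}|_\sigma=\mf{K}\rightarrow^0\mf{K}$, which is precisely \eqref{eq.app_killing_boundary}. This settles the holographic Killing case and, in particular, the ``furthermore'' assertion of the theorem.

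For the general case, I would use the gauge-equivalence hypothesis: there is $\mf{a}\in C^\infty(\mi{I})$ realising \eqref{eq.app_gauge_g} between $(\gb{0},\gb{n})$ and $(\gbc{0},\gbc{n})$. Proposition~\ref{thm.app_gauge_change} produces an FG radius $\check{\rho}$ on a neighbourhood of the conformal boundary with $\check{\rho}/\rho\rightarrow^0 e^{\mf{a}}$, relative to which $(\mi{M},g)$ is an FG-aAdS segment with holographic data $(\gbc{0},\gbc{n})$; moreover the GNCC is insensitive to this change by conformal invariance. Since $(\mi{L}_{\mf{K}}\gbc{0},\mi{L}_{\mf{K}}\gbc{n})|_{\mi{D}}=0$, the special case applied in the $\check{\rho}$-gauge yields a $g$-Killing field that is $\check{\rho}$-vertical and constant along the $\check{\rho}$-direction. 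Transferring it back to the original $\rho$-gauge gives the Killing field $K$; writing $K=K^{\rho}\partial_\rho+\ms{K}$ and using $\check{\rho}=e^{\ms{a}}\rho$ for the bulk extension $\ms{a}$ of $\mf{a}$ (which has controlled $\rho$-asymptotics by the remark following Proposition~\ref{thm.app_gauge_change}), the $\check{\rho}$-verticality $K\check{\rho}=0$ gives $K^{\rho}=K\rho=-\rho\,(K\ms{a})=\mc{O}(\rho)$, so $K^{\rho}\rightarrow^0 0$, while $\ms{K}\rightarrow^0\mf{K}$ because the gauge identification fixes the conformal boundary pointwise. This is \eqref{eq.app_killing_decomp}--\eqref{eq.app_killing}.

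Essentially all of the analytic work has already been done in Theorem~\ref{thm.app_symm} (and hence in Theorem~\ref{thm.adscft} and the underlying Carleman estimates), so this argument is ``soft''. The only genuine points demanding care --- and the main (mild) obstacle --- are: the uniformity of $\Omega_{f_\star}$ as $t$ varies; the bookkeeping of domains of definition for $\phi_t$ and for the gauge change near $\partial\mi{D}$; and verifying that transferring a $\check{\rho}$-vertical Killing field to the $\rho$-gauge produces a $\partial_\rho$-component vanishing to first order at the boundary.
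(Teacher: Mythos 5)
Your proposal follows essentially the same route as the paper: reduce to the holographic Killing case by an FG gauge transformation (Proposition \ref{thm.app_gauge_change}), apply Theorem \ref{thm.app_symm} to each flow map $\phi_t$ of $\mf{K}$ to get the explicit bulk isometries $\Phi_t(\sigma,p)=(\sigma,\phi_t(p))$, and take the generator of this family to obtain the Killing field satisfying \eqref{eq.app_killing_boundary}. The points you flag as needing care (uniformity of $f_\star$ in $t$, domain bookkeeping, and the $\mc{O}(\rho)$ behaviour of $K^\rho$ when transferring back from the $\check{\rho}$-gauge) are exactly the details the paper leaves implicit, and your treatment of them is sound.
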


\begin{proof}
Once again, by a gauge transformation, it suffices to consider the special case in which \eqref{eq.app_killing_holo} holds.
Let $\phi_s$, for $s \in \R$ small enough, denote transport along the integral curves of $\mf{K}$ by parameter $s$.
By definition, each $\phi_s$ is a holographic isometry on $\mi{D}$,
\[
( \gb{0}, \gb{n} ) |_{ \mi{D} } = ( \phi_{ s \ast } \gb{0}, \phi_{ s \ast } \gb{n} ) |_{ \mi{D} } \text{,}
\]
so by Theorem \ref{thm.app_symm}, it extends to a bulk isometry $\Phi_s$ on some $\Omega_{ f_\star }$, $f_\star > 0$, with
\begin{equation}
\label{eql.app_killing_0} \Phi_s ( \sigma, p ) := ( \sigma, \phi_s (p) ) \text{,}
\end{equation}
and with $\Phi_{ s \ast } g = g$ near $\{ 0 \} \times \mi{D}$.
Let $K$ be the generator of the family $\{ \Phi_s \}_{ s \in \R }$; note that by definition, $K$ satisfies the first part of \eqref{eq.app_killing}.
Finally, one directly deduces from \eqref{eql.app_killing_0} that $K$ must be vertical, and that $K := \ms{K}$ satisfies the transport relation \eqref{eq.app_killing_boundary}.
\end{proof}

\section{Additional Details and Computations} \label{sec.extra}

This appendix contains additional proofs and computational details for readers' convenience.
We begin with some preliminary formulas that will be useful in later sections.

\subsection{Proof of Proposition \ref{thm.aads_comm}} \label{sec.aads_comm}

Throughout, we assume indices are with respect to arbitrary coordinates $( U, \varphi )$ on $\mi{I}$, and we let $( k, l )$ be the rank of $\ms{A}$.
The first part of \eqref{eq.aads_comm} is a consequence of \eqref{eq.aads_sff} and the following commutation formula from \cite[Proposition 2.27]{shao:aads_fg}:
\begin{align*}
\mi{L}_\rho \Dv_c \ms{A}^{ \ix{a} }{}_{ \ix{b} } - \Dv_c ( \mi{L}_\rho \ms{A} )^{ \ix{a} }{}_{ \ix{b} } &= \frac{1}{2} \sum_{ i = 1 }^k \gv^{ a_i d } ( \Dv_c \mi{L}_\rho \gv_{ d e } + \Dv_e \mi{L}_\rho \gv_{ d c } - \Dv_d \mi{L}_\rho \gv_{ c e } ) \ms{A}^{ \ixr{a}{i}{e} }{}_{ \ix{b} } \\
\notag &\qquad - \frac{1}{2} \sum_{ j = 1 }^l \gv^{ d e } ( \Dv_c \mi{L}_\rho \gv_{ d b_j } + \Dv_{ b_j } \mi{L}_\rho \gv_{ d c } - \Dv_d \mi{L}_\rho \gv_{ c b_j } ) \ms{A}^{ \ix{a} }{}_{ \ixr{b}{j}{e} } \text{,}
\end{align*}

Next, recalling the second formula in \eqref{eq.aads_vertical_connection}, we obtain that
\begin{align*}
\bar{\Dv}_\rho \Dv_c \ms{A}^{ \ix{a} }{}_{ \ix{b} } &= \mi{L}_\rho \Dv_c \ms{A}^{ \ix{a} }{}_{ \ix{b} } - \frac{1}{2} \gv^{ d e } \mi{L}_\rho \gv_{ c d } \, \Dv_e \ms{A}^{ \ix{a} }{}_{ \ix{b} } + \frac{1}{2} \sum_{ i = 1 }^k \gv^{ a_i d } \mi{L}_\rho \gv_{ d e } \, \Dv_c \ms{A}^{ \ixr{a}{i}{e} }{}_{ \ix{b} } \\
&\qquad - \frac{1}{2} \sum_{ j = 1 }^l \gv^{ d e } \mi{L}_\rho \gv_{ b_j d } \, \Dv_c \ms{A}^{ \ix{a} }{}_{ \ixr{b}{j}{e} } \text{,} \\
\Dv_c ( \bar{\Dv}_\rho \ms{A} )^{ \ix{a} }{}_{ \ix{b} } &= \Dv_c ( \mi{L}_\rho \ms{A} )^{ \ix{a} }{}_{ \ix{b} } + \frac{1}{2} \sum_{ i = 1 }^k \Dv_c ( \gv^{ a_i d } \mi{L}_\rho \gv_{ d e } \, \ms{A}^{ \ixr{a}{i}{e} }{}_{ \ix{b} } ) - \frac{1}{2} \sum_{ j = 1 }^l \Dv_c ( \gv^{ d e } \mi{L}_\rho \gv_{ b_j d } \, \ms{A}^{ \ix{a} }{}_{ \ixr{b}{j}{e} } ) \text{.}
\end{align*}
Subtracting the above two equations and recalling the first part of \eqref{eq.aads_comm} yields
\begin{align*}
\bar{\Dv}_\rho \Dv_c \ms{A}^{ \ix{a} }{}_{ \ix{b} } &= \Dv_c ( \bar{\Dv}_\rho \ms{A} )^{ \ix{a} }{}_{ \ix{b} } + \mi{L}_\rho \Dv_c \ms{A}^{ \ix{a} }{}_{ \ix{b} } - \Dv_c ( \mi{L}_\rho \ms{A} )^{ \ix{a} }{}_{ \ix{b} } - \frac{1}{2} \gv^{ d e } \Lv_{ c d } \, \Dv_e \ms{A}^{ \ix{a} }{}_{ \ix{b} } \\
&\qquad - \frac{1}{2} \sum_{ i = 1 }^k \gv^{ a_i d } \Dv_c \Lv_{ d e } \, \ms{A}^{ \ixr{a}{i}{e} }{}_{ \ix{b} } + \frac{1}{2} \sum_{ j = 1 }^l \gv^{ d e } \Dv_c \Lv_{ b_j d } \, \ms{A}^{ \ix{a} }{}_{ \ixr{b}{j}{e} } \\
&= \Dv_c ( \bar{\Dv}_\rho \ms{A} )^{ \ix{a} }{}_{ \ix{b} } - \frac{1}{2} \gv^{ d e } \Lv_{ c d } \, \Dv_e \ms{A}^{ \ix{a} }{}_{ \ix{b} } + \frac{1}{2} \sum_{ i = 1 }^k \gv^{ a_i d } ( \Dv_e \Lv_{ c d } - \Dv_d \Lv_{ c e } ) \ms{A}^{ \ixr{a}{i}{e} }{}_{ \ix{b} } \\
&\qquad - \frac{1}{2} \sum_{ j = 1 }^l \gv^{ d e } ( \Dv_{ b_j } \Lv_{ c d } - \Dv_d \Lv_{ c b_j } ) \ms{A}^{ \ix{a} }{}_{ \ixr{b}{j}{e} } \text{,}
\end{align*}
from which the second identity in \eqref{eq.aads_comm} follows.

Next, recalling Definition \ref{def.aads_mixed_wave} for $\Boxm$, we expand (partially in $\varphi_\rho$-coordinates)
\begin{align}
\label{eql.aads_comm_rho_0} \Boxm ( \rho^p \ms{A} ) &= \gv^{ \alpha \beta } \nablam_\alpha ( \rho^p \nablam_\beta \ms{A} + p \rho^{ p - 1 } \nablam_\beta \rho \cdot \ms{A} ) \\
\notag &= \rho^p \Boxm \ms{A} + 2 p \rho^{ p - 1 } g^{ \alpha \beta } \nabla_\alpha \rho \, \Dvm_\beta \ms{A} + p ( p - 1 ) \rho^{ p - 2 } g^{ \alpha \beta } \nabla_\alpha \rho \nabla_\beta \rho \, \ms{A} + p \rho^{ p - 1 } \Box \rho \, \ms{A} \\
\notag &= \rho^p \Boxm \ms{A} + 2 p \rho^{ p + 1 } \, \Dvm_\rho \ms{A} + p ( p - 1 ) \rho^p \, \ms{A} + p \rho^{ p - 1 } \Box \rho \, \ms{A} \text{,}
\end{align}
where we also recalled \eqref{eq.aads_metric} in the last step.
By \eqref{eq.aads_metric}, \eqref{eq.aads_Gamma}, and \eqref{eq.aads_sff}, we have
\begin{align*}
\Box \rho &= - \rho^2 \Gamma^\rho_{ \rho \rho } - \rho^2 \gv^{ a b } \Gamma^\rho_{ a b } \\
&= - ( n - 1 ) \rho + \frac{1}{2} \rho^2 \gv^{ a b } \Lv_{ a b } \text{.}
\end{align*}
Thus, combining \eqref{eql.aads_comm_rho_0} with the above yields
\[
\Boxm ( \rho^p \ms{A} ) = \rho^p \Boxm \ms{A} + 2 p \rho^{ p + 1 } \, \Dvm_\rho \ms{A} - p ( n - p ) \rho^p \, \ms{A} + \frac{1}{2} p \rho^{ p + 1 } \gv^{ a b } \Lv_{ a b } \, \ms{A} \text{,}
\]
which immediately imples \eqref{eq.aads_comm_rho}.

Finally, for \eqref{eq.aads_boxm}, the definitions of $\nablam^2_{ \rho \rho }$ and $\Dvm_\rho^2$, along with Proposition \ref{thm.aads_Gamma}, yield:
\begin{align*}
\nablam_{ \rho \rho } \ms{A}^{ \ix{a} }{}_{ \ix{b} } &= \partial_\rho ( \Dvm_\rho \ms{A}^{ \ix{a} }{}_{ \ix{b} } ) - \Gamma^\rho_{ \rho \rho } \Dvm_\rho \ms{A}^{ \ix{a} }{}_{ \ix{b} } + \sum_{ i = 1 }^k \Gammav^{ a_i }_{ \rho d } \ms{A}^{ \ixr{a}{i}{d} }{}_{ \ix{b} } - \sum_{ j = 1 }^l \Gammav^d_{ \rho b_j } \ms{A}^{ \ix{a} }{}_{ \ixr{b}{j}{d} } \text{,} \\
\Dvm_\rho ( \Dvm_\rho \ms{A} )^{ \ix{a} }{}_{ \ix{b} } &= \partial_\rho ( \Dvm_\rho \ms{A}^{ \ix{a} }{}_{ \ix{b} } ) + \sum_{ i = 1 }^k \Gammav^{ a_i }_{ \rho d } \ms{A}^{ \ixr{a}{i}{d} }{}_{ \ix{b} } - \sum_{ j = 1 }^l \Gammav^d_{ \rho b_j } \ms{A}^{ \ix{a} }{}_{ \ixr{b}{j}{d} } \text{.}
\end{align*}
Subtracting the above two equations and recalling the first part of \eqref{eq.aads_Gamma} yields
\[
\nablam_{ \rho \rho } \ms{A} = \Dvm_\rho ( \Dvm_\rho \ms{A} ) + \rho^{-1} \Dvm_\rho \ms{A} \text{.}
\]
Furthermore, relating $\Dvm_\rho$ and $\mi{L}_\rho$ using \eqref{eq.aads_vertical_connection}, the above then becomes\footnote{Here, we also used the standard identity $\mi{L}_\rho \gv^{-1} = \sch{ \gv^{-2}, \Lv }$.}
\begin{align}
\label{eql.aads_boxm_0} \nablam_{ \rho \rho } \ms{A} &= \mi{L}_\rho ( \Dvm_\rho \ms{A} ) + \sch{ \gv^{-1}, \Lv, \Dvm_\rho \ms{A} } + \rho^{-1} \mi{L}_\rho \ms{A} + \rho^{-1} \sch{ \gv^{-1}, \Lv, \ms{A} } \\
\notag &= \mi{L}_\rho^2 \ms{A} + \rho^{-1} \mi{L}_\rho \ms{A} + \sch{ \gv^{-1}, \Lv, \mi{L}_\rho \ms{A} } + \rho^{-1} \sch{ \gv^{-1}, \Lv, \ms{A} } \\
\notag &\qquad + \sch{ \gv^{-1}, \mi{L}_\rho \Lv, \ms{A} } + \sch{ \gv^{-2}, \Lv, \Lv, \ms{A} } \text{.}
\end{align}

Next, a similar decomposition of mixed and vertical derivatives in vertical components yields
\begin{align*}
\nablam_{ab} \ms{A} &= \Dv_{ab} \ms{A} - \Gamma^\rho_{ab} \Dvm_\rho \ms{A} \\
&= \Dv_{ab} \ms{A} - \rho^{-1} \gv_{ab} \Dvm_\rho \ms{A} + \sch{ \Lv, \Dvm_\rho \ms{A} }_{ab} \\
&= \Dv_{ab} \ms{A} - \rho^{-1} \gv_{ab} \mi{L}_\rho \ms{A} + \rho^{-1} \gv_{ab} \, \sch{ \gv^{-1}, \Lv, \ms{A} } + \sch{ \Lv, \mi{L}_\rho \ms{A} }_{ab} + \sch{ \gv^{-1}, \Lv, \Lv, \ms{A} }_{ab} \text{.}
\end{align*}
Therefore, applying \eqref{eq.aads_metric}, Definition \ref{def.aads_mixed_wave}, \eqref{eql.aads_boxm_0}, and the above, we conclude that
\begin{align*}
\Boxm \ms{A} &= \rho^2 \nablam_{ \rho \rho } \ms{A} + \rho^2 \gv^{ab} \nablam_{ab} \ms{A} \\
&= \rho^2 \mi{L}_\rho^2 \ms{A} - ( n - 1 ) \rho \mi{L}_\rho \ms{A} + \rho^2 \gv^{ab} \Dv_{ab} \ms{A} + \rho^2 \, \sch{ \gv^{-1}, \Lv, \mi{L}_\rho \ms{A} } + \rho \, \sch{ \gv^{-1}, \Lv, \ms{A} } \\
\notag &\qquad + \rho^2 \, \sch{ \gv^{-1}, \mi{L}_\rho \Lv, \ms{A} } + \rho^2 \, \sch{ \gv^{-2}, \Lv, \Lv, \ms{A} } \text{,}
\end{align*}
which is precisely the last part of \eqref{eq.aads_boxm}.

\subsection{Proof of Proposition \ref{thm.aads_decomp}} \label{sec.aads_decomp}

We assume all indices are with respect to $\varphi$- and $\varphi_\rho$-coordinates, and we let $\Gamma$ and $\Gammav$ be the corresponding Christoffel symbols, as defined in Proposition \ref{thm.aads_Gamma}.
For future convenience, we also set $l := l_1 + l_2$, and we define (via indices) the vertical tensor fields
\begin{equation}
\label{eql.aads_decomp_0} \kv_{ a c } := \rho^{-1} \gv_{ a c } - \frac{1}{2} \Lv_{ a c } \text{,} \qquad \kv^b{}_a := \rho^{-1} \delta^b{}_a - \frac{1}{2} \gv^{ b c } \Lv_{ a c } \text{.}
\end{equation}

By the definitions of $\nabla$ and $\Dvm$ (see Proposition \ref{thm.aads_vertical_connection} and \eqref{eq.aads_deriv}), along with \eqref{eq.aads_Gamma}, we have
\[
\nabla_\rho F_{ \ix{\rho} \ix{a} } = \partial_\rho ( F_{ \ix{\rho} \ix{a} } ) - \sum_{ i = 1 }^{ l_1 } \Gamma^\rho_{ \rho \rho } \, F_{ \ix{\rho} \ix{a} } - \sum_{ j = 1 }^{ l_2 } \Gamma^b_{ \rho a_j } \, F_{ \ix{\rho} \ixr{a}{j}{b} } \text{,} \qquad \Dvm_\rho \ms{f}_{ \ix{a} } = \partial_\rho ( \ms{f}_{ \ix{a} } ) - \sum_{ j = 1 }^{ l_2 } \Gammav^b_{ \rho a_j } \ms{f}_{ \ixr{a}{j}{b} } \text{.}
\]
Subtracting the two equations above and applying \eqref{eq.aads_Gamma} and \eqref{eq.aads_decomp_phi} yields
\begin{align*}
\nabla_\rho F_{ \ix{\rho} \ix{a} } &= \Dvm_\rho \ms{f}_{ \ix{a} } - \sum_{ i = 1 }^{ l_1 } \Gamma^\rho_{ \rho \rho } \, \ms{f}_{ \ix{a} } - \sum_{ j = 1 }^{ l_2 } ( \Gamma^b_{ \rho a_j } - \Gammav^b_{ \rho a_j } ) \ms{f}_{ \ixr{a}{j}{b} } \\
&= \Dvm_\rho \ms{f}_{ \ix{a} } + l_1 \rho^{-1} \, \ms{f}_{ \ix{a} } + l_2 \rho^{-1} \, \ms{f}_{ \ix{a} } \\
&= \Dvm_\rho \ms{f}_{ \ix{a} } + ( l_1 + l_2 ) \rho^{-1} \, \ms{f}_{ \ix{a} } \text{,}
\end{align*}
from which the first equation in \eqref{eq.aads_decomp} follows.

Similarly, the definitions of $\nabla$ and $\Dvm$ imply
\[
\nabla_c F_{ \ix{\rho} \ix{a} } = \partial_c ( F_{ \ix{\rho} \ix{a} } ) - \sum_{ i = 1 }^{ l_1 } \Gamma_{ c \rho }^b \, F_{ \ixr{\rho}{i}{b} \ix{a} } - \sum_{ j = 1 }^{ l_2 } \Gamma_{ c a_j }^\beta \, F_{ \ix{\rho} \ixr{a}{j}{\beta} } \text{,} \qquad \Dvm_c \ms{f}_{ \ix{a} } = \partial_c ( \ms{f}_{ \ix{a} } ) - \sum_{ j = 1 }^{ l_2 } \Gammav_{ c a_j }^b \, \ms{f}_{ \ixr{a}{j}{b} } \text{.}
\]
Subtracting the above equations and recalling \eqref{eq.aads_Gamma}, \eqref{eq.aads_decomp_phi_rv}, and \eqref{eq.aads_decomp_phi_vr}, we obtain
\begin{align}
\label{eql.aads_decomp_00} \nabla_c F_{ \ix{\rho} \ix{a} } &= \Dvm_c \ms{f}_{ \ix{a} } - \sum_{ i = 1 }^{ l_1 } \Gamma_{ c \rho }^b \, F_{ \ixr{\rho}{i}{b} \ix{a} } - \sum_{ j = 1 }^{ l_2 } \Gamma_{ c a_j }^\rho \, F_{ \ix{\rho} \ixr{a}{j}{\rho} } \\
\notag &= \Dvm_c \ms{f}_{ \ix{a} } + \sum_{ i = 1 }^{ l_1 } \kv^b{}_c \, ( \ms{f}^\rho_i )_{ b \ix{a} } - \sum_{ j = 1 }^{ l_2 } \kv_{ c a_j } \, ( \ms{f}^v_j )_{ \ixd{a}{j} } \text{.}
\end{align}
Combining the above with \eqref{eql.aads_decomp_0} yields the second part of \eqref{eq.aads_decomp}.

For the last identity in \eqref{eq.aads_decomp}, we begin with $\rho$-derivatives.
First, by Proposition \ref{thm.aads_Gamma},
\begin{align}
\label{eql.aads_decomp_1} \nabla_{ \rho \rho } F_{ \ix{\rho} \ix{a} } &= \partial_\rho ( \nabla_\rho F_{ \ix{\rho} \ix{a} } ) - \Gamma^\rho_{ \rho \rho } \, \nabla_\rho F_{ \ix{\rho} \ix{a} } - \sum_{ i = 1 }^{ l_1 } \Gamma^\rho_{ \rho \rho } \, \nabla_\rho F_{ \ix{\rho} \ix{a} } - \sum_{ j = 1 }^{ l_2 } \Gamma^b_{ \rho a_j } \, \nabla_\rho F_{ \ix{\rho} \ixr{a}{j}{b} } \text{,} \\
\notag &= \partial_\rho ( \nabla_\rho F_{ \ix{\rho} \ix{a} } ) + ( l_1 + 1 ) \rho^{-1} \, \nabla_\rho F_{ \ix{\rho} \ix{a} } - \sum_{ j = 1 }^{ l_2 } \Gamma^b_{ \rho a_j } \, \nabla_\rho F_{ \ix{\rho} \ixr{a}{j}{b} } \text{,}
\end{align}
Similarly, for the corresponding mixed derivatives, we apply \eqref{eq.aads_deriv} and compute
\begin{align}
\label{eql.aads_decomp_2} \rho^{ -l } \nablam_{ \rho \rho } ( \rho^l \ms{f} )_{ \ix{a} } &= \rho^{ -l } \partial_\rho [ \Dvm_\rho ( \rho^l \ms{f} )_{ \ix{a} } ] - \Gamma^\rho_{ \rho \rho } \, \rho^{ -l } \Dvm_\rho  ( \rho^l \ms{f} )_{ \ix{a} } - \sum_{ j = 1 }^{ l_2 } \Gammav^b_{ \rho a_j } \, \rho^{ -l } \Dvm_\rho ( \rho^l \ms{f} )_{ \ixr{a}{j}{b} } \\
\notag &= \partial_\rho [ \rho^{ -l } \Dvm_\rho ( \rho^l \ms{f} )_{ \ix{a} } ] + ( l + 1 ) \rho^{-1} \, \rho^{ -l } \Dvm_\rho  ( \rho^l \ms{f} )_{ \ix{a} } - \sum_{ j = 1 }^{ l_2 } \Gammav^b_{ \rho a_j } \, \rho^{ -l } \Dvm_\rho ( \rho^l \ms{f} )_{ \ixr{a}{j}{b} } \text{,}
\end{align}
where we also recalled \eqref{eq.aads_Gamma} and the basic properties from Proposition \ref{thm.aads_mixed_connection}.
Subtracting \eqref{eql.aads_decomp_2} from \eqref{eql.aads_decomp_1}, while applying both \eqref{eq.aads_Gamma} and the first part of \eqref{eq.aads_decomp}, we obtain that
\begin{align}
\label{eql.aads_decomp_10} \nabla_{ \rho \rho } F_{ \ix{\rho} \ix{a} } &= \rho^{ -l } \nablam_{ \rho \rho } ( \rho^l \ms{f} )_{ \ix{a} } + ( l_1 - l ) \rho^{-1} \, \rho^{ -l } \Dvm_\rho ( \rho^l \ms{f} )_{ \ix{a} } - \sum_{ j = 1 }^{ l_2 } ( \Gamma^b_{ \rho a_j } - \Gammav^b_{ \rho a_j } ) \, \rho^{ -l } \Dvm_\rho ( \rho^l \ms{f} )_{ \ixr{a}{j}{b} } \\
\notag &= \rho^{ -l } \nablam_{ \rho \rho } ( \rho^l \ms{f} )_{ \ix{a} } + ( l_1 - l ) \rho^{-1} \, \rho^{ -l } \Dvm_\rho ( \rho^l \ms{f} )_{ \ix{a} } - l_2 \rho^{-1} \, \rho^{ -l } \Dvm_\rho ( \rho^l \ms{f} )_{ \ix{a} } \\
\notag &= \rho^{ -l } \nablam_{ \rho \rho } ( \rho^l \ms{f} )_{ \ix{a} } \text{.}
\end{align}

Next, applying again \eqref{eq.aads_deriv}, we compute
\begin{align*}
\nabla_{ b c } F_{ \ix{\rho} \ix{a} } &= \partial_b ( \nabla_c F_{ \ix{\rho} \ix{a} } ) - \Gamma^\alpha_{ b c } \, \nabla_\alpha F_{ \ix{\rho} \ix{a} } - \sum_{ i = 1 }^{ l_1 } \Gamma^d_{ b \rho } \, \nabla_c F_{ \ixr{\rho}{i}{d} \ix{a} } - \sum_{ j = 1 }^{ l_2 } \Gamma^\delta_{ b a_j } \, \nabla_c F_{ \ix{\rho} \ixr{a}{j}{\delta} } \text{,} \\
\rho^{-l} \nablam_{ b c } ( \rho^l \ms{f} )_{ \ix{a} } &= \partial_b ( \Dvm_c \ms{f}_{ \ix{a} } ) - \Gamma^\alpha_{ b c } \, \rho^{-l} \Dvm_\alpha ( \rho^l \ms{f} )_{ \ix{a} } - \sum_{ j = 1 }^{ l_2 } \Gammav^d_{ b a_j } \, \Dvm_c \ms{f}_{ \ixr{a}{j}{d} } \text{.}
\end{align*}
Subtracting the two equations and recalling \eqref{eq.aads_Gamma} then yields
\begin{align}
\label{eql.aads_decomp_11} \nabla_{ b c } F_{ \ix{\rho} \ix{a} } &= \rho^{-l} \nablam_{ b c } ( \rho^l \ms{f} )_{ \ix{a} } + \partial_b ( \nabla_c F_{ \ix{\rho} \ix{a} } - \Dvm_c \ms{f}_{ \ix{a} } ) - \Gamma^\alpha_{ b c } [ \nabla_\alpha F_{ \ix{\rho} \ix{a} } - \rho^{-l} \Dvm_\alpha ( \rho^l \ms{f} )_{ \ix{a} } ] \\
\notag &\qquad - \sum_{ i = 1 }^{ l_1 } \Gamma^d_{ b \rho } \, \nabla_c F_{ \ixr{\rho}{i}{d} \ix{a} } - \sum_{ j = 1 }^{ l_2 } \Gamma^\rho_{ b a_j } \, \nabla_c F_{ \ix{\rho} \ixr{a}{j}{\rho} } - \sum_{ j = 1 }^{ l_2 } \Gammav^d_{ b a_j } \, ( \nabla_c F_{ \ix{\rho} \ix{a} } - \Dvm_c \ms{f}_{ \ix{a} } ) \\
\notag &:= \rho^{-l} \nablam_{ b c } ( \rho^l \ms{f} )_{ \ix{a} } + I_1 + I_2 + I_3 + I_4 + I_5 \text{.}
\end{align}
To simplify the upcoming computations, we define, for all $1 \leq i \leq l_1$ and $1 \leq j \leq l_2$, the vertical tensor fields $\ms{z}$, $\ms{z}^\rho_i$, $\ms{z}^v_j$---of ranks $( 0, l_2 + 1 )$, $( 0, l_2 + 2 )$, $( 0, l_2 )$, respectively---via the index formulas
\begin{align}
\label{eql.aads_decomp_20} \ms{z}_{ c \ix{a} } &:= \nabla_c F_{ \ix{\rho} \ix{a} } - \Dvm_c \ms{f}_{ \ix{a} } \text{,} \\
\notag ( \ms{z}^\rho_i )_{ c b \ix{a} } &:= \nabla_c F_{ \ixr{\rho}{i}{b} \ix{a} } - \Dvm_c ( \ms{f}^\rho_i )_{ b \ix{a} } \text{,} \\
\notag ( \ms{z}^v_j )_{ c \ixd{a}{j} } &:= \nabla_c F_{ \ix{\rho} \ixr{a}{j}{\rho} } - \Dvm_c ( \ms{f}^v_j )_{ \ixd{a}{j} } \text{.}
\end{align}

Applying \eqref{eq.aads_Gamma}, the first part of \eqref{eq.aads_decomp}, and \eqref{eql.aads_decomp_20} to the term $I_2$ from \eqref{eql.aads_decomp_11}, we obtain
\[
I_2 = - \Gamma^d_{ b c } ( \nabla_d F_{ \ix{\rho} \ix{a} } - \Dvm_d \ms{f}_{ \ix{a} } ) \text{.}
\]
From \eqref{eql.aads_decomp_11}, the first part of \eqref{eql.aads_decomp_20}, and the above, we see that
\begin{equation}
\label{eql.aads_decomp_21} I_1 + I_2 + I_5 = \Dvm_b \ms{z}_{ c \ix{a} } \text{.}
\end{equation}
Similarly, for $I_3$ and $I_4$, we again apply \eqref{eq.aads_Gamma} and \eqref{eql.aads_decomp_20}:
\begin{align}
\label{eql.aads_decomp_22} I_3 &= \sum_{ i = 1 }^{ l_1 } \kv^d{}_b \, \Dvm_c ( \ms{f}^\rho_i )_{ d \ix{a} } + \sum_{ i = 1 }^{ l_1 } \kv^d{}_b \, ( \ms{z}^\rho_i )_{ c d \ix{a} } \text{,} \\
\notag I_4 &= - \sum_{ j = 1 }^{ l_2 } \kv_{ a_j b } \, \Dvm_c ( \ms{f}^v_j )_{ \ixd{a}{j} } - \sum_{ j = 1 }^{ l_2 } \kv_{ a_j b } \, ( \ms{z}^v_j )_{ c \ixd{a}{j} } \text{.}
\end{align}

Now, recalling \eqref{eql.aads_decomp_00}, along with \eqref{eql.aads_decomp_20}, we deduce
\begin{align}
\label{eql.aads_decomp_30} \ms{z}_{ c \ix{a} } &= \sum_{ i = 1 }^{ l_1 } \kv^e{}_c \, ( \ms{f}^\rho_i )_{ e \ix{a} } - \sum_{ j = 1 }^{ l_2 } \kv_{ a_j c } \, ( \ms{f}^v_j )_{ \ixd{a}{j} } \text{,} \\
\notag ( \ms{z}^\rho_i )_{ c d \ix{a} } &= \sum_{ \substack{ 1 \leq j \leq l_1 \\ j \neq i } } \kv^e{}_c \, ( \ms{f}^{ \rho, \rho }_{ i, j } )_{ e d \ix{a} } - \sum_{ j = 1 }^{ l_2 } \kv_{ a_j c } \, ( \ms{f}^{ \rho, v }_{ i, j } )_{ d \ixd{a}{j} } - \kv_{ b c } \, \ms{f}_{ \ix{a} } \text{,} \\
\notag ( \ms{z}^v_j )_{ c \ixd{a}{j} } &= \sum_{ i = 1 }^{ l_1 } \kv^e{}_c \, ( \ms{f}^{ \rho, v }_{ i, j } )_{ e \ixd{a}{j} } + \kv^e{}_c \, \ms{f}_{ \ixr{a}{j}{e} } - \sum_{ \substack{ 1 \leq i \leq l_2 \\ i \neq j } } \kv_{ a_i c } \, ( \ms{f}^{ v, v }_{ i, j } )_{ \ixd{a}{i,j} } \text{.}
\end{align}
Combining \eqref{eql.aads_decomp_0}, \eqref{eql.aads_decomp_21}, and the above, we conclude that
\begin{align}
\label{eql.aads_decomp_31} I_1 + I_2 + I_5 &= \sum_{ i = 1 }^{ l_1 } [ \kv^e{}_c \, \Dvm_b ( \ms{f}^\rho_i )_{ e \ix{a} } + \Dvm_b \kv^d{}_c \, ( \ms{f}^\rho_i )_{ e \ix{a} } ] - \sum_{ j = 1 }^{ l_2 } [ \kv_{ a_j c } \, \Dvm_b ( \ms{f}^v_j )_{ \ixd{a}{j} } + \Dvm_b \kv_{ a_j c } \, ( \ms{f}^v_j )_{ \ixd{a}{j} } ] \\
\notag &= \rho^{-1} \sum_{ i = 1 }^{ l_1 } \Dvm_b ( \ms{f}^\rho_i )_{ c \ix{a} } - \rho^{-1} \sum_{ j = 1 }^{ l_2 } \gv_{ a_j c } \, \Dvm_b ( \ms{f}^v_j )_{ \ixd{a}{j} } + \sum_{ i = 1 }^{ l_1 } \mi{S} ( \gv^{-1}, \Lv, \Dvm f^\rho_i )_{ c b \ix{a} } \\
\notag &\qquad + \sum_{ j = 1 }^{ l_2 } \mi{S} ( \Lv, \Dvm \ms{f}^v_j )_{ c b \ix{a} } + \sum_{ i = 1 }^{ l_1 } \mi{S} ( \gv^{-1}, \Dv \Lv, \ms{f}^\rho_i )_{ c b \ix{a} } + \sum_{ j = 1 }^{ l_2 } \mi{S} ( \Dv \Lv, \ms{f}^v_j )_{ c b \ix{a} } \text{.}
\end{align}
Similar computations using \eqref{eql.aads_decomp_22} also yield
\begin{align*}
I_3 &= \sum_{ i = 1 }^{ l_1 } \kv^d{}_b \, \Dvm_c ( \ms{f}^\rho_i )_{ d \ix{a} } + 2 \sum_{ 1 \leq i < j \leq l_1 } \kv^d{}_b \kv^e{}_c \, ( \ms{f}^{ \rho, \rho }_{ i, j } )_{ e d \ix{a} } - \sum_{ i = 1 }^{ l_1 } \sum_{ j = 1 }^{ l_2 } \kv^d{}_b \kv_{ a_j c } \, ( \ms{f}^{ \rho, v }_{ i, j } )_{ d \ixd{a}{j} } - l_1 \kv^d{}_b \kv_{ d c } \, \ms{f}_{ \ix{a} } \\
&= \rho^{-1} \sum_{ i = 1 }^{ l_1 } \Dvm_c ( \ms{f}^\rho_i )_{ b \ix{a} } - l_1 \rho^{-2} \gv_{ b c } \, \ms{f}_{ \ix{a} } + 2 \rho^{-2} \sum_{ 1 \leq i < j \leq l_1 } ( \ms{f}^{ \rho, \rho }_{ i, j } )_{ c b \ix{a} } - \rho^{-2} \sum_{ i = 1 }^{ l_1 } \sum_{ j = 1 }^{ l_2 } \gv_{ a_j c } \, ( \ms{f}^{ \rho, v }_{ i, j } )_{ b \ixd{a}{j} } \text{,} \\
&\qquad + \sum_{ i = 1 }^{ l_1 } \mi{S} ( \gv^{-1}, \Lv, \Dvm \ms{f}^\rho_i )_{ c b \ix{a} } + \rho^{-1} \mi{S} ( \Lv, \ms{f} )_{ c b \ix{a} } + \rho^{-1} \sum_{ 1 \leq i < j \leq l_1 } \mi{S} ( \gv^{-1}, \Lv, \ms{f}^{ \rho, \rho }_{ i, j } )_{ c b \ix{a} } \\
&\qquad + \mi{S} ( \gv^{-1}, \Lv, \Lv, \ms{f} )_{ c b \ix{a} } + \sum_{ 1 \leq i < j \leq l_1 } \mi{S} ( \gv^{-2}, \Lv, \Lv, \ms{f}^{ \rho, \rho }_{ i, j } )_{ c b \ix{a} } + \rho^{-1} \sum_{ i = 1 }^{ l_1 } \sum_{ j = 1 }^{ l_2 } \mi{S} ( \Lv, \ms{f}^{ \rho, v }_{ i, j } )_{ c b \ix{a} } \\
&\qquad + \rho^{-1} \sum_{ i = 1 }^{ l_1 } \sum_{ j = 1 }^{ l_2 } \mi{S} ( \gv, \gv^{-1}, \Lv, \ms{f}^{ \rho, v }_{ i, j } )_{ c b \ix{a} } + \sum_{ i = 1 }^{ l_1 } \sum_{ j = 1 }^{ l_2 } \mi{S} ( \gv^{-1}, \Lv, \Lv, \ms{f}^{ \rho, v }_{ i, j } )_{ c b \ix{a} } \text{,} \\
I_4 &= - \sum_{ j = 1 }^{ l_2 } \kv_{ a_j b } \, \Dvm_c ( \ms{f}^v_j )_{ \ixd{a}{j} } - \sum_{ i = 1 }^{ l_1 } \sum_{ j = 1 }^{ l_2 } \kv_{ a_j b } \kv^e{}_c \, ( \ms{f}^{ \rho, v }_{ i, j } )_{ e \ixd{a}{j} } - \sum_{ j = 1 }^{ l_2 } \kv_{ a_j b } \kv^e{}_c \, \ms{f}_{ \ixr{a}{j}{e} } \\
&\qquad + 2 \sum_{ 1 \leq i < j \leq l_2 } \kv_{ a_i c } \kv_{ a_j b } \, ( \ms{f}^{ v, v }_{ i, j } )_{ \ixd{a}{i,j} } \\
&= - \rho^{-1} \sum_{ j = 1 }^{ l_2 } \gv_{ a_j b } \, \Dvm_c ( \ms{f}^v_j )_{ \ixd{a}{j} } - \rho^{-2} \sum_{ i = 1 }^{ l_1 } \sum_{ j = 1 }^{ l_2 } \gv_{ a_j b } \, ( \ms{f}^{ \rho, v }_{ i, j } )_{ c \ixd{a}{j} } - \rho^{-2} \sum_{ j = 1 }^{ l_2 } \gv_{ a_j b } \, \ms{f}_{ \ixr{a}{j}{c} } \\
&\qquad + 2 \rho^{-2} \sum_{ 1 \leq i < j \leq l_2 } \gv_{ a_i c } \gv_{ a_j b } \, ( \ms{f}^{ v, v }_{ i, j } )_{ \ixd{a}{i,j} } + \sum_{ j = 1 }^{ l_2 } \mi{S} ( \Lv, \Dv \ms{f}^v_j )_{ c b \ix{a} } + \rho^{-1} \mi{S} ( \Lv, \ms{f} )_{ c b \ix{a} } \\
&\qquad + \mi{S} ( \gv^{-1}, \Lv, \Lv, \ms{f} )_{ c b \ix{a} } + \rho^{-1} \sum_{ 1 \leq i < j \leq l_2 } \mi{S} ( \gv, \Lv, \ms{f}^{ v, v }_{ i, j } )_{ c b \ix{a} } + \sum_{ 1 \leq i < j \leq l_2 } \mi{S} ( \Lv, \Lv, \ms{f}^{ v, v }_{ i, j } )_{ c b \ix{a} } \\
&\qquad + \rho^{-1} \sum_{ i = 1 }^{ l_1 } \sum_{ j = 1 }^{ l_2 } \mi{S} ( \Lv, \ms{f}^{ \rho, v }_{ i, j } )_{ c b \ix{a} } + \rho^{-1} \sum_{ i = 1 }^{ l_1 } \sum_{ j = 1 }^{ l_2 } \mi{S} ( \gv, \gv^{-1}, \Lv, \ms{f}^{ \rho, v }_{ i, j } )_{ c b \ix{a} } \\
&\qquad + \sum_{ i = 1 }^{ l_1 } \sum_{ j = 1 }^{ l_2 } \mi{S} ( \gv^{-1}, \Lv, \Lv, \ms{f}^{ \rho, v }_{ i, j } )_{ c b \ix{a} } \text{.}
\end{align*}

Finally, combining \eqref{eql.aads_decomp_11}, \eqref{eql.aads_decomp_31}, and the above, we obtain
\begin{align*}
\gv^{ b c } \, \nabla_{ b c } F_{ \ix{\rho} \ix{a} } &= \rho^{-l} \gv^{ b c } \nablam_{ b c } ( \rho^l \ms{f} )_{ \ix{a} } + 2 \rho^{-1} \left[ \sum_{ i = 1 }^{ l_1 } \gv^{ b c } \Dvm_b ( \ms{f}^\rho_i )_{ c \ix{a} } - \sum_{ j = 1 }^{ l_2 } \Dvm_{ a_j } ( \ms{f}^v_j )_{ \ixd{a}{j} } - ( n l_1 + l_2 ) \rho^{-2} \, \ms{f}_{ \ix{a} } \right] \\
&\qquad + 2 \rho^{-2} \left[ \sum_{ 1 \leq i < j \leq l_1 } \gv^{ b c } \, ( \ms{f}^{ \rho, \rho }_{ i, j } )_{ c b \ix{a} } - \sum_{ i = 1 }^{ l_1 } \sum_{ j = 1 }^{ l_2 } ( \ms{f}^{ \rho, v }_{ i, j } )_{ a_j \ixd{a}{j} } + \sum_{ 1 \leq i < j \leq l_2 } \gv_{ a_i a_j } \, ( \ms{f}^{ v, v }_{ i, j } )_{ \ixd{a}{i,j} } \right] \\
&\qquad + \sum_{ i = 1 }^{ l_1 } \mi{S} ( \gv^{-2}, \Lv, \Dvm f^\rho_i )_{ \ix{a} } + \sum_{ j = 1 }^{ l_2 } \mi{S} ( \gv^{-1}, \Lv, \Dvm f^v_j )_{ \ix{a} } + \sum_{ i = 1 }^{ l_2 } \mi{S} ( \gv^{-2}, \Dv \Lv, f^\rho_i )_{ \ix{a} } \\
&\qquad + \sum_{ j = 1 }^{ l_2 } \mi{S} ( \gv^{-1}, \Dv \Lv, f^v_j )_{ \ix{a} } + \rho^{-1} \mi{S} ( \gv^{-1}, \Lv, \ms{f} )_{ \ix{a} } + \mi{S} ( \gv^{-2}, \Lv, \Lv, \ms{f} )_{ \ix{a} } \\
&\qquad + \rho^{-1} \sum_{ 1 \leq i < j \leq l_1 } \mi{S} ( \gv^{-2}, \Lv, \ms{f}^{ \rho, \rho }_{ i, j } )_{ \ix{a} } + \sum_{ 1 \leq i < j \leq l_1 } \mi{S} ( \gv^{-3}, \Lv, \Lv, \ms{f}^{ \rho, \rho }_{ i, j } )_{ \ix{a} } \\
&\qquad + \rho^{-1} \sum_{ 1 \leq i < j \leq l_2 } \mi{S} ( \gv, \gv^{-1}, \Lv, \ms{f}^{ v, v }_{ i, j } )_{ \ix{a} } + \sum_{ 1 \leq i < j \leq l_2 } \mi{S} ( \gv^{-1}, \Lv, \Lv, \ms{f}^{ v, v }_{ i, j } )_{ \ix{a} } \\
&\qquad + \rho^{-1} \sum_{ i = 1 }^{ l_1 } \sum_{ j = 1 }^{ l_2 } \mi{S} ( \gv^{-1}, \Lv, \ms{f}^{ \rho, v }_{ i, j } )_{ \ix{a} } + \rho^{-1} \sum_{ i = 1 }^{ l_1 } \sum_{ j = 1 }^{ l_2 } \mi{S} ( \gv, \gv^{-2}, \Lv, \ms{f}^{ \rho, v }_{ i, j } )_{ \ix{a} } \\
&\qquad + \sum_{ i = 1 }^{ l_1 } \sum_{ j = 1 }^{ l_2 } \mi{S} ( \gv^{-2}, \Lv, \Lv, \ms{f}^{ \rho, v }_{ i, j } )_{ \ix{a} } \text{.}
\end{align*}
The last formula in \eqref{eq.aads_decomp} now follows from \eqref{eql.aads_decomp_10}, the above, and the fact that
\[
\Box F = \rho^2 ( \nabla_{ \rho \rho } F + \gv^{ b c } \nabla_{ b c } F ) \text{,} \qquad \rho^{-1} \Boxm ( \rho^l \ms{f} ) = \rho^2 [ \rho^{-l} \nablam_{ \rho \rho } ( \rho^l \ms{f} ) + \gv^{ b c } \rho^{-l} \nablam_{ b c } ( \rho^l \ms{f} ) ] \text{.}
\]

\subsection{Proof of Proposition \ref{thm.aads_weyl_pre}} \label{sec.aads_weyl_pre}

Throughout this proof, we assume all indices are raised and lowered using $g$.
First, the Bianchi identity and Proposition \ref{thm.aads_einstein} yield
\begin{align}
\label{eql.aads_weyl_pre_0} \nabla_\mu W_{ \alpha \beta \gamma \delta } + \nabla_\gamma W_{ \alpha \beta \delta \mu } + \nabla_\delta W_{ \alpha \beta \mu \gamma } &= 0 \text{,} \\
\notag \nabla^\mu W_{ \mu \beta \gamma \delta } = \nabla^\mu W_{ \mu \beta \gamma \delta } + \nabla_\gamma W_{ \mu \beta \delta }{}^\mu + \nabla_\delta W_{ \mu \beta }{}^\mu{}_\gamma &= 0 \text{,}
\end{align}
which immediately proves the first identity in \eqref{eq.aads_weyl_pre}.

Next, taking a divergence of the first part of \eqref{eql.aads_weyl_pre_0} (in the ``$\mu$" component) yields
\begin{equation}
\label{eql.aads_weyl_pre_1} 0 = \Box \mc{W}_{ \alpha \beta \gamma \delta } + \nabla^\mu{}_\gamma W_{ \alpha \beta \delta \mu } + \nabla^\mu{}_\delta W_{ \alpha \beta \mu \gamma } \text{.}
\end{equation}
Commuting derivatives and then applying the second part of \eqref{eql.aads_weyl_pre_0}, we obtain
\begin{align}
\label{eql.aads_weyl_pre_2} \nabla^\mu{}_\gamma W_{ \alpha \beta \delta \mu } &= - R^\lambda{}_\alpha{}^\mu{}_\gamma W_{ \lambda \beta \delta \mu } - R^\lambda{}_\beta{}^\mu{}_\gamma W_{ \alpha \lambda \delta \mu } - R^\lambda{}_\delta{}^\mu{}_\gamma W_{ \alpha \beta \lambda \mu } - R^\lambda{}_\mu{}^\mu{}_\gamma W_{ \alpha \beta \delta \lambda } \\
\notag &:= I_{1, 1} + I_{1, 2} + I_{1, 3} + I_{1, 4} \text{,} \\
\notag \nabla^\mu{}_\delta W_{ \alpha \beta \mu \gamma } &= - R^\lambda{}_\alpha{}^\mu{}_\delta W_{ \lambda \beta \mu \gamma } - R^\lambda{}_\beta{}^\mu{}_\delta W_{ \alpha \lambda \mu \gamma } - R^\lambda{}_\mu{}^\mu{}_\delta W_{ \alpha \beta \lambda \gamma } - R^\lambda{}_\gamma{}^\mu{}_\delta W_{ \alpha \beta \mu \lambda } \\
\notag &:= I_{2, 1} + I_{2, 2} + I_{2, 3} + I_{2, 4} \text{.}
\end{align}

Now, using Proposition \ref{thm.aads_einstein}, we expand
\begin{align*}
I_{1, 1} + I_{2, 2} &= - ( W^\lambda{}_\alpha{}^\mu{}_\gamma + g^\lambda{}_\gamma g_\alpha{}^\mu ) W_{ \lambda \beta \delta \mu } - ( W^\lambda{}_\beta{}^\mu{}_\delta + g^\lambda{}_\delta g_\beta{}^\mu ) W_{ \alpha \lambda \mu \gamma } \\
&= - 2 W_{ \alpha \delta \beta \gamma } + W^\lambda{}_\alpha{}^\mu{}_\gamma W_{ \lambda \beta \mu \delta } + W^\lambda{}_\beta{}^\mu{}_\delta W_{ \lambda \alpha \mu \gamma } \text{,} \\
I_{1, 2} + I_{2, 1} &= - ( W^\lambda{}_\beta{}^\mu{}_\gamma + g^\lambda{}_\gamma g_\beta{}^\mu ) W_{ \alpha \lambda \delta \mu } - ( W^\lambda{}_\alpha{}^\mu{}_\delta + g^\lambda{}_\delta g_\alpha{}^\mu ) W_{ \lambda \beta \mu \gamma } \\
&= - 2 W_{ \alpha \gamma \delta \beta } - W^\lambda{}_\beta{}^\mu{}_\gamma W_{ \lambda \alpha \mu \delta } - W^\lambda{}_\alpha{}^\mu{}_\delta W_{ \lambda \beta \mu \gamma } \text{,} \\
I_{1, 3} + I_{2, 4} &= - ( W^\lambda{}_\delta{}^\mu{}_\gamma + g^\lambda{}_\gamma g_\delta{}^\mu ) W_{ \alpha \beta \lambda \mu } - ( W^\lambda{}_\gamma{}^\mu{}_\delta + g^\lambda{}_\delta g_\gamma{}^\mu ) W_{ \alpha \beta \mu \lambda } \\
&= - 2 W_{ \alpha \beta \gamma \delta } - W^\lambda{}_\delta{}^\mu{}_\gamma W_{ \alpha \beta \lambda \mu } - W^\lambda{}_\gamma{}^\mu{}_\delta W_{ \alpha \beta \mu \lambda } \text{,} \\
I_{1, 4} + I_{2, 3} &= 2 n W_{ \alpha \beta \gamma \delta } \text{.}
\end{align*}
Summing the above and recalling the symmetries of $W$, we then obtain
\begin{align*}
\sum_{ i = 1 }^2 \sum_{ j = 1 }^4 I_{ i, j } &= 2 n W_{ \alpha \beta \gamma \delta } + W^\lambda{}_\alpha{}^\mu{}_\gamma W_{ \lambda \beta \mu \delta } + W^\lambda{}_\beta{}^\mu{}_\delta W_{ \lambda \alpha \mu \gamma } \\
&\qquad - W^\lambda{}_\beta{}^\mu{}_\gamma W_{ \lambda \alpha \mu \delta } - W^\lambda{}_\alpha{}^\mu{}_\delta W_{ \lambda \beta \mu \gamma } + W^{ \lambda \mu }{}_{ \gamma \delta } W_{ \alpha \beta \lambda \mu } \text{.}
\end{align*}
Combining \eqref{eql.aads_weyl_pre_1}, \eqref{eql.aads_weyl_pre_2}, and the above results in the remaining equation of \eqref{eq.aads_weyl_pre}.

\subsection{Proof of Proposition \ref{thm.aads_bianchi}} \label{sec.aads_bianchi}

Let us begin with the first two identities in \eqref{eq.aads_bianchi}.
Applying the first equation in \eqref{eq.aads_decomp} to $F := W$ (with permuted indices) yields
\begin{align*}
\rho^2 \nabla_\rho W_{ \rho a \rho b } &= \rho^{-2} \Dvm_\rho ( \rho^2 \wv^2_{ a b } ) \\
&= \Dvm_\rho \wv^2_{ a b } + 2 \rho^{-1} \wv^2_{ a b } \text{,} \\
\notag \rho^2 \nabla_\rho W_{ \rho a b c } &= \rho^{-2} \Dvm_\rho ( \rho^2 \wv^1_{ a b c } ) \\
&= \Dvm_\rho \wv^1_{ a b c } + 2 \rho^{-1} \wv^1_{ a b c } \text{.}
\end{align*}
Applying \eqref{eq.aads_metric} and the first identity in \eqref{eq.aads_weyl_pre} to the left-hand sides of the above yields
\begin{align}
\label{eql.aads_bianchi_0} \Dvm_\rho \wv^2_{ a b } + 2 \rho^{-1} \wv^2_{ a b } &= \gv^{ c d } \nabla_c ( \rho^2 W )_{ \rho b a d } \text{,} \\
\notag \Dvm_\rho \wv^1_{ a b c } + 2 \rho^{-1} \wv^1_{ a b c } &= - \gv^{ d e } \nabla_d ( \rho^2 W )_{ e a b c } \text{.}
\end{align}

Applying the second identity of \eqref{eq.aads_decomp} to the first equation in \eqref{eql.aads_bianchi_0}, we have
\begin{align}
\label{eql.aads_bianchi_1} \Dvm_\rho \wv^2_{ a b } + 2 \rho^{-1} \wv^2_{ a b } &= \gv^{ c d } \Dv_c \wv^1_{ b a d } - \rho^{-1} \gv^{ c d } \, \wv^0_{ d a c b } + \rho^{-1} \gv^{ c d } \gv_{ c d } \, \wv^2_{ a b } - \rho^{-1} \gv^{ c d } \gv_{ c a } \, \wv^2_{ d b } \\
\notag &\qquad + \sch{ \gv^{-2}, \Lv, \wv^0 }_{ a b } + \sch{ \gv^{-1}, \Lv, \wv^2 }_{ a b } \\
\notag &= \gv^{ c d } \Dv_c \wv^1_{ b a d } + n \rho^{-1} \wv^2_{ a b } + \sch{ \gv^{-2}, \Lv, \wv^0 }_{ a b } + \sch{ \gv^{-1}, \Lv, \wv^2 }_{ a b } \text{,}
\end{align}
where in the last step, we also used Definition \ref{def.aads_weyl} and the trace-free property of $W$ to obtain
\[
- \gv^{ c d } \, \wv^0_{ d a c b } = \wv^2_{ a b } \text{.}
\]
The first part of \eqref{eq.aads_bianchi} follows immediately from \eqref{eql.aads_bianchi_1}.

Similarly, applying \eqref{eq.aads_decomp} to the second part of \eqref{eql.aads_bianchi_0} yields
\begin{align}
\label{eql.aads_bianchi_2} \Dvm_\rho \wv^1_{ a b c } + 2 \rho^{-1} \wv^1_{ a b c } &= - \gv^{ d e } \Dv_d \wv^0_{ e a b c } + \rho^{-1} \gv^{ d e } \gv_{ d e } \, \wv^1_{ a b c } - \rho^{-1} \gv^{ d e } \gv_{ d a } \, \wv^1_{ e b c } \\
\notag &\qquad + \rho^{-1} \gv^{ d e } \gv_{ d b } \, \wv^1_{ c e a } - \rho^{-1} \gv^{ d e } \gv_{ d c } \, \wv^1_{ b e a } + \sch{ \gv^{-1}, \Lv, \wv^1 }_{ a b c } \\
\notag &= - \gv^{ d e } \Dv_d \wv^0_{ e a b c } + n \rho^{-1} \, \wv^1_{ a b c } + \sch{ \gv^{-1}, \Lv, \wv^1 }_{ a b c } \text{,}
\end{align}
where in the last step, we noted from Definition \ref{def.aads_weyl} and the symmetries of $W$ that
\[
- \wv^1_{ a b c } + \wv^1_{ c b a } - \wv^1_{ b c a } = 0 \text{.}
\]
The second identity in \eqref{eq.aads_bianchi} now follows from \eqref{eql.aads_bianchi_2}.

For the two remaining parts of \eqref{eq.aads_bianchi}, we again start with the first part of \eqref{eq.aads_decomp}:
\[
\Dvm_\rho \wv^1_{ a b c } + 2 \rho^{-1} \wv^1_{ a b c } = \rho^2 \nabla_\rho W_{ \rho a b c } \text{,} \qquad \Dvm_\rho \wv^0_{ a b c d } + 2 \rho^{-1} \wv^0_{ a b c d } = \rho^2 \nabla_\rho W_{ a b c d } \text{.}
\]
The right-hand sides of the above can then be expanded using the Bianchi identity for $W$:
\begin{align*}
\Dvm_\rho \wv^1_{ a b c } + 2 \rho^{-1} \wv^1_{ a b c } &= \nabla_b ( \rho^2 W )_{ \rho a \rho c } - \nabla_c ( \rho^2 W )_{ \rho a \rho b } \text{,} \\
\Dvm_\rho \wv^0_{ a b c d } + 2 \rho^{-1} \wv^0_{ a b c d } &= \nabla_a ( \rho^2 W )_{ \rho b c d } - \nabla_b ( \rho^2 W )_{ \rho a c d } \text{.}
\end{align*}
Applying \eqref{eq.aads_decomp} to each term on the right-hand side of the above, we then obtain
\begin{align*}
\Dvm_\rho \wv^1_{ a b c } + 2 \rho^{-1} \wv^1_{ a b c } &= \Dv_b \wv^2_{ a c } + \rho^{-1} \wv^1_{ c b a } + \rho^{-1} \wv^1_{ a b c } - \Dv_c \wv^2_{ a b } - \rho^{-1} \wv^1_{ b c a } - \rho^{-1} \wv^1_{ a c b } \\
&\qquad + \sch{ \gv^{-1}, \Lv, \wv^1 }_{ a b c } \\
&= \Dv_b \wv^2_{ a c } - \Dv_c \wv^2_{ a b } + 3 \rho^{-1} \wv^1_{ a b c } + \sch{ \gv^{-1}, \Lv, \wv^1 }_{ a b c } \text{,} \\
\Dvm_\rho \wv^0_{ a b c d } + 2 \rho^{-1} \wv^0_{ a b c d } &= \Dv_a \wv^1_{ b c d } + \rho^{-1} \wv^0_{ a b c d } - \rho^{-1} \gv_{ a c } \, \wv^2_{ b d } + \rho^{-1} \gv_{ a d } \, \wv^2_{ b c } - \Dv_b \wv^1_{ a c d } - \rho^{-1} \wv^0_{ b a c d } \\
&\qquad + \rho^{-1} \gv_{ b c } \, \wv^2_{ a d } - \rho^{-1} \gv_{ b d } \, \wv^2_{ a c } + \sch{ \gv^{-1}, \Lv, \wv^0 }_{ a b c d } + \sch{ \Lv, \wv^2 }_{ a b c d } \\
&= \Dv_a \wv^1_{ b c d } - \Dv_b \wv^1_{ a c d } + 2 \rho^{-1} \wv^0_{ a b c d } + \rho^{-1} \gv_{ a d } \, \wv^2_{ b c } + \rho^{-1} \gv_{ b c } \, \wv^2_{ a d } \\
&\qquad - \rho^{-1} \gv_{ a c } \, \wv^2_{ b d } - \rho^{-1} \gv_{ b d } \, \wv^2_{ a c } + \sch{ \gv^{-1}, \Lv, \wv^0 }_{ a b c d } + \sch{ \Lv, \wv^2 }_{ a b c d } \text{.}
\end{align*}
The last two identities of \eqref{eq.aads_bianchi} are now immediate consequences of the above.

\subsection{Proof of Proposition \ref{thm.aads_wave}} \label{sec.aads_wave}

Throughout, we assume all indices are with respect to $\varphi$ and $\varphi_\rho$, for some arbitrary coordinates $( U, \varphi )$ on $\mi{I}$.
Also, for convenience, we define $Q$ by
\begin{equation}
\label{eql.aads_wave_0} Q_{ \alpha \beta \gamma \delta } := g^{ \lambda \kappa } g^{ \mu \nu } ( 2 W_{ \lambda \alpha \mu \delta } W_{ \kappa \beta \nu \gamma } - 2 W_{ \lambda \alpha \mu \gamma } W_{ \kappa \beta \nu \delta } - W_{ \lambda \mu \gamma \delta } W_{ \alpha \beta \kappa \nu } ) \text{,}
\end{equation}
that is, the right-hand side of the wave equation in \eqref{eq.aads_weyl_pre}.
Then, by \eqref{eq.aads_metric} and \eqref{eq.aads_weyl}, we have
\begin{align}
\label{eql.aads_wave_1} Q_{ \rho b \rho d } &= g^{ \lambda \kappa } g^{ \mu \nu } ( 2 W_{ \lambda \rho \mu d } W_{ \kappa b \nu \rho } - 2 W_{ \lambda \rho \mu \rho } W_{ \kappa b \nu d } - W_{ \lambda \mu \rho d } W_{ \rho b \kappa \nu } ) \\
\notag &= \sch{ \gv^{-1}, \wv^2, \wv^2 } + \sch{ \gv^{-2}, \wv^1, \wv^1 } + \sch{ \gv^{-2}, \wv^0, \wv^2 } \text{,} \\
\notag Q_{ \rho b c d } &= g^{ \lambda \kappa } g^{ \mu \nu } ( 2 W_{ \lambda \rho \mu d } W_{ \kappa b \nu c } - 2 W_{ \lambda \rho \mu c } W_{ \kappa b \nu d } - W_{ \lambda \mu c d } W_{ \rho b \kappa \nu } ) \\
\notag &= \sch{ \gv^{-1}, \wv^1, \wv^2 } + \sch{ \gv^{-2}, \wv^0, \wv^1 } \text{,} \\
\notag Q_{ a b c d } &= g^{ \lambda \kappa } g^{ \mu \nu } ( 2 W_{ \lambda a \mu d } W_{ \kappa b \nu c } - 2 W_{ \lambda a \mu c } W_{ \kappa b \nu d } - W_{ \lambda \mu c d } W_{ a b \kappa \nu } ) \\
\notag &= \sch{ \wv^2, \wv^2 } + \sch{ \gv^{-1}, \wv^1, \wv^1 } + \sch{ \gv^{-2}, \wv^0, \wv^0 } \text{.}
\end{align}

We now apply \eqref{eq.aads_weyl} and the last equation in \eqref{eq.aads_decomp} to $W$ to obtain
\begin{align}
\label{eql.aads_wave_20} \rho^2 \Box W_{ \rho a \rho b } &= \rho^{-2} \Boxm ( \rho^2 \wv^2 )_{ a b } + 2 \rho \gv^{ c d } ( \Dv_c \wv^1_{ b d a } + \ms{D}_c \wv^1_{ a d b } ) - 0 - ( 2 n + 2 ) \, \wv^2_{ a b } \\
\notag &\qquad + 4 \, \wv^2_{ a b } + 2 \gv^{ c d } \, \wv^0_{ c a d b } + 0 + \ms{E}^2_{ a b } \\
\notag &= \rho^{-2} \Boxm ( \rho^2 \wv^2 )_{ a b } + 2 \rho \gv^{ c d } ( \Dv_c \wv^1_{ b d a } + \ms{D}_c \wv^1_{ a d b } ) - 2 n \, \wv^2_{ a b } + \ms{E}^2_{ a b } \text{,}
\end{align}
where we used that $-\ms{w}^2$ is the $\gv$-trace of $\ms{w}^0$, and where the error terms $\ms{E}^2$ are given by
\begin{align}
\label{eql.aads_wave_21} \ms{E}^2 &= \rho^2 \, \sch{ \gv^{-2}, \Lv, \Dv \wv^1 } + \rho^2 \, \sch{ \gv^{-2}, \Dv \Lv, \wv^1 } + \rho \, \sch{ \gv^{-1}, \Lv, \wv^2 } + \rho \, \sch{ \gv^{-2}, \Lv, \wv^0 } \\
\notag &\qquad + \rho^2 \, \sch{ \gv^{-2}, \Lv, \Lv, \wv^2 } + \rho^2 \, \sch{ \gv^{-3}, \Lv, \Lv, \wv^0 } + \rho \, \sch{ \gv, \gv^{-2}, \Lv, \wv^2 } \text{,}
\end{align}
Applying the second part of \eqref{eq.aads_weyl_pre} and the first part of \eqref{eq.aads_bianchi} to \eqref{eql.aads_wave_20} yields
\begin{align*}
\rho^{-2} \Boxm ( \rho^2 \wv^2 )_{ a b } + 4 ( n - 2 ) \wv^2_{ a b } &= \rho^2 \Box W_{ \rho a \rho b } + 4 \rho \, \Dvm_\rho \wv^2_{ a b } + 2 n \, \wv^2_{ a b } + \ms{E}^2_{ a b } \\
\notag &= \rho^2 Q_{ \rho a \rho b } + 4 \rho \, \Dvm_\rho \wv^2_{ a b } + \ms{E}^2_{ a b } \text{,}
\end{align*}
Moreover, expanding the left-hand side using \eqref{eq.aads_comm_rho} (with $p = 2$), we see that
\begin{equation}
\label{eql.aads_wave_22} \Boxm \wv^2_{ a b } + 2 ( n - 2 ) \wv^2_{ a b } = \ms{E}^2_{ a b } + \rho^2 Q_{ \rho a \rho b } \text{,}
\end{equation}
The first equation in \eqref{eq.aads_wave} now follows from \eqref{eql.aads_wave_1}, \eqref{eql.aads_wave_21}, and the above.

The remaining parts of \eqref{eq.aads_wave} are treated similarly.
Again, we first apply the last part of \eqref{eq.aads_decomp}:
\begin{align}
\label{eql.aads_wave_30} \rho^2 \Box W_{ \rho a b c } &= \rho^{-2} \Boxm ( \rho^2 \wv^1 )_{ a b c } + 2 \rho \gv^{ d e } \, \Dv_d \wv^0_{ e a b c } + 2 \rho \, ( \Dv_c \wv^2_{ a b } - \Dv_b \wv^2_{ a c } ) \\
\notag &\qquad - ( n + 3 ) \, \wv^1_{ a b c } + \ms{E}^1_{ a b c } \text{,} \\
\notag \rho^2 \Box W_{ a b c d } &= \rho^{-2} \Boxm ( \rho^2 \wv^0 )_{ a b c d } - 2 \rho \, ( \Dv_a \wv^1_{ b c d } - \Dv_b \wv^1_{ a c d } + \Dv_c \wv^1_{ d a b } - \Dv_d \wv^1_{ c a b } ) \\
\notag &\qquad - 4 \, \wv^0_{ a b c d } + 2 ( \gv_{ a c } \, \wv^2_{ b d } - \gv_{ a d } \, \wv^2_{ b c } - \gv_{ b c } \, \wv^2_{ a d } + \gv_{ b d } \, \wv^2_{ a c } ) + \ms{E}^0_{ a b c d } \text{,}
\end{align}
where we also made use of the symmetries of $W$, and where the corresponding error terms are
\begin{align}
\label{eql.aads_wave_31} \ms{E}^1 &= \rho^2 \, \sch{ \gv^{-2}, \Lv, \Dv \wv^0 } + \rho^2 \, \sch{ \gv^{-1}, \Lv, \Dv \wv^2 } + \rho^2 \, \sch{ \gv^{-2}, \Dv \Lv, \wv^0 } + \rho^2 \, \sch{ \gv^{-1}, \Dv \Lv, \wv^2 } \\
\notag &\qquad + \rho \, \sch{ \gv^{-1}, \Lv, \wv^1 } + \rho^2 \, \sch{ \gv^{-2}, \Lv, \Lv, \wv^1 } + \rho \, \sch{ \gv, \gv^{-2}, \Lv, \wv^1 } \text{,} \\
\notag \ms{E}^0 &= \rho^2 \, \sch{ \gv^{-1}, \Lv, \Dv \wv^1 } + \rho^2 \, \sch{ \gv^{-1}, \Dv \Lv, \wv^1 } + \rho \, \sch{ \gv^{-1}, \Lv, \wv^0 } \\
\notag &\qquad + \rho^2 \, \sch{ \gv^{-2}, \Lv, \Lv, \wv^0 } + \rho \, \sch{ \gv, \gv^{-1}, \Lv, \wv^2 } + \rho^2 \, \sch{ \gv^{-1}, \Lv, \Lv, \wv^2 } \text{.}
\end{align}
The first-order terms in the right-hand sides of \eqref{eql.aads_wave_30} can be further expanded using \eqref{eq.aads_bianchi}, and the terms involving $\Boxm$ in \eqref{eql.aads_wave_32} can be expanded using \eqref{eq.aads_comm_rho}:
\begin{align}
\label{eql.aads_wave_32} \rho^2 \Box W_{ \rho a b c } &= \rho^{-2} \Boxm ( \rho^2 \wv^1 )_{ a b c } - 4 \rho \, \Dvm_\rho \wv^1_{ a b c } + ( n - 5 ) \, \wv^1_{ a b c } + \ms{E}^1_{ a b c } \\
\notag &= \Boxm \wv^1_{ a b c } - ( n + 1 ) \, \wv^1_{ a b c } + \ms{E}^1_{ a b c } \text{,} \\
\notag \rho^2 \Box W_{ a b c d } &= \rho^{-2} \Boxm ( \rho^2 \wv^0 )_{ a b c d } - 4 \rho \, \Dvm_\rho \wv^0_{ a b c d } - 4 \, \wv^0_{ a b c d } \\
\notag &\qquad + 2 ( \gv_{ a d } \, \wv^2_{ b c } + \gv_{ b c } \, \wv^2_{ a d } - \gv_{ a c } \, \wv^2_{ b d } - \gv_{ b d } \, \wv^2_{ a c } ) + \ms{E}^0_{ a b c d } \\
\notag &= \Boxm \wv^0_{ a b c d } - 2n \, \wv^0_{ a b c d } + 2 ( \gv_{ a d } \, \wv^2_{ b c } + \gv_{ b c } \, \wv^2_{ a d } - \gv_{ a c } \, \wv^2_{ b d } - \gv_{ b d } \, \wv^2_{ a c } ) + \ms{E}^0_{ a b c d } \text{.}
\end{align}

The left-hand sides of \eqref{eql.aads_wave_32} can be expanded using \eqref{eq.aads_weyl_pre}, \eqref{eql.aads_wave_0}, and \eqref{eql.aads_wave_1}:
\begin{align*}
\rho^2 \Box W_{ \rho a b c } &= - 2 n \, \wv^1_{ a b c } + \rho^2 \, Q_{ \rho a b c } \text{,} \\
\rho^2 \Box W_{ a b c d } &= - 2 n \, \wv^0_{ a b c d } + \rho^2 \, Q_{ a b c d } \text{.}
\end{align*}
Combining the above with \eqref{eql.aads_wave_1} and \eqref{eql.aads_wave_32} results in the second equation in \eqref{eq.aads_wave}, as well as
\[
\Boxm \wv^0_{ a b c d } = - 2 ( \gv_{ a d } \, \wv^2_{ b c } + \gv_{ b c } \, \wv^2_{ a d } - \gv_{ a c } \, \wv^2_{ b d } - \gv_{ b d } \, \wv^2_{ a c } ) + \ms{E}^0_{ a b c d } + \rho^2 Q_{ a b c d } \text{.}
\]
Recalling \eqref{eq.aads_wstar}, \eqref{eql.aads_wave_22}, and the above, we then see that
\begin{align*}
\Boxm \wv^\star_{ a b c d } &= \ms{E}^0_{ a b c d } - \frac{1}{ n - 2 } ( \gv_{ a d } \ms{E}^2_{ b c } + \gv_{ b c } \ms{E}^2_{ a d } - \gv_{ a c } \ms{E}^2_{ b d } - \gv_{ b d } \ms{E}_{ a c } ) + \rho^2 Q_{ a b c d } \\
&\qquad - \frac{ \rho^2 }{ n - 2 } ( \gv_{ a d } Q_{ \rho b \rho c } + \gv_{ b c } Q_{ \rho a \rho d } - \gv_{ a c } Q_{ \rho b \rho d } - \gv_{ b d } Q_{ \rho a \rho c } ) \text{.}
\end{align*}
The last equation in \eqref{eq.aads_wave} now follows from \eqref{eql.aads_wave_1}, \eqref{eql.aads_wave_21}, \eqref{eql.aads_wave_31}, and the above.

\subsection{Proof of Proposition \ref{thm.aads_O_regular}} \label{sec.aads_O_regular}

The first two parts of \eqref{eq.aads_O_g} follow immediately from Theorem \ref{thm.aads_fg}, in particular from the boundary limits for $\gv$ and $\gv^{-1}$.
The last two parts of \eqref{eq.aads_O_g} also follow from Theorem \ref{thm.aads_fg}, using the boundary limits for $\mi{L}_\rho \gv$ and $\mi{L}_\rho^2 \gv$, along with Taylor's theorem.

Next, for the Weyl curvature components, the key is the formulation \eqref{eq.aads_connection} of the Einstein-vacuum equations.
From the boundary limits in Theorem \ref{thm.aads_fg}, we see that
\begin{equation}
\label{eql.aads_O_regular_1} \rho^{-1} \mi{L}_\rho \gv = \oo{ M - 2 }{ 1 } \text{,} \qquad - \mi{L}^2_\rho \gv + \rho^{-1} \mi{L}_\rho \gv = \oo{ M - 2 }{1} = \oo{ M - 3 }{ \rho } \text{.}
\end{equation}
Moreover, writing the components of $\ms{R}$ in terms of coordinate derivatives of $\gv$ yields
\begin{equation}
\label{eql.aads_O_regular_2} \ms{R} = \oo{ M - 2 }{1} \text{.}
\end{equation}
Combining \eqref{eq.aads_connection}, \eqref{eq.aads_O_g}, \eqref{eql.aads_O_regular_1}, and \eqref{eql.aads_O_regular_2} results in the asymptotics for $\wv^0$, $\wv^1$, and $\wv^2$ in \eqref{eq.aads_O_w}.
The asymptotics for $\wv^\star$ in \eqref{eq.aads_O_w} then follow from \eqref{eq.aads_wstar} and the above.

The asymptotics for $\mi{L}_\rho \wv^0$, $\mi{L}_\rho \wv^1$, and $\mi{L}_\rho \wv^2$ are proved similarly---we apply $\mi{L}_\rho$ to both sides of \eqref{eq.aads_connection}, and we then apply the boundary limits in Theorem \ref{thm.aads_fg} to each term on the right-hand side.
In addition, we make use of the following observations:
\begin{itemize}
\item For $\mi{L}_\rho \wv^1$, we also use \eqref{eq.aads_comm} to commute $\mi{L}_\rho$ with $\Dv$.

\item For $\mi{L}_\rho \wv^0$, we express the components of $\Rv$ in terms of derivatives of $\gv$, which yields
\[
\mi{L}_\rho \Rv = \oo{ M - 3 }{1} \text{.}
\]

\item For both $\mi{L}_\rho \wv^0$ and $\mi{L}_\rho \wv^2$, we also note that
\[
\mi{L}_\rho ( \rho^{-1} \Lv ) = \oo{ M - 3 }{1} \text{.}
\]
\end{itemize}
Finally, the asymptotics for $\mi{L}_\rho \wv^\star$ once again follow from \eqref{eq.aads_wstar} and the above.

\subsection{Proof of Proposition \ref{thm.sys_O_QB}} \label{sec.sys_O_QB}

The first part of \eqref{eq.sys_O_QB} is a consequence of the following identity, which holds with respect to any coordinate system $( U, \varphi )$ on $\mi{I}$:
\[
\gv^{ab} - \check{\gv}^{ab} = - \gv^{ac} ( \gv_{cd} - \check{\gv}_{cd} ) \check{\gv}^{bd} \text{,}
\]

Next, we can view \eqref{Qdef} as a system of differential equations for the components of $\ms{Q}$, with coefficients and sources given by $\gv^{-1}$, $\gv - \check{\gv}$, and $\Lv$.
By solving these equations, we can write the components of $\ms{Q} |_\sigma$, $\sigma \in ( 0, \rho_0 ]$, as integrals in $\rho$ (from $0$ to $\sigma$) of some combination of $\gv^{-1}$, $\gv - \check{\gv}$, and $\Lv$, along with a matrix exponential factor.
Combining the above with \eqref{eq.aads_O_g} results in the second part of \eqref{eq.sys_O_QB}.
(In particular, we obtain one extra power of $\rho$ from the integral with respect to $\rho$.)
The third part of \eqref{eq.sys_O_QB} then follows from \eqref{eq.aads_O_g}, \eqref{Bdef}, and the second part of \eqref{eq.sys_O_QB}.

It remains to prove \eqref{eq.sys_diff_Gamma}.
The first part of \eqref{eq.sys_diff_Gamma} is immediate, since it is clear from the first two parts of \eqref{eq.aads_Gamma} that $\smash{\Gamma^\alpha_{ \rho \rho }, \check{\Gamma}^\alpha_{ \rho \rho }}$ and $\smash{\Gamma^\rho_{ \alpha \rho }, \check{\Gamma}^\rho_{ \alpha \rho }}$ are identical.
Moreover, the second identity in \eqref{eq.sys_diff_Gamma} follows immediately from applying the fifth part of \eqref{eq.aads_Gamma}.
Similarly, for the third part of \eqref{eq.sys_diff_Gamma}, we use the third and fourth parts of \eqref{eq.aads_Gamma} in order to obtain
\[
( \Gamma - \check{\Gamma} )^a_{ \rho b } = ( \Gammav - \check{\Gammav} )^a_{ \rho b } = \frac{1}{2} ( \gv^{ c d } \Lv_{ a d } - \check{\gv}^{cd} \check{\Lv}_{ a d } ) \text{.}
\]
Recalling the asymptotics of \eqref{eq.aads_O_g} and the first part of \eqref{eq.sys_O_QB} yields, as desired,
\begin{align*}
( \Gammav - \check{\Gammav} )^a_{ \rho b } &= \frac{1}{2} ( \gv^{cd} - \check{\gv}^{cd} ) \Lv_{ad} + \frac{1}{2} \check{\gv}^{cd} ( \Lv_{ad} - \check{\Lv}_{ad} ) \\
&= \oo{ M - 2 }{ \rho; \gv^{-1} - \check{\gv}^{-1} }^a{}_b + \oo{M}{ 1; \Lv - \check{\Lv} }^a{}_b \\
&= \oo{ M - 2 }{ \rho; \gv - \check{\gv} }^a{}_b + \oo{M}{ 1; \Lv - \check{\Lv} }^a{}_b \text{.}
\end{align*}

For the last part of \eqref{eq.sys_diff_Gamma}, note the last part of \eqref{eq.aads_Gamma} implies
\[
( \Gamma - \check{\Gamma} )^c_{ a b } = ( \Gammav - \check{\Gammav} )^c_{ a b } \text{.}
\]
Expanding the Christoffel symbols, we then obtain, via a direct computation,
\begin{align*}
( \Gammav - \check{\Gammav} )^c_{ a b } &= \frac{1}{2} \gv^{cd} ( \partial_a \gv_{db} + \partial_b \gv_{da} - \partial_d \gv_{ab} ) - \frac{1}{2} \check{\gv}^{cd} ( \partial_a \check{\gv}_{db} + \partial_b \check{\gv}_{da} - \partial_d \check{\gv}_{ab} ) \\
&= \frac{1}{2} \check{\gv}^{cd} [ \Dv_a ( \gv - \check{\gv} )_{db} + \Dv_b ( \gv - \check{\gv} )_{da} - \Dv_d ( \gv - \check{\gv} )_{ab} ] \text{,}
\end{align*}
as desired (see also \cite[Appendix D]{wald:gr}).
The final equality in \eqref{eq.sys_diff_Gamma} now follows from \eqref{eq.aads_O_g}.

\subsection{Proof of Proposition \ref{thm.sys_transport}} \label{sec.sys_transport}

The first part of \eqref{eq.sys_transport} follows from the definitions \eqref{eq.aads_sff} of $\Lv$ and $\check{\Lv}$.
For the second part of \eqref{eq.sys_transport}, we combine the asymptotics \eqref{eq.aads_O_g} with the definition \eqref{Qdef}.

Next, we take the difference of two applications of the third part of \eqref{eq.aads_connection} to obtain
\[
\mi{L}_\rho ( \Lv - \check{\Lv} )_{ab} = \rho^{-1} ( \Lv - \check{\Lv} )_{ab} - 2 ( \wv^2 - \check{\wv}^2 )_{ab} + \frac{1}{4} ( \gv^{ c d } \Lv_{ a d } \Lv_{ b c } - \check{\gv}^{ c d } \check{\Lv}_{ a d } \check{\Lv}_{ b c } ) \text{.}
\]
The last term can be treated using \eqref{eq.aads_O_g} and the first part of \eqref{eq.sys_O_QB}, and we obtain
\[
\mi{L}_\rho ( \Lv - \check{\Lv} ) = \rho^{-1} ( \Lv - \check{\Lv} ) - 2 ( \wv^2 - \check{\wv}^2 ) + \oo{M - 2}{ \rho^2; \gv - \check{\gv} } + \oo{M - 2}{ \rho; \Lv - \check{\Lv} } \text{.}
\]
The third part of \eqref{eq.sys_transport} follows immediately from the above, since \eqref{eq.aads_O_w} and \eqref{Wdef} imply
\[
\wv^2 - \check{\wv}^2 = \ms{W}^2 + \oo{ M - 3 }{ \rho; \gv - \check{\gv} } + \oo{ M - 3 }{ \rho; \ms{Q} } \text{.}
\]

For the final identity in \eqref{eq.sys_transport}, we begin by applying $\mi{L}_\rho$ to \eqref{Bdef}, and then commuting $\Dv$ and $\mi{L}_\rho$ via \eqref{eq.aads_comm}.
The error term in \eqref{eq.aads_comm} can then be expressed as asymptotic terms using \eqref{eq.aads_O_g}:
\begin{align*}
\mi{L}_\rho \ms{B}_{cab} &= \Dv_c \mi{L}_\rho ( \gv - \check{\gv} )_{ab} - \Dv_a \mi{L}_\rho ( \gv - \check{\gv} )_{cb} - \Dv_b \mi{L}_\rho \ms{Q}_{ca} \\
&\qquad + \oo{M-3}{ \rho; \gv - \check{\gv} }_{cab} + \oo{M-3}{ \rho; \ms{Q} }_{cab} \text{.}
\end{align*}
For the first three terms on the right-hand side, we apply \eqref{eq.aads_sff}, \eqref{eq.aads_O_g}, and \eqref{Qdef} to obtain
\begin{align*}
\mi{L}_\rho \ms{B}_{cab} &= \Dv_c ( \Lv - \check{\Lv} )_{ab} - \Dv_a ( \Lv - \check{\Lv} )_{cb} - \frac{1}{2} \gv^{de} \Dv_b [ \Lv_{ad} ( \gv - \check{\gv} + \ms{Q} )_{ce} ] \\
&\qquad + \frac{1}{2} \gv^{de} \Dv_b [ \Lv_{cd} ( \gv - \check{\gv} + \ms{Q} )_{ae} ] + \oo{M-3}{ \rho; \gv - \check{\gv} }_{cab} + \oo{M-3}{ \rho; \ms{Q} }_{cab} \\
&= ( \Dv_c \Lv_{ab} - \Dv_a \Lv_{cb} ) - ( \Dv_c \check{\Lv}_{ab} - \Dv_a \check{\Lv}_{cb} ) - \frac{1}{2} \gv^{de} \Lv_{ad} \Dv_b ( \gv - \check{\gv} + \ms{Q} )_{ce} \\
&\qquad + \frac{1}{2} \gv^{de} \Lv_{cd} \Dv_b ( \gv - \check{\gv} + \ms{Q} )_{ae} + \oo{M-3}{ \rho; \gv - \check{\gv} }_{cab} + \oo{M-3}{ \rho; \ms{Q} }_{cab} \text{.}
\end{align*}
Applying the first part of \eqref{eq.aads_connection}, we then have
\begin{align}
\label{eql.sys_transport_1} \mi{L}_\rho \ms{B}_{cab} &= 2 ( \wv^1 - \check{\wv}^1 )_{bac} - ( \Dv - \check{\Dv} )_c \check{\Lv}_{ab} + ( \Dv - \check{\Dv} )_a \check{\Lv}_{cb} - \frac{1}{2} \gv^{de} \Lv_{ad} \Dv_b ( \gv - \check{\gv} + \ms{Q} )_{ce} \\
\notag &\qquad + \frac{1}{2} \gv^{de} \Lv_{cd} \Dv_b ( \gv - \check{\gv} + \ms{Q} )_{ae} + \oo{M-3}{ \rho; \gv - \check{\gv} }_{cab} + \oo{M-3}{ \rho; \ms{Q} }_{cab} \text{.}
\end{align}

For the terms in \eqref{eql.sys_transport_1} involving $\Dv - \check{\Dv}$, we expand using the last part of \eqref{eq.aads_Gamma}:
\begin{align*}
- ( \Dv - \check{\Dv} )_c \check{\Lv}_{ab} + ( \Dv - \check{\Dv} )_a \check{\Lv}_{cb} &= ( \Gammav - \check{\Gammav} )^d_{ca} \check{\Lv}_{db} + ( \Gammav - \check{\Gammav} )^d_{cb} \check{\Lv}_{ad} - ( \Gammav - \check{\Gammav} )^d_{ac} \check{\Lv}_{db} - ( \Gammav - \check{\Gammav} )^d_{ab} \check{\Lv}_{cd} \\
&= \frac{1}{2} \check{\gv}^{de} \check{\Lv}_{ad} [ \Dv_c ( \gv - \check{\gv} )_{eb} + \Dv_b ( \gv - \check{\gv} )_{ec} - \Dv_e ( \gv - \check{\gv} )_{bc} ] \\
&\qquad - \frac{1}{2} \check{\gv}^{de} \check{\Lv}_{cd} [ \Dv_a ( \gv - \check{\gv} )_{eb} + \Dv_b ( \gv - \check{\gv} )_{ea} - \Dv_e ( \gv - \check{\gv} )_{ba} ] \text{.}
\end{align*}
Combining the above with \eqref{eql.sys_transport_1}, we conclude that
\begin{align}
\label{eql.sys_transport_2} \mi{L}_\rho \ms{B}_{cab} &= 2 ( \wv^1 - \check{\wv}^1 )_{bac} + \frac{1}{2} ( \check{\gv}^{de} \check{\Lv}_{ad} - \gv^{de} \Lv_{ad} ) \Dv_b ( \gv - \check{\gv} )_{ce} \\
\notag &\qquad - \frac{1}{2} ( \check{\gv}^{de} \check{\Lv}_{cd} - \gv^{de} \Lv_{cd} ) \Dv_b ( \gv - \check{\gv} )_{ae} + \frac{1}{2} \check{\gv}^{de} \check{\Lv}_{ad} [ \Dv_c ( \gv - \check{\gv} )_{eb} - \Dv_e ( \gv - \check{\gv} )_{bc} ] \\
\notag &\qquad - \frac{1}{2} \check{\gv}^{de} \check{\Lv}_{cd} [ \Dv_a ( \gv - \check{\gv} )_{eb} - \Dv_e ( \gv - \check{\gv} )_{ba} ] - \frac{1}{2} \gv^{de} \Lv_{ad} \Dv_b \ms{Q}_{ce} + \frac{1}{2} \gv^{de} \Lv_{cd} \Dv_b \ms{Q}_{ae} \\
\notag &\qquad + \oo{M-3}{ \rho; \gv - \check{\gv} }_{cab} + \oo{M-3}{ \rho; \ms{Q} }_{cab} \\
\notag &= 2 ( \wv^1 - \check{\wv}^1 )_{bac} + \mc{I}_1 + \mc{I}_2 + \mc{I}_3 + \mc{I}_4 - \frac{1}{2} \gv^{de} \Lv_{ad} \Dv_b \ms{Q}_{ce} + \frac{1}{2} \gv^{de} \Lv_{cd} \Dv_b \ms{Q}_{ae} \\
\notag &\qquad + \oo{M-3}{ \rho; \gv - \check{\gv} }_{cab} + \oo{M-3}{ \rho; \ms{Q} }_{cab} \text{.}
\end{align}

For $\mc{I}_1$ and $\mc{I}_2$, we take as ``main terms" the differences $\gv^{-1} - \check{\gv}^{-1}$ and $\Lv - \check{\Lv}$.
(In particular, we treat $\Dv \gv$ and $\Dv \check{\gv}$ as coefficients of the form $\oo{M-1}{1}$.)
Recalling \eqref{eq.aads_O_g} and \eqref{eq.sys_O_QB} then yields
\begin{align}
\label{eql.sys_transport_11} \mc{I}_1 &= \oo{ M - 2 }{ \rho; \gv - \check{\gv} }_{cab} + \oo{ M - 1 }{ 1; \Lv - \check{\Lv} }_{cab} \text{,} \\
\notag \mc{I}_2 &= \oo{ M - 2 }{ \rho; \gv - \check{\gv} }_{cab} + \oo{ M - 1 }{ 1; \Lv - \check{\Lv} }_{cab} \text{.}
\end{align}
By similar computations, we also obtain
\begin{align}
\label{eql.sys_transport_12} \mc{I}_3 &= \frac{1}{2} \gv^{de} \Lv_{ad} [ \Dv_c ( \gv - \check{\gv} )_{eb} - \Dv_e ( \gv - \check{\gv} )_{bc} ] + \oo{ M - 2 }{ \rho; \gv - \check{\gv} }_{cab} + \oo{ M - 1 }{ 1; \Lv - \check{\Lv} }_{cab} \text{,} \\
\notag \mc{I}_4 &= - \frac{1}{2} \check{\gv}^{de} \check{\Lv}_{cd} [ \Dv_a ( \gv - \check{\gv} )_{eb} - \Dv_e ( \gv - \check{\gv} )_{ba} ] + \oo{ M - 2 }{ \rho; \gv - \check{\gv} }_{cab} + \oo{ M - 1 }{ 1; \Lv - \check{\Lv} }_{cab} \text{.}
\end{align}
Combining \eqref{eql.sys_transport_2}--\eqref{eql.sys_transport_12} and recalling \eqref{Bdef} results in
\begin{align*}
\mi{L}_\rho \ms{B}_{cab} &= 2 ( \wv^1 - \check{\wv}^1 )_{bac} + \frac{1}{2} \gv^{de} \Lv_{ad} \ms{B}_{ceb} - \frac{1}{2} \gv^{de} \Lv_{cd} \ms{B}_{aeb} + \oo{M-3}{ \rho; \gv - \check{\gv} }_{cab} \\
&\qquad + \oo{M-3}{ \rho; \ms{Q} }_{cab} + \oo{ M - 1 }{ 1; \Lv - \check{\Lv} }_{cab} \text{.}
\end{align*}
The above, along with Proposition \ref{thm.aads_O_regular} and \eqref{Wdef}, immediately yield the last part of \eqref{eq.sys_transport}.

It remains to prove the derivative transport equations \eqref{eq.sys_transport_deriv}.
These are in fact immediate consequences of the preceding \eqref{eq.sys_transport}, since \eqref{eq.aads_comm} and \eqref{eq.aads_O_g} imply that
\[
\mi{L}_\rho \Dv \ms{A} = \Dv \mi{L}_\rho \Dv \ms{A} + \oo{ M - 3 }{ \rho; \ms{A} }
\]
holds for any vertical tensor field $\ms{A}$.

\subsection{Proof of Proposition \ref{thm.sys_wave}} \label{sec.sys_wave}

We prove all three formulas at once.
Throughout, we fix
\[
( \wv, \check{\wv}, \ms{W} ) \in \{ ( \wv^2, \check{\wv}^2, \ms{W}^2 ), ( \wv^1, \check{\wv}^1, \ms{W}^1 ), ( \wv^\star, \check{\wv}^1, \ms{W}^\star ) \} \text{,}
\]
as well as the corresponding constant $\sigma_{ \wv }$:
\[
\sigma_{ \wv^2 } := 2 ( n - 2 ) \text{,} \qquad \sigma_{ \wv^1 } := n - 1 \text{,} \qquad \sigma_{ \wv^\star } := 0 \text{.}
\]
Then, from \eqref{Wdef}, we see that $\ms{W}$ satisfies the following wave equation:
\begin{align}
\label{eql.sys_wave_0} ( \Boxm + \sigma_\wv ) \ms{W}_{ \bar{a} } &= [ ( \Boxm + \sigma_\wv ) \wv - ( \check{\Boxm} + \sigma_\wv ) \check{\wv} ]_{ \bar{a} } - ( \Boxm - \check{\Boxm} ) \check{\wv}_{ \bar{a} } \\
\notag &\qquad - \gv^{bc} \sum_{ j = 1 }^d ( \Boxm + \sigma_\wv ) [ \check{\wv}_{ \bar{a}_j [b] } ( \gv - \check{\gv} + \ms{Q} )_{ a_j c } ] \\
\notag &= \mi{A}_0 + \mi{A}_1 + \mi{A}_2 \text{.}
\end{align}
Here and below, we index with respect to an arbitrary coordinate system $( U, \varphi )$ on $\mi{I}$.

To expand $\mi{A}_0$, we use that both $\wv$ and $\check{\wv}$ solve wave equations of the form \eqref{eq.aads_wave}, and we take the difference of these two equations.
In particular, we take the difference of each pair of corresponding terms on the right-hand side of the relevant equation in \eqref{eq.aads_wave}.
Like in earlier proofs, each of these can be written as a sum, with each term involving a geometric difference quantity along with other coefficients that can be controlled using Proposition \ref{thm.aads_O_regular} and \eqref{eq.sys_O_QB}.
For example, we can treat\footnote{On the left-hand side, the two schematic terms have the same algebraic form.}
\begin{align*}
\rho \, \sch{ \gv^{-1}, \Lv, \wv^2 } - \rho \, \sch{ \check{\gv}^{-1}, \check{\Lv}, \check{\wv}^2 } &= \rho \, \sch{ \gv^{-1} - \check{\gv}^{-1}, \Lv, \wv^2 } + \rho \, \sch{ \check{\gv}^{-1}, \Lv - \check{\Lv}, \wv^2 } \\
&\qquad + \rho \, \sch{ \check{\gv}^{-1}, \check{\Lv}, \wv^2 - \check{\wv}^2 } \\
&= \oo{ M-2 }{ \rho^2; \gv - \check{\gv} } + \oo{ M-2 }{ \rho; \Lv - \check{\Lv} } + \oo{ M-2 }{ \rho^2; \wv^2 - \check{\wv}^2 } \text{.}
\end{align*}
For differences involving a vertical derivative, one must also take into account the difference $\Dv - \check{\Dv}$ of connections; here, the key is to write this in terms of differences of Christoffel symbols and then apply the last identitiy in \eqref{eq.sys_diff_Gamma}.
One example of this is the following:
\begin{align*}
\rho^2 \, \sch{ \gv^{-1}, \Lv, \Dv \wv^1 } - \rho^2 \, \sch{ \check{\gv}^{-1}, \check{\Lv}, \check{\Dv} \check{\wv}^1 } &= \rho^2 \, \sch{ \gv^{-1} - \check{\gv}^{-1}, \Lv, \Dv \wv^1 } + \rho^2 \, \sch{ \check{\gv}^{-1}, \Lv - \check{\Lv}, \Dv \wv^1 } \\
&\qquad + \rho^2 \, \sch{ \check{\gv}^{-1}, \check{\Lv}, \Dv ( \wv^1 - \check{\wv}^1 ) } \\
&\qquad + \rho^2 \, \sch{ \check{\gv}^{-1}, \check{\Lv}, ( \Dv - \check{\Dv} ) \check{\wv}^1 } \\
&= \oo{M-3}{ \rho^3; \gv - \check{\gv} } + \oo{M-3}{ \rho^2; \Lv - \check{\Lv} } \\
&\qquad + \oo{M-2}{ \rho^3; \Dv ( \wv^1 - \check{\wv}^1 ) } + \oo{M-2}{ \rho^3; \Dv ( \gv - \check{\gv} ) } \text{.}
\end{align*}

After a diligent analysis of all such difference terms, we obtain
\begin{align}
\label{eql.sys_wave_1} \mi{A}_0 &= \oo{ M - 3 }{ \rho^2; \gv - \check{\gv} }_{ \bar{a} } + \oo{ M - 3 }{ \rho; \Lv - \check{\Lv} }_{ \bar{a} } + \oo{M-2}{ \rho^3; \Dv ( \gv - \check{\gv} ) }_{ \bar{a} } \\
\notag &\qquad + \oo{ M - 2 }{ \rho^2; \Dv ( \Lv - \check{\Lv} ) }_{ \bar{a} } + \sum_{ \ms{v} \in \{ \wv^0, \wv^1, \wv^2 \} } [ \oo{ M - 3 }{ \rho^2; \ms{v} - \check{\ms{v}} } + \oo{ M - 2 }{ \rho^3; \Dv ( \ms{v} - \check{\ms{v}} ) } ]_{ \bar{a} } \text{.}
\end{align}
In addition, using Proposition \ref{thm.aads_O_regular} and \eqref{Wdef}, we can then express, for $k \in \{ 1, 2 \}$,
\begin{align}
\label{eql.sys_wave_2} \ms{w}^k - \check{\ms{w}}^k &= \ms{W}^k + \oo{M-2}{ 1; \gv - \check{\gv} } + \oo{M-2}{ 1; \ms{Q} } \text{,} \\
\notag \Dv ( \ms{w}^k - \check{\ms{w}}^k ) &= \Dv \ms{W}^k + \oo{M-3}{ 1; \gv - \check{\gv} } + \oo{M-3}{ 1; \ms{Q} } \\
\notag &\qquad + \oo{M-2}{ 1; \Dv ( \gv - \check{\gv} ) } + \oo{M-2}{ 1; \Dv \ms{Q} } \text{.}
\end{align}
A similar computation can be done for $\ms{w}^0 - \check{\ms{w}}^0$ and $\ms{W}^\star$, but we also recall \eqref{eq.aads_wstar} and \eqref{eql.sys_wave_2}:
\begin{align}
\label{eql.sys_wave_3} \ms{w}^0 - \check{\ms{w}}^0 &= ( \ms{w}^\star - \check{\ms{w}}^\star ) + \sch{ \gv - \check{\gv}, \wv^2 } + \sch{ \check{\gv}^{-1}, \wv^2 - \check{\wv}^2 } \\
\notag &= \ms{W}^\star + \oo{M-2}{ 1; \gv - \check{\gv} } + \oo{M-2}{ 1; \ms{Q} } + \oo{M}{ 1; \ms{W}^2 } \text{,} \\
\notag \Dv ( \ms{w}^0 - \check{\ms{w}}^0 ) &= \Dv \ms{W}^\star + \oo{M-3}{ 1; \gv - \check{\gv} } + \oo{M-3}{ 1; \ms{Q} } + \oo{M-2}{ 1; \Dv ( \gv - \check{\gv} ) } \\
\notag &\qquad + \oo{M-2}{ 1; \Dv \ms{Q} } + \oo{M-1}{ 1; \ms{W}^2 } + \oo{M}{ 1; \Dv \ms{W}^2 } \text{.}
\end{align}
Combining \eqref{eql.sys_wave_1}--\eqref{eql.sys_wave_3}, we conclude that
\begin{align}
\label{eql.sys_wave_4} \mi{A}_0 &= \oo{ M - 3 }{ \rho^2; \gv - \check{\gv} }_{ \bar{a} } + \oo{ M - 3 }{ \rho^2; \ms{Q} } + \oo{ M - 3 }{ \rho; \Lv - \check{\Lv} }_{ \bar{a} } \\
\notag &\qquad + \oo{M-2}{ \rho^3; \Dv ( \gv - \check{\gv} ) }_{ \bar{a} } + \oo{M-2}{ \rho^3; \Dv \ms{Q} }_{ \bar{a} } + \oo{ M - 2 }{ \rho^2; \Dv ( \Lv - \check{\Lv} ) }_{ \bar{a} } \\
\notag &\qquad + \sum_{ \ms{V} \in \{ \ms{W}^\star, \ms{W}^1, \ms{W}^2 \} } [ \oo{ M - 3 }{ \rho^2; \ms{V} } + \oo{ M - 2 }{ \rho^3; \Dv \ms{V} } ]_{ \bar{a} } \text{.}
\end{align}

Next, for $\mi{A}_1$, we apply \eqref{eq.aads_boxm} twice---once with $\Boxm$ and once with $\check{\Boxm}$---and subtract the resulting equations in order to obtain (note that $\mi{L}_\rho$ is independent of the metric)
\begin{align*}
\mi{A}_1 &= - \rho^2 [ \gv^{bc} \Dv_{bc} - \check{\gv}^{bc} \check{\Dv}_{bc} ] \check{\wv}_{ \bar{a} } + \rho^2 \, [ \sch{ \gv^{-1}, \Lv, \mi{L}_\rho \check{\wv} } - \sch{ \check{\gv}^{-1}, \check{\Lv}, \mi{L}_\rho \check{\wv} } ]_{ \bar{a} } \\
&\qquad + \rho \, [ \sch{ \gv^{-1}, \Lv, \check{\wv} } - \sch{ \check{\gv}^{-1}, \check{\Lv}, \check{\wv} } ]_{ \bar{a} } + \rho^2 \, [ \sch{ \gv^{-1}, \mi{L}_\rho \Lv, \check{\wv} } - \sch{ \check{\gv}^{-1}, \mi{L}_\rho \check{\Lv}, \check{\wv} } ]_{ \bar{a} } \\
&\qquad + \rho^2 \, [ \sch{ \gv^{-2}, \Lv, \Lv, \check{\wv} } - \sch{ \check{\gv}^{-2}, \check{\Lv}, \check{\Lv}, \check{\wv} } ]_{ \bar{a} } \text{,}
\end{align*}
where each matching pair of schematic terms has the same algebraic form.
Similar to the previous treatment of $\mi{A}_0$, each schematic difference on the right-hand side of the above can be expanded and then controlled using Proposition \ref{thm.aads_O_regular} and \eqref{eq.sys_O_QB}; this then yields
\begin{align*}
\mi{A}_1 &= - \rho^2 [ \gv^{bc} \Dv_{bc} - \check{\gv}^{bc} \check{\Dv}_{bc} ] \check{\wv}_{ \bar{a} } + \oo{M-3}{ \rho^2; \gv - \check{\gv} }_{ \bar{a} } + \oo{M-3}{ \rho, \Lv - \check{\Lv} }_{ \bar{a} } \\
&\qquad + \oo{M-2}{ \rho^2; \mi{L}_\rho ( \Lv - \check{\Lv} ) }_{ \bar{a} } \text{.}
\end{align*}
In addition, we expand the last term of the above using the third identity in \eqref{eq.sys_transport}:
\begin{align}
\label{eql.sys_wave_10} \mi{A}_1 &= - \rho^2 [ \gv^{bc} \Dv_{bc} - \check{\gv}^{bc} \check{\Dv}_{bc} ] \check{\wv}_{ \bar{a} } + \oo{M-3}{ \rho^2; \gv - \check{\gv} }_{ \bar{a} } + \oo{M-3}{ \rho^3; \ms{Q} }_{ \bar{a} } \\
\notag &\qquad + \oo{M-3}{ \rho, \Lv - \check{\Lv} }_{ \bar{a} } + \oo{M-2}{ \rho^2; \ms{W}^2 }_{ \bar{a} } \\
\notag &= \mi{A}_{ 1, v } + \oo{M-3}{ \rho^2; \gv - \check{\gv} }_{ \bar{a} } + \oo{M-3}{ \rho^3; \ms{Q} }_{ \bar{a} } + \oo{M-3}{ \rho, \Lv - \check{\Lv} }_{ \bar{a} } + \oo{M-2}{ \rho^2; \ms{W}^2 }_{ \bar{a} } \text{.}
\end{align}

To handle $\mi{A}_{ 1, v }$, we begin by expanding
\begin{align*}
\mi{A}_{ 1, v } &= - \rho^2 \gv^{bc} ( \Dv_{bc} - \check{\Dv}_{bc} ) \check{\wv}_{ \bar{a} } - \rho^2 ( \gv^{bc} - \check{\gv}^{bc} ) \check{\Dv}_{bc} \check{\wv}_{ \bar{a} } \\
&= - \rho^2 \gv^{bc} \Dv_b ( \Dv - \check{\Dv} )_c \check{\wv}_{ \bar{a} } - \rho^2 \gv^{bc} ( \Dv - \check{\Dv} )_b \check{\Dv}_c \check{\wv}_{ \bar{a} } - \rho^2 ( \gv^{bc} - \check{\gv}^{bc} ) \check{\Dv}_{bc} \check{\wv}_{ \bar{a} } \text{.}
\end{align*}
The operator $\Dv - \check{\Dv}$ can be expressed in terms of differences of Christoffel symbols.
Applying again Proposition \ref{thm.aads_O_regular}, \eqref{eq.sys_O_QB}, and the last part of \eqref{eq.sys_diff_Gamma}, the above becomes
\begin{align}
\label{eql.sys_wave_11} \mi{A}_{ 1, v } &= \rho^2 \gv^{bc} \sum_{ j = 1 }^d \Dv_b [ ( \Gammav - \check{\Gammav} )^d_{ c a_j } \check{\wv}_{ \bar{a}_j [d] } ] + \oo{ M - 3 }{ \rho^2; \Dv ( \gv - \check{\gv} ) }_{ \bar{a} } + \oo{M-4}{ \rho^2; \gv - \check{\gv} }_{ \bar{a} } \\
\notag &= \rho^2 \gv^{bc} \sum_{ j = 1 }^d \Dv_b \{ \check{\gv}^{de} [ \Dv_c ( \gv - \check{\gv} )_{ e a_j } + \Dv_{ a_j } ( \gv - \check{\gv} )_{ e c } - \Dv_e ( \gv - \check{\gv} )_{ c a_j } ] \check{\wv}_{ \bar{a}_j [d] } \} \\
\notag &\qquad + \oo{M-4}{ \rho^2; \gv - \check{\gv} }_{ \bar{a} } + \oo{ M - 3 }{ \rho^2; \Dv ( \gv - \check{\gv} ) }_{ \bar{a} } \\
\notag &= \rho^2 \gv^{bc} \gv^{de} \sum_{ j = 1 }^d [ \Dv_{bc} ( \gv - \check{\gv} )_{ e a_j } + \Dv_{ b a_j } ( \gv - \check{\gv} )_{ e c } - \Dv_{be} ( \gv - \check{\gv} )_{ c a_j } ] \check{\wv}_{ \bar{a}_j [d] } \\
\notag &\qquad + \oo{M-4}{ \rho^2; \gv - \check{\gv} }_{ \bar{a} } + \oo{ M - 3 }{ \rho^2; \Dv ( \gv - \check{\gv} ) }_{ \bar{a} } \text{.}
\end{align}
(Notice in the last step, any term having less than two derivatives of $\gv - \check{\gv}$ can be included with the schematic terms.
In addition, $\check{\gv}^{de}$ can be converted to $\gv^{de}$ at the cost of an extra term involving $\gv - \check{\gv}$, which can also be absorbed into the schematic terms.)

Recalling the definitions \eqref{Qdef} and \eqref{Bdef}, then \eqref{eql.sys_wave_11} can be rewritten as
\begin{align*}
\mi{A}_{ 1, v } &= \rho^2 \gv^{bc} \gv^{de} \sum_{ j = 1 }^d [ \Dv_{bc} ( \gv - \check{\gv} )_{ e a_j } + \Dv_{ b c } \ms{Q}_{ a_j e } + \Dv_b \ms{B}_{ a_j e c } ] \check{\wv}_{ \bar{a}_j [d] } \\
&\qquad + \oo{M-4}{ \rho^2; \gv - \check{\gv} }_{ \bar{a} } + \oo{ M - 3 }{ \rho^2; \Dv ( \gv - \check{\gv} ) }_{ \bar{a} } \\
&= \rho^2 \gv^{bc} \gv^{de} \sum_{ j = 1 }^d \Dv_{bc} ( \gv - \check{\gv} + \ms{Q} )_{ a_j e } \check{\wv}_{ \bar{a}_j [d] } + \oo{ M - 2 }{ \rho^2; \Dv \ms{B} }_{ \bar{a} } \\
&\qquad + \oo{M-4}{ \rho^2; \gv - \check{\gv} }_{ \bar{a} } + \oo{ M - 3 }{ \rho^2; \Dv ( \gv - \check{\gv} ) }_{ \bar{a} } \text{.}
\end{align*}
Combining \eqref{eql.sys_wave_10} and the above, we conclude that
\begin{align}
\label{eql.sys_wave_12} \mi{A}_1 &= \rho^2 \gv^{bc} \gv^{de} \sum_{ j = 1 }^d \Dv_{bc} ( \gv - \check{\gv} + \ms{Q} )_{ a_j e } \check{\wv}_{ \bar{a}_j [d] } + \oo{M-4}{ \rho^2; \gv - \check{\gv} }_{ \bar{a} } \\
\notag &\qquad + \oo{M-3}{ \rho^3; \ms{Q} }_{ \bar{a} } + \oo{M-3}{ \rho, \Lv - \check{\Lv} }_{ \bar{a} } + \oo{ M - 3 }{ \rho^2; \Dv ( \gv - \check{\gv} ) }_{ \bar{a} } \\
\notag &\qquad + \oo{ M - 2 }{ \rho^2; \Dv \ms{B} }_{ \bar{a} } + \oo{M-2}{ \rho^2; \ms{W}^2 }_{ \bar{a} } \text{.}
\end{align}

Now, for the remaining term $\mi{A}_2$, we begin by expanding
\begin{align}
\label{eql.sys_wave_20} \mi{A}_2 &= - \gv^{de} \sum_{ j = 1 }^d \check{\wv}_{ \bar{a}_j [d] } \, \Boxm ( \gv - \check{\gv} + \ms{Q} )_{ a_j e } - \gv^{de} \sum_{ j = 1 }^d ( \Boxm + \sigma_\wv ) \check{\wv}_{ \bar{a}_j [d] } \, ( \gv - \check{\gv} + \ms{Q} )_{ a_j e } \\
\notag &\qquad - \rho^2 \gv^{de} \sum_{ j = 1 }^d \Dvm_\rho \check{\wv}_{ \bar{a}_j [d] } \, \Dvm_\rho ( \gv - \check{\gv} + \ms{Q} )_{ a_j e } - \rho^2 \gv^{bc} \gv^{de} \sum_{ j = 1 }^d \Dv_b \check{\wv}_{ \bar{a}_j [d] } \, \Dv_c ( \gv - \check{\gv} + \ms{Q} )_{ a_j e } \\
\notag &= \mi{A}_{ 2, 1 } + \mi{A}_{ 2, 2 } + \mi{A}_{ 2, 3 } + \mi{A}_{ 2, 4 } \text{.}
\end{align}
We now expand each of the terms on the right-hand side of \eqref{eql.sys_wave_20}.
First, by Proposition \ref{thm.aads_O_regular},
\begin{equation}
\label{eql.sys_wave_21} \mi{A}_{ 2, 4 } = \oo{ M - 3 }{ \rho^2; \Dv ( \gv - \check{\gv} ) }_{ \bar{a} } + \oo{ M - 3 }{ \rho^2; \Dv \ms{Q} }_{ \bar{a} } \text{.}
\end{equation}
Next, for $\mi{A}_{ 2, 3 }$, we recall \eqref{eq.aads_vertical_connection}, along with \eqref{Qdef} and the first part of \eqref{eq.sys_transport}:
\begin{align}
\label{eql.sys_wave_22} \mi{A}_{ 2, 3 } &= \oo{ M - 3 }{ \rho^2; \mi{L}_\rho ( \gv - \check{\gv} ) }_{ \bar{a} } + \oo{ M - 3 }{ \rho^2; \mi{L}_\rho \ms{Q} }_{ \bar{a} } + \oo{ M - 3 }{ \rho^3; \gv - \check{\gv} }_{ \bar{a} } + \oo{ M - 3 }{ \rho^3; \ms{Q} }_{ \bar{a} } \\
\notag &= \oo{ M - 3 }{ \rho^3; \gv - \check{\gv} }_{ \bar{a} } + \oo{ M - 3 }{ \rho^3; \ms{Q} }_{ \bar{a} } + \oo{ M - 3 }{ \rho^2; \Lv - \check{\Lv} }_{ \bar{a} } \text{.}
\end{align}
For $\mi{A}_{ 2, 2 }$, we begin by applying the wave equation \eqref{eq.aads_wave} satisfied by $\wv$.
By inspecting the right-hand of the appropriate equation and recalling Proposition \ref{thm.aads_O_regular}, we obtain
\begin{align*}
( \Boxm + \sigma_{ \wv } ) \wv &= \sum_{ k = 0 }^2 \oo{ M - 2 }{ \rho^3; \Dv \wv^k } + \sum_{ k = 0 }^2 \oo{ M - 3 }{ \rho^2; \wv^k } \\
&= \oo{ M - 3 }{ \rho^2 } \text{.}
\end{align*}
From the above, we then obtain the following asymptotics:
\begin{equation}
\label{eql.sys_wave_23} \mi{A}_{ 2, 2 } = \oo{ M - 3 }{ \rho^2; \gv - \check{\gv} }_{ \bar{a} } + \oo{ M - 3 }{ \rho^2; \ms{Q} }_{ \bar{a} } \text{.}
\end{equation}

For $\mi{A}_{ 2, 1 }$, we decompose $\Boxm$ using \eqref{eq.aads_boxm} and apply Proposition \ref{thm.aads_O_regular}:
\begin{align}
\label{eql.sys_wave_24} \mi{A}_{ 2, 1 } &= - \rho^2 \gv^{bc} \gv^{de} \sum_{ j = 1 }^d \check{\wv}_{ \bar{a}_j [d] } \, \Dv_{bc} ( \gv - \check{\gv} + \ms{Q} )_{ a_j e } + \oo{ M - 2 }{ \rho^2; \mi{L}_\rho^2 ( \gv - \check{\gv} + \ms{Q} ) }_{ \bar{a} } \\
\notag &\qquad + \oo{ M - 2 }{ \rho; \mi{L}_\rho ( \gv - \check{\gv} + \ms{Q} ) }_{ \bar{a} } + \oo{ M - 2 }{ \rho^2; \gv - \check{\gv} + \ms{Q} }_{ \bar{a} } \text{.}
\end{align}
For the asympototic terms on the right-hand side, we recall \eqref{Qdef} and \eqref{eq.sys_transport}:
\begin{align*}
\mi{L}_\rho ( \gv - \check{\gv} + \ms{Q} ) &= \Lv - \check{\Lv} + \oo{ M - 2 }{ \rho; \gv - \check{\gv} } + \oo{ M - 2 }{ \rho; \ms{Q} } \text{,} \\
\mi{L}_\rho^2 ( \gv - \check{\gv} + \ms{Q} ) &= -2 \ms{W}^2 + \oo{ M - 3 }{ 1; \gv - \check{\gv} } + \oo{ M - 3 }{ 1; \ms{Q} } + \oo{ M - 2 }{ \rho^{-1}; \Lv - \check{\Lv} } \text{.}
\end{align*}
Thus, combining \eqref{eql.sys_wave_24} and the above yields
\begin{align*}
\mi{A}_{ 2, 1 } &= - \rho^2 \gv^{bc} \gv^{de} \sum_{ j = 1 }^d \check{\wv}_{ \bar{a}_j [d] } \, \Dv_{bc} ( \gv - \check{\gv} + \ms{Q} )_{ a_j e } + \oo{ M - 3 }{ \rho^2; \gv - \check{\gv} }_{ \bar{a} } + \oo{ M - 3 }{ \rho^2; \ms{Q} }_{ \bar{a} } \\
&\qquad + \oo{ M - 2 }{ \rho; \Lv - \check{\Lv} }_{ \bar{a} } + \oo{ M - 2 }{ \rho^2; \ms{W}^2 }_{ \bar{a} } \text{,}
\end{align*}
Combining \eqref{eql.sys_wave_20}--\eqref{eql.sys_wave_23} and the above, we have that
\begin{align}
\label{eql.sys_wave_25} \mi{A}_2 &= - \rho^2 \gv^{bc} \gv^{de} \sum_{ j = 1 }^d \check{\wv}_{ \bar{a}_j [d] } \, \Dv_{bc} ( \gv - \check{\gv} + \ms{Q} )_{ a_j e } + \oo{ M - 3 }{ \rho^2; \gv - \check{\gv} }_{ \bar{a} } \\
\notag &\qquad + \oo{ M - 3 }{ \rho^2; \ms{Q} }_{ \bar{a} } + \oo{ M - 2 }{ \rho; \Lv - \check{\Lv} }_{ \bar{a} } + \oo{ M - 3 }{ \rho^2; \Dv ( \gv - \check{\gv} ) }_{ \bar{a} } \\
\notag &\qquad + \oo{ M - 3 }{ \rho^2; \Dv \ms{Q} }_{ \bar{a} } + \oo{ M - 2 }{ \rho^2; \ms{W}^2 }_{ \bar{a} } \text{.}
\end{align}

Finally, both \eqref{eq.sys_wave} and \eqref{Fdef} follow from \eqref{eql.sys_wave_0}, \eqref{eql.sys_wave_4}, \eqref{eql.sys_wave_12}, and \eqref{eql.sys_wave_25}.

\subsection{Proof of Proposition \ref{thm.sys_vanish}} \label{sec.sys_vanish}

The first part of \eqref{eq.sys_vanish} is simply the first equation in \eqref{eq.sys_transport}, while the second identity in \eqref{eq.sys_vanish} follows from the final part of \eqref{eq.aads_connection}, along with Proposition \ref{thm.aads_O_regular} and \eqref{eq.sys_O_QB}.
(Note this is just the third part of \eqref{eq.sys_transport}, except we do not renormalize $\wv^2 - \check{\wv}^2$.)

For the remaining identities, we begin by rewriting the first, third, and fourth parts of \eqref{eq.aads_bianchi} as
\begin{align}
\label{eql.sys_bianchi_0} \mi{L}_\rho ( \rho^{ 2-n } \wv^2 )_{ a b } &= \rho^{ 2-n } \gv^{ c d } \Dv_c \wv^1_{ b a d } + \rho^{ 2-n } \, \sch{ \gv^{-2}, \Lv, \wv^0 }_{ a b } + \rho^{ 2-n } \, \sch{ \gv^{-1}, \Lv, \wv^2 }_{ a b } \text{,} \\
\notag \mi{L}_\rho ( \rho^{-1} \wv^1 )_{ a b c } &= \rho^{-1} \Dv_b \wv^2_{ a c } - \rho^{-1} \Dv_c \wv^2_{ a b } + \rho^{-1} \, \sch{ \gv^{-1}, \Lv, \wv^1 }_{ a b c } \text{,} \\
\notag \mi{L}_\rho \wv^0_{ a b c d } &= \Dv_a \wv^1_{ b c d } - \Dv_b \wv^1_{ a c d } + \rho^{-1} \gv_{ a d } \wv^2_{ b c } + \rho^{-1} \gv_{ b c } \wv^2_{ a d } - \rho^{-1} \gv_{ a c } \wv^2_{ b d } - \rho^{-1} \gv_{ b d } \wv^2_{ a c } \\
\notag &\qquad + \sch{ \gv^{-1}, \Lv, \wv^0 }_{ a b c d } + \sch{ \Lv, \wv^2 }_{ a b c d } \text{.}
\end{align}
In particular, here we converted $\Dvm_\rho$ to $\mi{L}_\rho$ using \eqref{eq.aads_vertical_connection}.
We then take the difference of each of the equations in \eqref{eql.sys_bianchi_0}, applied once with respect to $\gv$ and once with respect to $\check{\gv}$, and we treat the resulting difference terms as in the proofs of Propositions \ref{thm.sys_transport} and \ref{thm.sys_wave}.
In particular, various geometric quantities can be controlled using Proposition \ref{thm.aads_O_regular} and \eqref{eq.sys_O_QB}; moreover, the difference $\Dv - \check{\Dv}$ can be controlled by $\Dv ( \gv - \check{\gv} )$ using the last part of \eqref{eq.sys_diff_Gamma}.

For example, from the first part of \eqref{eql.sys_bianchi_0}, we have
\begin{align*}
\mi{L}_\rho [ \rho^{2-n} ( \wv^2 - \check{\wv}^2 ) ]_{ab} &= \rho^{2-n} ( \gv^{cd} - \check{\gv}^{cd} ) \Dv_c \wv^1_{ b a d } + \rho^{ 2-n } \check{\gv}^{cd} \Dv_c ( \wv^1 - \check{\wv}^1 )_{ b a d } + \rho^{ 2-n } \check{\gv}^{ c d } ( \Dv - \check{\Dv} )_c \check{\wv}^1_{ b a d } \\
&\qquad + [ \rho^{ 2-n } \, \sch{ \gv^{-2}, \Lv, \wv^0 } - \rho^{ 2-n } \, \sch{ \check{\gv}^{-2}, \check{\Lv}, \check{\wv}^0 } ]_{ a b } \\
&\qquad + [ \rho^{ 2-n } \, \sch{ \gv^{-1}, \Lv, \wv^2 } - \rho^{ 2-n } \, \sch{ \gv^{-1}, \Lv, \wv^2 } ]_{ a b } \text{,} \\
&= \oo{M}{ \rho^{ 2-n }; \Dv ( \wv^1 - \check{\wv}^1 ) }_{ab} + \oo{ M - 4 }{ \rho^{ 3-n }; \gv - \check{\gv} }_{ab} \\
&\qquad + \oo{ M - 2 }{ \rho^{ 2-n }; \Lv - \check{\Lv} }_{ab} + \oo{ M - 3 }{ \rho^{ 3-n }; \Dv ( \gv - \check{\gv} ) }_{ab} \\
&\qquad + \oo{ M - 2 }{ \rho^{ 3-n }; \wv^0 - \check{\wv}^0 } + \oo{ M - 2 }{ \rho^{ 3-n }; \wv^2 - \check{\wv}^2 } \text{,}
\end{align*}
which yields the third equation in \eqref{eq.sys_vanish}.
Note that in the last step, we used the improved vanishing rate for $\wv^1$ in \eqref{eq.aads_O_w} when dealing with terms involving $\gv - \check{\gv}$ and $\Dv ( \gv - \check{\gv} )$.
The remaining parts of \eqref{eq.sys_vanish} are obtained analogously, once again using the improved rates for $\wv^1$ and $\wv^2$ in \eqref{eq.aads_O_w}.

\raggedright
\raggedbottom
\bibliographystyle{amsplain}
\bibliography{ads4}

\end{document}